\documentclass[]{elsarticle}

\usepackage{amsfonts, bm, amsmath, graphicx, amssymb, amsthm, mathdots, mathtools}
\usepackage[utf8]{inputenc}
\usepackage[english]{babel}
\usepackage{enumitem}
\usepackage{float}
\usepackage[table]{xcolor}
\usepackage{hyperref}
\usepackage{comment}
\usepackage{subcaption} 
\usepackage{graphicx}
\usepackage{thmtools, thm-restate}



\theoremstyle{plain}
\newtheorem{teo}{Theorem}[section]
\newtheorem*{teo*}{Theorem}

\newtheorem{claim}[teo]{Claim}

\newtheorem{lema}[teo]{Lemma}
\newtheorem{defn}[teo]{Definition}
\theoremstyle{remark}

\newtheorem{remark}[teo]{\textbf{\textit{Remark}}}


\newcounter{cases}
\newcounter{subcases}[cases]
\newcounter{subsubcases}[subcases]
\newenvironment{mycases}
  {%
    \setcounter{cases}{0}%
    \setcounter{subcases}{0}%
    \setcounter{subsubcases}{0}%
    \def\case
      {%
        \par\noindent
        \refstepcounter{cases}%
        \textit{\textbf{Case (\thecases})}
      }%
    \def\subcase
      {%
        \par\noindent
        \refstepcounter{subcases}%
        \textbf{\textit{Case (\thecases.\thesubcases) }}
      }%
     \def\subsubcase
      {%
        \par\noindent
        \refstepcounter{subsubcases}%
        \textbf{\textit{Case (\thecases.\thesubcases.\thesubsubcases) }}
      }%
  }
  {%
    \par
  }
\renewcommand*\thecases{\arabic{cases}}
\renewcommand*\thesubcases{\arabic{subcases}}
\renewcommand*\thesubsubcases{\arabic{subsubcases}}


\newcommand*{\QED}{\hfill\ensuremath{\square}}%
\newcommand*{\tagg}{\mathrm{tag}}%
\newcommand{\fsc}{$\mathcal{F}_{sc}$}%


\setlist[itemize]{itemsep=0.1pt, topsep=0pt, nolistsep}

\let\oldbibliography\thebibliography
\renewcommand{\thebibliography}[1]{%
  \oldbibliography{#1}%
  \setlength{\itemsep}{0pt}%
}


\hyphenation{Def-i-ni-tion}
\hyphenation{sub-config-u-ra-tion}
\hyphenation{sub-config-u-ra-tions}
\hyphenation{struc-tural}
\hyphenation{sub-matrices}
\hyphenation{ma-tri-ces}
\hyphenation{com-pa-ra-bil-i-ty}
\hyphenation{ad-mis-si-bil-i-ty}
\hyphenation{cor-re-spond-ing}
\hyphenation{anal-o-gous-ly}
\hyphenation{sub-classes}
\hyphenation{an-ti-com-plete}
\hyphenation{mono-chro-ma-tic}
\hyphenation{com-pat-i-ble}
\hyphenation{place-ment}
\hyphenation{def-i-ni-tions}
\hyphenation{LR-or-der-ing}
\hyphenation{or-der-ing}
\hyphenation{in-duced}
\hyphenation{in-ter-val}
\hyphenation{sub-graphs}
\hyphenation{con-se-quence}
\hyphenation{for-bid-den}
\hyphenation{char-ac-ter-i-za-tion}
\hyphenation{per-mu-ta-tion}
\hyphenation{com-pu-ta-tion-al}
\hyphenation{char-ac-ter-ize}
\hyphenation{data-bases}
\hyphenation{con-tra-dic-tion}
\hyphenation{res-tric-ted}
\hyphenation{sub-gra-fos}
\hyphenation{com-ple-ta-cio-nes}
\hyphenation{fa-mi-lias}
\hyphenation{com-pa-ra-bi-li-dad}
\hyphenation{ad-ya-cen-tes}
\hyphenation{ad-ya-cen-te}
\hyphenation{In-ter-va-los}
\hyphenation{pro-pie-da-des}
\hyphenation{in-du-ci-dos}
\hyphenation{mi-ni-ma-les}
\hyphenation{sub-con-fi-gu-ra-cio-nes}
\hyphenation{Equiv-a-lent-ly}
\hyphenation{char-ac-ter-ized}
\hyphenation{de-com-po-si-tion}


\definecolor{dark-blue}{RGB}{60,80,170}
\definecolor{dark-red}{RGB}{240,50,60}
\definecolor{dark-orange}{RGB}{255,147,23}
\definecolor{dark-green}{RGB}{65, 173, 57}



\begin{document}

\begin{frontmatter}

\date{}
\title{Forbidden induced subgraph characterization of circle graphs within split graphs}

\author[DC,ICC]{Flavia~Bonomo-Braberman}\ead{fbonomo@dc.uba.ar}

\author[IC,DM,DII]{Guillermo~Dur\'an}\ead{gduran@dm.uba.ar}

\author[ICC,IC]{Nina~Pardal}\ead{npardal@ic.fcen.uba.ar}

\author[UNS,INMABB]{Mart\'{\i}n~D.~Safe}\ead{msafe@uns.edu.ar}

\address[DC]{Universidad de Buenos Aires. Facultad de Ciencias Exactas y Naturales. Departamento de Computaci\'on. Buenos Aires, Argentina.}
\address[ICC]{CONICET-Universidad de Buenos Aires. Instituto de Investigaci\'on en Ciencias de la Computaci\'on (ICC). Buenos Aires, Argentina.}
\address[IC]{CONICET-Universidad de Buenos Aires. Instituto de C\'alculo (IC). Buenos Aires, Argentina.}
\address[DM]{Universidad de Buenos Aires. Facultad de Ciencias Exactas y Naturales. Departamento de Matem\'atica. Buenos Aires, Argentina.}
\address[DII]{Departamento de Ingenier{\'{\i}}a Industrial, Facultad de Ciencias F{\'{\i}}sicas y Matem{\'a}ticas, Universidad de Chile, Santiago, Chile.}
\address[UNS]{Departamento de Matem\'atica, Universidad Nacional del Sur (UNS), Bah\'ia Blanca, Argentina}
\address[INMABB]{INMABB, Universidad Nacional del Sur (UNS)-CONICET, Bah\'ia Blanca, Argentina}

\begin{abstract}

A graph is \emph{circle} if its vertices are in correspondence with a family of chords in a circle in such a way that every two distinct vertices are adjacent if and only if the corresponding chords have nonempty intersection. 
Even though there are diverse characterizations of circle graphs, a structural characterization by minimal forbidden induced subgraphs for the entire class of circle graphs is not known, not even restricted to split graphs (which are the graphs whose vertex set can be partitioned into a clique and a stable set). In this work, we give a characterization by minimal forbidden induced subgraphs of circle graphs, restricted to split graphs.

\end{abstract}

\begin{keyword} circle graphs, forbidden induced subgraphs, split graphs, structural characterization\end{keyword}

\end{frontmatter}

\section{Introduction}

A graph is \emph{circle} if its vertices are in correspondence with a family of chords in a circle in such a way that two vertices are adjacent if and only if the corresponding chords have nonempty intersection.
These graphs were defined by Even and Itai~\cite{E-I-circ} in 1971 to solve an ordering problem stated by Knuth, using the minimum number of parallel intermediate stacks without the restriction on completion of loading before unloading. They proved that this problem can be translated into the problem of finding the chromatic number of a circle graph.
In 1985, Naji~\cite{Naji} characterized circle graphs in terms of the solvability of a system of linear equations, yielding a $O(n^7)$-time recognition algorithm for this class. More recently, it was shown in~\cite{G-P-T-C-14} that circle graphs can be recognized in almost linear time. For a survey on circle graphs, see~\cite{D-G-S-14}.

All graphs in this work are simple, undirected, with no loops or multiple edges. Let $G=(V,E)$ be a graph.
Given $V' \subseteq  V$, the \emph{subgraph of $G$ induced by} $V'$, denoted $G\left[V'\right]$, is the graph  whose vertex set is $V'$ and whose edge set consists of all the edges in $E$ that have both endpoints in $V'$.
The \emph{neighborhood} of a vertex $u\in V$ is the subset $N_G(u)$ (or simply $N(u)$) consisting of the vertices of $G$ that are adjacent to $u$.
The \emph{complement} of $G$ is the graph $\overline{G}$ having $V$ as vertex set and such that every two distinct vertices of $\overline{G}$ are adjacent if and only if they are non-adjacent in $G$.
The \emph{local complement} of $G$ with respect to a vertex $u \in V$ is the graph $G*u$ that arises from $G$ by replacing the induced subgraph $G\left[N(u)\right]$ by its complement. Two graphs $G$ and $H$ are \emph{locally equivalent} if and only if $G$ arises from $H$ by a finite sequence of local complementations. Circle graphs were characterized by Bouchet~\cite{Bou-circle-obs} in 1994 in terms of forbidden induced subgraphs of some locally equivalent graph. Inspired by this result, Geelen and Oum~\cite{Geelen-Oum-circle} gave a new characterization of circle graphs in terms of \emph{pivoting}. The result of \emph{pivoting a graph $G$ with respect to an edge $uv$} is the graph $G \times uv = G * u * v * u$. 
Let $G_1$ and $G_2$ be two graphs such that $\vert V(G_i)\vert \geq 3$, for each $i=1,2$, and assume that $V(G_1)\cap V(G_2)=\emptyset$. Let $v_i$ be a distinguished vertex of $G_i$, for each $i=1,2$. The \emph{split composition} of $G_1$ and $G_2$ with respect to $v_1$ and $v_2$ is the graph $G_1*G_2$ whose vertex set is $V(G_1*G_2)=(V(G_1)\cup V(G_2))\setminus\{v_1,v_2\}$ and whose edge set is $E(G_1*G_2)=E(G_1-\{v_1\})\cup E(G_2-\{v_2\})\cup\{uv:u\in N_{G_1}(v_1)\mbox{ and }v\in
N_{G_2}(v_2)\}$. The vertices $v_1$ and $v_2$ are called the \emph{marker vertices}. We say that $G$ has a \emph{split decomposition} if there exist two graphs $G_1$ and $G_2$ with $\vert V(G_i)\vert\geq 3$, for each $i=1,2$, such that $G=G_1*G_2$ with respect to some pair of marker vertices; if so, $G_1$ and $G_2$
are called the \emph{factors} of the split decomposition.
Those graphs that do not admit a split decomposition are called \emph{prime graphs}. Notice that, if any of the factors of a split decomposition admits a split decomposition, we can continue the process until every factor is prime, a star or a complete graph. 
Bouchet proved that circle graphs are closed under split composition~\cite{Bou-circle}.


In spite of all these results, no characterizations for the entire class of circle graphs by forbidden induced subgraphs is known. Some partial results in this direction were found by restricting the problem to the classes of $P_4$-tidy graphs, tree-cographs and linear-domino graphs~\cite{B-D-G-S-circle}.

In this work, we consider the problem of characterizing circle graphs by minimal forbidden induced subgraphs restricted to split graphs. The motivation to study circle graphs restricted to this particular graph class comes from \emph{chordal graphs}, which are those graphs that contain no induced cycle of length greater than $3$, and constitute a widely studied graph class with very nice structural properties. 
Something similar happens with the class of \emph{split graphs}, which is an interesting subclass of chordal graphs. 
Split graphs are those graphs whose vertex set can be partitioned into a clique and a stable set.
Equivalently, split graphs are those chordal graphs whose complement is also a chordal graph.
Hence, studying those split graphs that are also circle is a good first step towards a characterization of those chordal graphs that are also circle. If $G$ is a split graph, the pair $(K,S)$ is a \emph{split partition} of $G$ if $\{K,S\}$ is a partition of the vertex set of $G$ and the vertices of $K$ (resp.\ $S$) are pairwise adjacent (resp.\ nonadjacent); i.e., $K$ is a clique and $S$ is a stable set. We denote it $G=(K,S)$.

Let us consider a split graph $G$, and suppose that $G$ is \emph{minimally non-circle}; i.e., $G$ is not circle but every proper induced subgraph of $G$ is circle.
Since $G$ is not circle, in particular $G$ is not a permutation graph. \emph{Permutation graphs} are exactly those comparability graphs whose complement graph is also a comparability graph~\cite{E-P-L-permut}. Comparability graphs were characterized by forbidden induced subgraphs in~\cite{Gal-comp}. 
This characterization of comparability graphs leads to a forbidden induced subgraph characterization for the class of permutation graphs.
Hence, given that permutation graphs are a subclass of circle graphs (see, e.g.~\cite[p.\ 252]{Go-perf-2ed}), in particular $G$ is not a permutation graph. Using the list of minimal forbidden induced subgraphs for permutation graphs and the fact that $G$ is a split graph, we conclude that $G$ contains either a \emph{tent}, a \emph{$4$-tent}, a \emph{co-$4$-tent} or a \emph{net} as an induced subgraph (see Figure~\ref{fig:forb_permsplit_base}). It is not difficult to see that these four graphs are also circle graphs. 

\begin{figure}[h]
\centering
\includegraphics[scale=.35]{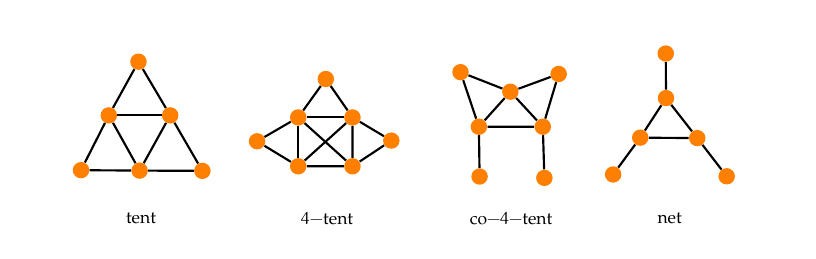}
\caption{The forbidden induced subgraphs for permutation graphs within split graphs.
} \label{fig:forb_permsplit_base}
\end{figure}

In Figure~\ref{fig:forb_graphs} we define some graph families that will be central throughout the sequel. The \emph{odd $k$-suns with center} are defined for each odd $k\geq 3$. The \emph{even $k$-sun} is defined for each even $k\geq 4$. We denote by \fsc\ the graph class consisting of the graphs belonging to any of the families depicted in Figure~\ref{fig:forb_graphs}. None of the graphs in \fsc\ is a circle graph (see Lemma~\ref{lem:nocircle}).

The theorem below, which is the main result of this work, gives the characterization of circle graphs by minimal forbidden induced subgraphs, restricted to split graphs.

\begin{teo} \label{teo:circle_split_caract}
Let $G$ be a split graph. Then, $G$ is a circle graph if and only if $G$ contains none of the graphs in \fsc \ (depicted in Figure~\ref{fig:forb_graphs}) as induced subgraph.
\end{teo}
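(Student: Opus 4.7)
The plan is to establish both implications separately. For the \emph{necessity}, note that the class of circle graphs is closed under taking induced subgraphs, so it suffices to invoke Lemma~\ref{lem:nocircle}, which states that none of the graphs in \fsc\ is a circle graph; hence any circle graph $G$ contains no induced subgraph from \fsc.

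The heart of the argument is the \emph{sufficiency}. I would proceed by contradiction: suppose there exists a split graph $G$ containing no graph from \fsc\ as induced subgraph that fails to be a circle graph, and choose $G$ to be \emph{minimally non-circle}, i.e., every proper induced subgraph of $G$ is a circle graph. Fix a split partition $(K,S)$ of $G$. As explained in the introduction, since permutation graphs are circle and the minimal forbidden induced subgraphs for permutation graphs within split graphs are precisely tent, $4$-tent, co-$4$-tent, and net, the graph $G$ must contain at least one of these four graphs as an induced subgraph.

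The body of the proof is then a case analysis depending on which of these four ``seed'' subgraphs occurs in $G$. For each seed, I would fix an induced copy $H$ of it inside $G$, record how $V(H)$ splits between $K$ and $S$, and classify the possible neighborhood types in $H$ of each additional vertex $v \in V(G) \setminus V(H)$ (separately according to whether $v \in K$ or $v \in S$, so that the split structure is preserved). The hypothesis that $G$ avoids \fsc\ rules out many neighborhood types, while the minimality of $G$ forces that any proper subconfiguration is realizable as chords in a circle. At several points I would invoke Bouchet's theorem that circle graphs are closed under split composition: if $G$ admits a split decomposition whose factors are circle (or are stars or complete graphs), then $G$ itself is circle, contradicting the choice of $G$. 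Local complementation and pivoting, which preserve the class of circle graphs, can then be used to collapse seemingly distinct configurations to common canonical forms.

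The principal difficulty is the sheer size of the case analysis. Each of the four seeds spawns many configurations for added vertices, and the interaction between the clique side $K$ and the stable side $S$ can produce long propagation chains that must be tracked carefully; the families of odd $k$-suns with center and even $k$-suns arise precisely as the extremal outcomes of these propagations, which is why they appear in \fsc. Managing the bookkeeping, and in particular certifying that every extension of each seed either breaks minimality (because a split decomposition appears or the vertex set of a forbidden subgraph is completed) or exhibits a member of \fsc, is the main obstacle. I would organize the write-up by seed type, handle the tent case first as it is the richest, and reduce the net, $4$-tent, and co-$4$-tent cases to sub-cases of the tent analysis whenever an auxiliary tent can be located in $G$, so that the remaining work in those cases concerns only configurations free of tents.
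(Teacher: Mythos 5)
Your necessity argument is exactly the paper's: circle graphs are hereditary, so Lemma~\ref{lem:nocircle} suffices. Your sufficiency plan also opens the way the paper does (the four seeds tent, $4$-tent, co-$4$-tent and net; classification of the remaining vertices of $K$ and $S$ by their neighborhoods in the seed; disposal of the net case and of non-prime configurations via Bouchet's closure under split composition). The gap is at the decisive step. A minimal-counterexample argument must, for a \emph{prime}, \fsc-free split graph $G$ containing one of the seeds, produce a certificate that $G$ is circle; your proposed dichotomy --- every extension of a seed ``either breaks minimality (because a split decomposition appears or the vertex set of a forbidden subgraph is completed) or exhibits a member of \fsc'' --- is false, because there are many prime, \fsc-free extensions that are perfectly good circle graphs, and for those neither horn fires, so no contradiction is reached. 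Knowing that every proper induced subgraph of $G$ is circle does not yield a circle model of $G$ without actually placing a chord for the last vertex, and nothing in your plan controls where the endpoints of that chord must go.

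The paper fills this hole with an explicit construction: the adjacencies between $S$ and each part $K_i$ are encoded as enriched matrices $\mathbb A_i$ (resp.\ $\mathbb B_i$, $\mathbb C_i$); \fsc-freeness is shown to force these matrices to be $2$-nested and four auxiliary matrices to be nested (Lemmas~\ref{lema:no2nested_prohibido_tent} and~\ref{lema:matrices_union_son_nested} and their analogues, resting on the forbidden-subconfiguration characterization of Theorem~\ref{teo:2-nested_caract_bymatrices}); and the resulting suitable LR-orderings and block bi-colorings are then turned into an explicit circle model --- the implication $(3)\Rightarrow(1)$ of Theorems~\ref{teo:finalteo_tent}, \ref{teo:finalteo_4tent} and~\ref{teo:finalteo_co4tent}. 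Minimality and primality are used only lightly (e.g.\ Remark~\ref{obs:co4tent_1} and the net case), not as the engine of the proof. Your appeal to local complementation and pivoting to ``collapse configurations'' is likewise unsubstantiated in the sufficiency direction; the paper uses local complementation only in Lemma~\ref{lem:nocircle} to certify that the graphs in \fsc\ are not circle. To repair your plan you would need to replace the contradiction step by a model-building procedure (or an equivalent combinatorial certificate such as the consecutive-ones/$2$-nested orderings), which is precisely the paper's main technical contribution.
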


\begin{figure}[h]
\centering
\includegraphics[scale=.27]{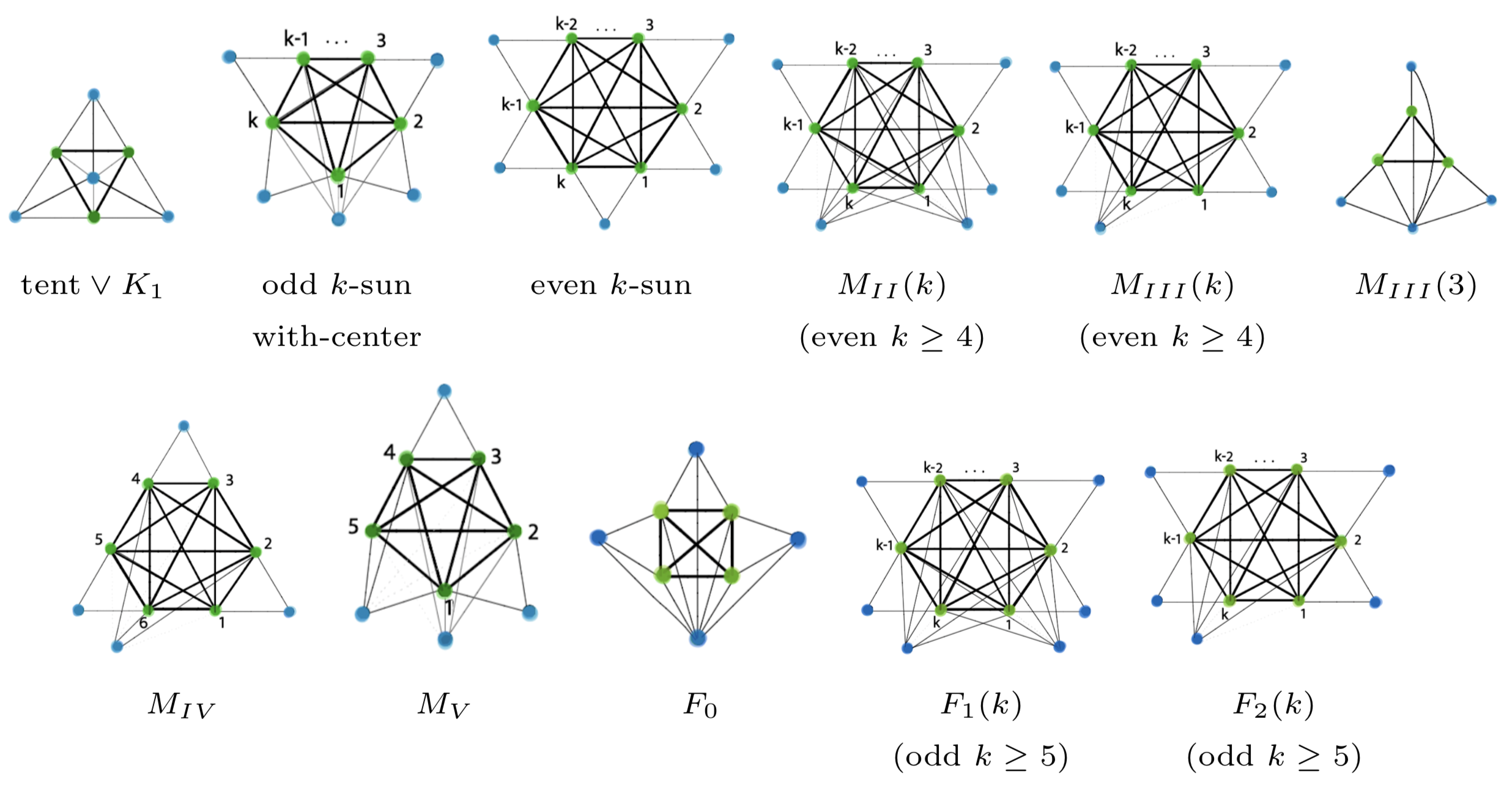}
\caption{The family \fsc \ of graphs. The $3$-sun with center is also known as tent-with-center.} \label{fig:forb_graphs}
\end{figure}



This work is organized as follows. In Section~\ref{section:partitions}, given a split graph $G$ with split partition $(K,S)$ and an induced subgraph $H$ of $G$ isomorphic to tent, $4$-tent or co-$4$-tent, we introduce partitions of $K$ and $S$ according to the adjacencies and prove that these partitions are well defined. In Section~\ref{section:split_circle_graphs}, we address the problem of characterizing the forbidden induced subgraphs of a circle graph that contains an induced tent, $4$-tent, co-$4$-tent or net, and we finish by giving the guidelines to draw a circle model for each case. In each subsection we address a case for proof of Theorem~\ref{teo:circle_split_caract}, as explained in Subsection~\ref{subsec:circle5}. In Section~\ref{sec:final}, we set out some final remarks and future challenges about structural characterizations of circle graphs.

\section{Preliminaries} \label{section:partitions}

Throughout this section, we define some subsets in both $K$ and $S$ depending on whether $G$ contains an induced tent, $4$-tent or co-$4$-tent $H$ as an induced subgraph. We prove that these subsets induce a partition of both $K$ and $S$.
In each case, we first partition the vertices in the complete set $K$ into subsets, according to the adjacencies with the vertices of $V(T) \cap S$, and then we partition the vertices of the independent set $S$ into subsets, according to the adjacencies with the partition defined on $K$.
We give the full proof when $G$ contains an induced tent, and state which parts of $K$ and $S$ are nonempty when $G$ contains a $4$-tent or a co-$4$-tent, since the proof is very similar in these cases. For more details on this, see~\cite{P20}.
These partitions will be useful in Section~\ref{section:split_circle_graphs}, when we give the proof of the characterization by forbidden induced subgraphs for split circle graphs.
Notice that we do not consider the case in which $G$ contains an induced net. We explain in detail in Section~\ref{subsec:circle5} how this case can be reduced to one of the other cases. 

Let $A=(a_{ij})$ be a $n\times m$ $(0,1)$-matrix.
We denote by $a_{i.}$ and $a_{.j}$ the $i$th row and the $j$th column of matrix $A$. From now on, we associate each row $a_{i.}$ with the set of columns in which $a_{i.}$ has a $1$. For example, the \emph{intersection} of two rows $a_{i.}$ and $a_{j.}$ is the subset of columns in which both rows have a $1$.
Two rows $a_{i.}$ and $a_{k.}$ are \emph{disjoint} if there is no $j$ such that $a_{ij} = a_{kj} = 1$.
We say that $a_{i.}$ is \emph{contained} in $a_{k.}$ if for each $j$ such that $a_{ij} = 1$ also $a_{kj} = 1$. We say that $a_{i.}$ and $a_{k.}$ are \emph{nested} if $a_{i.}$ is contained in $a_{k.}$ or $a_{k.}$ is contained in $a_{i.}$.
We say that a row $a_{i.}$ is \emph{empty} if every entry of $a_{i.}$ is $0$, and we say that $a_{i.}$ is \emph{nonempty} if there is at least one entry of $a_{i.}$ equal to $1$.
We say that two nonempty rows \emph{overlap} if they are non-disjoint and non-nested.
For every nonempty row $a_{i.}$, let $l_i = \min\{ j \colon\,a_{ij} = 1 \}$ and $r_i = \max\{ j \colon\,a_{ij} = 1 \}$ for each $i\in\{1,\ldots,n\}$.
Finally, we say that $a_{i.}$ and $a_{k.}$ \emph{start} (resp.\ \emph{end}) \emph{in the same column} if $l_i = l_k$ (resp.\ $r_i = r_k$), and we say $a_{i.}$ and $a_{k.}$ \emph{start (end) in different columns}, otherwise.

Let $G$ be a split graph with split partition $(K,S)$, $n=\vert S\vert$, and $m=\vert K\vert$.
Let $s_1, \ldots, s_n$ and $v_1, \ldots, v_m$ be linear orderings of $S$ and $K$, respectively. Let $A= A(S,K)$ be the $n\times m$ matrix defined by $A(i,j)=1$ if $s_i$ is adjacent to $v_j$ and $A(i,j)=0$, otherwise. From now on, we associate the row (resp.\ column) of the matrix $A(S,K)$ with the corresponding vertex in the independent set (resp.\ vertex in the complete set) of the  partition.

\subsection{Partitions of $S$ and $K$ for a graph containing an induced tent} \label{subsec:tent_partition}

Let $G=(K,S)$ be a split graph where $K$ is a clique and $S$ is an independent set. Let $H$ be an induced subgraph of $G$ isomorphic to a tent. Let $V(T)=\{k_1,$ $k_3,$ $k_5,$ $s_{13},$ $s_{35},$ $s_{51}\}$ where $k_1,$ $k_3,$ $k_5\in K$, $s_{13},$ $s_{35},$ $s_{51}\in S$, and the neighbors of $s_{ij}$ in $H$ are precisely $k_i$ and $k_j$.

We introduce sets $K_1,K_2,\ldots,K_6$ as follows.
\begin{itemize}
 \item For each $i\in\{1,3,5\}$, let $K_i$ be the set of vertices of $K$ whose neighbors in $V(T)\cap S$ are precisely $s_{(i-2)i}$ and $s_{i(i+2)}$ (where subindexes are modulo~$6$).
 \item For each $i\in\{2,4,6\}$, let $K_i$ be the set of vertices of $K$ whose only neighbor in $V(T)\cap S$ is $s_{(i-1)(i+1)}$ (where subindexes are modulo~$6$).
\end{itemize}
See Figure~\ref{fig:tent_ext} for a graphic idea of this. Notice that $K_1$, $K_3$ and $K_5$ are always nonempty sets.

\vspace{2mm}

We say a vertex $v$ is \emph{complete to} the set of vertices $X$ if $v$ is adjacent to every vertex in $X$, and we say $v$ is \emph{anticomplete to} $X$ if $v$ has no neighbor in $X$.
We say that $v$ is \emph{adjacent to} $X$ if $v$ has at least one neighbor in $X$. Notice that complete to $X$ implies adjacent to $X$ if and only if $X$ is nonempty.
For $v$ in $S$, let $N_i(v) = N(v) \cap K_i$.
Given two vertices $v_1$ and $v_2$ in $S$, we say that $v_1$ and $v_2$ are \emph{nested} if either $N(v_1) \subseteq N(v_2)$ or $N(v_2) \subseteq N(v_1)$.
In particular, given $i \in \{1, \ldots, 6\}$, if either $N_{i}(v_1) \subseteq N_i(v_2)$ or $N_{i}(v_2) \subseteq N_i(v_1)$, then we say that
$v_1$ and $v_2$ are \emph{nested in} $K_i$.
Additionally, if $N(v_1) \subseteq N(v_2)$, then we say that \emph{$v_1$ is contained in $v_2$}.


\begin{figure}[h!]
    \begin{center}
        \includegraphics[scale=.23]{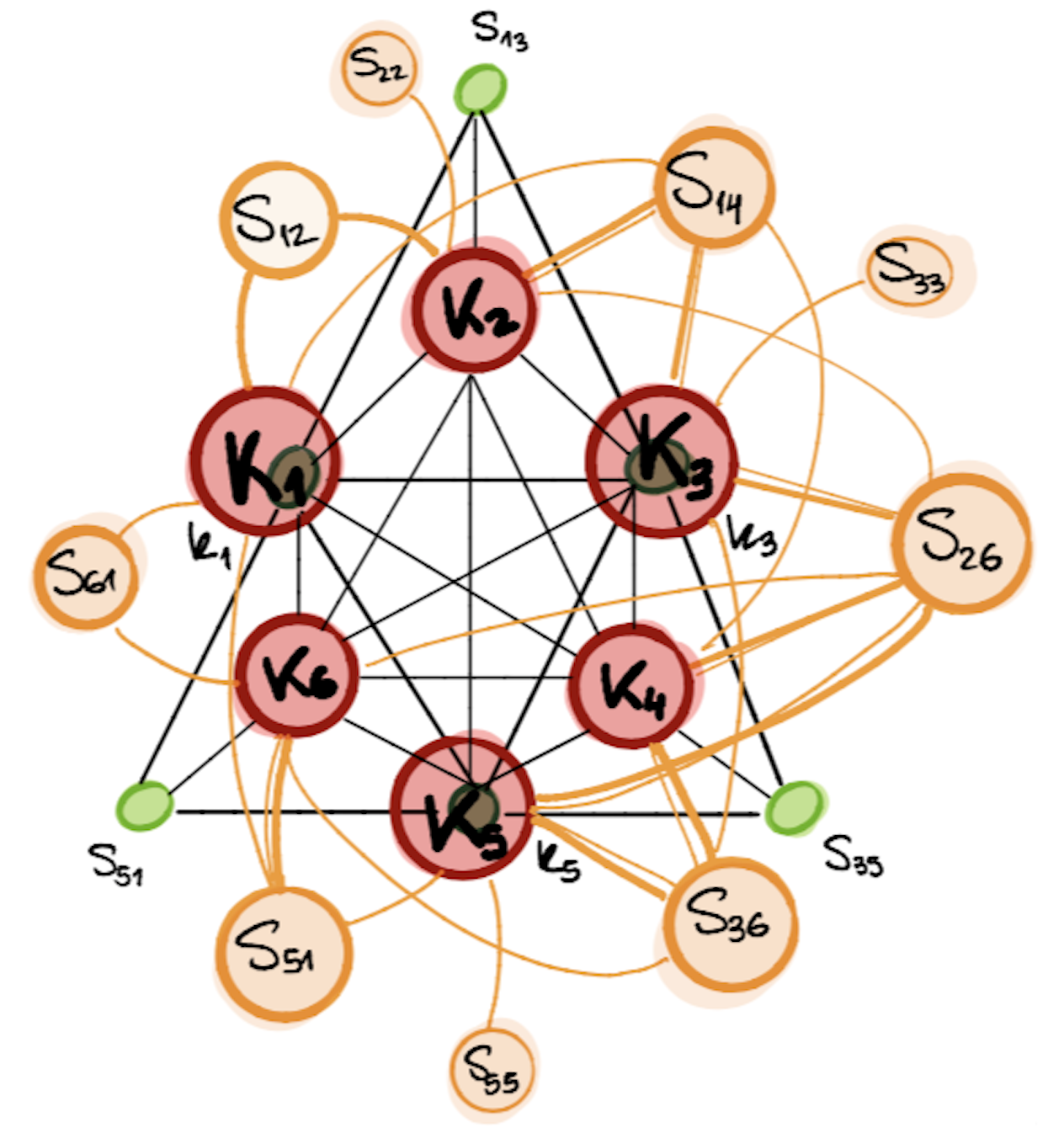}
    \end{center}
    \caption{Tent $H$ and the split graph $G$ according to the given extensions.} \label{fig:tent_ext}
\end{figure}

Given a graph $H$, the graph $G$ is $H$-free if it does not contain $H$ as induced subgraph. For a family of graphs $\mathcal{H}$, a graph $G$ is $\mathcal{H}$-free if $G$ is $H$-free for every $H \in \mathcal{H}$.

\begin{lema} \label{lema:tent_1}
    If $G$ is \fsc-free, then $\{K_1,K_2,\ldots,K_6\}$ is a partition of $K$.
\end{lema}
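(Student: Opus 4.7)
The plan is to verify both properties required for $\{K_1,\ldots,K_6\}$ to be a partition of $K$: pairwise disjointness and coverage of $K$. Pairwise disjointness is immediate from the definitions, since each $K_i$ is characterized by a unique (non-empty, proper) subset of $V(T)\cap S=\{s_{13},s_{35},s_{51}\}$ as the neighborhood of its members within this set: for $i\in\{1,3,5\}$ the doubleton $\{s_{(i-2)i},s_{i(i+2)}\}$, and for $i\in\{2,4,6\}$ the singleton $\{s_{(i-1)(i+1)}\}$. These six prescribed subsets are pairwise distinct, so no vertex of $K$ can belong to two of them. As a bonus, this also shows $k_i\in K_i$ for $i\in\{1,3,5\}$, since in the tent $H$ the neighbors of $k_i$ in $S$ are precisely $s_{(i-2)i}$ and $s_{i(i+2)}$.

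To establish coverage, the plan is to take an arbitrary $v\in K$ and examine $N(v)\cap\{s_{13},s_{35},s_{51}\}$, which is one of the $2^3=8$ possible subsets of a three-element set. The six subsets of cardinality $1$ or $2$ correspond exactly to membership in some $K_i$, so the proof reduces to ruling out the two extremal cases where this intersection is $\emptyset$ or is all of $\{s_{13},s_{35},s_{51}\}$. Note that in both cases, since $K$ is a clique and $k_1,k_3,k_5\in K$, the vertex $v$ is automatically adjacent to $k_1,k_3,k_5$, so $G[V(T)\cup\{v\}]$ is completely determined by the intersection.

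In the empty-intersection case, $G[V(T)\cup\{v\}]$ is the $4$-clique $\{v,k_1,k_3,k_5\}$ together with the three tent edges from each $s_{ij}$ to $k_i$ and $k_j$ only; this induced subgraph is isomorphic to the $3$-sun-with-center (the tent-with-center, tent${}\vee K_1$) depicted in Figure~\ref{fig:forb_graphs}, which lies in \fsc, contradicting the \fsc-freeness of $G$. In the full-intersection case, $v$ is a universal vertex of $G[V(T)\cup\{v\}]$, and by inspection of Figure~\ref{fig:forb_graphs} this $7$-vertex configuration likewise contains (or equals) a member of \fsc\ as an induced subgraph, again yielding a contradiction. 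Hence both extremal cases are excluded and every $v\in K$ lies in exactly one $K_i$. The main obstacle is correctly identifying, in each of the two extremal cases, the specific member of \fsc\ that appears in $G[V(T)\cup\{v\}]$; once those identifications are read off from Figure~\ref{fig:forb_graphs}, the rest of the argument is a mechanical enumeration of the eight possible neighborhoods.
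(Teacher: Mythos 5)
Your proof is correct and follows essentially the same route as the paper, whose one-line argument is exactly that a vertex of $K$ adjacent to none or to all of $\{s_{13},s_{35},s_{51}\}$ yields a forbidden induced subgraph, while adjacency to one or two of them places the vertex in a uniquely determined $K_i$. One terminological correction: in the empty-intersection case the graph you obtain is the $3$-sun with center (the tent-with-center), which is \emph{not} the same graph as tent${}\vee{}K_1$; the latter is the member of \fsc\ that arises in the full-intersection case, where $v$ is adjacent to every vertex of $V(T)$.
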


\begin{proof}
    Every vertex of $K$ is adjacent to precisely one or two vertices of $V(T) \cap S$, for if not we find either a tent${}\vee{}K_1$ or a $3$-sun with center as induced subgraph of $G$, a contradiction.
\end{proof}

Let $i,j\in\{1,\ldots,6\}$ and let $S_{ij}$ be the set of vertices of $S$ that are adjacent to some vertex in $K_i$ and some vertex in $K_j$, are complete to $K_{i+1}$,$K_{i+2},\ldots,K_{j-1}$, and are anticomplete to $K_{j+1}$,$K_{j+2},\ldots,K_{i-1}$ (where subindexes are modulo~$6$).
The following claims are necessary to prove Lemma~\ref{lema:tent_2}, that states, on the one hand, which sets of $\{S_{ij}\}_{i,j \in \{1, \ldots, 6\}}$ may be nonempty, and, on the other hand, and that the sets $\{S_{ij}\}_{i,j \in \{1, \ldots, 6\}}$ indeed induce a partition of $S$.
This shows that the adjacencies of a vertex of $S$ have a circular structure with respect to the defined partition of $K$.

    \begin{claim} \label{claim:tent_1}
        If $G$ is \fsc-free, then there is no vertex $v$ in $S$ such that $v$ is simultaneously adjacent to $K_1$, $K_3$ and $K_5$.
        Moreover, there is no vertex $v$ in $S$ adjacent to $K_2$, $K_4$ and $K_6$ such that $v$ is anticomplete to any two of $K_j$, for $j \in \{1, 3, 5 \}$.
    \end{claim}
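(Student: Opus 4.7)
I would prove both statements by contradiction, in each case picking neighbours of $v$ in the relevant $K_i$'s, ruling out $v \in V(T)$ by examining neighbourhoods in $V(T) \cap S$, and exhibiting an induced subgraph in \fsc.

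For the first statement, assume such a $v$ exists and choose $v_i \in K_i \cap N(v)$ for $i \in \{1,3,5\}$. By the definition of the $K_i$'s, every vertex of $K_5$ has $\{s_{35}, s_{51}\}$ as its set of neighbours in $V(T) \cap S$; hence $s_{13}$ has no neighbour in $K_5$, and analogously $s_{35}$ (resp.\ $s_{51}$) has no neighbour in $K_1$ (resp.\ $K_3$). Therefore $v \notin V(T) \cap S$, and $v \in S$ also rules out $V(T) \cap K$, so the seven vertices $\{v, v_1, v_3, v_5, s_{13}, s_{35}, s_{51}\}$ are pairwise distinct. Combining the triangle $v_1 v_3 v_5$, the tent adjacencies $s_{13} \sim v_1, v_3$, $s_{35} \sim v_3, v_5$, $s_{51} \sim v_5, v_1$, the relations $v \sim v_1, v_3, v_5$, and $v$ being non-adjacent to the other stable vertices, this set induces precisely a $3$-sun with center, which lies in \fsc\ (see Figure~\ref{fig:forb_graphs}) --- contradicting that $G$ is \fsc-free.

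For the second statement, by the $3$-fold cyclic symmetry of the tent I may assume $v$ is anticomplete to $K_3 \cup K_5$; in particular $v \not\sim k_3$ and $v \not\sim k_5$. Pick $v_i \in K_i \cap N(v)$ for $i \in \{2,4,6\}$. The analogous neighbourhood argument yields $v \notin V(T)$ (for instance, $K_4$-vertices are adjacent only to $s_{35}$ in $V(T) \cap S$, so neither $s_{13}$ nor $s_{51}$ has a $K_4$-neighbour, and $s_{35}$ itself is not adjacent to $K_2$ or $K_6$). I would then discard $v_4$ and concentrate on the eight vertices $\{v, v_2, v_6, k_3, k_5, s_{13}, s_{35}, s_{51}\}$; these are pairwise distinct since $K_2, K_3, K_5, K_6$ are pairwise disjoint by Lemma~\ref{lema:tent_1}, $V(T) \cap S$ is disjoint from $K$, and $v \notin V(T)$. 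The set $\{v_2, k_3, k_5, v_6\}$ is a $4$-clique, and, using the definitions of $K_2, K_3, K_5, K_6$ and of the tent, each of $s_{13}, s_{35}, s_{51}, v$ has exactly two neighbours in it, consecutive in the cyclic order $(v_2, k_3, k_5, v_6)$, namely $\{v_2, k_3\}$, $\{k_3, k_5\}$, $\{k_5, v_6\}$ and $\{v_6, v_2\}$ respectively. Hence these eight vertices induce an even $4$-sun, a graph of \fsc, contradicting \fsc-freeness.

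The main obstacle is identifying the correct induced subgraph in the second statement: the seven-vertex ``net with center'' on $\{v, v_2, v_4, v_6, s_{13}, s_{35}, s_{51}\}$, which is the most natural first guess, turns out to be a circle graph and hence is not in \fsc. The decisive step is to forget the $K_4$-neighbour $v_4$ and bring in the two tent-clique vertices $k_3, k_5$ to which $v$ is anticomplete, obtaining a $4$-clique cyclically ``covered'' by the four stable vertices $s_{13}, s_{35}, s_{51}, v$, thus realizing the even $4$-sun. Once this configuration is chosen, all adjacency and non-adjacency checks follow directly from Lemma~\ref{lema:tent_1} and the definitions of $K_1, \ldots, K_6$.
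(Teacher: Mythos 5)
Your proof is correct and follows essentially the same route as the paper: the first part exhibits the same $3$-sun with center on $\{v,w_1,w_3,w_5,s_{13},s_{35},s_{51}\}$, and the second part exhibits the same $4$-sun on $\{v,w_2,k_3,k_5,w_6,s_{13},s_{35},s_{51}\}$, the only differences being that you invoke the cyclic symmetry to treat one case where the paper writes out all three, and you spell out the (routine) distinctness checks that the paper leaves implicit.
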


    Let $v$ in $S$ and let $w_i$ in $K_i$ for each $i\in\{1,3,5\}$, such that $v$ is adjacent to each $w_i$. Hence, $\{w_1,$ $w_3,$ $w_5,$ $s_{13},$ $s_{35},$ $s_{51},$ $v\}$ induce in $G$ a $3$-sun with center, a contradiction.

    To prove the second statement, let $w_i$ in $K_i$ such that $v$ is adjacent to $w_i$ for every $i\in\{2,4,6\}$. Suppose that $v$ is anticomplete to $K_3$ and $K_5$. Thus, we find a $4$-sun induced by the set $\{ w_2$, $k_3$, $k_5$, $w_6$, $s_{13}$, $s_{35}$, $s_{51}$, $v \}$.
    If instead $v$ is anticomplete to $K_1$ and $K_3$, then we find a $4$-sun induced by $\{ k_1$, $k_3$, $w_4$, $w_6$, $s_{13}$, $s_{35}$, $s_{51}$, $v \}$, and if $v$ is anticomplete to $K_1$ and $K_5$, then a $4$-sun is induced by $\{ k_1$, $w_2$, $w_4$, $k_5$, $s_{13}$, $s_{35}$, $s_{51}$, $v \}$. \QED

    \begin{claim} \label{claim:tent_2}
        If $G$ is \fsc-free and $v$ in $S$ is adjacent to $K_i$ and $K_{i+3}$, then $v$ is complete to $K_j$, either for $j \in \{ i+1, i+2 \}$ or for $j \in \{ i-1, i-2\}$.
    \end{claim}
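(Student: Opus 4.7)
The plan is to proceed by contradiction. Suppose $v \in S$ is adjacent to some $w_i \in K_i$ and some $w_{i+3} \in K_{i+3}$, yet $v$ is neither complete to $K_{i+1} \cup K_{i+2}$ nor complete to $K_{i-1} \cup K_{i-2}$. Then one can pick witnesses $u \in (K_{i+1} \cup K_{i+2}) \setminus N(v)$ and $u' \in (K_{i-1} \cup K_{i-2}) \setminus N(v)$. Since the labelling $K_1,\ldots,K_6$ inherits a natural cyclic symmetry of order three from the tent $T$, and the conclusion of the claim is invariant under the involution $i \mapsto i+3$, we may rotate and assume without loss of generality that $i = 1$; the remaining values of $i$ then follow by relabelling.

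Once $i = 1$ is fixed, the analysis splits into four subcases, according to whether $u \in K_2$ or $u \in K_3$, and whether $u' \in K_5$ or $u' \in K_6$. In each subcase I would exhibit a small induced subgraph of $G$ chosen from the pool $\{k_1,k_3,k_5,s_{13},s_{35},s_{51},w_1,w_4,u,u',v\}$, using only the known adjacencies coming from the tent $T$, the defining adjacencies of the partition classes $K_j$ with the vertices in $V(T)\cap S$, the hypothesis that $v$ is adjacent to $w_1$ and $w_4$, and the fact that $v$ is not adjacent to $u$, $u'$, $s_{13}$, $s_{35}$ or $s_{51}$ (since $v \in S$). The aim is to recognize this subgraph as isomorphic to one of the configurations in $\mathcal{F}_{sc}$, presumably one of the small graphs such as $M_{IV}$, $M_V$ or $F_0$, and thereby contradict the $\mathcal{F}_{sc}$-freeness of $G$.

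The main obstacle is a bookkeeping one: every two vertices of $K$ are automatically adjacent, so the representatives chosen in $K$ bring along all the clique edges among themselves and may introduce unwanted edges with respect to the target forbidden graph. The trick is to pick the representatives carefully — for instance, use $u$ in place of $k_2$ whenever $u \in K_2$, or drop $k_j$ when its role is already played by a $w_j$ or by $u,u'$ — so that the induced subgraph on the chosen vertices has exactly the edge set of the intended graph in $\mathcal{F}_{sc}$. Once this matching of representatives is handled in each of the four subcases, the isomorphism check against Figure~\ref{fig:forb_graphs} becomes routine and delivers the desired contradiction in every case.
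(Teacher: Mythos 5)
Your high-level plan is viable, and the symmetry reduction is fine: the rotation $i\mapsto i+2$ is an automorphism of the configuration and the statement for $i$ coincides with the statement for $i+3$, so fixing $i=1$ is legitimate (the paper makes the same reduction, fixing $i$ even instead). Your negation of the conclusion and the choice of witnesses $u\in(K_2\cup K_3)\setminus N(v)$, $u'\in(K_5\cup K_6)\setminus N(v)$ are also correct. What is missing is the actual kernel of the proof, and your guesses about it are off on two counts. First, the forbidden subgraph that arises is not $M_{IV}$, $M_V$ or $F_0$; in fact all four of your subcases are settled at once by the single seven-vertex set $\{s_{13}, s_{51}, v, w_1, u, w_4, u'\}$, which induces $M_{III}(3)$ no matter whether $u$ lies in $K_2$ or $K_3$ and $u'$ in $K_5$ or $K_6$: every vertex of $K_2\cup K_3$ is adjacent to $s_{13}$ and not to $s_{51}$, every vertex of $K_5\cup K_6$ is adjacent to $s_{51}$ and not to $s_{13}$, $w_1\in K_1$ is adjacent to both $s_{13}$ and $s_{51}$ and to $v$, and $w_4\in K_4$ is adjacent to $v$ but to neither $s_{13}$ nor $s_{51}$. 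Hence inside the clique $\{w_1,u,w_4,u'\}$ the stable vertices $s_{13}$, $s_{51}$, $v$ have neighbourhoods $\{w_1,u\}$, $\{w_1,u'\}$, $\{w_1,w_4\}$, which is exactly the $M_{III}(3)$ configuration, contradicting \fsc-freeness. Second, your pool contains $k_1,k_3,k_5$, whose adjacency to $v$ is \emph{not} determined by your hypotheses; any candidate subgraph pairing $v$ with one of them cannot be checked ``routinely'', so the exhibited subgraph must avoid those pairs (the set above does), a point your proposal never addresses. By contrast, the bookkeeping issue you single out (clique edges among the chosen $K$-representatives) is harmless here, since the target graphs are themselves split graphs.

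For comparison, the paper's own proof is a two-stage argument rather than a direct case check: it first shows that non-neighbours in both odd classes $K_{i-1}$ and $K_{i+1}$ yield an $M_{III}(3)$ (this is your subcase $u\in K_3$, $u'\in K_5$), concludes that $v$ is complete to one of them, and then, when a non-neighbour survives in the adjacent even class, invokes Claim~\ref{claim:tent_1} to force anticompleteness to the opposite odd class before exhibiting a second $M_{III}(3)$. The uniform witness set above shows that this detour through Claim~\ref{claim:tent_1} can be avoided, so once you carry out the verification your route is, if anything, shorter than the paper's; but as written the decisive step is only asserted, not done.
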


    Let $w_i$ in $K_i$, $w_{i+3}$ in $K_{i+3}$ such that $v$ is adjacent to $w_i$ and $w_{i+3}$. Notice that the statement is exactly the same for $i = j$ and for $i = j+3$, so let us assume that $i$ is even.

    If $w_j$ in $K_j$ is a non-neighbor of $v$ for each $j \in \{ i-1, i+1 \}$, then we find an induced $M_{III}(3)$. Hence, $v$ is complete to $K_j$ for at least one of $j \in \{i-1, i+1 \}$.
    Suppose that $v$ is complete to $K_{i+1}$. If $K_{i+2} = \emptyset$, then the claim holds. Hence, suppose that $K_{i+2} \neq \emptyset$ and suppose $w_{i+2}$ in $K_{i+2}$ is a non-neighbor of $v$. In particular, since $v$ is adjacent to $w_{i+3}$ and $k_{i+1}$, then $v$ is anticomplete to $K_{i-1}$ by Claim~\ref{claim:tent_1}. However, in this case we find $M_{III}(3)$ induced by $\{s_{(i-1)(i+3)}$, $s_{(i+3)(i-1)}$, $v$, $k_{i-1}$, $k_{i+1}$, $w_{i+2}$, $w_{i+3} \}$.
    It follows analogously if instead $v$ is complete to $K_{i-1}$ and is not complete to $K_{i-2}$, for we find the same induced subgraphs. Notice that the proof is independent on whether $K_j = \emptyset$ or not, for every even $j$. \QED

    \begin{claim} \label{claim:tent_3odd}
        If $G$ is \fsc-free and $v$ in $S$ is adjacent to $K_i$ and $K_{i+2}$, with $i$ odd, then $v$ is complete to $K_{i+1}$.
    \end{claim}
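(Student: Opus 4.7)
The plan is to argue by contradiction: assuming that some $w_{i+1}\in K_{i+1}$ is not adjacent to $v$, I will exhibit an induced $3$-sun with center (the smallest odd $k$-sun with center), which belongs to \fsc\ and so contradicts the hypothesis. By the cyclic symmetry of the tent $T$ under the rotation $k_j\mapsto k_{j+2}$, $s_{j(j+2)}\mapsto s_{(j+2)(j+4)}$ it suffices to treat $i=1$. So I fix $w_1\in K_1$ and $w_3\in K_3$ adjacent to $v$ and $w_2\in K_2$ non-adjacent to $v$.

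The first ingredient is a restriction on $v$'s neighbourhood in $K_5$. Since $v$ is adjacent both to $K_1$ and to $K_3$, Claim~\ref{claim:tent_1} prevents $v$ from also being adjacent to $K_5$; in particular $vk_5\notin E(G)$. This is what makes $k_5$ available as the third triangle vertex of the tent I want to produce.

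Next I propose to read off the desired configuration from the $7$-vertex set $X=\{w_1,w_2,w_3,k_5,v,s_{35},s_{51}\}$. Using only the definitions of the sets $K_j$ and the tent adjacencies, the following adjacencies in $G[X]$ can be verified routinely: $\{w_1,w_2,w_3,k_5\}$ is a clique (it lies in $K$); $v,s_{35},s_{51}$ are pairwise non-adjacent (they lie in $S$); $v$ is adjacent to $w_1,w_3$ and no other vertex of the clique; $s_{35}$ is adjacent to $w_3,k_5$ (from $K_3$ and $K_5$, both sets complete to $s_{35}$) and to no other clique vertex; similarly $s_{51}$ is adjacent exactly to $w_1,k_5$. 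Hence $\{w_1,w_3,k_5\}$ with pendants $v,s_{35},s_{51}$ forms an induced tent. Moreover $w_2$ is adjacent to each triangle vertex $w_1,w_3,k_5$ (being in the clique) and to none of the pendants: it is non-adjacent to $v$ by assumption, and non-adjacent to $s_{35}$ and $s_{51}$ because the only neighbour of $w_2\in K_2$ in $V(T)\cap S$ is $s_{13}$. Thus $w_2$ acts as a center of the tent, giving the forbidden $3$-sun with center.

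The only real subtlety is selecting this set $X$: the naive impulse is to enlarge it with $s_{13}$ so as to keep the full tent $T$ in the picture, but $w_2$ is adjacent to $s_{13}$, and this extra edge would immediately destroy the center role of $w_2$. Dropping $s_{13}$ (and the vertex $k_1$ or $k_3$, which are redundant in view of $w_1\in K_1$ and $w_3\in K_3$) is the key move that turns the configuration into a tent with center. Once $X$ is chosen, every adjacency is forced by the partition $\{K_1,\dots,K_6\}$ together with the hypothesis on $v$, so no further case analysis is required.
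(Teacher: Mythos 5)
Your proposal is correct and is essentially the paper's own proof: after reducing to $i=1$ by symmetry, you invoke Claim~\ref{claim:tent_1} to get that $v$ is anticomplete to $K_5$ and then exhibit the $3$-sun with center on exactly the same vertex set $\{s_{35},v,s_{51},w_1,w_3,k_5,w_2\}$ used in the paper. No meaningful differences.
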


    Given the symmetry of the odd-indexed and even-indexed sets $K_j$, we analyze the case in which $v$ is adjacent to $K_1$ and $K_3$. Let $w_1,w_3$ be the respective neighbors. By Claim~\ref{claim:tent_1}, $v$ is anticomplete to $K_5$.
    If $v$ is nonadjacent to some vertex $w_2$ in $K_2$, then the set $\{ s_{35}$, $v$,  $s_{51}$, $w_{1}$, $w_{3}$, $k_{5}$, $w_{2} \}$ induces a $3$-sun with center. Hence, $v$ is complete to $K_2$. \QED

    \begin{claim} \label{claim:tent_3even}
        If $G$ is \fsc-free and $v$ in $S$ is adjacent to $K_i$ and $K_{i+2}$, with $i$ even, then either $v$ is complete to $K_{i+1}$ and one of $\{K_{i-1}, K_{i+3}\}$, or $v$ is complete to $K_j$ for $j \in \{ i-1, i-2, i-3 \}$.
    \end{claim}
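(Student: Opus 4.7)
The plan is to assume by the cyclic symmetry of the tent that $i=2$, so that $v\in S$ has neighbors $w_2\in K_2$ and $w_4\in K_4$, and we must prove that either (I) $v$ is complete to $K_3$ and to at least one of $\{K_1,K_5\}$, or (II) $v$ is complete to $K_1$, $K_6$, and $K_5$. The first step is to show that $v$ is either complete or anticomplete to $K_3$: if both a neighbor $w_3\in K_3$ and a non-neighbor $w_3'\in K_3$ of $v$ existed, then $\{v,w_2,w_3',w_4,s_{13},s_{35}\}$ would induce a tent (triangle $\{w_2,w_3',w_4\}$, with pendants $v$, $s_{13}$, $s_{35}$ adjacent to $\{w_2,w_4\}$, $\{w_2,w_3'\}$, $\{w_3',w_4\}$ respectively); since $w_3$ is adjacent to each of those six vertices, $\{v,w_2,w_3,w_3',w_4,s_{13},s_{35}\}$ induces a tent${}\vee{}K_1\in\fsc$, contradicting that $G$ is \fsc-free.

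\textbf{Case A:} $v$ is complete to $K_3$. By Claim~\ref{claim:tent_1}, $v$ is anticomplete to at least one of $K_1,K_5$, and by the reflective symmetry of the tent swapping $K_1\leftrightarrow K_5$ and $K_2\leftrightarrow K_4$ (and fixing $K_3,K_6$), we may assume $v$ is anticomplete to $K_5$; it then suffices to prove $v$ is complete to $K_1$. If $v$ were anticomplete to $K_1$, then the second part of Claim~\ref{claim:tent_1} applied to the triple $(K_2,K_4,K_6)$ would force $v$ anticomplete to $K_6$ as well, and the eight vertices $\{v,w_2,w_4,k_1,k_5,s_{13},s_{35},s_{51}\}$ would induce an even $4$-sun on the clique $\{k_1,w_2,w_4,k_5\}$ (with cyclic pendant-neighborhoods $s_{13}\sim\{k_1,w_2\}$, $v\sim\{w_2,w_4\}$, $s_{35}\sim\{w_4,k_5\}$, $s_{51}\sim\{k_5,k_1\}$), a contradiction. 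If instead $v$ has both a neighbor $k_1^*\in K_1$ and a non-neighbor $w_1'\in K_1$, then Claim~\ref{claim:tent_3odd} applied to $(K_1,K_3)$ forces $v$ complete to $K_2$, and a 7-vertex forbidden subgraph in \fsc\ (typically an $M_{III}(3)$ or a $3$-sun-with-center) can be constructed from $\{v,w_1',k_1^*,w_2,k_3,w_4\}$ together with an appropriately chosen tent pendant among $s_{13},s_{35},s_{51}$.

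\textbf{Case B:} $v$ is anticomplete to $K_3$. Here (I) fails, so we aim at (II). We first show that $v$ must be adjacent to each of $K_1$, $K_5$, and $K_6$. If $v$ is anticomplete to $K_1$, $K_5$, and $K_6$ then $\{v,w_2,w_4,k_1,k_5,s_{13},s_{35},s_{51}\}$ induces the same even $4$-sun as in Case A; if $v$ is adjacent to $K_6$ but anticomplete to both $K_1$ and $K_5$ then the second part of Claim~\ref{claim:tent_1} (applied to $K_2,K_4,K_6$) is violated; in each remaining case, in which $v$ is anticomplete to exactly one of $\{K_1,K_5,K_6\}$, Claim~\ref{claim:tent_2} applied either to $(K_2,K_5)$ or to $(K_4,K_1)$ yields a dichotomy both branches of which are ruled out by the anticompleteness to $K_3$ together with the anticompleteness to the missing set among $\{K_1,K_5,K_6\}$. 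Thus $v$ is adjacent to each of $K_1, K_5, K_6$; applying Claim~\ref{claim:tent_2} to $(K_2,K_5)$ and to $(K_4,K_1)$ and using that $v$ is anticomplete to $K_3$ collapses both dichotomies to yield $v$ complete to $\{K_1,K_6\}$ and to $\{K_5,K_6\}$, giving (II).

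The main obstacle will be the partial-adjacency subcase of Case A: the natural 6-vertex configuration built on $\{v,w_1',k_1^*,w_2,k_3,w_4\}$ is a $4$-tent or co-$4$-tent, both of which are themselves circle graphs and hence are not forbidden; producing an induced member of \fsc\ therefore requires adjoining a seventh vertex from the tent pendants, and the specific forbidden subgraph that appears depends on the detailed adjacencies of $v$ with $K_4$, $K_6$, and the particular tent vertex $k_1$.
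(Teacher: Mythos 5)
The decisive problem is the subcase you yourself flag as the ``main obstacle'': in Case A, when $v$ is complete to $K_3$, anticomplete to $K_5$, and has both a neighbor $k_1^*$ and a non-neighbor $w_1'$ in $K_1$, you never exhibit an induced member of \fsc; you only assert that ``an $M_{III}(3)$ or a $3$-sun-with-center can be constructed,'' with the choice depending on unexamined adjacencies. As written this is a genuine gap, not a detail. Moreover the difficulty is self-inflicted: the $4$-sun you already use in the anticomplete-to-both subcase does not need anticompleteness at all, only one non-neighbor of $v$ in $K_1$ and one in $K_5$, because the remaining vertices of $K_1$ (in particular $k_1^*$) are simply not part of the induced set. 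So in your problematic subcase the eight vertices $\{s_{13},s_{51},s_{35},v,w_2,w_1',w_5,w_4\}$, with $w_1'$ your non-neighbor in $K_1$ and $w_5$ any vertex of $K_5$, already induce a $4$-sun, and no seventh vertex is required. This is precisely the paper's opening move (``$v$ is complete to either $K_1$ or $K_5$, for if not we find a $4$-sun''); the paper then finishes with two $M_{III}(3)$ configurations, whereas your Case B instead finishes by applying Claim~\ref{claim:tent_2} twice, which is a legitimate alternative once the completeness-to-$K_1$-or-$K_5$ step is in place.

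Two further, repairable slips in Case B: the enumeration of subcases is not exhaustive, since you never treat $v$ anticomplete to exactly two of $\{K_1,K_5,K_6\}$ when $K_6$ is one of the two (e.g.\ anticomplete to $K_1$ and $K_6$ but adjacent to $K_5$); the same applications of Claim~\ref{claim:tent_2}, using that $K_1$, $K_3$, $K_5$ are nonempty, rule these out, but they must be stated. Also, the intermediate conclusion ``$v$ is adjacent to each of $K_1,K_5,K_6$'' is false when $K_6=\emptyset$; what your final collapse actually needs is only adjacency to $K_1$ and to $K_5$ (completeness to an empty $K_6$ being vacuous), so the argument should be phrased that way rather than through adjacency to $K_6$.
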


    Given the symmetry of the odd-indexed and even-indexed sets $K_j$, we analyze the case in which $v$ is adjacent to $K_2$ and $K_4$. Let $w_2,w_4$ be the respective neighbors.
    First, notice that $v$ is complete to either $K_1$ or $K_5$, for if not we find a $4$-sun induced by $\{ s_{13}$, $s_{51}$, $s_{35}$, $v$, $w_2$, $w_1$, $w_5$, $w_4 \}$, where $w_1$ and $w_5$ are non-neighbors of $v$ in $K_1$ and $K_5$, respectively.
    Suppose that $v$ is complete to $K_1$. If $v$ is not complete to $K_3$, then $v$ is complete to $K_5$ and $K_6$, for if not there is $M_{III}(3)$ induced by $\{ s_{13}$, $s_{51}$, $v$, $k_1$, $w_3$, $w_4$, $w_j \}$ for both $j=5,6$, where $w_\ell$ is a non-neighbor of $v$ in $K_\ell$, for $\ell \in \{3,5,6\}$. \QED

    \begin{remark}
        As a consequence of the previous claims we also proved that, if $G$ is \fsc-free, then:
        \begin{itemize}
            \item For each $i \in \{1, 2, \ldots, 6\}$, the sets $S_{i(i-1)}$ are empty, for if not, there is a vertex $v$ in $S$ such that $v$ is adjacent to $K_1$, $K_3$ and $K_5$ (contradicting Claim~\ref{claim:tent_1}). Moreover, the same holds for $S_{i(i-2)}$, for each $i \in \{1, 3, 5\}$.
            \item For each $i \in \{2, 4, 6 \}$, the sets $S_{i(i+2)}$ are empty since every vertex $v$ in $S$ such that $v$ is adjacent to $K_i$ and $K_{i+2}$ is necessarily complete to either $K_{i-1}$ or $K_{i+3}$ (Claim~\ref{claim:tent_3even}).
        \end{itemize}
    \end{remark}

    \begin{claim} \label{claim:tent_4}
        If $G$ is \fsc-free, then for each $i \in \{1, 3, 5\}$, every vertex in $S_{i(i+3)} \cup S_{(i+3)i}$ is complete to $K_i$.
    \end{claim}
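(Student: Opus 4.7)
The plan is to argue by contradiction, splitting on whether $K_2=\emptyset$, and combining Claim~\ref{claim:tent_3even} with a direct forbidden-subgraph argument.

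By the cyclic symmetry of the three odd-indexed classes $K_1,K_3,K_5$ in the partition of $K$, it suffices to treat $i=1$; and by the reflection interchanging $K_2\cup K_3$ with $K_5\cup K_6$ (which swaps $S_{14}$ and $S_{41}$), it suffices to handle $v\in S_{14}$. So fix such a $v$: by definition, $v$ is complete to $K_2\cup K_3$, anticomplete to $K_5\cup K_6$, and adjacent to some $w_1\in K_1$ and some $w_4\in K_4$; in particular $vk_3\in E$, $vk_5\notin E$, and $v$ is non-adjacent to each of $s_{13},s_{35},s_{51}$. Suppose toward a contradiction that $w_1'\in K_1$ satisfies $vw_1'\notin E$.

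If $K_2\neq\emptyset$, then $v$ is adjacent to both $K_2$ (by completeness) and $K_4$, so applying Claim~\ref{claim:tent_3even} with $i=2$ yields either $(a)$ $v$ is complete to $K_3$ and to one of $\{K_1,K_5\}$, or $(b)$ $v$ is complete to $K_1,K_6,K_5$. Since $k_5\in K_5$ is a non-neighbor of $v$ and $K_5\neq\emptyset$, the branch ``$v$ complete to $K_5$'' is impossible, ruling out the $K_5$-branch of $(a)$ and option $(b)$. Hence $v$ is complete to $K_1$, contradicting $vw_1'\notin E$.

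If $K_2=\emptyset$, the preceding step is unavailable, so I would instead exhibit a forbidden induced subgraph. Consider $H=G[\{w_1,w_1',k_3,w_4,k_5,s_{13},s_{35},s_{51},v\}]$. The set $\{w_1,w_1',k_3,w_4,k_5\}$ lies in the clique $K$ and so induces a $K_5$. Reading off the definitions of $K_1,\ldots,K_6$ and of $S_{14}$, $s_{13}$ is adjacent exactly to $\{w_1',w_1,k_3\}$, $v$ to $\{w_1,k_3,w_4\}$, $s_{35}$ to $\{k_3,w_4,k_5\}$, and $s_{51}$ to $\{w_1',w_1,k_5\}$. Thus each of the four pendants is adjacent to exactly three consecutive vertices of the $K_5$ under the cyclic order $(w_1',w_1,k_3,w_4,k_5)$, realizing four of the five possible rotations. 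A direct check shows that $H$ is not a circle graph (no placement of the four ``length-three'' pendants on the standard $K_5$ circle model is compatible with the required pairwise non-adjacencies among the $s$'s and $v$), so $H$ contains a member of $\mathcal{F}_{sc}$, yielding the desired contradiction.

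The main obstacle I anticipate is pinpointing which exact member of $\mathcal{F}_{sc}$ is realized in the second case. I would resolve this by writing out the induced adjacency matrix of $H$ and comparing it with the list in Figure~\ref{fig:forb_graphs}; the ``$K_5$ with four cyclic length-three pendants'' pattern most closely matches $F_1(5)$ or $F_2(5)$, and if $H$ itself does not appear there verbatim, a proper induced subgraph obtained by deleting one further well-chosen vertex (for example $k_1$, when it is distinct from $w_1$ and $w_1'$) should isolate a canonical member of $\mathcal{F}_{sc}$.
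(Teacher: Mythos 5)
Your reduction to $i=1$, $v\in S_{14}$ and your first case ($K_2\neq\emptyset$, via Claim~\ref{claim:tent_3even}) are fine, but the second case — which is really the whole content of the claim, since the paper needs no case split on $K_2$ at all — has a genuine gap. You argue that your nine-vertex graph $H$ ``is not a circle graph, so $H$ contains a member of $\mathcal{F}_{sc}$.'' That inference is not available here: at this point in the paper the only established direction is that members of \fsc\ are non-circle (Lemma~\ref{lem:nocircle}); the converse, that every non-circle split graph contains a member of \fsc, is precisely Theorem~\ref{teo:circle_split_caract}, whose proof this claim is part of. To prove the claim you must exhibit a concrete member of \fsc\ inside $G$, and your proposal defers exactly that identification. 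Your guess that the relevant obstruction is $F_1(5)$ or $F_2(5)$ is also off: those have five rows (five stable-set vertices), whereas your configuration has four.

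The correct identification is the one your last sentence gropes toward: delete $s_{51}$ (not a clique vertex) from $H$. The paper's proof does exactly this, with no reference to $K_2$: taking $w_{11},w_{12}\in K_1$ with $v$ nonadjacent to $w_{11}$ and adjacent to $w_{12}$, and $w_4\in K_4$ adjacent to $v$, the set $\{s_{13},s_{35},v,w_{11},w_{12},k_3,w_4,k_5\}$ induces $F_0$: ordering the clique as $(w_{11},w_{12},k_3,w_4,k_5)$, the neighborhoods of $s_{13}$, $v$, $s_{35}$ are the three consecutive length-three windows $11100$, $01110$, $00111$. (For $v\in S_{41}$ one uses $s_{35},s_{51}$ instead.) So your configuration does contain the right obstruction, but the proof as written neither names it nor has a valid justification for its existence; and the $K_2\neq\emptyset$ detour, while correct, buys you nothing since the $F_0$ argument is uniform.
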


    We will prove this claim without loss of generality for $i = 1$.

    Let $v$ in $S_{14}$. By definition, $v$ is adjacent to $k_3$ and nonadjacent to $k_5$. Towards a contradiction, let $w_{11}$ and $w_{12}$ in $K_1$ such that $v$ is nonadjacent to $w_{11}$ and $v$ is adjacent to $w_{12}$, and let $w_4$ in $K_4$ such that $v$ is adjacent to $w_4$. In this case, we find $F_0$ induced by the set $\{ s_{13}$, $s_{35}$, $v$, $w_{11}$, $w_{12}$, $k_3$, $w_4$, $k_5 \}$.

    Analogously, if $v$ is in $S_{41}$, then $F_0$ is induced by $\{ s_{35}$, $s_{51}$, $v$, $w_{11}$, $w_{12}$, $k_3$, $w_4$, $k_5 \}$. \QED

    \vspace{3mm}

The following lemma is a straightforward consequence of Claims~\ref{claim:tent_1} to~\ref{claim:tent_4}.

\begin{lema} \label{lema:tent_2}
Let $G=(K,S)$ be a split graph that contains an induced tent. If $G$ is \fsc-free, then all the following assertions hold:
 \begin{itemize}
  \item $\{S_{ij}\}_{i,j\in\{1,2,\ldots,6\}}$ is a partition of $S$.
  \item For each $i\in\{1,3,5\}$, $S_{i(i-1)}$ and $S_{i(i-2)}$ are empty.
  \item For each $i\in\{2,4,6\}$, $S_{i(i-1)}$ and $S_{i(i+2)}$ are empty.
  \item For each $i\in\{1,3,5\}$, $S_{i(i+3)}$ and $S_{(i+3)i}$ are complete to $K_i$.
 \end{itemize}

\begin{table}[h]
\begin{center}
\small{
    \begin{tabular}{ c | c c c c c c }
         \hline
         $i\setminus j$ & 1 & 2 & 3 & 4 & 5 & 6 \\
          \hline
         1 & \checkmark & \checkmark & \checkmark & \textcolor{dark-orange}{\checkmark} & $\emptyset$ & $\emptyset$ \\
         2 & $\emptyset$ & \checkmark & \checkmark & $\emptyset$ & \checkmark & \checkmark \\
         3 & $\emptyset$ & $\emptyset$ & \checkmark & \checkmark & \checkmark & \textcolor{dark-orange}{\checkmark} \\
         4 & \textcolor{dark-orange}{\checkmark} & \checkmark & $\emptyset$ & \checkmark & \checkmark & $\emptyset$ \\
         5 & \checkmark & \textcolor{dark-orange}{\checkmark} & $\emptyset$ & $\emptyset$ & \checkmark & \checkmark \\
         6 & \checkmark & $\emptyset$ & \textcolor{dark-orange}{\checkmark} & \checkmark & $\emptyset$ & \checkmark \\
    \end{tabular}}
\caption{The (possibly) nonempty parts of $S$ in the tent case. The orange checkmarks denote those $S_{ij}$ for which every vertex is complete to $K_i$ or $K_j$.}
\end{center}
\end{table}
\end{lema}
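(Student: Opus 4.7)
The plan is to derive Lemma~\ref{lema:tent_2} as a bookkeeping corollary of Claims~\ref{claim:tent_1}--\ref{claim:tent_4} together with the intervening remark. The four bullets split into one combinatorial partition statement, two emptiness statements, and one completeness statement; I will handle them in that order.

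For the partition assertion I would fix an arbitrary $v\in S$ and study the \emph{adjacency set} $I(v)=\{\,i\in\{1,\ldots,6\}:N_i(v)\neq\emptyset\,\}$. The goal is to show that $I(v)$ sits inside a unique cyclic arc $\{i,i+1,\ldots,j\}$ of the cycle $1,2,\ldots,6$, with $v$ adjacent to both $K_i$ and $K_j$, and that $v$ is in fact complete to every $K_\ell$ with $\ell$ strictly between $i$ and $j$ along that arc; this is precisely the condition $v\in S_{ij}$. Pairwise disjointness of the $S_{ij}$ is immediate from the definition, so uniqueness of $(i,j)$ then yields the partition. The case analysis I would run is organised by the parity pattern of $I(v)$: Claim~\ref{claim:tent_1} rules out $\{1,3,5\}\subseteq I(v)$ and restricts how $\{2,4,6\}$ can sit inside $I(v)$; Claim~\ref{claim:tent_3odd} forces completeness of $v$ on an odd sandwich index whenever two consecutive odd-indexed parts are hit; Claim~\ref{claim:tent_3even} does the analogous job for even sandwich indices; and Claim~\ref{claim:tent_2} handles the diametral configurations in which $i$ and $i+3$ both lie in $I(v)$. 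Running through the handful of cyclic configurations consistent with all four claims pins down the arc and shows the interior is completely dominated by $v$.

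For the emptiness bullets I would simply invoke the remark: the sets $S_{i(i-1)}$ for every $i\in\{1,\ldots,6\}$ and $S_{i(i-2)}$ for $i\in\{1,3,5\}$ are empty because any vertex in them would be adjacent to $K_1$, $K_3$ and $K_5$ simultaneously, violating Claim~\ref{claim:tent_1}, while the sets $S_{i(i+2)}$ for $i\in\{2,4,6\}$ are empty because Claim~\ref{claim:tent_3even} forces any such $v$ to be complete to either $K_{i-1}$ or $K_{i+3}$, which would shift $v$ into a larger-arc $S_{i'j'}$. The last bullet is precisely the statement of Claim~\ref{claim:tent_4}. The main obstacle I anticipate is not any individual claim but the bookkeeping behind the arc argument: one must verify that every admissible adjacency pattern of $v$ lands in exactly one $S_{ij}$, including the degenerate cases $|I(v)|\in\{0,1\}$, which I would absorb into the diagonal $S_{ii}$ by convention.
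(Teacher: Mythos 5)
Your proposal matches the paper's intent exactly: the paper states Lemma~\ref{lema:tent_2} as a "straightforward consequence of Claims~\ref{claim:tent_1} to~\ref{claim:tent_4}" (together with the intervening remark), which is precisely the derivation you outline, with your arc/bookkeeping argument simply filling in the routine case analysis the paper leaves implicit. No gap; the emptiness bullets via the remark and Claim~\ref{claim:tent_3even}, and the completeness bullet via Claim~\ref{claim:tent_4}, are exactly how the paper intends them to be read.
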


\subsection{Partitions of $S$ and $K$ for a graph containing an induced 4-tent} \label{subsec:4tent_partition}

Let $G=(K,S)$ be a split graph where $K$ is a clique and $S$ is an independent set. Let $H$ be an induced subgraph of $G$ isomorphic to a $4$-tent. Let $V(T)=\{k_1,$ $k_2,$ $k_4,$ $k_5,$ $s_{12},$ $s_{24},$ $s_{45}\}$ where $k_1,k_2,k_4,k_5\in K$, $s_{12},s_{24},s_{45}\in S$, and the neighbors of $s_{ij}$ in $H$ are precisely $k_i$ and $k_j$.

We introduce sets $K_1,K_2,\ldots,K_6$ as follows.
\begin{itemize}
 \item Let $K_1$ be the set of vertices of $K$ whose only neighbor in $V(T)\cap S$ is $s_{12}$.
 Analogously, let $K_3$ be the set of vertices of $K$ whose only neighbor in $V(T)\cap S$ is $s_{24}$, and let $K_5$ be the set of vertices of $K$ whose only neighbor in $V(T)\cap S$ is $s_{45}$.
 \item For each $i\in \{2, 4\}$, let $K_i$ be the set of vertices of $K$ whose neighbors in $V(T)\cap S$ are precisely $s_{ji}$ and $s_{ik}$, for $i=2$, $j=1$ and $k=2$ or $i=4$, $j=2$ and $k=5$.

 \item Let $K_6$ be the set of vertices of $K$ that are anticomplete to $V(T)\cap S$.
\end{itemize}

Let $i,j \in \{1, \ldots, 6\}$ and let $S_{ij}$ defined as in the previous section. We denote by $S_{[ij}$ (resp.\ $S_{ij]}$) the set of vertices in $S$ that are adjacent to $K_j$ and complete to $K_i$, $K_{i+1}, \ldots, K_{j-1}$ (resp.\ adjacent to $K_i$ and complete to $K_{i+1}, \ldots, K_{j-1}, K_j$). 
We denote by $S_{[ij]}$ the set of vertices in $S$ that are complete to $K_i, \ldots, K_j$.

Consider those vertices in $S$ that are complete to $K_2, \ldots, K_5$ and are adjacent to al least one vertex in both $K_1$ and $K_6$. We consider these vertices divided into two distinct subsets: we denote by $S_{[16}$ the subset that contains those vertices that are complete to $K_1, K_2, \ldots, K_5$, and $S_{16}$ to the subset of those that are adjacent but not complete to $K_1$. Furthermore, we denote by $S_{65}$ the vertices in $S$ that are adjacent but not complete to $K_5$.

In an analogous way as in the tent case, we obtain the following lemma for a split graph that contains an induced $4$-tent. The details of the proof can be found in~\cite{P20}.

\begin{lema} \label{lema:4tent_particion}
Let $G=(K,S)$ be a split graph that contains an induced $4$-tent and contains no induced tent. If $G$ is \fsc-free, then all of the following assertions hold:
 \begin{itemize}
 \item $\{K_1, K_2, \ldots, K_6\}$ is a partition of $K$.
  \item $\{S_{ij}\}_{i,j\in\{1,2,\ldots,6\}}$ is a partition of $S$.
  \item For each $i\in\{2,3,4,5\}$, $S_{i1}$ is empty.
  \item For each $i\in\{3,4,5\}$, $S_{i2}$ is empty.
  \item The subsets $S_{43}$, $S_{53}$ and $S_{54}$ are empty. 
  \item The following subsets coincide: $S_{13}= S_{[13}$, $S_{14}=S_{14]}$, $S_{25}=S_{[25}$, $S_{26}=S_{[26}$, $S_{35}=S_{35]}$, $S_{46} = S_{[46}$, $S_{62} = S_{62]}$ and $S_{64} = S_{64]}$.
 \end{itemize}

\begin{table}[h!]
    \begin{center}
    \small{
        \begin{tabular}{ c | c c c c c c }
             \hline
             $i\setminus j$ & 1 & 2 & 3 & 4 & 5 & 6 \\
              \hline
             1 & \checkmark & \checkmark & \textcolor{dark-orange}{\checkmark} & \textcolor{dark-orange}{\checkmark} & \checkmark & \checkmark \\
             2 & $\emptyset$ & \checkmark & \checkmark & \checkmark & \textcolor{dark-orange}{\checkmark} & \textcolor{dark-orange}{\checkmark} \\
             3 & $\emptyset$ & $\emptyset$ & \checkmark & \checkmark & \textcolor{dark-orange}{\checkmark} & \checkmark \\
             4 & $\emptyset$ & $\emptyset$ & $\emptyset$ & \checkmark & \checkmark & \textcolor{dark-orange}{\checkmark} \\
             5 & $\emptyset$ & $\emptyset$ & $\emptyset$ & $\emptyset$ & \checkmark & \checkmark \\
             6 & \checkmark & \textcolor{dark-orange}{\checkmark} & \checkmark & \textcolor{dark-orange}{\checkmark} & \checkmark & \checkmark \\
        \end{tabular}}
    \end{center}
    \caption{The (possibly) nonempty parts of $S$ in the $4$-tent case. The orange checkmarks denote those sets $S_{ij}$ complete to either $K_i$ or $K_j$.} 
\end{table}

\end{lema}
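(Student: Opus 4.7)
The plan is to mimic the scheme used for the tent case (Lemma~\ref{lema:tent_2}): first partition $K$ by the adjacency pattern on $V(T)\cap S=\{s_{12},s_{24},s_{45}\}$, then partition $S$ by the adjacency pattern on $(K_1,\ldots,K_6)$, and finally invoke \fsc-freeness (together with the extra hypothesis that $G$ is tent-free) to rule out the forbidden cells and to force completeness at one endpoint for the ``long'' cells.

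I would begin by proving that $\{K_1,\ldots,K_6\}$ is a partition of $K$. A vertex of $K$ can \emph{a priori} realize any of the $2^3=8$ subsets of $V(T)\cap S$ as its neighborhood; six of these patterns correspond exactly to $K_1,\ldots,K_6$, so it suffices to rule out the two remaining ones. If $v\in K$ is adjacent to $s_{12}$ and $s_{45}$ but not $s_{24}$, then the set $\{v,k_2,k_4,s_{12},s_{24},s_{45}\}$ induces a tent (triangle $v,k_2,k_4$, with $s_{12}$ on $vk_2$, $s_{45}$ on $vk_4$, $s_{24}$ on $k_2k_4$), contradicting tent-freeness. If instead $v$ is adjacent to all three of $s_{12},s_{24},s_{45}$, then $v$ is universal to the induced $4$-tent $T$, and combining $v$ with a suitable $6$-vertex subset of $V(T)$ produces either a tent or an induced graph of \fsc\ (a $3$-sun with center or $M_{III}(3)$), depending on the configuration.

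Next, for the partition of $S$, I would fix an arbitrary $v\in S\setminus V(T)$ and study its adjacencies with $(K_1,\ldots,K_6)$. In the spirit of Claims~\ref{claim:tent_1}--\ref{claim:tent_4}, I would establish: (i) the collection of $K_i$'s to which $v$ is complete forms a consecutive ``arc'' in the cyclic order on $\{1,\ldots,6\}$; (ii) $v$ is adjacent but not complete to at most two further $K_i$'s, which must be the immediate successor and predecessor of that arc. Each subclaim is proved by picking witnesses $w_i\in K_i$ and exhibiting an induced member of \fsc\ (a tent${}\vee{}K_1$, a $3$- or $4$-sun with center, $M_{III}(3)$, $F_0$, etc.) built from those witnesses, $v$, and vertices of $T$. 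Since a tent is now also forbidden, the permissible adjacency patterns are further constrained compared with Lemma~\ref{lema:tent_2}, which is precisely what forces the cells $S_{i1}$ for $i\in\{2,3,4,5\}$, $S_{i2}$ for $i\in\{3,4,5\}$, $S_{43}$, $S_{53}$, and $S_{54}$ to be empty.

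Finally, for the coincidences $S_{13}=S_{[13}$, $S_{14}=S_{14]}$, and the other six analogous equalities, the argument is uniform and parallels Claim~\ref{claim:tent_4}: assume $v$ belongs to one of these classes but fails to be complete to the endpoint $K_i$ in question; choose a non-neighbor $w_i\in K_i$ together with suitable neighbor witnesses in the other relevant $K_j$'s, and exhibit an induced $F_0$, $M_{III}(3)$, or similar graph from \fsc. The main obstacle is bookkeeping rather than depth: the $4$-tent admits only a $\mathbb{Z}/2$-symmetry (swapping $k_1\leftrightarrow k_5$, $k_2\leftrightarrow k_4$, $s_{12}\leftrightarrow s_{45}$, with $s_{24}$ fixed), so the case analysis has substantially less symmetry than in the tent situation and roughly twice as many subcases arise; each of the approximately twenty emptiness/completeness statements must be verified separately. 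Every subcase nevertheless fits the same template of manufacturing a small induced member of \fsc, making the verification systematic but lengthy, with the full details appearing in~\cite{P20}.
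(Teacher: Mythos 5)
Your overall strategy is the same as the paper's (the paper itself only sketches this lemma and defers the case analysis to~\cite{P20}): partition $K$ by the adjacency patterns on $\{s_{12},s_{24},s_{45}\}$, ruling out the two bad patterns, then classify each $v\in S$ by the span of its adjacencies to $K_1,\ldots,K_6$, producing in each forbidden configuration a small member of \fsc\ or a tent, and prove the endpoint-completeness statements ($S_{13}=S_{[13}$, etc.) by the same template of exhibiting $F_0$, $M_{III}$-type graphs, suns, and so on. That is exactly the intended argument, so the plan is sound.

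One concrete step as you state it would fail: in the $K$-partition, for a vertex $v\in K$ adjacent to all three of $s_{12},s_{24},s_{45}$ you claim a certificate on $v$ together with a six-vertex subset of $V(T)$, namely a tent, a $3$-sun with center, or $M_{III}(3)$. No seven-vertex subset of $\{v\}\cup V(T)$ works here. A tent is impossible: $v$ cannot be a triangle vertex (it is adjacent to all three stable vertices, while each triangle vertex of a tent misses one), and without $v$ the pairs $\{k_1,k_2\},\{k_2,k_4\},\{k_4,k_5\}$ do not form a triangle; a $3$-sun with center is impossible because its three stable vertices have degree $3$ over only four clique vertices, whereas here the union of the neighborhoods of $s_{12},s_{24},s_{45}$ is the five-vertex set $\{k_1,k_2,v,k_4,k_5\}$ and deleting any one of $k_1,k_2,k_4,k_5$ drops some $s_{ij}$ to degree $2$; and the $3\times 4$ submatrices one obtains all have the consecutive-ones property, so no $M_{III}(3)$ arises either. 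The correct witness is the full eight-vertex set $\{v\}\cup V(T)$, whose matrix on rows $s_{12},s_{24},s_{45}$ and columns $k_1,k_2,v,k_4,k_5$ is exactly $F_0$ (this is the certificate the paper uses). With that substitution, and keeping in mind that the case ``adjacent to $s_{12}$ and $s_{45}$ but not $s_{24}$'' is excluded by tent-freeness as you say, your argument for the partition of $K$ goes through, and the rest of your outline matches the paper's claim-by-claim verification.
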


\subsection{Partitions of $S$ and $K$ for a graph containing an induced co-4-tent} \label{subsec:co4tent_partition}

Let $G=(K,S)$ be a split graph where $K$ is a clique and $S$ is an independent set, and suppose that $G$ contains no induced tent or $4$-tent.
Let $H$ be an induced subgraph of $G$ isomorphic to a co-$4$-tent. Let $V(T)=\{k_1,$ $k_3,$ $k_5,$ $s_{13},$ $s_{35},$ $s_1,$ $s_5\}$ where $k_1,k_3,k_5\in K$, $s_{13},$ $s_{35},$ $s_1,$ $s_5$ in $S$ such that the neighbors of $s_{ij}$ in $H$ are precisely $k_i$ and $k_j$ and the neighbor of $s_i$ in $H$ is precisely $k_i$.

We introduce sets $K_1,K_2,\ldots,K_{15}$ as follows.
\begin{itemize}
 \item Let $K_1$ be the set of vertices of $K$ whose only neighbors in $V(T)\cap S$ are $s_1$ and $s_{13}$. Analogously, let $K_5$ be the set of vertices of $K$ whose only neighbors in $V(T)\cap S$ are $s_5$ and $s_{35}$, and let $K_3$ be the set of vertices of $K$ whose only neighbors in $V(T)\cap S$ are $s_{13}$ and $s_{35}$. Let $K_{13}$ be the set of vertices of $K$ whose only neighbors in $V(T)\cap S$ are $s_1$ and $s_5$, $K_{14}$ be the set of vertices of $K$ whose only neighbors in $V(T)\cap S$ are $s_{13}$ and $s_5$ and $K_{15}$ be the set of vertices of $K$ whose only neighbors in $V(T)\cap S$ are $s_1$ and $s_{35}$.
 \item Let $K_2$ be the set of vertices of $K$ whose neighbors in $V(T)\cap S$ are precisely $s_1$, $s_{13}$ and $s_{35}$, and let $K_4$ be the set of vertices of $K$ whose neighbors in $V(T)\cap S$ are precisely $s_5$, $s_{13}$ and $s_{35}$. Let $K_9$ be the set of vertices of $K$ whose neighbors in $V(T)\cap S$ are precisely $s_1$, $s_{13}$ and $s_5$, and let $K_{10}$ be the set of vertices of $K$ whose neighbors in $V(T)\cap S$ are precisely $s_1$, $s_{35}$ and $s_5$.
 \item Let $K_6$ be the set of vertices of $K$ whose only neighbor in $V(T)\cap S$ is precisely $s_{35}$, and let $K_8$ be the set of vertices of $K$ whose only neighbor in $V(T)\cap S$ is precisely $s_{13}$. Let $K_{11}$ be the set of vertices of $K$ whose only neighbor in $V(T)\cap S$ is precisely $s_1$, and let $K_{12}$ be the set of vertices of $K$ whose only neighbor in $V(T)\cap S$ is precisely $s_5$.
 \item Let $K_7$ be the set of vertices of $K$ that are anticomplete to $V(T) \cap S$.

\end{itemize}

\begin{remark} \label{obs:co4tent_1}
If $K_4 = \emptyset$, then there is a split decomposition of $G$. Let us consider the subset $K_5$ on the one hand, and on the other hand a vertex $u \not\in G$ such that $u$ is complete to $K_5$ and is anticomplete to $V(G) \setminus K_5$. Let $G_1$ and $G_2$ be the subgraphs induced by the vertex subsets $V_1 = V(G) \setminus S_{55}$ and $V_2 = \{u\} \cup K_5 \cup S_{55}$, respectively. Hence, $G$ is the result of the split composition of $G_1$ and $G_2$ with respect to $K_5$ and $u$. The same holds if $K_2 = \emptyset$ considering the subgraphs induced by the vertex subsets $V_1 = V(G) \setminus S_{11}$ and $V_2 = \{u\} \cup K_1 \cup S_{11}$, where in this case $u$ is complete to $K_1$ and is anticomplete to $V(G) \setminus K_1$.

If we consider $H$ a minimally non-circle graph, then $H$ is a prime graph, for if not one of the factors should be non-circle and thus $H$ would not be minimally non-circle~\cite{Bou-circle}. Hence, in order to characterize those circle graphs that contain an induced co-$4$-tent, we will assume without loss of generality that $G$ is a prime graph, and therefore $K_2 \neq \emptyset$ and $K_4 \neq \emptyset$. 
\end{remark}

In an analogous way as in the tent case, we obtain the following lemma for a split graph that contains an induced co-$4$-tent. The details of the proof can be found in~\cite{P20}.

\begin{lema} \label{lema:co4tent_particion}
Let $G=(K,S)$ be a split graph that contains an induced co-$4$-tent and contains no induced tent or $4$-tent. If $G$ is \fsc-free, then all the following assertions hold:
 \begin{itemize}
  \item $K_9, \ldots, K_{15}$ are empty sets and $\{K_1, K_2, \ldots, K_8\}$ is a partition of $K$.
  \item $\{S_{ij}\}_{i,j\in\{1,2,\ldots,8\}}$ is a partition of $S$.
  \item For each $i\in\{2,3,4,5,6,7,8 \}$, $S_{i1}$ is empty.
  \item For each $i\in\{3,4,5,6,7 \}$, $S_{i2}$ is empty.
  \item For each $i\in\{4,5,6,7 \}$, $S_{i3}$ is empty, and $S_{56}$ is also empty.
  \item For each $i\in\{3,4,5,6 \}$, $S_{i7}$ is empty.
  \item For each $i\in\{2,3,4,5,6,7 \}$, $S_{i8}$ is empty.
  \item The subsets $S_{64}$, $S_{54}$ and $S_{56}$ are empty.
  \item The following subsets coincide: $S_{1i}= S_{[1i}$ for $i=3,4,8$;
  $S_{16}=S_{16]}$, $S_{25}=S_{25]}$, $S_{27}=S_{[27}$, $S_{35}=S_{35]}$, $S_{46} = S_{[46}$, $S_{82} = S_{82]}$ and $S_{85} = S_{[85}$ (as the case may be, according to whether $K_i \neq \emptyset$ or not, for $i=6,7,8$).
 \end{itemize}

\end{lema}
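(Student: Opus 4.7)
The plan is to mirror the tent case (Lemma~\ref{lema:tent_2}) and the $4$-tent case (Lemma~\ref{lema:4tent_particion}) in three steps: (i) verify that $\{K_1,\ldots,K_8\}$ partitions $K$ by ruling out $K_9,\ldots,K_{15}$ and the "all four neighbors" adjacency pattern, (ii) establish the circular structure of $S$-vertex neighborhoods and deduce which $S_{ij}$ must be empty, and (iii) show the coincidences $S_{1i}=S_{[1i}$, $S_{35}=S_{35]}$, etc. For step (i), each of the seven sets $K_9,\ldots,K_{15}$ corresponds to a specific adjacency pattern of a vertex $v \in K$ with the four vertices of $V(T)\cap S$. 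For each, I would exhibit a small induced subgraph — either a tent or $4$-tent (ruled out by hypothesis), or a member of \fsc\ such as the $3$-sun with center, $F_0$, $M_{III}(3)$, or $M_V$ — obtained by combining $v$ with $k_1,k_3,k_5$ and a suitable subset of $\{s_{13},s_{35},s_1,s_5\}$. For example, a vertex $v\in K_{13}$ (adjacent only to $s_1$ and $s_5$) together with $k_1,k_3,k_5,s_{13},s_{35}$ induces a $4$-tent; the remaining "all four" case is handled analogously.

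For step (ii), a sequence of claims analogous to Claims~\ref{claim:tent_1}--\ref{claim:tent_3even} pins down the allowed pairs of "endpoints" of a vertex $v \in S$ in the circular ordering $K_1,K_2,\ldots,K_8$. Emptiness of $S_{i1}$ for $i\geq 2$, and of the remaining subsets listed in the lemma, follows because an $S$-vertex with such a forbidden neighborhood, combined with the co-$4$-tent vertices and intermediate vertices of $K$, produces an induced tent or $4$-tent (excluded by hypothesis), or a forbidden subgraph from \fsc\ — typically a $3$-sun with center, $F_0$, $M_{III}(3)$, $M_{IV}$, or $M_V$. The emptiness of $S_{54}$, $S_{56}$, and $S_{64}$ follows by the same kind of reasoning.

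For step (iii), to prove, e.g., $S_{35}=S_{35]}$, I would assume a vertex $v \in S_{35}$ has a non-neighbor $w\in K_5$ in addition to its required neighbors. Then $\{v,w\}$ together with an adjacent vertex in $K_5$, the relevant vertex of $K_3$, and appropriate vertices of the co-$4$-tent would induce $F_0$ (or $M_{III}(3)$), contradicting \fsc-freeness. Analogous arguments, using $F_0$ or $M_{III}(3)$ as the target forbidden subgraph, handle the other coincidences $S_{1i}=S_{[1i}$ for $i\in\{3,4,8\}$, $S_{16}=S_{16]}$, $S_{25}=S_{25]}$, $S_{27}=S_{[27}$, $S_{46}=S_{[46}$, $S_{82}=S_{82]}$, and $S_{85}=S_{[85}$.

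The main obstacle will be the sheer bookkeeping: with $8$ parts of $K$ and a $7$-vertex base graph, the case tree is substantially larger than in the tent or $4$-tent cases, and the distinction between the "$[ij$", "$ij]$", and "$[ij]$" flavors introduces additional subcases. Fortunately, the prior assumption that $G$ is tent-free and $4$-tent-free, together with the primality reduction of Remark~\ref{obs:co4tent_1} (which guarantees $K_2\neq\emptyset$ and $K_4\neq\emptyset$), substantially prunes the branches; and the rich family \fsc\ — in particular the $M_{II}(k)$, $M_{III}(k)$, $F_1(k)$, and $F_2(k)$ infinite families — supplies enough forbidden configurations for every remaining case.
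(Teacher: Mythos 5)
Your plan follows essentially the same route as the paper's: the paper states that the co-$4$-tent partition lemma is obtained ``in an analogous way as in the tent case'' and defers the detailed case analysis to~\cite{P20}, and that analysis is exactly what you describe — first eliminate $K_9,\ldots,K_{15}$ and the ``adjacent to all of $V(T)\cap S$'' pattern by exhibiting a tent, a $4$-tent, or a member of \fsc, then pin down the circular structure of the neighborhoods of $S$-vertices to kill the listed $S_{ij}$, and finally obtain the coincidences $S_{35}=S_{35]}$, $S_{1i}=S_{[1i}$, etc., by producing small forbidden configurations (tent, $4$-tent, net${}\vee{}K_1$, $F_0$, $M_{II}(4)$, \ldots) from a hypothetical ``missing'' neighbor.

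One concrete step in your sketch is wrong, however: the certificate you give for $K_{13}$. A $4$-tent has seven vertices, so the six-vertex set $\{v,k_1,k_3,k_5,s_{13},s_{35}\}$ cannot induce one; what it actually induces is a $K_4$ together with two stable vertices attached to the pairs $k_1k_3$ and $k_3k_5$, which is a circle graph and certifies nothing. The correct certificate for $v\in K_{13}$ (adjacent exactly to $s_1$ and $s_5$) is obtained by keeping all four stable vertices: $\{v,k_1,k_3,k_5,s_1,s_{13},s_{35},s_5\}$ induces an even $4$-sun, which lies in \fsc. The $4$-tent certificates arise instead for $K_{11}$ and $K_{12}$ (e.g.\ for $v\in K_{11}$, adjacent only to $s_1$, the set $\{v,k_1,k_3,k_5,s_1,s_{13},s_{35}\}$ is a $4$-tent), while $K_9$, $K_{10}$, $K_{14}$, $K_{15}$ yield tents. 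With that correction, and with the routine (if lengthy) verification of the certificates you left implicit in steps (ii) and (iii), your outline matches the intended proof.
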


Since $S_{18} = S_{[18}$, we will consider these vertices as those in $S_{87}$ that are complete to $K_7$ and $S_{18} = \emptyset$. Moreover, those vertices that are complete to $K_1, \ldots, K_6, K_8$ and are adjacent to $K_7$ will be considered as in $S_{76]}$, thus $S_{87}$ is the set of vertices of $S$ that are complete to $K_1, \ldots, K_7$ and are adjacent but not complete to $K_8$.
These results are summarized in Table~\ref{fig:tabla_co4tent_1}.

\begin{table}[h!]
\begin{center}
\small{
    \begin{tabular}{ c | c c c c c c c c}
         \hline
         $i\setminus j$ & 1 & 2 & 3 & 4 & 5 & 6 & 7 & 8 \\
          \hline
         1 & \checkmark & \checkmark & \textcolor{orange}{\checkmark} & \textcolor{orange}{\checkmark} & $\emptyset$ & \textcolor{orange}{\checkmark} & \checkmark & $\emptyset$ \\
         2 & $\emptyset$ & \checkmark & \checkmark & $\emptyset$ & \textcolor{orange}{\checkmark} & \checkmark & \textcolor{orange}{\checkmark} & $\emptyset$ \\
         3 & $\emptyset$ & $\emptyset$ & \checkmark & \checkmark & \textcolor{orange}{\checkmark} & \checkmark & $\emptyset$ & $\emptyset$ \\
         4 & $\emptyset$ & $\emptyset$ & $\emptyset$ & \checkmark & \checkmark & \textcolor{orange}{\checkmark} & $\emptyset$ & $\emptyset$ \\
         5 & $\emptyset$ & $\emptyset$ & $\emptyset$ & $\emptyset$ & \checkmark & $\emptyset$ & $\emptyset$ & $\emptyset$ \\
         6 & $\emptyset$  & $\emptyset$  & $\emptyset$  & $\emptyset$  & $\emptyset$ & \checkmark & $\emptyset$  &  $\emptyset$  \\
        7 & $\emptyset$ & $\emptyset$ & $\emptyset$ & \textcolor{orange}{\checkmark} & \checkmark & \checkmark & \checkmark & $\emptyset$ \\
        8 & $\emptyset$ & \textcolor{orange}{\checkmark} & \checkmark & \checkmark & \textcolor{orange}{\checkmark} & \checkmark & \checkmark & \checkmark \\
    \end{tabular}}
\end{center}
\caption{The (possibly) nonempty parts of $S$ in the co-$4$-tent case. The orange checkmarks denote those subsets $S_{ij}$ that are either complete to $K_i$ or $K_j$.} \label{fig:tabla_co4tent_1}
\end{table}

\section{Characterization by forbidden induced subgraphs of circle graphs within split graphs} \label{section:split_circle_graphs}

In this section, we give the proof of Theorem~\ref{teo:circle_split_caract}. The proof strongly relies on a characterization of \emph{$2$-nested matrices} given in~\cite{P20}. The $2$-nested matrices are those matrices having an ordering of its columns such that the ones in each row appear in at most two blocks and for which there is certain color assignment for every row using $2$ colors (for the details, see~Definition~\ref{def:2-nested}).

%
%

This section is organized as follows. First, we define and give the characterization of nested and $2$-nested matrices by forbidden subconfigurations (for the complete proof, see~\cite{P20},~\cite{PDS20}).
Afterwards, in Sections~\ref{subsec:circle2} to~\ref{subsec:circle5} we prove the characterization given in Theorem~\ref{teo:circle_split_caract}, which gives the complete list of forbidden induced subgraphs for those split graphs that are also circle. This proof is divided into four cases, depending on whether the split graph contains an induced tent, $4$-tent, co-$4$-tent or net.


A $(0,1)$-matrix has the \emph{consecutive-ones property (C1P) for the rows} if there is a permutation of its columns such that the ones in each row appear consecutively. We say that a matrix $B$ is a \emph{subconfiguration} of a matrix $A$ if $B$ equals some submatrix of $A$ up to permutation of rows and/or columns. Given a set of rows $R$ of a matrix $A$, we say that $R$ \emph{induces a matrix} $B$ if $B$ is a subconfiguration of the submatrix of $A$ given by selecting only those rows in $R$. Tucker characterized all the minimal forbidden subconfigurations for the C$1$P, later known as \emph{Tucker matrices} (a graphic representation of which can be found in~\cite{Tuc-c1p}).

\begin{defn} \label{def:nested_m}
    Let $A$ be a $(0,1)$-matrix. We say $A$ is \emph{nested} if there is a consecutive-ones ordering for the rows and every two rows are disjoint or nested.
\end{defn}

\begin{defn} \label{def:nested_g}
A split graph $G = (K,S)$ is \emph{nested} if and only if $A(S,K)$ is a nested matrix.
\end{defn}


\begin{defn} \label{def:enriched_matrix}
    Let $A$ be a $(0,1)$-matrix. We say $A$ is an \emph{enriched matrix} if all of the following conditions hold:
    \begin{enumerate}[noitemsep,nolistsep]
        \itemsep0em
        \item Each row of $A$ is either unlabeled or labeled with one of the following labels: L or R or LR. We say that a row is an \emph{LR-row  (resp.\ L-row, R-row)} if it is labeled with LR (resp.\ L, R). An unlabeled row is called U-row.
        \item Each row of $A$ is either uncolored or colored with either blue or red.
        \item The only colored rows may be those labeled with L or R, and those LR-rows having a $0$ in every column.
        \item The LR-rows having a $0$ in every column are all colored with the same color.
    \end{enumerate}
\end{defn}

The $0$-gem, $1$-gem and $2$-gem are the following enriched matrices:
 \[  \scriptsize{    \begin{pmatrix}
         1 1 0 \cr
        0 1 1
        \end{pmatrix}, \qquad
        \bordermatrix{ & \cr
         & 1 0 \cr
         & 1 1 }\ , \qquad
        \bordermatrix{ & \cr
        \textbf{LR} & 1 1 0 \cr
        \textbf{LR} & 1 0 1  }\ } \]
respectively.

\begin{defn}  \label{def:gems}
    Let $A$ be an enriched matrix. We say that $A$ \emph{contains a gem} (resp.\ \emph{doubly-weak gem}) if it contains a $0$-gem (resp.\ a $2$-gem) as a subconfiguration.
    We say that $A$ \emph{contains a weak gem} if it contains a $1$-gem such that, either the first is an L-row (resp.\ R-row) and the second is a U-row, or the first is an LR-row and the second is a non-LR-row.
    We say that a $2$-gem is \emph{badly-colored }if the entries in the column in which both rows have a $1$ are in blocks colored with the same color.
\end{defn}

\begin{teo}[\cite{P20, PDS20}] \label{teo:nested_caract}
A $(0,1)$-matrix is nested if and only if it contains no $0$-gem as a subconfiguration. 
\end{teo}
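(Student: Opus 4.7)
My plan is to prove the two directions separately. The forward direction is immediate from the definitions: if $A$ is nested then any two rows are either disjoint or nested, and in either case the pair cannot realize a $0$-gem subconfiguration, since a $0$-gem requires the two rows to share a column of $1$s (ruling out disjointness) while each also has a $1$ in a column where the other has a $0$ (ruling out containment).

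For the converse I would first show that the absence of a $0$-gem is equivalent to the statement that every pair of rows is disjoint or nested. Indeed, if rows $a_{i.}$ and $a_{k.}$ overlap then the failure $a_{i.}\not\subseteq a_{k.}$ supplies a column $j_1$ with $(a_{ij_1},a_{kj_1})=(1,0)$, the symmetric failure $a_{k.}\not\subseteq a_{i.}$ supplies a column $j_2$ with $(a_{ij_2},a_{kj_2})=(0,1)$, and non-disjointness supplies a column $j_3$ with $(a_{ij_3},a_{kj_3})=(1,1)$. Ordering these three columns as $j_1,j_3,j_2$ displays a $0$-gem, contradicting the hypothesis.

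Next I would produce an explicit consecutive-ones column ordering from the resulting laminar structure on the rows. Attach an artificial root $\ast$ and make the parent of each nonempty row $a_{i.}$ the inclusion-minimal row strictly containing it, or $\ast$ if no such row exists; by laminarity this is well defined and yields a rooted forest on the rows. Since laminarity also forces the rows containing any given column to form a chain, each column can be assigned to the deepest row that contains it, or to $\ast$ if no row does. Now fix an arbitrary ordering of the children at each node and list the columns in DFS pre-order, emitting at each node first the columns assigned to it and then recursing into the children's subtrees. The $1$-entries of any row $a_{i.}$ are precisely the columns assigned to $a_{i.}$ together with those assigned to its descendants, hence they appear in a contiguous block of the ordering; combined with the laminar condition this certifies that $A$ is nested.

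The only real piece of work is the construction in the converse direction, and the forest-plus-DFS argument packages together both the passage from no $0$-gem to laminarity and the production of a consecutive-ones ordering. No step requires delicate case analysis, which fits the fact that this lemma is invoked later merely as a preparatory tool for the finer $2$-nested characterization used in the proof of Theorem~\ref{teo:circle_split_caract}.
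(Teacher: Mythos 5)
Your argument is correct. Note that the paper does not actually prove Theorem~\ref{teo:nested_caract}: it is quoted from~\cite{P20, PDS20}, so there is no in-text proof to compare against. Your route is the natural self-contained one: the observation that a $0$-gem is present exactly when two rows overlap (non-disjoint and non-nested), followed by the construction of a consecutive-ones ordering from the laminar structure via the inclusion forest and a DFS that emits each column at the deepest row containing it. This has the merit of not invoking Tucker's characterization of the C1P, which is the heavier machinery the surrounding theory (Theorem~\ref{teo:2-nested_caract_bymatrices}) otherwise relies on; an alternative, less elementary route would be to note that every Tucker matrix contains two overlapping rows, deduce the C1P from Tucker's theorem, and add the disjoint-or-nested condition separately. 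One small imprecision: if $A$ has repeated rows, or distinct rows that are equal as column sets, then ``the inclusion-minimal row strictly containing it'' and ``the deepest row containing a given column'' are not literally unique, and your claim that the $1$-entries of a row are \emph{precisely} the columns assigned to it and to its descendants can fail for one of two equal rows. This is harmless: either collapse equal rows first (this changes neither nestedness nor the presence of a $0$-gem, since the two rows of a $0$-gem have distinct patterns), or break ties consistently and observe that equal rows then share the same subtree emission block, so consecutiveness is preserved. With that remark added, the proof is complete.
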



\begin{defn} \label{def:LR-orderable}
    Let $A$ be an enriched matrix. We say $A$ is \emph{LR-orderable} if there is a linear ordering $\Pi$ for the columns of $A$ such that each of the following assertions holds:
    \begin{itemize}
        \item $\Pi$ is a consecutive-ones ordering for every non-LR row of $A$.
        \item The ordering $\Pi$ is such that the ones in every nonempty row labeled with L (resp.\ R) start in the first column (resp.\ end in the last column).
        \item $\Pi$ is a consecutive-ones ordering for the complements of the LR-rows of $A$.
     \end{itemize}
Such an ordering is called an \emph{LR-ordering}.
\end{defn}

\begin{defn} \label{def:suitable_ordering}
For each row of $A$ labeled with L or LR and having a $1$ in the first column of $\Pi$, we define its \emph{L-block (with respect to $\Pi$)} as the maximal set of consecutive columns of $\Pi$ starting from the first one on which the row has a 1. \emph{R-blocks} are defined on an entirely analogous way.
For each unlabeled row of $A$, we say its \emph{U-block (with respect to $\Pi$)} is the set of columns having a $1$ in the row.
The blocks of $A$ with respect to $\Pi$ are its L-blocks, its R-blocks and its U-blocks. We say an \emph{L-block (resp.\ R-block, U-block) is colored} if there is a $1$-color assignment for every entry of the block.

    An LR-ordering $\Pi$ is \emph{suitable} if the L-blocks of those LR-rows with exactly two blocks are disjoint with every R-block, the R-blocks of those LR-rows with exactly two blocks are disjoint with the L-blocks and for each LR-row the intersection with any U-block is empty with either its L-block or its R-block.
\end{defn}

\begin{defn} \label{def:A*}
    Let $A$ be an enriched matrix and let $\Pi$ be a LR-ordering.
    We define $A^*$ as the enriched matrix that arises from $A$ by:
\begin{itemize}
    \item Replacing each LR-row by its complement.
    \item Adding two distinguished rows: both rows have a $1$ in every column, one is labeled with L and the other is labeled with R.
\end{itemize}
\end{defn}

\begin{defn}  \label{def:tagged_matrixA}
A \emph{tagged matrix} is a $(0,1)$-matrix, each of whose rows are either uncolored or colored with blue or red, together with a set of at most two distinguished columns. The distinguished columns will be referred to as \emph{tag columns}.
Let $A$ be an enriched matrix. We define the \emph{tagged matrix of $A$} as a tagged matrix, denoted by $A_\tagg$, whose underlying matrix is obtained from $A$ by adding two columns, $c_L$ and $c_R$, such that:
(1) the column $c_L$ has a $1$ if $f$ is labeled L or LR and $0$ otherwise,
(2) the column $c_R$ has a $1$ if $f$ is labeled R or LR and $0$ otherwise, and
(3) the set of distinguished columns of $A_{\tagg}$ is $\{ c_L, c_R\}$.
We denote $A^*_\tagg$ to the tagged matrix of $A^*$. By simplicity we will consider column $c_L$ as the first and column $c_R$ as the last column of $A_\tagg$ and $A^*_\tagg$.
\end{defn}

Notice that for every enriched matrix, the only colored rows are those labeled with L or R and those empty LR-rows. Moreover, for every LR-orderable matrix, there is an ordering of the columns such that every row labeled with L (resp.\ R) starts in the first column (resp.\ ends in the last column), and thus all its $1$'s appear consecutively.
Thus, if an enriched matrix is also LR-orderable, then the given coloring induces a partial block bi-coloring, in which every empty LR-row remains the same, whereas for every nonempty colored labeled row, we color all its $1$'s with the color given in the definition of the matrix.

\begin{defn} \label{def:2-nested}
    Let $A$ be an enriched matrix. We say $A$ is \emph{$2$-nested} if there exists an LR-ordering $\Pi$ of the columns and an assignment of colors red or blue to the blocks of $A$ such that all of the following conditions hold:
\begin{enumerate}[noitemsep,nolistsep]
    \item If an LR-row has an L-block and an R-block, then they are colored with distinct colors. \label{item:2nested1}
    \item For each colored row $r$ in $A$, any of its blocks is colored with the same color as $r$ in $A$. \label{item:2nested2}
    \item If an L-block of an LR-row is properly contained in the L-block of an L-row, then both blocks are colored with different colors. \label{item:2nested3}
    \item Every L-block of an LR-row and any R-block are disjoint. The same holds for an R-block of an LR-row and any L-block. \label{item:2nested4}
    \item If an L-block and an R-block are not disjoint, then they are colored with distinct colors.    \label{item:2nested5}
    \item Each two U-blocks colored with the same color are either disjoint or nested. \label{item:2nested6}
    \item If an L-block and a U-block are colored with the same color, then either they are disjoint or the U-block is contained in the L-block. The same holds replacing L-block for R-block. \label{item:2nested7}
    \item If two distinct L-blocks of non-LR-rows are colored with distinct colors, then every LR-row has an L-block. The same holds replacing L-block for R-block.  \label{item:2nested8}
    \item If two LR-rows overlap, then the L-block of one and the R-block of the other are colored with the same color. \label{item:2nested9}
\end{enumerate}

An assignment of colors red and blue to the blocks of $A$ that satisfies all these properties is called a \emph{(total) block bi-coloring}.
\end{defn}

These matrices admit the following characterization by forbidden subconfigurations.

\begin{teo}[\cite{P20, PDS20}]\label{teo:2-nested_caract_bymatrices}
    Let $A$ be an enriched matrix. Then, $A$ is $2$-nested if and only if $A$ contains none of the following listed matrices or their dual matrices as subconfigurations:
    \begin{itemize}
    \item $M_0$, $M_{II}(4)$, $M_V$ or $S_0(k)$ for every even $k \geq 4$ (See Figure~\ref{fig:forb_M_chiquitas})
    \item Every enriched matrix in the family $\mathcal{D}$ (See Figure~\ref{fig:forb_D}) 
     \item Every enriched matrix in the family $\mathcal{F}$ (See Figure~\ref{fig:forb_F}) 
    \item Every enriched matrix in the family $\mathcal{S}$ (See Figure~\ref{fig:forb_S})
    \item Every enriched matrix in the family $\mathcal{P}$ (See Figure~\ref{fig:forb_P})
    \item Monochromatic gems, monochromatic weak gems, badly-colored doubly-weak gems
    \end{itemize}
and $A^*$ contains no Tucker matrices and none of the enriched matrices in $\mathcal{M}$ or their dual matrices as subconfigurations. (See Figure~\ref{fig:forb_LR-orderable}).
\end{teo}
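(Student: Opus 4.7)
The plan is to treat the two directions of this biconditional separately. For the necessity direction, I would proceed matrix by matrix through the listed forbidden subconfigurations and verify by hand that none of them admits an LR-ordering together with a block bi-coloring meeting all nine conditions of Definition~\ref{def:2-nested}. The small matrices $M_0$, $M_{II}(4)$, $M_V$, $S_0(k)$, $M_{III}(3)$-type gems, and the explicit enriched matrices collected in $\mathcal{D}$, $\mathcal{F}$, $\mathcal{S}$ and $\mathcal{P}$ are each designed to force a local color conflict (for instance, a cycle of blocks in which each consecutive pair is forced to differ while the cycle has odd length), while the Tucker matrices and the enriched family $\mathcal{M}$ are included precisely because they obstruct the LR-ordering of $A$ via the consecutive-ones ordering of $A^*$.

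For the sufficiency direction, I would first isolate the LR-ordering problem. Since $A^*$ is obtained from $A$ by complementing the LR-rows and appending a full L-row and a full R-row, a consecutive-ones ordering of $A^*$ is in bijection with an LR-ordering of $A$ satisfying also the suitability constraints of Definition~\ref{def:suitable_ordering}. Thus, assuming $A^*$ contains neither Tucker matrices nor any enriched matrix in $\mathcal{M}$, Tucker's theorem yields the C1P for $A^*$ and hence an LR-ordering $\Pi$ of $A$; the role of $\mathcal{M}$ is to capture precisely those enriched obstructions which are not visible through the underlying $(0,1)$-matrix alone, because L/R/LR labels and the pre-coloring impose restrictions beyond consecutive-oneness. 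I would choose $\Pi$ to be suitable, arguing that otherwise one of the small enriched forbidden patterns arises from a U-block straddling both the L- and R-parts of an LR-row or from L- and R-blocks of LR-rows being forced to overlap on the wrong side.

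Given such an ordering $\Pi$, the remaining task is to produce a valid block bi-coloring. My approach is to build an auxiliary graph $H$ whose vertices are the blocks of $A$ with respect to $\Pi$ and whose edges encode the ``must receive different colors'' constraints coming from conditions~\ref{item:2nested1}, \ref{item:2nested3}, \ref{item:2nested5}, \ref{item:2nested8} and \ref{item:2nested9} of Definition~\ref{def:2-nested}, together with an additional set of loops or ``fixed'' vertices recording the pre-coloring from condition~\ref{item:2nested2}. A valid bi-coloring exists if and only if $H$ is bipartite and respects the pre-coloring on its forced vertices. I would construct $H$ in two layers: first restrict to a single LR-row and use conditions~\ref{item:2nested4}, \ref{item:2nested6}, \ref{item:2nested7} to understand which U-blocks, L-blocks and R-blocks may coexist with its two blocks, then glue these local pictures across pairs of overlapping or containment-related rows. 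The absence of the listed subconfigurations should imply that each local picture is 2-colorable and that the gluings are consistent.

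The main obstacle is showing this bipartiteness, which is where almost all of the combinatorial content lives. Concretely, starting from a hypothetical odd cycle in $H$ (or a conflict with the pre-coloring), I would walk along the chain of overlap/containment witnesses, track the block types (U, L, R or LR) at each step, and distill from the cycle a concrete submatrix. The work consists in showing that such a submatrix always contains, as a subconfiguration, either one of $M_0$, $M_{II}(4)$, $M_V$, $S_0(k)$, a member of $\mathcal{D} \cup \mathcal{F} \cup \mathcal{S} \cup \mathcal{P}$, a monochromatic gem or weak gem, or a badly-colored doubly-weak gem; dually for cycles swapping rows and columns. I expect this to require a lengthy case analysis on cycle length (the base cases of length $3$ and $4$ giving $M_0$, the small members of $\mathcal{D}$ and the various gems, and longer cycles reducing to $S_0(k)$-type and $\mathcal{F}$-type patterns), which is why the full argument is deferred to \cite{P20,PDS20} and only the statement is invoked here.
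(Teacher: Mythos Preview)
The paper does not contain a proof of this theorem: it is stated with the attribution \cite{P20,PDS20} and used as a black box throughout Section~\ref{section:split_circle_graphs}. You correctly recognize this in your final sentence. Your outline---establishing necessity by direct inspection of each forbidden subconfiguration, obtaining an LR-ordering from the C1P of $A^*$ via Tucker's theorem, and then reducing the block bi-coloring to a bipartiteness question on an auxiliary conflict graph whose odd cycles are shown to contain one of the listed obstructions---is a plausible high-level strategy and is consistent with how the paper deploys the result (notably Lemmas~\ref{lema:2-nested_if} and~\ref{lema:B_ext_2-nested} and Theorem~\ref{teo:hay_suitable_ordering}, which isolate exactly the three intermediate steps you identify: admissibility, LR-orderability, and extendability of the partial coloring). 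Since the actual proof lives in the cited references and is not reproduced here, there is nothing in the present paper to compare your sketch against beyond this structural agreement.
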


\vspace{-7mm}
\begin{figure}[H]
    \centering
    \footnotesize{
    \begin{align*}
            M_0 = \scriptsize{ \bordermatrix{ & \cr
                         & 1 0 1 1 \cr
                         & 1 1 1 0  \cr
                         & 0 1 1 1 }\ }
            &&
            M_{II}(4) = \scriptsize{ \bordermatrix{ & \cr
                         & 0 1 1 1 \cr
                         & 1 1 0 0  \cr
                         & 0 1 1 0 \cr
                         & 1 1 0 1 }\ }
            &&
            M_V = \scriptsize{ \bordermatrix{ & \cr
                         & 1 1 0 0 0 \cr
                         & 0 0 1 1 0  \cr
                         & 1 1 1 1 0 \cr
                         & 1 0 0 1 1 }\ }
            &&
            S_0(k)&= \scriptsize{ \begin{pmatrix}
                1 11...11\\
                110...00\\
                011...00\\
                .   .   .   .   . \\
                .   .   .   .   . \\
                .   .   .   .   . \\
                000...11\\
                100...01\\
            \end{pmatrix} }
    \end{align*}}
    \caption{The matrices $M_0$, $M_{II}(4)$, $M_V$ and $S_0(k) \in \{0,1\}^{((k+1)\times k}$ for any even $k \geq 4$.}
    \label{fig:forb_M_chiquitas}
    \end{figure}

\begin{figure}[H]
    \centering
    \includegraphics[scale=.33]{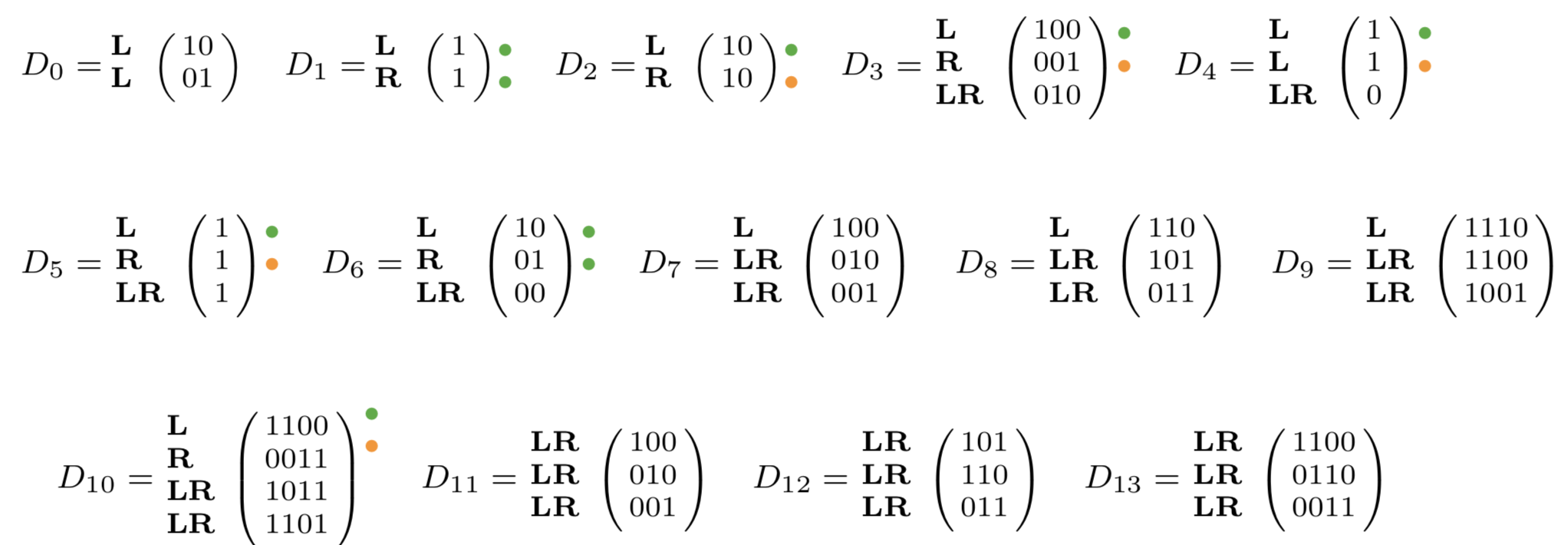}
    \caption{The family of enriched matrices $\mathcal{D}$.}
    \label{fig:forb_D}
\end{figure}

\begin{figure}[H]
\centering
\footnotesize{
    \begin{align*}
            F_0= \scriptsize{  \begin{pmatrix}
                11100\\
                01110\\
                00111\\
            \end{pmatrix} }
            &&
            F_1(k)= \scriptsize{ \begin{pmatrix}
                011...111\\
                111...110\\
                000...011\\
                000...110\\
                .   .   .   .   . \\
                .   .   .   .   . \\
                .   .   .   .   . \\
                110...000\\
            \end{pmatrix} }
            &&
            F_2(k)= \scriptsize{  \begin{pmatrix}
                0111...10\\
                1100...00\\
                0110...00\\
                .   .   .   .   . \\
                .   .   .   .   . \\
                .   .   .   .   . \\
                0000...11\\
            \end{pmatrix}  }
            &&
              F'_0=  \scriptsize{ \bordermatrix{ & \cr
                \textbf{L (LR)} & 1 1 0 0 \cr
                 &1 1 1 0 \cr
                &0 1 1 1 }\ }
            \end{align*}
            \begin{align*}
            F''_0= \scriptsize{ \bordermatrix{ & \cr
                \textbf{L} & 1 1 0 \cr
                 &1 1 1 \cr
                \textbf{R} &0 1 1 }\     }
            &&
            F'_1(k)= \scriptsize{  \bordermatrix{ & \cr
                &11\ldots1111\cr
                \textbf{L (LR)}&11\ldots1110\cr
                &00\ldots0011\cr
                &00\ldots0110\cr
                & \iddots \cr
                \textbf{L (LR)}&10 \ldots 0000 }\ }
            &&
            F'_2(k)= \scriptsize{ \bordermatrix{ & \cr
                &111\ldots10\cr
                \textbf{L (LR)}&100\ldots00\cr
                &110\ldots00\cr
                &  \ddots \cr
                &000\ldots11 }\ }
        \end{align*}}
        \vspace{-2mm}
    \caption{The enriched matrices of the family $\mathcal{F}$.}  \label{fig:forb_F}
\end{figure}

The matrices $\mathcal{F}$ represented in Figure~\ref{fig:forb_F} are defined as follows:
$F_1(k) \in \{0,1\}^{k \times (k-1)}$, $F_2(k) \in \{0,1\}^{k \times k}$, $F'_1(k) \in \{0,1\}^{k \times (k-2)}$ and $F'_2(k) \in \{0,1\}^{k \times (k-1)}$, for every odd $k \geq 5$. In the case of $F'_0$, $F'_1(k)$ and $F'_2(k)$, the labeled rows may be either L or LR indistinctly, and in the case of their dual matrices, the labeled rows may be either R or LR indistinctly.


The matrices $\mathcal{S}$ in Figure~\ref{fig:forb_S} are defined as follows.
If $k$ is odd, then $S_1(k) \in \{0,1\}^{(k+1) \times k}$ for $k \geq 3$, and if $k$ is even, then $S_1(k) \in \{0,1\}^{k \times (k-2)}$ for $k\geq 4$.
The remaining matrices have the same size whether $k$ is even or odd: $S_2(k) \in \{0,1\}^{k \times (k-1)}$ for $k \geq 3$,
$S_3(k) \in \{0,1\}^{k \times (k-1)}$ for $k \geq 3$,
$S_5(k) \in \{0,1\}^{k \times (k-2)}$ for $k \geq 4$,
$S_4(k) \in \{0,1\}^{k \times (k-1)}$, $S_6(k) \in \{0,1\}^{k \times k}$ for $k\geq 4$,
$S_7(k) \in \{0,1\}^{k \times (k+1)}$ for every $k \geq 3$
and $S_8(2j) \in \{0,1\}^{2j \times (2j)}$ for $j \geq 2$.
If $k$ is even, then the first and last row of $S_2(k)$ and $S_3(k)$ are colored with the same color, and in $S_4(k)$ and $S_5(k)$ are colored with distinct colors.

\begin{figure}[H]
    \centering
    \footnotesize{
    \begin{align*}
            S_1(2j) &= \scriptsize{ \bordermatrix{ & \cr
                \textbf{L}& 1 0 \ldots 00  \cr
                              & 1 1 \ldots 00 \cr
                               &   \ddots  \cr
                               & 0 0 \ldots 11 \cr
                \textbf{LR} &0 0  \ldots 01 \cr
                \textbf{L} & 1  1  \ldots  11 }\ }
            &
            S_1(2j+1) &= \scriptsize{  \bordermatrix{ & \cr
                \textbf{L} & 1 0 \ldots 0 0 \cr
                            & 1 1 \ldots 0 0 \cr
                            & \ddots \cr
                            & 0 0 \ldots 1 1 \cr
                \textbf{LR} &0 0 \ldots 0 1 }\ }
            &
        S_2(k) &= \scriptsize{ \bordermatrix{ & \cr
                \textbf{L}& 1 0 \ldots 0 0 \cr
                            & 1 1  \ldots 0 0 \cr
                            &  \ddots \cr
                            &0 0 \ldots 1 1 \cr
                \textbf{L} & 1 1 \ldots 1 0 }\
                \begin{matrix}
             \textcolor{dark-orange}{\bullet} \\ \\ \\ \\ \\ \textcolor{dark-green}{\bullet} \\
                \end{matrix}  }
        \end{align*}
        \begin{align*}
         S_3(k) &=  \scriptsize{ \bordermatrix{ & \cr
             \textbf{L} & 1 0 \ldots 0 0 \cr
                              & 1 1  \ldots 0 0 \cr
                              &  \ddots \cr
                              &0 0 \ldots 1 1 \cr
            \textbf{R} & 0 0 \ldots 0 1 }\
                \begin{matrix}
            \textcolor{dark-orange}{\bullet} \\ \\ \\ \\ \\ \textcolor{dark-green}{\bullet} \\
                \end{matrix} }
                &
        S_4(k) &= \scriptsize{ \bordermatrix{ & \cr
        \textbf{LR}& 1 1 \ldots 1 1 \cr
        \textbf{L} & 1 0 \ldots 0 0 \cr
                        & 1 1  \ldots  0 0 \cr
                        &  \ddots \cr
                        & 0 0  \ldots 1 1 \cr
          \textbf{R}& 0 0 \ldots 0 1  }\
                \begin{matrix}
            \\ \textcolor{dark-orange}{\bullet} \\ \\ \\ \\ \\ \textcolor{dark-green}{\bullet} \\
                \end{matrix} }
        &
        S_5(k) &= \scriptsize{ \bordermatrix{ & \cr
        \textbf{L}& 1 0 \ldots 0 0 \cr
                      & 1 1 \ldots 0 0 \cr
                      &  \ddots \cr
                      & 0 0  \ldots 1 1 \cr
        \textbf{LR}& 1 1 \ldots 1 0 \cr
        \textbf{L}& 1 1 \ldots 1 1  }\
                \begin{matrix}
             \textcolor{dark-orange}{\bullet} \\ \\ \\ \\ \\ \\  \textcolor{dark-green}{\bullet} \\
                \end{matrix} }
        \end{align*}
        \begin{align*}
        S_6(3) &= \scriptsize{ \bordermatrix{ & \cr
        \textbf{LR} & 1 1 0 \cr
          \textbf{R} & 0 1 1  \cr
                          & 1 1 0 }\ }
        &
        S_6'(3) &= \scriptsize{ \bordermatrix{ & \cr
        \textbf{LR} & 1 1 0 \cr
          \textbf{R} & 0 1 1  \cr
                          & 1 1 1 }\ }
         &
        S_6(k) &= \scriptsize{ \bordermatrix{ & \cr
        \textbf{LR}& 1 1 1 \ldots 1 1 0 \cr
          \textbf{R}& 0 1 1 \ldots 1 1 1 \cr
                        & 1 1 0  \ldots 0 0 0  \cr
                        &  \ddots \cr
                        & 0 0 0 \ldots 0 1 1 }\
                \begin{matrix}
            \\ \\ \textcolor{dark-orange}{\bullet} \\ \\ \\ \\ \\ \\
                \end{matrix} }
    \end{align*}
    \begin{align*}
        S_7(3) &= \scriptsize{ \bordermatrix{ & \cr
        \textbf{LR} & 1 1 0 0 1 \cr
        \textbf{LR} & 1 0 0 1 1  \cr
                         & 1 1 1 0 0 }\ }
        &
        S_7(2j) &= \scriptsize{ \bordermatrix{ & \cr
        \textbf{LR}& 1 1 0 0 \ldots 0 0 0 \cr
        \textbf{LR}& 1 0 0 0  \ldots 0 0 1 \cr
                        & 0 1 1 0 \ldots 0 0 0  \cr
                        &  \ddots \cr
                        & 0 0 0 0 \ldots 0 1 1 }\ }
        &
        S_8(2j) &= \scriptsize{ \bordermatrix{ & \cr
        \textbf{LR}& 1 0 0 \ldots 0 0 1 \cr
                        & 1 1 0  \ldots 0 0 0 \cr
                        &  \ddots \cr
                        & 0 0 0 \ldots 0 1 1 }\ }
    \end{align*}}
\caption{The family of matrices $\mathcal{S}$ for every $j\geq2$ and every odd $k \geq 3$} \label{fig:forb_S}
\end{figure}


In the matrices $\mathcal{P}$, the integer $l$ represents the number of unlabeled rows between the first row and the first LR-row. The matrices $\mathcal{P}$ described in Figure~\ref{fig:forb_P} are defined as follow:
$P_0(k,0) \in \{0,1\}^{k \times k}$ for every $k \geq 4$, $P_0(k,l) \in \{0,1\}^{k \times (k-1)}$ for every $k \geq 5$ and $l >0$;
$P_1(k,0) \in \{0,1\}^{k \times (k-1)}$ for every $k \geq 5$, $P_1(k,l) \in \{0,1\}^{k \times (k-2)}$ for every $k \geq 6$, $l > 0$;
$P_2(k,0) \in \{0,1\}^{k \times (k-1)}$ for every $k \geq 7$, $P_2(k,l) \in \{0,1\}^{k \times (k-2)}$ for every $k \geq 8$ and $l > 0$.
If $k$ is even, then the first and last row of every matrix in $\mathcal{P}$ are colored with distinct colors.

\begin{figure}[H]
    \centering
    \footnotesize{
        \begin{align*}
        P_0(k,0) =  \scriptsize{ \bordermatrix{ & \cr
        \textbf{L} & 1 1 0 0 0  \ldots 0 0 0   \cr
        \textbf{LR} & 1 0 0 1 1 \ldots 1 1 1 \cr
                         & 0 0  1  1  0  \ldots 0 0 0 \cr
                         &  \ddots  \cr
                        & 0 0 0 0 0 \ldots  0 1 1  \cr
        \textbf{R} & 0 0 0 0 0 \ldots 0 0 1  }\
     \begin{matrix}
      \textcolor{dark-green}{\bullet} \\ \\ \\ \\ \\ \\ \textcolor{dark-green}{\bullet}
      \end{matrix} }
        &&
        P_0(k,l) = \scriptsize{  \bordermatrix{ & \cr
        \textbf{L}& 1 0 0 \ldots 0 0 0 0 \ldots 0 \cr
                      & 1 1 0 \ldots 0 0 0 0  \ldots 0 \cr
                      &  \ddots \cr
                     & 0 0 0 \ldots 1 1 0 0 \ldots 0 \cr
    \textbf{LR} & 1 1 1 \ldots 1 0 0 1 \ldots 1 \cr
                     & 0 0 0 \ldots 0 0 1 1  \ldots 0 \cr
                     & \ddots \cr
                    & 0 0 0 \ldots 0 0 \ldots 0 1 1 \cr
        \textbf{R} & 0 0 0 \ldots 0 0 \ldots 0 0 1 }\
    \begin{matrix}
      \textcolor{dark-green}{\bullet} \\ \\ \\ \\ \\ \\ \\ \\ \\ \\ \textcolor{dark-green}{\bullet}
      \end{matrix} }
       \end{align*}
    \begin{align*}
        P_1(k,0) = \scriptsize{ \bordermatrix{ & \cr
        \textbf{L} & 1 1 0 0  \ldots 0 0 0   \cr
        \textbf{LR} & 1 0 1 1 \ldots 1 1 1 \cr
        \textbf{LR} & 1 1 0 1 \ldots 1 1 1 \cr
                         & 0 0  1  1  0  \ldots 0 0 0 \cr
                         &  \ddots  \cr
                        & 0 0 0 0 0 \ldots  0 1 1  \cr
        \textbf{R} & 0 0 0 0 \ldots 0 0 1  }\
     \begin{matrix}
      \textcolor{dark-green}{\bullet} \\ \\ \\ \\ \\ \\ \\ \textcolor{dark-green}{\bullet}
      \end{matrix} }
    &&
            P_1(k,l) = \scriptsize{ \bordermatrix{ & \cr
        \textbf{L}& 1 0 0 \ldots 0 0 0 0 \ldots 0 \cr
                      & 1 1 0 \ldots 0 0 0 0  \ldots 0 \cr
                      &  \ddots \cr
                     & 0 0 0 \ldots 1 1 0 0 \ldots 0 \cr
    \textbf{LR} & 1 1 1 \ldots 1 0 1 1 \ldots 1 \cr
    \textbf{LR} & 1 1 1 \ldots 1 1 0 1 \ldots 1 \cr
                     & 0 0 0 \ldots 0 0 1 1  \ldots 0 \cr
                     & \ddots \cr
                    & 0 0 0 \ldots 0 0 \ldots 0 1 1 \cr
        \textbf{R} & 0 0 0 \ldots 0 0 \ldots 0 0 1 }\
    \begin{matrix}
      \textcolor{dark-green}{\bullet} \\ \\ \\ \\ \\ \\ \\ \\ \\ \\ \\ \textcolor{dark-green}{\bullet}
      \end{matrix} }
       \end{align*}
    \begin{align*}
        P_2(k,0) =  \scriptsize{ \bordermatrix{ & \cr
        \textbf{L} & 1 1 0 0 0 0 \ldots 0 0 0 \cr
        \textbf{LR} & 1 0 1 1 1 1 \ldots 1 1 1 \cr
        \textbf{LR} & 1 1 1 0 1 1 \ldots 1 1 1 \cr
        \textbf{LR} & 1 1 0 1 1 1 \ldots 1 1 1 \cr
        \textbf{LR} & 1 1 1 0 0 1 \ldots 1 1 1 \cr
                         & 0 0  0 0 1  1 \ldots 0 0 0 \cr
                         &  \ddots  \cr
                        & 0 0 0 0 0 \ldots  0 1 1  \cr
        \textbf{R} & 0 0 0 0 0 \ldots 0 0 1  }\
     \begin{matrix}
      \textcolor{dark-green}{\bullet} \\ \\ \\ \\ \\ \\ \\ \\ \\ \\ \textcolor{dark-green}{\bullet}
      \end{matrix} }
    &&
            P_2(k,l) = \scriptsize{ \bordermatrix{ & \cr
        \textbf{L}& 1 0 0 \ldots 0 0 0 0 0 \ldots 0 \cr
                      & 1 1 0 \ldots 0 0 0 0 0  \ldots 0 \cr
                      &  \ddots \cr
                     & 0 0 0 \ldots 1 1 0 0 0  \ldots 0 \cr
    \textbf{LR} & 1 1 1 \ldots 1 0 0 1 1 \ldots 1 \cr
    \textbf{LR} & 1 1 1 \ldots 1 1 1 0 1 \ldots 1 \cr
    \textbf{LR} & 1 1 1 \ldots 1 1 0 1 1 \ldots 1 \cr
    \textbf{LR} & 1 1 1 \ldots 1 1 0 0 1 \ldots 1 \cr
                     & 0 0 0 \ldots 0 0 0 1 1 \ldots 0 \cr
                     & \ddots \cr
                    & 0 0 0 \ldots 0 0 0 \ldots 0 1 1 \cr
        \textbf{R} & 0 0 0 \ldots 0 0 0  \ldots 0 0 1 }\
    \begin{matrix}
      \textcolor{dark-green}{\bullet} \\ \\ \\ \\ \\ \\ \\ \\\ \\ \\ \\ \\ \\ \\ \textcolor{dark-green}{\bullet}
      \end{matrix} }
    \end{align*}}
    \caption{The family of enriched matrices $\mathcal{P}$ for every odd $k$.}
    \label{fig:forb_P}
    \end{figure}


\begin{figure}[h]
    \centering
  \footnotesize{
    \begin{align*}
            M_2'(k) = \scriptsize{ \bordermatrix{ & \cr
                        & 1 1 1 \ldots 1 1 1\cr
           \textbf L &  1 0 0 \ldots 0 0 0\cr
                        & 1  1 0  \ldots  0 0 0\cr
                        & \ddots  \cr
                        & 0 0 0 \ldots 1 1 0 \cr
           \textbf L & 1 1 1 \ldots 1 0 1 }\ }
            &&
            M_2''(k) = \scriptsize{ \bordermatrix{ & \cr
        \textbf R & 1  1  1 \ldots  1 1 1\cr
        \textbf L & 1  0 0 \ldots  0 0 0\cr
                    & 1  1 0  \ldots  0 0 0\cr
                        & \ddots \cr
                        & 0 0 0  \ldots  1 1 0 \cr
         \textbf R &  0 0 0 \ldots  0 1  0 \cr
     \textbf{L} & 1  1 1 \ldots  1 1  1  }\  }
             \end{align*}
            \begin{align*}
            M_3'(k) = \scriptsize{ \bordermatrix{ & \cr
            \textbf L &  1 0  0   \ldots   0   0   0 \cr
                &   1  1 0   \ldots   0   0   0 \cr
                & \ddots \cr
                & 0 0  0   \ldots   1   1   0 \cr
                & 1 1 1   \ldots   1   0   1  }\     }
            &&
            M_3''(k) = \scriptsize{ \bordermatrix{ & \cr
                & 1   1  0  \ldots  0 0  \cr
                & 0  1  1   \ldots  0 0  \cr
                & \ddots \cr
                & 0 0 0   \ldots  1 1 \cr
                \textbf R & 0 1 1  \ldots  1  0  }\ }
            &&
            M_4' =  \scriptsize{ \bordermatrix{ & \cr
                    \textbf L & 1  0  0  0 0 \cr
                    & 0  1   1   0  0 \cr
                    & 0  0  0  1   1 \cr
                    & 1  0  1  0  1 }\ }
            \end{align*}
            \begin{align*}
            M_4'' = \scriptsize{  \bordermatrix{ & \cr
                    \textbf L & 1  0  0  0 \cr
                    \textbf R & 0  1  0  0 \cr
                                & 0  0  1  1 \cr
                                 &  1  1   0   1 }\ }
            &&
            M_5' = \scriptsize{ \bordermatrix{ & \cr
                    & 1   1   0   0   \cr
                    & 0   0   1   1   \cr
                    \textbf R & 1   0   0   1  \cr
                    & 1   1   1   }\ }
            &&
            M_5'' = \scriptsize{ \bordermatrix{ & \cr
                    \textbf L & 1  0   0   0 \cr
                    & 0  1   1   0 \cr
                    & 1   0   1   1 \cr
                    \textbf L & 1  1   1  0 }\ }
    \end{align*}
       }
    \caption{The enriched matrices in family $\mathcal{M}$: $M_{2}'(k)$, $M_{3}'(k)$, $M_{3}''(k)$, $M_{3}'''(k)$ for $k \geq 4$, and $M_{2}''(k)$ for $k \geq 5$. }  \label{fig:forb_LR-orderable}
\end{figure}

The following lemmas, theorem and definition will be useful in the sequel. 
\begin{lema}[\cite{P20, PDS20}]  \label{lema:2-nested_if}
    Let $A$ be an enriched matrix. Then, $A$ is \emph{$2$-nested} if $A$ with its partial block bi-coloring contains none of the matrices listed in Theorem~\ref{teo:2-nested_caract_bymatrices} and the given partial block bi-coloring of $A$ can be extended to a total block bi-coloring of $A$.
\end{lema}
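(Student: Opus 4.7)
The plan is to verify the two ingredients required by Definition~\ref{def:2-nested}: the existence of an LR-ordering $\Pi$ of the columns of $A$, and the existence of a total block bi-coloring of $A$ with respect to $\Pi$ satisfying conditions (1)--(9). The proof naturally splits into three stages.

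\emph{First stage: existence of an LR-ordering.} Here I would invoke the hypothesis that $A^*$ contains no Tucker matrix and no enriched matrix from the family $\mathcal{M}$ or its dual. Tucker's classical characterization says that $A^*$ has the C$1$P precisely when it contains no Tucker matrix, while the matrices in $\mathcal{M}$ are precisely the obstructions that involve the two distinguished all-ones L- and R-rows added in passing from $A$ to $A^*$. Combining both, I would extract a consecutive-ones ordering of the columns of $A^*$. Restricting this ordering to the columns of $A$ (recall $A_{\tagg}^*$ adds only the tag columns, which we place as first and last) yields a linear ordering $\Pi$ which is consecutive-ones for every non-LR row, places every L-row against the left boundary, every R-row against the right boundary, and is consecutive-ones for the complements of LR-rows. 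This is exactly the LR-ordering of Definition~\ref{def:LR-orderable}.

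\emph{Second stage: refining $\Pi$ to a suitable LR-ordering.} Suitability (Definition~\ref{def:suitable_ordering}) asks that the L-block (resp.\ R-block) of any LR-row with exactly two blocks be disjoint from every R-block (resp.\ L-block) of $A$, and that for each LR-row its intersection with any U-block be empty with at least one of its two blocks. Each possible violation corresponds to an explicit subconfiguration in the family $\mathcal{D}$: the matrices $D_1$--$D_6$ encode the forbidden overlaps between labeled non-LR rows and the two blocks of an LR-row, while $D_7$--$D_{13}$ encode the interaction patterns with U-blocks and among several LR-rows. Since $A$ contains no matrix from $\mathcal{D}$, any candidate LR-ordering from the first stage can be chosen (or locally permuted using the usual PQ-tree flexibility) so as to be suitable.

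\emph{Third stage: the block bi-coloring.} By hypothesis the prescribed partial coloring extends to a coloring of every block; it remains to verify conditions (1)--(9) of Definition~\ref{def:2-nested}. I would check each condition by showing that its failure produces one of the forbidden subconfigurations. Concretely: conditions (6) and (7) correspond to monochromatic gems and monochromatic weak gems; condition (1) is detected by badly-colored doubly-weak gems together with several matrices of $\mathcal{D}$; conditions (3)--(5) and (9) are obstructed by the remaining $\mathcal{D}$-matrices and by $M_0$, $M_{II}(4)$, $M_V$; condition (8) is controlled by the family $\mathcal{S}$; and condition (2) together with the more subtle global color-forcing phenomena are obstructed by the families $\mathcal{F}$, $\mathcal{P}$ and $S_0(k)$.

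\emph{Main obstacle.} The hardest part is the third stage: matching each of the nine abstract conditions to the concrete forbidden list is not one-to-one, because several conditions interact (for instance, (6) and (7) act together on U-blocks, and (8)--(9) together propagate color constraints across long sequences of LR-rows). The key technical point is that when the partial bi-coloring admits a total extension, one must show that \emph{some} such extension simultaneously satisfies all nine conditions; an arbitrary extension may locally violate a condition, but each local violation can be repaired (by recoloring one block) unless a global obstruction from $\mathcal{S}\cup\mathcal{P}\cup\mathcal{F}\cup\{M_0,M_{II}(4),M_V,S_0(k)\}$ is present. Since by hypothesis none of these is present, the repair always succeeds, and the resulting total bi-coloring witnesses that $A$ is $2$-nested.
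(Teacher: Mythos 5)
You should first note that the paper does not prove this lemma at all: it is imported verbatim from~\cite{P20, PDS20}, so there is no in-paper argument to match. Judged on its own terms, your proposal misreads the logical weight of the hypotheses. By Definition~\ref{def:2-nested}, a \emph{total block bi-coloring} is, by definition, an assignment of colors to the blocks (with respect to an LR-ordering) that already satisfies conditions (1)--(9); the statement ``$A$ is $2$-nested'' means exactly that such an assignment exists. Hence the second hypothesis of the lemma --- that the given partial block bi-coloring can be extended to a total block bi-coloring --- directly exhibits the witness required by Definition~\ref{def:2-nested}, and the conclusion follows in one line; the forbidden-subconfiguration hypothesis plays no role in this implication (it matters for the converse and for showing that an extension exists, which is the content of Theorem~\ref{teo:2-nested_caract_bymatrices} and Lemma~\ref{lema:B_ext_2-nested}, both also cited rather than proved here). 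Your first two stages therefore re-derive material that is either unnecessary for this statement or already packaged in Theorem~\ref{teo:hay_suitable_ordering}.

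The genuine gap appears in your third stage, which is where you implicitly switch to reading ``total block bi-coloring'' as a mere complete assignment of colors and then try to prove that conditions (1)--(9) can be arranged. There you assert that an arbitrary extension ``may locally violate a condition, but each local violation can be repaired (by recoloring one block) unless a global obstruction \ldots is present.'' This is precisely the hard technical content of~\cite{P20, PDS20} and it is not a one-block repair: the obstructions propagate along chains of overlapping rows that force colors over long distances (this is visible in the paper's own uses of the machinery, e.g.\ the forcing sequences producing $S_2(k)$, $S_3(k)$, $S_5(k)$ or $F_1/F_2$ in the proofs of Lemmas~\ref{lema:no2nested_prohibido_tent} and~\ref{lema:B6_2nested_4tent}), and Lemma~\ref{lema:B_ext_2-nested} shows that a given extension is $2$-nested only under further conditions on that extension (distinctly colored L- and R-blocks of each LR-row, no monochromatic gems, no monochromatic weak gems, no badly-colored doubly-weak gems). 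Your proposed correspondence between the nine conditions and the forbidden families is likewise asserted rather than established. So either you take the definitional reading, in which case the proof is immediate and your argument is an unneeded (and partly unjustified) detour, or you take the weaker reading, in which case the repair step is exactly the missing proof and cannot be waved through.
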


\begin{lema}[\cite{P20, PDS20}] \label{lema:B_ext_2-nested}
Let $A$ be an enriched matrix. If $A$ with its partial block bi-coloring contains none of the matrices listed in Theorem~\ref{teo:2-nested_caract_bymatrices} and $B$ is obtained from $A$ by extending its partial coloring to a total block bi-coloring, then $B$ is $2$-nested if and only if for each LR-row its L-block and R-block are colored with distinct colors and $B$ contains no monochromatic gems, monochromatic weak gems or badly-colored doubly-weak gems as subconfigurations.
\end{lema}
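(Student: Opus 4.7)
The plan is to use Theorem~\ref{teo:2-nested_caract_bymatrices} as the main tool in both directions. For the forward implication, if $B$ is $2$-nested then item~\ref{item:2nested1} of Definition~\ref{def:2-nested} forces each LR-row of $B$ to have its L-block and R-block in distinct colors; moreover, Theorem~\ref{teo:2-nested_caract_bymatrices} directly prohibits monochromatic gems, monochromatic weak gems, and badly-colored doubly-weak gems as subconfigurations of any $2$-nested matrix. This immediately yields the two stated necessary conditions.

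For the converse, the strategy is to verify that $B$ avoids every forbidden subconfiguration listed in Theorem~\ref{teo:2-nested_caract_bymatrices}, and then invoke that theorem to conclude $B$ is $2$-nested. Since $A$ and $B$ share the same underlying $(0,1)$-matrix, the same row labels, and the same set of LR-rows, we have $A^* = B^*$, so the forbidden Tucker matrices and the enriched matrices of $\mathcal{M}$ and their duals in $A^*$ are automatically absent from $B^*$. Likewise, the coloring-independent forbidden subconfigurations ($M_0$, $M_{II}(4)$, $M_V$, $S_0(k)$, and those members of $\mathcal{D}$, $\mathcal{F}$, $\mathcal{S}$, $\mathcal{P}$ whose forbidden status does not refer to any colored block) transfer unchanged from $A$ to $B$.

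The delicate step is to argue that the coloring-dependent forbidden matrices in $\mathcal{D}$, $\mathcal{F}$, $\mathcal{S}$, and $\mathcal{P}$ remain avoided by $B$. The key observation is that, by item~(3) of Definition~\ref{def:enriched_matrix}, the only rows carrying a predesignated color are the L-rows, the R-rows, and the empty LR-rows; and in each of the figures defining $\mathcal{D}$, $\mathcal{F}$, $\mathcal{S}$, $\mathcal{P}$ the colored bullets are placed precisely on such rows. For each of these rows the block coloring is already determined by the enriched matrix structure of $A$ and is preserved by any extension to a total block bi-coloring of $B$. Hence, if $B$ contained one of these forbidden subconfigurations with its prescribed coloring, then so would $A$, contradicting the hypothesis on $A$. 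The only forbidden matrices whose avoidance genuinely depends on the newly colored blocks introduced by the extension are the monochromatic gem, the monochromatic weak gem, and the badly-colored doubly-weak gem; these are exactly the subconfigurations ruled out by the second hypothesis. Finally, the requirement that every LR-row of $B$ has its L-block and R-block in distinct colors is precisely the first hypothesis, which is equivalent to item~\ref{item:2nested1} of Definition~\ref{def:2-nested}.

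The main obstacle I anticipate is the careful, family-by-family verification that every colored block appearing in the forbidden matrices of $\mathcal{D}$, $\mathcal{F}$, $\mathcal{S}$, and $\mathcal{P}$ is indeed pinned down by the predesignated coloring of A rather than by the extension to $B$; this amounts to matching each colored dot in Figures~\ref{fig:forb_D}--\ref{fig:forb_P} with an L-, R-, or empty-LR-label, and confirming that no forbidden coloring pattern can arise purely from the coloring choices made when passing from $A$ to $B$ (other than the gem-like ones explicitly excluded).
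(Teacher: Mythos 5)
A preliminary remark: the paper does not prove this lemma at all --- it is imported from \cite{P20,PDS20} --- so your argument has to be judged on its own rather than against an in-paper proof. Your forward direction is essentially right, but the correct justification is Definition~\ref{def:2-nested} itself, not Theorem~\ref{teo:2-nested_caract_bymatrices}: condition~\ref{item:2nested1} gives the requirement on LR-rows, and a monochromatic gem, monochromatic weak gem or badly-colored doubly-weak gem in the totally colored matrix violates one of the remaining conditions (for instance, two overlapping U-blocks with the same color contradict condition~\ref{item:2nested6}, and a monochromatic weak gem formed by an L-row and a U-row contradicts condition~\ref{item:2nested7}). Note that Theorem~\ref{teo:2-nested_caract_bymatrices} is a statement about enriched matrices, whose only colored rows are labeled rows and empty LR-rows (Definition~\ref{def:enriched_matrix}); the totally colored matrix $B$ is not an enriched matrix, so the theorem cannot be applied to $B$ verbatim.

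The genuine gap is in the converse. In this lemma ``$B$ is $2$-nested'' means that the \emph{given} total coloring of $B$ is a valid block bi-coloring, i.e.\ it satisfies conditions~\ref{item:2nested1}--\ref{item:2nested9} of Definition~\ref{def:2-nested} with respect to a suitable LR-ordering; this is exactly how the lemma is used later (e.g.\ in the last case of the proof of Lemma~\ref{lema:B6_2nested_4tent}, where one specific extension $B'$ is tested). Your plan --- check that $B$ avoids the subconfigurations of Theorem~\ref{teo:2-nested_caract_bymatrices} and then invoke that theorem --- cannot yield this conclusion: even if the transfer of all forbidden subconfigurations from $A$ to $B$ were carried out, the theorem would only give the \emph{existence} of some valid block bi-coloring extending the pre-coloring, which is already guaranteed by the hypothesis on $A$ via Theorem~\ref{teo:2-nested_caract_bymatrices} applied to $A$; it says nothing about whether the particular coloring chosen to form $B$ is valid, which is the whole content of the lemma. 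What is actually needed is a direct verification that the given coloring satisfies conditions~\ref{item:2nested2}--\ref{item:2nested9}: condition~\ref{item:2nested2} holds because $B$ extends the pre-coloring, condition~\ref{item:2nested4} follows from the suitability of the LR-ordering (Theorem~\ref{teo:hay_suitable_ordering}), conditions~\ref{item:2nested6} and~\ref{item:2nested7} follow from the absence of monochromatic gems and monochromatic weak gems, while conditions~\ref{item:2nested3}, \ref{item:2nested5}, \ref{item:2nested8} and~\ref{item:2nested9} concern colors that are fixed only at the extension step and require separate arguments combining the hypothesis on the LR-rows, the exclusion of badly-colored doubly-weak gems, and the admissibility of $A$. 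Your proposal never engages with these conditions, so the backward implication does not go through as written.
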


\begin{teo}[\cite{P20, PDS20}] \label{teo:hay_suitable_ordering}
If $A$ is admissible, LR-orderable and contains no $M_0$, $M_{II}(4)$, $M_V$ or $S_0(k)$ for every even $k \geq 4$, then there is at least one suitable LR-ordering.
\end{teo}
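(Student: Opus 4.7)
The plan is to start with an arbitrary LR-ordering $\Pi$ of $A$, which exists by the LR-orderability hypothesis, and then argue that either $\Pi$ is already suitable, or else it can be modified into a suitable LR-ordering, with the obstruction to such modifications forcing one of the forbidden subconfigurations $M_0$, $M_{II}(4)$, $M_V$, or $S_0(k)$ to appear.

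First, I would classify the ways in which $\Pi$ can fail to be suitable. According to Definition~\ref{def:suitable_ordering}, a failure corresponds to one of three situations: (i)~an LR-row $f$ with exactly two blocks whose L-block shares a column with the R-block of some other (L- or LR-) row, (ii)~symmetrically, an LR-row $f$ with exactly two blocks whose R-block shares a column with some L-block, or (iii)~an LR-row $f$ whose L-block and R-block both intersect a common U-block. For each of these three violation types, I would isolate the rows involved and inspect the restriction of $A$ to the relevant columns, using the admissibility of $A$ together with the LR-ordering constraint (Definition~\ref{def:LR-orderable}) to normalize the zero/one pattern.

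Next, for each such local configuration I would produce either an explicit reordering of the columns, obtained by moving the offending L- or R-block past a short stretch of $\Pi$ (and using that the row is an LR-row to swap its L- and R-labels accordingly), or a sub\-configuration match. In the first two cases (i) and (ii), I expect the shortest obstructions to give $M_0$ or $M_{II}(4)$ when the conflicting rows are few, and an $S_0(k)$ of appropriate even $k\geq 4$ when the conflict propagates across a chain of rows whose blocks step through $\Pi$. In case (iii), the U-row together with the LR-row and a short chain to either endpoint of $\Pi$ tends to force $M_V$ (or again an $S_0(k)$). Because none of these subconfigurations is present by hypothesis, each isolated violation must be locally repairable.

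Finally, I would assemble these local repairs into a global procedure: define a potential function on LR-orderings (for instance, the total number of (row, column) pairs that witness a violation), verify that each local repair strictly decreases this potential and preserves LR-orderability, and conclude by induction that the process terminates in a suitable LR-ordering. The main obstacle I anticipate is bookkeeping: showing that a local repair to fix one violation does not create a new one elsewhere. I would handle this by choosing, at each step, a \emph{minimal} violating configuration (innermost with respect to block containment in $\Pi$), so that reordering only affects columns strictly inside an already-problematic interval and the global block structure of the remaining rows is undisturbed; the forbidden subconfigurations then rule out precisely the cases where such a minimal repair would have collided with another row.
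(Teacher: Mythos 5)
The paper does not actually prove this theorem: it is imported verbatim from~\cite{P20, PDS20}, so there is no in-paper argument to measure your plan against. Judged on its own, what you have written is a strategy outline rather than a proof, and the two places where the real mathematical content should live are exactly the places left open. First, the central dichotomy --- ``each violation of suitability is either locally repairable or forces one of the forbidden subconfigurations'' --- is never argued; the language (``I expect the shortest obstructions to give $M_0$ or $M_{II}(4)$\dots'', ``tends to force $M_V$'') is a guess about how a case analysis would come out, not a case analysis. Note also that the hypotheses include admissibility precisely because the interactions between LR-rows and the L-, R- and U-blocks of other rows are governed by the families $\mathcal{D}$, $\mathcal{S}$ and $\mathcal{P}$ (e.g.\ $D_3$--$D_{13}$, $S_4(k)$--$S_8(k)$, $P_i(k,l)$), not by $M_0$, $M_{II}(4)$, $M_V$ and $S_0(k)$ alone; your plan treats admissibility only as a vague normalization device, so even the intended case analysis is aimed at the wrong list of obstructions.

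Second, the repair-and-terminate scheme has a structural flaw. L-blocks and R-blocks are defined globally with respect to the ordering $\Pi$: an L-block is a maximal run of columns starting at the \emph{first} column, an R-block ends at the \emph{last} column. Consequently a ``local'' move of an offending block past a stretch of $\Pi$ changes which columns belong to the L- and R-blocks of \emph{every} labeled row, and can destroy the consecutive-ones property of the complements of other LR-rows, i.e.\ break LR-orderability itself --- something you assert is preserved but never check. Choosing an ``innermost'' violating configuration does not shield you from this, because the columns being moved are still anchored to the ends of the ordering. For the same reason the claim that your potential function strictly decreases at each step is unsupported; nothing in the sketch rules out a repair creating as many new violations as it removes, or cycling. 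Until the dichotomy is proved with the full admissibility hypotheses and the reordering step is shown to preserve LR-orderability while strictly decreasing a well-defined measure, the argument does not establish the theorem.
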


\begin{defn} \label{def:admissibility}
An enriched matrix $A$ is admissible if $A$ is $\{\mathcal{D}, \mathcal{S}, \mathcal{P}\}$-free. 
\end{defn}

The proof of Theorem~\ref{teo:circle_split_caract} is organized as follows. In each of the following sections we consider a split graph $G$ that contains an induced subgraph $H$, where $H$ is either a tent, a $4$-tent, a co-$4$-tent or a net (Sections~\ref{subsec:circle2},~\ref{subsec:circle3},~\ref{subsec:circle4} and~\ref{subsec:circle5}, respectively), and each of these is a case of the proof of Theorem~\ref{teo:circle_split_caract}.
Using the partitions of $K$ and $S$ described in the previous section, we define one enriched $(0,1)$-matrix for each partition $K_i$ of $K$ and four auxiliary non-enriched $(0,1)$-matrices that will help us give a circle model for $G$. At the end of each section, we prove that $G$ is circle if and only if these enriched matrices are $2$-nested and the four non-enriched matrices are nested, giving the guidelines for a circle model in each case.
From now on, we will refer indistinctly to a row $r$ (resp.\ a column $c$) of a matrix and the vertex in the independent (resp.\ complete) set of the split partition of $G$ whose adjacency is represented by the row (resp.\ column).

\subsection{Split circle graphs containing an induced tent} \label{subsec:circle2}

In this section we address the first case of the proof of Theorem~\ref{teo:circle_split_caract}, which is the case where $G$ contains an induced tent.
This section is subdivided as follows. In Section~\ref{subsubsec:tent2}, we use the partitions of $K$ and $S$ given in Section~\ref{subsec:tent_partition} to define the matrices $\mathbb{A}_i$ for each $i=1, 2, \ldots, 6$ and prove some properties that will be useful further on.
In Section~\ref{subsubsec:tent3}, the main results are the necessity of the $2$-nestedness of each $\mathbb A_i$ for $G$ to be a \fsc-free graph and the guidelines to obtain a circle model for a \fsc-free split graph $G$ containing an induced tent given in Theorem~\ref{teo:finalteo_tent}.


\subsubsection{Matrices $\mathbb A_1,\mathbb A_2,\ldots,\mathbb A_6$} \label{subsubsec:tent2}

Let $G=(K,S)$ and $H$ as in Section~\ref{subsec:tent_partition}. For each $i\in\{1,2,\ldots,6\}$, let $\mathbb A_i$ be an enriched $(0,1)$-matrix having one row for each vertex $s\in S$ such that $s$ belongs to $S_{ij}$ or $S_{ji}$ for some $j\in\{1,2,\ldots,6\}$, and one column for each vertex $k\in K_i$ and such that the entry corresponding to the row $s$ and the column $k$ is $1$ if and only if $s$ is adjacent to $k$ in $G$. For each $j\in\{1,2,\ldots,6\}-\{i\}$, we mark those rows corresponding to vertices of $S_{ji}$ with L and those corresponding to vertices of $S_{ij}$ with R.

Moreover, we color some of the rows of $\mathbb A_i$ as follows.
\begin{itemize}
 \item If $i\in\{1,3,5\}$, then we color each row corresponding to a vertex $s\in S_{ij}$ for some $j\in\{1,2,\ldots,6\}-\{i\}$ with color red and each row corresponding to a vertex $s\in S_{ji}$ for some $j\in\{1,2,\ldots,6\}-\{i\}$ with color blue.
 \item If $i\in\{2,4,6\}$, then we color each row corresponding to a vertex $s\in S_{ij}\cup S_{ji}$ for some $j\in\{1,2,\ldots,6\}$ with color red if $j=i+1$ or $j=i-1$ (modulo $6$) and with color blue otherwise.
\end{itemize}

Example:
\[ \mathbb A_3 = \scriptsize{ \bordermatrix{ & K_3\cr
        S_{34}\ \textbf{R} & \cdots \cr
                S_{35}\ \textbf{R} & \cdots \cr
                S_{33}\            & \cdots \cr
                S_{13}\ \textbf{L} & \cdots \cr
                S_{23}\ \textbf{L} & \cdots}\
                \begin{matrix}
                \textcolor{red}{\bullet} \\ \textcolor{red}{\bullet} \\ \\ \textcolor{blue}{\bullet} \\ \textcolor{blue}{\bullet}
                \end{matrix} } \qquad\qquad
   \mathbb A_4 = \scriptsize{ \bordermatrix{ & K_4\cr
        S_{34}\ \textbf{L} & \cdots \cr
                S_{45}\ \textbf{R} & \cdots\cr
                S_{44}\            & \cdots \cr
                S_{14}\ \textbf{L} & \cdots \cr
                S_{64}\ \textbf{L} & \cdots \cr
                S_{41}\ \textbf{R} & \cdots \cr
                S_{42}\ \textbf{R} & \cdots }\
                \begin{matrix}
                \textcolor{red}{\bullet} \\ \textcolor{red}{\bullet} \\ \\ \textcolor{blue}{\bullet} \\ \textcolor{blue}{\bullet} \\ \textcolor{blue}{\bullet} \\ \textcolor{blue}{\bullet}
                \end{matrix} } \]

\begin{figure}[h!]
  \begin{subfigure}[b]{0.5\textwidth}
    \includegraphics[width=\textwidth]{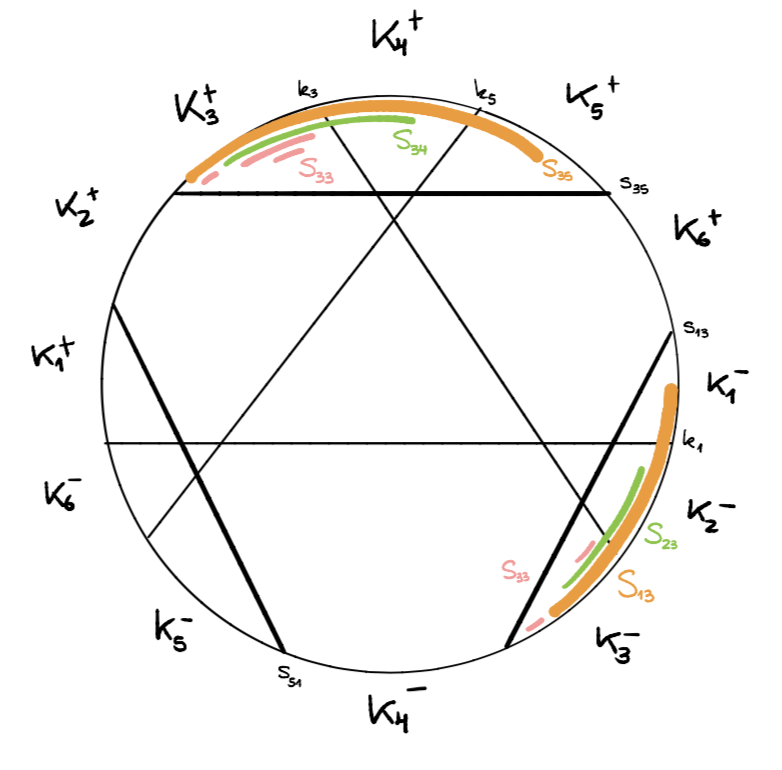}
    \caption{$\mathbb{A}_3$}
    \label{fig:modelA3}
  \end{subfigure}
  \hfill
  \begin{subfigure}[b]{0.48\textwidth}
    \includegraphics[width=\textwidth]{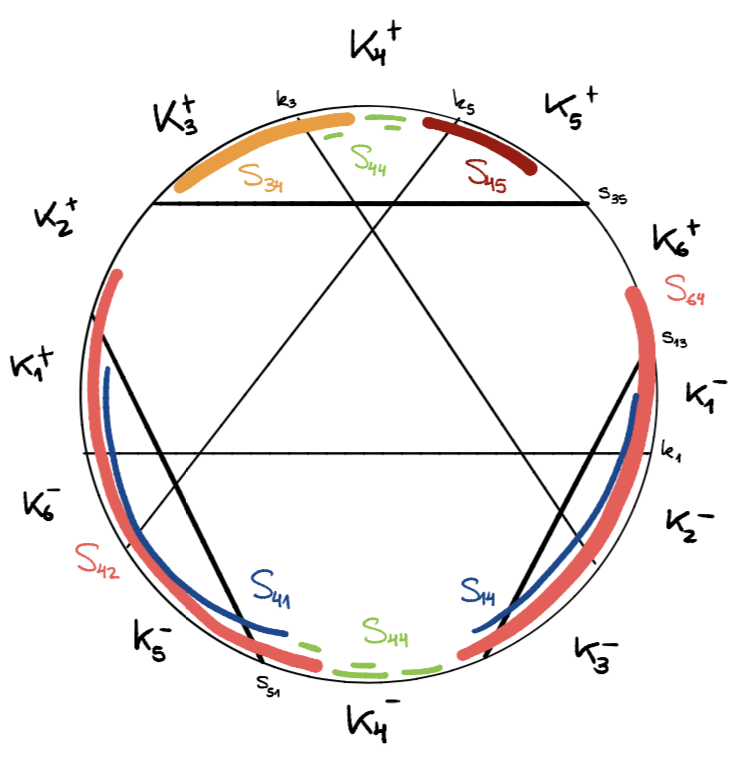}
    \caption{$\mathbb{A}_4$}
    \label{fig:f2}
  \end{subfigure}
  \caption{Sketch model of $G$ with some of the chords associated to rows in $\mathbb{A}_3$ and $\mathbb{A}_4$.}
\end{figure}

The following results are useful in the sequel.

\begin{claim} \label{claim:tent_0} 
Let $v_1$ in $S_{ij}$ and $v_2$ in $S_{ik}$, for $i,j,k \in \{1, 2, \ldots, 6 \}$ such that $i \neq j,k$. If $\mathbb A_i$ contains no $D_0$ for each $i \in \{1, 2, \ldots, 6\}$, then the following assertions hold:
    \begin{itemize}
        \item If $j \neq k$, then $v_1$ and $v_2$ are nested in $K_i$. Moreover, if $j=k$, then $v_1$ and $v_2$ are nested in both $K_i$ and $K_j$.
        \item For each $i \in \{1, 2, \ldots, 6 \}$, there is a vertex $v^*_i$ in $K_i$ such that for every $j \in \{1, 2, \ldots, 6 \} - \{i\}$ and every $s$ in $S_{ij}$, the vertex $s$ is adjacent to $v^*_i$.
    \end{itemize}
\end{claim}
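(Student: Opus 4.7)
The plan is to derive both statements directly from the hypothesis together with the labeling conventions: every vertex of $S_{ij}$ (with $i$ the first subscript) contributes an R-row to $\mathbb{A}_i$ and an L-row to $\mathbb{A}_j$. Accordingly, I will read ``$\mathbb{A}_i$ contains no $D_0$'' as forbidding, in each $\mathbb{A}_i$, both the L--L pattern of $D_0$ and its R--R dual; this matches how forbidden families are handled throughout Theorem~\ref{teo:2-nested_caract_bymatrices}.

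First, I would address the first statement in the case $j \ne k$. Both $v_1 \in S_{ij}$ and $v_2 \in S_{ik}$ are R-rows of $\mathbb{A}_i$, so if they were not nested in $K_i$ there would exist $c_1, c_2 \in K_i$ with $v_1 \sim c_1$, $v_1 \not\sim c_2$, $v_2 \sim c_2$, $v_2 \not\sim c_1$. The restriction of $\mathbb{A}_i$ to these rows and columns is precisely the dual of $D_0$, contradicting the hypothesis. For the case $j = k$, the same argument applied to $\mathbb{A}_i$ yields nestedness in $K_i$; and since $v_1,v_2 \in S_{ij}$ simultaneously correspond to L-rows of $\mathbb{A}_j$, the symmetric argument applied to $\mathbb{A}_j$ this time produces a literal $D_0$ and delivers nestedness in $K_j$.

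For the second statement, I would exploit the chain structure afforded by the first. The R-rows of $\mathbb{A}_i$ correspond exactly to the vertices in $\bigcup_{j \ne i} S_{ij}$, and by the first statement any two of them are nested in $K_i$; hence their neighborhoods in $K_i$ form a chain under inclusion. Each such neighborhood is nonempty by the very definition of $S_{ij}$ (which requires adjacency to $K_i$), so the minimum of the chain is a nonempty subset of $K_i$. Choosing $v^*_i$ to be any vertex in this minimum then gives a vertex of $K_i$ adjacent to every vertex of $\bigcup_{j \ne i} S_{ij}$.

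The main subtlety I anticipate is notational rather than combinatorial: one must keep careful track of how the L/R labels flip between $\mathbb{A}_i$ and $\mathbb{A}_j$, and be comfortable reading the hypothesis so that each $\mathbb{A}_i$ avoids both $D_0$ and its dual. Once this convention is fixed, the argument reduces to the two routine observations above together with the elementary fact that a finite chain of nonempty sets has a nonempty minimum.
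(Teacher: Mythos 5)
Your proof is correct and, for the first assertion, follows essentially the same route as the paper: non-nestedness of two rows with the same label yields the identity pattern on two columns of $K_i$ (resp.\ $K_j$), i.e.\ $D_0$ or its dual, contradicting the hypothesis. Your explicit convention that ``no $D_0$'' forbids the R--R dual as well is exactly how the paper itself uses the claim (its own proof exhibits the pattern on two R-rows and still calls it $D_0$), and it matches the context in which the claim is later applied, namely under admissibility, which excludes $\mathcal{D}$-matrices together with their duals. The only genuine divergence is in the second assertion: the paper justifies the existence of $v^*_i$ by appealing to ``the previous argument and the fact that there is a C$1$P for the columns of $\mathbb A_i$,'' whereas you argue that the $K_i$-neighborhoods of the vertices in $\bigcup_{j\neq i}S_{ij}$ are pairwise nested and nonempty, hence form a finite chain whose minimum is nonempty, and take $v^*_i$ in that minimum. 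Your version is slightly preferable as written, since the consecutive-ones property is not part of the claim's stated hypothesis, while your chain argument uses only the nestedness just established and the nonemptiness built into the definition of $S_{ij}$; both arguments, of course, capture the same underlying fact.
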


Let $v_1$, $v_2$ in $S_{ij}$, for some $i, j \in \{1, \ldots, 6\}$. Towards a contradiction, suppose without loss of generality that $v_1$ and $v_2$ are not nested in $K_i$. 
Since $v_1$ and $v_2$ are both adjacent to at least one vertex in $K_i$, then there are vertices $w_1$, $w_2$ in $K_i$ such that $w_1$ is adjacent to $v_1$ and nonadjacent to $v_2$, and $w_2$ is adjacent to $v_2$ and nonadjacent to $v_1$.
Moreover, since $v_1$ and $v_2$ lie in $S_{ij}$ and $i \neq j$, it follows from the definition of $\mathbb A_i$ that the corresponding rows are labeled with the same letter and colored with the same color.
Therefore, we find $D_0$ induced by the rows corresponding to $v_1$ and $v_2$, and the columns $w_1$ and $w_2$, which results in a contradiction. 
The proof is analogous by symmetry for $K_j$ and if $j\neq k$. Moreover, the second statement of the claim follows from the previous argument and the fact that there is a C$1$P for the columns of $\mathbb A_i$. \QED


\subsubsection{Split circle equivalence} \label{subsubsec:tent3}

In this section, we will use the matrix theory developed in~\cite{P20,PDS20} to characterize the forbidden induced subgraphs that arise in a split graph that contains an induced tent when this graph is not a circle graph. We will start by proving that, given a split graph $G$ that contains an induced tent, if $G$ is \fsc-free, then $\mathbb A_i$ is $2$-nested for each $i=1, 2, \ldots, 6$.

\begin{lema} \label{lema:no2nested_prohibido_tent}
    If $\mathbb A_i$ is not $2$-nested, for some $i \in \{ 1, \ldots, 6 \}$, then $G$ contains an induced subgraph of the families depicted in Figure~\ref{fig:forb_graphs}. 
\end{lema}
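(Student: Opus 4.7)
The plan is to apply Theorem~\ref{teo:2-nested_caract_bymatrices} case by case. If $\mathbb{A}_i$ is not $2$-nested, then either $\mathbb{A}_i$ contains one of the explicit forbidden subconfigurations listed there (i.e., $M_0$, $M_{II}(4)$, $M_V$, $S_0(k)$, a matrix from $\mathcal{D}\cup\mathcal{F}\cup\mathcal{S}\cup\mathcal{P}$ or its dual, or a monochromatic gem, monochromatic weak gem, or badly-coloured doubly-weak gem), or $\mathbb{A}_i^*$ contains a Tucker matrix or a member of $\mathcal{M}$ as a subconfiguration. For each such case the goal is to exhibit an explicit vertex subset of $G$ inducing a graph of \fsc.

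The translation from a forbidden subconfiguration to an induced subgraph of $G$ relies on three ingredients. First, columns of $\mathbb{A}_i$ are vertices of $K_i$, rows are vertices of $S$, and the entries record their adjacencies, so any subconfiguration directly gives a partial adjacency picture in $G$. Second, the label $\textbf{L}$, $\textbf{R}$ or $\textbf{U}$ of a row identifies the vertex $s$ in $S$ as belonging to $S_{ji}$, $S_{ij}$ or $S_{ii}$ for some $j\neq i$, and Lemma~\ref{lema:tent_2} then prescribes the adjacency of $s$ with every other part $K_\ell$. Third, the colour of a row records which ``side'' of the cyclic partition of $K$ its vertex reaches; by Claim~\ref{claim:tent_0} the tent vertices $s_{(i-2)i}$ and $s_{i(i+2)}$, together with certain columns of $K_i$ adjacent to every labeled row, can be used to close the obstruction into an induced member of \fsc. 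When working with $\mathbb{A}_i^*$, the two added distinguished rows labeled $\textbf{L}$ and $\textbf{R}$ are substituted, respectively, by $s_{(i-2)i}$ and $s_{i(i+2)}$ in $G$.

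With this dictionary, the verification is case by case. For a subconfiguration with only $\textbf{U}$-rows (e.g.\ $M_0$, $M_{II}(4)$, $S_0(k)$, or a monochromatic gem) the rows and columns themselves, together with the tent $H$, already induce a copy of an even $k$-sun, $M_{II}(k)$, tent${}\vee K_1$, or $M_{III}(k)$, with the name of the obstruction matching that of the resulting graph. For a subconfiguration involving $\textbf{L}$- or $\textbf{R}$-rows (the families $\mathcal{D}$, $\mathcal{S}$, $\mathcal{P}$ and the labeled members of $\mathcal{F}$) each labeled row is replaced by the corresponding vertex of $S_{ji}$ or $S_{ij}$, and well-chosen vertices of $K_{i\pm1}$ and $K_{i\pm2}$ (existing by Lemma~\ref{lema:tent_2}) together with tent vertices close the configuration into an induced odd $k$-sun with center, $F_0$, $F_1(k)$, $F_2(k)$, $M_{III}(3)$, $M_{IV}$, or $M_V$. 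The Tucker subconfigurations of $\mathbb{A}_i^*$ and the matrices of $\mathcal{M}$ are reduced analogously after the substitution described above.

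The main obstacle is the sheer volume of subconfigurations to analyze, with each obstruction in Theorem~\ref{teo:2-nested_caract_bymatrices} requiring its own explicit choice of vertices. The delicate point in each case is to ensure that the resulting vertex set is \emph{induced}, i.e.\ carries no spurious edges; this is exactly what is guaranteed by the adjacency constraints of Lemma~\ref{lema:tent_2} (emptiness of the forbidden $S_{ij}$, completeness of $S_{i(i+3)}$ and $S_{(i+3)i}$ to $K_i$, and the prescribed completeness/anticompleteness patterns for the remaining $S_{ij}$), which rule out the very edges that would otherwise obstruct the identification.
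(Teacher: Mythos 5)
Your overall strategy coincides with the paper's: translate each matrix obstruction to $2$-nestedness into an induced member of \fsc, using the dictionary rows/columns $\leftrightarrow$ vertices of $S$/$K_i$, the labels to locate each row in some $S_{ij}$, Lemma~\ref{lema:tent_2} and Claim~\ref{claim:tent_0} to pin down the remaining adjacencies, and the tent vertices (and suitable vertices of neighbouring parts $K_{i\pm 1}$, $K_{i\pm 2}$) to close the configuration. Two remarks on economy: since the matrices $\mathbb A_i$ contain no LR-rows, almost all of the families in Theorem~\ref{teo:2-nested_caract_bymatrices} are vacuous here, and the admissibility analysis reduces to $D_0$, $D_1$, $D_2$, $S_2(k)$, $S_3(k)$; moreover, a monochromatic gem or weak gem among \emph{pre-colored} rows necessarily involves two labeled rows and is shown in the paper to force a $D_0$ or $D_1$, i.e.\ it is absorbed by the admissibility case rather than needing a separate construction (your claim that monochromatic gems can arise among U-rows is off, since unlabeled rows of $\mathbb A_i$ are uncolored).

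The genuine gap is in how you treat the coloured obstructions. Monochromatic gems, monochromatic weak gems and badly-colored doubly-weak gems are defined relative to a \emph{block bi-coloring}, and in $\mathbb A_i$ only the labeled rows carry a pre-assigned colour; $2$-nestedness further requires that this partial colouring extend to a total block bi-coloring. Consequently one cannot simply read these gems off the pre-coloured matrix as static subconfigurations and translate them: the failure mode that must be excluded is that \emph{every} extension of the pre-colouring (over every suitable LR-ordering) creates a monochromatic (weak) gem, typically through a chain of uncoloured, pairwise overlapping rows that forces colours from one labeled row to another. This is exactly the paper's final case, which it handles via Lemma~\ref{lema:2-nested_if}: a forcing sequence of overlapping rows anchored at labeled rows yields $S_2(k)$ or $S_3(k)$ (contradicting admissibility, hence already yielding the forbidden subgraphs found earlier), while a single row forcing two overlapping rows yields $F_0$ or $F'_0$ and hence an induced forbidden subgraph of $G$. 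Nothing in your plan addresses this extension/forcing step, and it does not follow from the translation dictionary alone; without it the case analysis covers only the obstructions visible in the partially coloured matrix and the LR-orderability obstructions in $\mathbb A_i^*$, which is not sufficient to conclude that $\mathbb A_i$ is $2$-nested.
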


\begin{proof}

Based on the symmetry of the subsets $K_i$ of $K$, it suffices to see what happens when $i = 3$ or $i= 4$, since the proof depends solely on the parity of $i$. 


The proof is organized as follows. First, we assume that $\mathbb A_i$ is not admissible, thus $A$ contains one of the forbidden subconfigurations in $\mathcal{D}$, $\mathcal{S}$ or $\mathcal{P}$. Once we reach a contradiction, we will assume that $\mathbb A^*_i$ contains either a Tucker matrix or one of the forbidden subconfigurations in $\mathcal{M}$, once again reaching a contradiction. The next step is to assume that $\mathbb A_i$ contains no monochromatic gems, monochromatic weak-gems nor badly colored doubly-weak-gems, and finally that $\mathbb A_i$ is not $2$-nested. 

Notice that, if $G$ is \fsc-free, then each $\mathbb A_i$ ($i=1, \ldots, 6$) contains no $M_0$ ($3$-sun with center), $M_{II}(4)$, $M_V$ or $S_0(k)$ for every even $k \geq 4$ ($k$-sun), since these matrices are matrices $A(S,K)$ of graphs in \fsc. Hence, we assume that $\mathbb A_i$ contains none of the matrices depicted in Figure~\ref{fig:forb_M_chiquitas}.

\begin{mycases}
\case \textit{Suppose first that $\mathbb A_i$ is not admissible.} Since $\mathbb{A}_i$ contains no LR-rows, then $\mathbb A_i$ contains either $D_0$, $D_1$, $D_2$ or $S_2(k)$, $S_3(k)$ for some $k \geq 3$.

\subcase \textit{Suppose $\mathbb A_i$ contains $D_0$. }
Let $v_0$ and $v_1$ in $S$ be the vertices whose adjacency is represented by the first and second row of $D_0$, respectively, and let $k_{i1}$ and $k_{i2}$ in $K_i$ be the vertices whose adjacency is represented by the first and second column of $D_0$, respectively. Both rows of $D_0$ are labeled with the same letter and the coloring given to each row is indistinct. We assume without loss of generality that both rows are labeled with L, due to the symmetry of the problem.

\subsubcase \textit{Suppose first that $i = 3$.} In this case, $v_1$ and $v_2$ lie in $S_{34}$ or $S_{35}$. 
By Claim~\ref{claim:tent_0} there is a vertex $k_4$ in $K_4$ (resp.\ $k_5$ in $K_5$) adjacent to every vertex in $S_{34}$ (resp.\ $S_{35}$).
Thus, if both $v_1$ and $v_2$ lie in $S_{35}$, 
then we find a $M_{III}(3)$ induced by $\{ k_5$, $k_{31}$, $k_{32}$, $v_1$, $v_2$, $s_{51}$, $k_1 \}$. If instead both $v_1$ and $v_2$ lie in $S_{34}$, then we find a $3$-sun with center induced by $\{ k_4$, $k_{31}$, $k_{32}$, $v_1$, $v_2$, $s_{35}$, $s_{13} \}$.

Suppose that $v_1$ in $S_{34}$ and $v_2$ in $S_{35}$. Let $k_4$ in $K_4$ adjacent to $v_1$, 
let $k_5$ in $K_5$ adjacent to $v_2$ and let $k_1$ be any vertex in $K_1$.
Thus, $v_1$ and $v_2$ are nonadjacent to $k_1$ and $v_1$ is nonadjacent to $k_5$. Hence, we find a $4$-sun induced by the set $\{ s_{13}$, $s_{51}$, $ v_1$, $v_2$, $k_1$, $k_{31}$, $k_4$, $k_5 \}$.

\subsubcase
\textit{Suppose now that $i = 4$. }Thus, the vertices $v_1$ and $v_2$ belong to either $S_{34}$, $S_{14}$ or $S_{64}$.
Suppose $v_1$ in $S_{34}$ and $v_2$ in $S_{14}$, and let $k_1$ in $K_1$ and $k_3$ in $K_3$ such that $v_1$ is adjacent to $k_3$. Since $v_2$ is complete to $K_3$, then $v_2$ is adjacent to $k_3$, and both $v_1$ and $v_2$ are nonadjacent to $k_1$. Hence, we find $M_{II}(4)$ induced by $\{ s_{13}$, $s_{35}$, $v_1$, $v_2$, $k_3$, $k_{41}$, $k_{42}$, $k_1 \}$. The same holds if $v_2$ lies in $S_{64}$.

If instead $v_1$ and $v_2$ lie in $S_{34}$, then we find a $M_{III}(3)$ induced by the set $\{ k_3$, $k_{41}$, $k_{42}$, $v_1$, $v_2$, $s_{13}, k_1 \}$.

Finally, if $v_1$ and $v_2$ lie in $S_{14} \cup S_{64}$, then we find a tent${}\vee{} K_1$ induced by $\{ k_3$, $k_1$, $k_{41}$, $k_{42}$, $v_1$, $v_2$, $s_{35} \}$, where $k_1$ in $K_1$ is adjacent to $v_1$ and $v_2$.

\subcase \textit{Suppose $\mathbb A_i$ contains $D_1$. }
Both rows of $D_1$ are labeled with distinct letters and are colored with the same color. Let $v_1$ and $v_2$ in $S$ be the vertices whose adjacency is rep\-re\-sent\-ed by the first and second row of $D_1$, respectively, and let $k_i$ in $K_i$ be the vertex whose adjacency is represented by the column of $D_1$.
We assume without loss of generality that $v_1$ is labeled with L and $v_2$ is labeled with R.
It follows from the definition of $\mathbb A_i$ that, if $i$ is odd, then there are no two rows labeled with distinct letters and colored with the same color, thus we assume that $i$ is even and hence $i=4$. 

In this case, either $v_1$ in $S_{34}$ and $v_2$ in $S_{45}$, or $v_1$ in $S_{14} \cup S_{64}$ and $v_2$ in $S_{41} \cup S_{42}$.

If $v_1$ in $S_{34}$ and $v_2$ in $S_{45}$, then we find a $4$-sun induced by $\{ v_1$, $v_2$, $s_{13}$, $s_{51}$, $k_1$, $k_3$, $k_4$, $k_5 \}$, where $k_3$ in $K_3$ is adjacent to $v_1$ and nonadjacent to $v_2$, $k_4$ in $K_4$ is adjacent to both $v_1$ and $v_2$, $k_5$ in $K_5$ is adjacent to $v_2$ and nonadjacent to $v_1$, and $k_1$ in $K_1$ is nonadjacent to both $v_1$ and $v_2$.

Suppose that $v_1$ lies in $S_{14}$ and $v_2$ lies in $S_{41}$. We find a tent${}\vee{}K_1$ induced by $\{ v_1$, $v_2$, $s_{35}$, $k_1$, $k_3$, $k_4$, $k_5 \}$, where $k_1$, $k_3$, $k_4$ and $k_5$ are vertices analogous as those described in the previous paragraph. Analogously, we find the same forbidden induced subgraph in $G$ if $v_1$ in $S_{64}$ or $v_2$ in $S_{42}$.

\subcase \textit{Suppose $D_2$ in $\mathbb A_i$. }
Let $v_1$ and $v_2$ in $S$ be the vertices whose adjacency is represented by the first and second row of $D_2$, respectively, and let $k_{i1}$ and $k_{i2}$ in $K_i$ be the vertices whose adjacency is represented by the first and second column of $D_2$, respectively.
Both rows of $D_2$ are labeled with distinct letters and colored with distinct colors, for the ``same color'' case is covered since we proved that there is no $D_1$ as a submatrix of $\mathbb A_i$. We assume without loss of generality that $v_1$ is labeled with L and $v_2$ is labeled with R.

\subsubcase \textit{Suppose that $i= 4$.} Thus, $v_1$ in $S_{34}$ and $v_2$ in $S_{41} \cup S_{42}$.
We find a $3$-sun with center induced by $\{ v_1$, $v_2$, $s_{13}$, $k_1$, $k_3$, $k_{41}$, $k_{42} \}$, where $k_1$ in $K_1$ is adjacent to $v_2$ and nonadjacent to $v_1$ and $k_3$ in $K_3$ is adjacent to $v_1$ and nonadjacent to $v_2$. We find the same forbidden induced subgraph if $v_2$ in $S_{41}$ or $S_{42}$.

\subsubcase \textit{Suppose that $i=3$.} In this case, $v_1$ in $S_{13} \cup S_{23}$, and $v_2$ in $S_{34} \cup S_{35}$.

Suppose first that $K_2 \neq \emptyset$. If $v_1$ in $S_{23}$ and $v_2$ in $S_{34}$, then we find $M_{II}(4)$ induced by $\{ v_1$, $v_2$, $s_{13}$, $s_{35}$, $k_2$, $k_4$, $k_{31}$, $k_{32} \}$. 
If instead $v_2$ in $S_{35}$, then we find $M_{II}(4)$ induced by the same subset of vertices with the exception of $k_4$, considering an analogous vertex $k_5$ in $K_5$. 
Moreover, the same forbidden induced subgraph can be found if $v_1$ in $S_{13}$, as long as $K_2 \neq \emptyset$.

If instead $K_2 = \emptyset$, then necessarily $v_1$ in $S_{13}$. If $v_2$ in $S_{35}$, then we find a $3$-sun with center induced by the subset $\{ v_1$, $v_2$, $s_{51}$, $k_1$, $k_5$, $k_{31}$, $k_{32} \}$. 
If $v_2$ in $S_{34}$, then we find $M_{III}(4)$ induced by $\{ v_1$, $v_2$, $s_{51}$, $s_{13}$, $k_1$, $k_4$, $k_5$, $k_{31}$, $k_{32} \}$. 

\subcase \textit{Suppose $S_2(j)$ is a subconfiguration of $\mathbb A_i$ for some $j \geq 3$.} Let $v_1, v_2, \ldots, v_j$ be the vertices in $S$ represented by the rows of $S_2(j)$ and $k_{i1}$, $k_{i2}, \ldots, k_{i(j-1)}$ be the vertices in $K_i$ that represent columns $1$ to $j-1$ of $S_2(j)$. Notice that $v_1$ and $v_j$ are labeled with the same letter, and depending on whether $j$ is odd or even, then $v_1$ and $v_j$ are colored with distinct colors or with the same color, respectively. We assume without loss of generality that $v_1$ and $v_j$ are both labeled with L.

\subsubcase \textit{Suppose $j$ is odd. }If $i =3$, then there are no vertices $v_1$ and $v_j$ labeled with the same letter and colored with distinct colors as in $S_2(j)$.
Thus, let $i= 4$. In this case, $v_1$ in $S_{34}$ and $v_j$ in $S_{14} \cup S_{64}$.
Let $k_3$ in $K_3$ be a vertex adjacent to both $v_1$ and $v_j$, and let $k_1$ in $K_1$ adjacent to $v_j$. Thus, we find $F_1(j+2)$ induced by $\{ s_{13}$, $s_{35}$, $v_1$, $\ldots$, $v_j$, $k_1$, $k_3$, $k_{i1}$, $\ldots$, $k_{i(j-1)} \}$.

\subsubcase  \textit{Suppose $j$ is even. }We split this in two cases, depending on the parity of $i$.
If $i=3$, then $v_1$ and $v_j$ lie in $S_{13} \cup S_{23}$.
Suppose that $v_1$ in $S_{13}$ and $v_j$ in $S_{23}$. Let $k_2$ in $K_2$ adjacent to $v_1$ and $v_j$. Hence, we find $F_1(j+2)$ induced by the subset $\{ v_1$, $\ldots$, $v_j$, $k_2$, $k_{i2}$, $\ldots$, $k_{i(j-1)}$, $s_{35} \}$. The same holds if both $v_1$ and $v_j$ lie in $S_{23}$.
If instead $v_1$ and $v_j$ both lie in $S_{13}$, then we find $F_1(j+2)$ induced by the same subset but replacing $k_2$ for a vertex $k_1$ in $K_1$ adjacent to both $v_1$ and $v_j$.

Suppose now that $i=4$. In this case, $v_1$ and $v_j$ lie in $S_{14} \cup S_{64}$. In either case, there is a vertex $k_1$ in $K_1$ that is adjacent to both $v_1$ and $v_j$. We find $F_1(j+1)$ induced by $\{ v_1$, $\ldots$, $v_j$, $k_1$, $k_{i1}$, $\ldots$, $k_{i(j-1)}$, $ s_{35} \}$.

\subcase \textit{Suppose $S_3(j)$ is a subconfiguration of $\mathbb A_i$ for some $j \geq 3$. }
Let $v_1, v_2, \ldots, v_j$ be the vertices represented by the rows of $S_3(j)$ and $k_{i1}, \ldots, k_{i(j-1)}$ be the vertices represented by columns $1$ to $j-1$ of $S_3(j)$. Notice that $v_1$ and $v_j$ are labeled with distinct letters, and depending on whether $j$ is odd or even, $v_1$ and $v_j$ are either colored with distinct colors or with the same color, respectively. We assume without loss of generality that $v_1$ is labeled with L and $v_j$ is labeled with R.

\subsubcase \textit{Suppose $j$ is odd. }If $i=3$, then $v_1$ lies in $S_{34} \cup S_{35}$, and $v_j$ lies in $S_{13} \cup S_{23}$.
If $v_1$ lies in $S_{34}$ and $v_j$ lies in $S_{23}$, then we find $F_1(j+2)$ induced by $\{ v_1$, $\ldots$, $v_j$, $k_2$, $k_4$, $k_{i1}$, $\ldots$, $k_{i(j-1)}$, $s_{35}$, $s_{13} \}$. 
If $v_1$ lies in $S_{34}$ and $v_j$ lies in $S_{13}$, then we find $F_1(j+2)$ induced by $\{ v_1$, $\ldots$, $v_j$, $k_1$, $k_4$, $k_{i1}$, $\ldots$, $k_{i(j-1)}$, $s_{35}$, $s_{13} \}$.  
If instead $v_1$ lies in $S_{35}$ and $v_j$ lies in $S_{23}$, then we find $F_1(j+2)$ induced by $\{ v_1$, $\ldots$, $v_j$, $k_2$, $k_5$, $k_{i1}$, $\ldots$, $k_{i(j-1)}$, $s_{35}$, $s_{13} \}$. 

If instead $i=4$, then $v_1$ in $S_{34}$ and $v_j$ in $S_{41} \cup S_{42}$. In either case, we find a $(j+1)$-sun induced by $\{ v_1$, $\ldots$, $v_j$, $k_{i1}$, $\ldots$, $k_{i(j-1)}$, $k_1$, $k_3$, $s_{13} \}$. 

\subsubcase \textit{Suppose $j$ is even.} If $i = 3$, then there no two rows in $\mathbb A_3$ labeled with distinct letters and colored with the same color. Hence, assume $i=4$ and thus either $v_1$ in $S_{34}$ and $v_j$ in $S_{45}$, or $v_1$ in $S_{14} \cup S_{64}$ and $v_j$ in $S_{41} \cup S_{42}$.

If $v_1$ in $S_{34}$ and $v_j$ in $S_{45}$, then we find a $(j+2)$-sun induced by $\{ v_1$, $\ldots$, $v_j$, $k_1$, $k_3$, $k_5$, $k_{i1}$, $\ldots$, $k_{i(j-1)}$, $s_{13}$, $s_{51} \}$, where $k_1$ in $K_1$ is nonadjacent to both $v_1$ and $v_j$. 

If instead $v_1$ in $S_{14} \cup S_{64}$ and $v_j$ in $S_{41} \cup S_{42}$, then we find a $j$-sun induced by $\{ v_1$, $\ldots$, $v_j$, $k_1$, $k_{i1}$, $\ldots$, $k_{i(j-1)} \}$. 

Therefore, $\mathbb A_i$ is admissible for every $j \in \{1, 2, \ldots, 6 \}$.
\vspace{1mm}


\case \textit{$\mathbb A_i$ is admissible but not LR-orderable. }
Then $\mathbb A_i$ contains a Tucker matrix, or one of the following submatrices: $M_4'$, $M_4''$, $M_5'$, $M_5''$, $M'_2(k)$, $M''_2(k)$, $M_3'(k)$, $M_3''(k)$ or their corresponding dual matrices, for some $k \geq 4$.
We assume throughout the rest of the proof that, for each pair of vertices $x$ and $y$ in the $S_{ij}$ of $S$, there are vertices $k_i$ in $K_i$ and $k_j$ in $K_j$ such that both $x$ and $y$ are adjacent to $k_i$ and $k_j$. This follows from Claim~\ref{claim:tent_0} and the fact that $\mathbb A_i$ is admissible.

Suppose there is $M_I(j)$ as a submatrix of $\mathbb A_i$. Let $v_1, \ldots, v_j$ be the vertices of $S$ represented by rows $1$ to $j$ of $M_I(k)$, and let $k_{i1}, \ldots, k_{ij}$ be the vertices in $K$ represented by columns $1$ to $j$. Thus, if $j$ is even, then we find either a $j$-sun induced by $\{v_1$, $\ldots$, $v_j$, $k_{i1}$, $\ldots$, $k_{ij} \}$, and if $j$ is odd, then we find a $j$-sun with center induced by the subset $\{v_1$, $\ldots$, $v_j$, $k_{i1}$, $\ldots$, $k_{ij}$, $s_{i(i+2)} \}$.

For any other Tucker matrix, we find the homonym forbidden induced sub\-graph induced by the subset $\{v_1$, $\ldots$, $v_j$, $k_{i1}$, $\ldots$, $k_{ij} \}$.

Suppose that $\mathbb A_i$ contains either $M_4'$, $M_4''$, $M_5'$, $M_5''$, $M'_2(k)$, $M''_2(k)$, $M_3'(k)$, $M_3''(k)$ or one of their corresponding dual matrices, for some $k \geq 4$. Let $M$ be such a submatrix. Notice that, for any tag column $c$ of $M$ that denotes which vertices are labeled with L, there is a vertex $k'$ in either $K_{i-1}$ or $K_{i-2}$ such that the vertices represented by a labeled row in $c$ are adjacent to $k'$  in $G$.
    If instead the tag column $c$ denotes which vertices are labeled with R, then we find an analogous vertex $k''$ in either $K_{i+1}$ or $K_{i+2}$.

Depending on whether there is one or two tag columns in $M$, we find the homonym forbidden induced subgraph induced by the vertices in $S$ and $K$ represented by the rows and the non-tagged columns of $M$ plus one or two vertices $k'$ and $k''$ as those described in the previous paragraph.

\case \textit{ $\mathbb A_i$ is LR-orderable but contains some gem. }Moreover, since $\mathbb A_i$ contains no LR-rows, then $\mathbb A_i$ contains either a monochromatic gem or a monochromatic weak gem.

Let $v_1$ and $v_2$ in $S$ be the vertices represented by the rows of the monochromatic gem. Notice that both rows are labeled rows, since every unlabeled row in $\mathbb A_i$ is uncolored. It follows that a monochromatic gem or a monochromatic weak gem may be induced only by two rows labeled with L or R, and hence both are the same case.

\subcase If $i=3$, since both vertices are colored with the same color, then $v_1$ in $S_{34}$ and $v_2$ in $S_{35}$.
In that case, we find $D_0$ in $\mathbb A_i$ since both rows are labeled with the same letter, which results in a contradiction for we assumed that $\mathbb A_i$ is admissible. The same holds if both vertices belong to either $S_{34}$ or $S_{35}$.

\subcase If instead $i= 4$, then we have three possibilities. Either $v_1$ in $S_{14}$ and $v_2$ in $S_{64}$, or $v_1$ in $S_{34}$ and $v_2$ in $S_{45}$, or $v_1$ in $S_{14}$ and $v_2$ in $S_{41}$.
The first case is analogous to the $i=3$ case stated above.
For the second and third case, since both rows are labeled with distinct letters, then we find $D_1$ as a submatrix of $\mathbb A_i$. This results once more in a contradiction, for $\mathbb A_i$ is admissible.

Therefore, $\mathbb A_i$ contains no monochromatic gems, monochromatic weak gems or badly-colored doubly-weak gems.

\case \textit{ $\mathbb A_i$ contains none of the matrices listed in Theorem~\ref{teo:2-nested_caract_bymatrices}, but $\mathbb A_i$ is not $2$-nested.}  
It follows from Lemma~\ref{lema:2-nested_if} that, for every suitable LR-ordering and $2$-color assignment of all the blocks of $\mathbb A_i$ that extends the given pre-coloring, we find either a monochromatic gem or a monochromatic weak gem. Moreover, at least one of the rows of such a gem is unlabeled, for there are no LR-rows in $\mathbb A_i$.
Consider the columns of the matrix $\mathbb A_i$ ordered according to a suitable LR-ordering. 
Suppose there is a monochromatic gem given by the rows $v_j$ and $v_{j+1}$, and suppose that both rows are colored with red.
If is not possible to color these two rows with distinct colors, then there is at least one more row $v_{j-1}$ colored with blue and forces $v_j$ to be colored with red. If $v_{j-1}$ is unlabeled, then $v_{j-1}$ and $v_j$ overlap. If $v_{j-1}$ is labeled with L or R, then $v_j$ and $v_{j-1}$ induce a weak gem.

If $v_{j-1}$ forces the coloring only on $v_j$, let $v_{j+2}$ be a distinct row that forces $v_{j+1}$ to be colored with red. Suppose first that $v_{j+2}$ forces the coloring only to the row $v_{j+1}$. Hence, there is a submatrix as the following in $\mathbb A_i$:

\vspace{-5mm}
\[ \small{ \bordermatrix{ & \cr
        v_{j-1}\  & 1 1 0 0 0 \cr
        v_{j}\  & 0 1 1 0 0  \cr
        v_{j+1}\  & 0 0 1 1 0  \cr
        v_{j+2}\  &  0 0 0 1 1  }\ }
    \begin{matrix}
    \textcolor{blue}{\bullet} \\ \textcolor{red}{\bullet} \\ \textcolor{red}{\bullet} \\ \textcolor{blue}{\bullet}
    \end{matrix}
        \]

If there are no other rows forcing the coloring of $v_{j-1}$ and $v_{j+2}$, then this submatrix can be colored blue-red-blue-red. Since this is not possible, there is a sequence of overlapping rows $v_l, \ldots, v_{j-2}$ and $v_{j+3}, \ldots, v_k$ such that each row forces the coloring of the next one, and this sequence includes$v_{j-1}$, $v_j$, $v_{j+1}$ and $v_{j+2}$. Moreover, suppose that this is the longest sequence of vertices with this property. Hence, $v_l$ and $v_k$ are labeled rows, for if not we could color again the rows and thus extending the pre-coloring, which results in a contradiction.
However, we find either $S_2(k-l+1)$ or $S_3(k-l+1)$ in $\mathbb A_i$, and this also results in a contradiction, for $\mathbb A_i$ is admissible.

Suppose now that $v_{j-1}$ forces the red color on both $v_j$ and $v_{j+1}$. If $v_{j-1}$ is unlabeled, then $v_{j-1}$ overlaps with both $v_j$ and $v_{j+1}$. Since $v_j$ and $v_{j+1}$ overlap, either $v_j[r_j] = v_{j+1}[r_j] = 1$ or $v_j[l_j] = v_{j+1}[l_j] = 1$. Suppose without loss of generality that $v_j[r_j] = v_{j+1}[r_j] = 1$.
Since $v_{j-1}$ overlaps with $v_j$, then either $v_{j-1}[l_j] = 1$ or $v_{j-1}[r_j] = 1$, and the same holds for $v_{j-1}[l_{j+1}] = 1$ or $v_{j-1}[r_{j+1}] = 1$.
If $v_{j-1}[l_j] = 1$, then $v_{j-1}[l_{j+1}] = 1$ and $v_j[l_{j+1}] = 1$, and thus we find $F_0$ induced by $\{ v_{j-1}$, $v_j$, $v_{j+1}$, $l_{j-1}$, $l_{j+1}-1$, $l_{j+1}$, $r_j$, $r_j + 1 \}$, which is a contradiction.
Analogously, if $v_{j-1}[r_j] = 1$, then $v_{j-1}[l_{j+1}] = 1$ and $v_{j-1}[l_j] = 1$, and thus we find $F_0$ induced by $\{ v_{j-1}$, $v_j$, $v_{j+1}$, $l_j$, $l_{j+1}$, $r_j$, $r_j +1$, $r_{j-1} \}$.

It follows analogously if $v_{j-1}$ is labeled with L or R, except that we find $F'_0$ instead of $F_0$ as a subconfiguration in $\mathbb A_i$. Moreover, the proof is analogous if $v_j$ and $v_{j+1}$ induce a weak-gem instead of a gem. 
\end{mycases}
Therefore, we reached a contradiction in every case and thus $\mathbb A_i$ is $2$-nested.
\end{proof}

Let $G= (K,S)$ and $H$ as in Section~\ref{subsec:tent_partition}, and the matrices $\mathbb A_i$ for each $i=1, 2, \ldots, 6$ as in the previous subsection.
Suppose $\mathbb A_i$ is $2$-nested for each $i =1, 2, \ldots, 6$. Hence, there is a suitable LR-ordering $\Pi_i$ for each $i =1, 2, \ldots, 6$ and a $2$-coloring extension $\chi_i$ of the given block bi-coloring. Since $\mathbb A_i$ contains no LR-rows, then every row in each matrix $\mathbb A_i$ is colored with either red or blue. 

Let $\Pi$ be the ordering of the vertices of $K$ given by concatenating the LR-orderings $\Pi_1$, $\Pi_2$, $\ldots$, $\Pi_6$. Let $A=A(S,K)$ and consider the columns of $A$ ordered according to $\Pi$. Given $s$ in $S_{ij}$, we denote throughout the following by $s_i$ the row corresponding to $s$ in $\mathbb A_i$.
For each vertex $s$ in $S_{ij}$, if $i \leq j$, then $s_i$ in $\mathbb A_i$ and $s_j$ in $\mathbb A_j$ are colored with the same color. Thus, we consider the row corresponding to $s$ in $A$ colored with that color. Notice that, if $i < l < j$, then $s$ is complete to each $K_l$. 
If instead $i >j$, then $s_i$ in $\mathbb A_i$ and $s_j$ in $\mathbb A_j$ are colored with distinct colors. Moreover, the row corresponding to $s$ in $A$ has an L-block and an R-block.
Thus, we consider its L-block colored with the same color assigned to $s_i$ and the R-block colored with the same color assigned to $s_j$.
Notice that the assignment of distinct colors in $\mathbb A_i$ and $\mathbb A_j$ makes sense, since we are describing vertices whose chords must have one of its endpoints drawn in the $K_i^+$ portion of the circle and the other endpoint in the $K_j^-$ portion of the circle. 

Let $s \in S$. Hence, $s$ lies in $S_{ij}$ for some $i,j\in \{1,2,\ldots,6\}$. Notice that, a row representing a vertex $s$ in $S_{ii}$ is entirely colored with the same color. 

\begin{defn} \label{def:matrices_A_por_colores}
We define the $(0,1)$-matrix $\mathbb A_r$ as the matrix obtained by considering only those rows representing vertices in $S \setminus \bigcup_{i=1}^6 S_{ii}$ and adding two distinct columns $c_L$ and $c_R$ such that the entry $\mathbb A_r (s, k)$ is defined as follows:
\begin{itemize}
    \item If $i<j$ and $s_i$ is colored with red, then the entry $\mathbb A_r (s, k)$ has a $1$ if $s$ is adjacent to $k$ and a $0$ otherwise, for every $k$ in $K$, and $\mathbb A_r (s, c_R)$ = $\mathbb A_r (s, c_L) = 0$.
    \item If $i>j$ and $s_i$ is colored with red, then the entry $\mathbb A_r (s, k)$ has a $1$ if $s$ is adjacent to $k$ and a $0$ otherwise, for every $k$ in $K_i \cup \ldots K_6$, and $\mathbb A_r (s, c_R) = 1$, $\mathbb A_r (s, c_L) = 0$. If instead $s_j$ is colored with red, then the entry $\mathbb A_r (s, k)$ has a $1$ if $s$ is adjacent to $k$ and a $0$ otherwise, for every $k$ in $K_1 \cup \ldots K_j$, and $\mathbb A_r (s, c_R) = 0$, $\mathbb A_r (s, c_L) = 1$.
\end{itemize}

The matrix $\mathbb A_b$ is defined in an entirely analogous way, changing red for blue in the definition.


\vspace{2mm}
We define the $(0,1)$-matrix $\mathbb A_{r-b}$ as the submatrix of $A$ obtained by considering only those rows corresponding to vertices $s$ in $S_{ij}$ with $i>j$ for which $s_i$ is colored with red.
The matrix $\mathbb A_{b-r}$ is defined as the submatrix of $A$ obtained by considering those rows corresponding to vertices $s$ in $S_{ij}$ with $i>j$ for which $s_i$ is colored with blue.

\end{defn}


\begin{lema} \label{lema:matrices_union_son_nested}
    Suppose that $\mathbb A_i$ is $2$-nested for every $=1, 2, \ldots, 6$. If $\mathbb A_r$, $\mathbb A_b$, $\mathbb A_{r-b}$ or $\mathbb A_{b-r}$ are not nested, then $G$ contains either tent${}\vee{}K_1$ or $F_0$ as induced subgraphs. 
\end{lema}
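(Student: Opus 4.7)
The plan is to use Theorem~\ref{teo:nested_caract}: a $(0,1)$-matrix is nested if and only if it contains no $0$-gem, i.e., no two rows that overlap. The key structural observation is that each row of $\mathbb{A}_r$ and of $\mathbb{A}_b$ is a single block of consecutive $1$'s in the column ordering $c_L,\Pi_1,\ldots,\Pi_6,c_R$: forward-type rows (those with $s\in S_{ij}$ and $i\leq j$) consist of the adjacencies in $K_i,\ldots,K_j$ alone and are complete to the intermediate parts $K_{i+1},\ldots,K_{j-1}$ by Lemma~\ref{lema:tent_2}, while wrap-around rows (those with $s\in S_{ij}$ and $i>j$) are stitched into a single block by the inclusion of precisely one of the tag columns $c_L$ or $c_R$. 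The rows of $\mathbb{A}_{r-b}$ and of $\mathbb{A}_{b-r}$ have the dual property: their complements within $\Pi_1,\ldots,\Pi_6$ are single blocks, so the rows are single circular blocks and the gem-type overlap analysis applies in the circular sense. Thus, in any of the four cases, failure of nestedness yields two vertices $v_1,v_2\in S$ whose rows overlap, together with three witness columns (one where only $v_1$ has a $1$, one where both do, and one where only $v_2$ does).

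First I would dismiss the possibility that the overlap occurs entirely within a single part $K_i$: since each $\mathbb{A}_i$ is $2$-nested, it contains no $D_0$ by Theorem~\ref{teo:2-nested_caract_bymatrices}, so the restrictions of $v_1$ and $v_2$ to each $K_i$ are either disjoint or nested. Hence the overlap must arise from $v_1$ and $v_2$ spanning different (but intersecting) lists of parts $K_l$. I would then classify the pair $(v_1,v_2)$ by the parts $S_{ij},S_{i'j'}$ of the partition of $S$ they belong to, following Lemma~\ref{lema:tent_2}, and collapse cases using the rotational $\mathbb{Z}_3$ symmetry of the tent on $(K_1,K_3,K_5)$ and $(K_2,K_4,K_6)$, together with the symmetry that swaps red for blue (and $\mathbb{A}_r$ for $\mathbb{A}_b$, respectively $\mathbb{A}_{r-b}$ for $\mathbb{A}_{b-r}$).

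In each remaining case, the three witness columns of the $0$-gem together with the six vertices $k_1,k_3,k_5,s_{13},s_{35},s_{51}$ of the original tent $H$ and, where useful, the common-neighbor vertices $v_i^\ast \in K_i$ furnished by Claim~\ref{claim:tent_0}, will be used to exhibit either a tent${}\vee{}K_1$ or an $F_0$ as induced subgraph of $G$. Which of the two forbidden graphs appears is governed by the parity of the index of the part $K_l$ hosting the shared column of the $0$-gem: odd-indexed overlap parts will typically produce a tent${}\vee{}K_1$ using the three tent peaks $s_{13},s_{35},s_{51}$, whereas even-indexed overlap parts will produce an $F_0$ via the richer structure of even-indexed parts described in Claims~\ref{claim:tent_3odd} and~\ref{claim:tent_3even}.

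I expect the main obstacle to be the size of the case analysis, particularly the subcases for $\mathbb{A}_{r-b}$ and $\mathbb{A}_{b-r}$, where both $v_1$ and $v_2$ are wrap-around vertices and each of them contributes both an L-block and an R-block to its row. There the $0$-gem (understood in the circular sense) may be realized across different pairs of parts for the two rows, so one must unfold the definitions carefully to identify which specific columns in $K$ realize the obstruction in $G$; this is the same sort of analysis that already appears in the proof of Lemma~\ref{lema:no2nested_prohibido_tent} for two rows within a single $\mathbb{A}_i$, now applied across parts.
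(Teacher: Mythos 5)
Your overall strategy coincides with the paper's: reduce non-nestedness to a $0$-gem via Theorem~\ref{teo:nested_caract}, rule out overlaps confined to a single $K_i$ using the $2$-nestedness (no $D_0$) of each $\mathbb A_i$, and then run a case analysis over the pairs of parts $S_{ij}$, $S_{i'j'}$ permitted by Lemma~\ref{lema:tent_2}, exhibiting $F_0$ or tent${}\vee{}K_1$ from the witness columns together with the tent vertices and the common neighbours $v_i^\ast$ of Claim~\ref{claim:tent_0}. That is exactly the skeleton of the paper's argument.

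There is, however, one concrete gap. From ``the restrictions of $v_1$ and $v_2$ to each $K_l$ are disjoint or nested'' you conclude that ``the overlap must arise from $v_1$ and $v_2$ spanning different (but intersecting) lists of parts $K_l$.'' This does not follow: two vertices in the \emph{same} $S_{ij}$ (with $i<j$, say) span the same list of parts, and their restrictions to $K_i$ and to $K_j$ can each be nested yet with the containments crossing --- $N_{K_i}(v_2)\subsetneq N_{K_i}(v_1)$ while $N_{K_j}(v_1)\subsetneq N_{K_j}(v_2)$ --- which still produces a $0$-gem in $\mathbb A_r$. The paper devotes a separate claim to precisely this configuration: it extracts columns $k_{i1},k_{i2}\in K_i$ and $k_{j1},k_{j2}\in K_j$ realizing the crossing, picks an odd index $l$ with both vertices anticomplete to $K_l$ (guaranteed by Claim~\ref{claim:tent_1}), and finds $F_0$ on $\{v_1,v_2,s^\ast,k_{i1},k_{i2},k_{j1},k_{j2},k_l\}$ for a suitable tent vertex $s^\ast$. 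Your plan as written silently discards this case, so you would need to restore it. A smaller point: your parity heuristic (odd-indexed overlap parts give tent${}\vee{}K_1$, even-indexed give $F_0$) does not match the actual outcomes --- in the paper's analysis $F_0$ arises from overlaps in $K_2$ and in $K_3$ alike, and tent${}\vee{}K_1$ appears only in the subcase $v_1\in S_{36}$, $v_2\in S_{63}\cup S_{64}$ --- so do not let it steer which witness set you look for.
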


\begin{proof}
    Suppose first that $\mathbb A_r$ is not nested. Then, there are two rows $f_1$ and $f_2$ in $\mathbb A_r$ that induce a $0$-gem. Let $v_1$ in $S_{ij}$ and $v_2$ in $S_{lm}$ be the vertices corresponding to such rows in $G$.
    Since $\mathbb A_i$ is $2$-nested for every $=1, 2, \ldots, 6$, in particular there are no monochromatic gems in each $\mathbb A_i$. 
    First we need the following claim.
    \begin{claim}
        If $v_1$ and $v_2$ lie in the same $S_{ij}$, then $v_1$ and $v_2$ are nested.
    \end{claim}
    Suppose by simplicity that $i <j$. Towards a contradiction, suppose that $v_1$ and $v_2$ are not nested. Thus, there are vertices $k_{i1}$ and $k_{i2}$ in $K_i$, and $k_{j1}$ and $k_{j2}$ in $K_j$ such that $v_1$ is adjacent to $k_{i1}$, $k_{i2}$ and $k_{j1}$ and nonadjacent to $k_{j2}$, and $v_2$ is adjacent to $k_{i2}$, $k_{j1}$ and $k_{j2}$ and nonadjacent to $k_{i1}$. Let $l \in \{1,3, 5\}$ such that $v_i$ is nonadjacent to $K_l$. The existence of such index follows from Claim~\ref{claim:tent_1}, since there is no vertex of $S$ simultaneously adjacent to $K_1$, $K_3$ and $K_5$.
    We find $F_0$ induced by $\{ v_1$, $v_2$, $s*$, $k_{i1}$, $k_{i2}$, $k_{j1}$, $k_{j2}$, $k_{l} \}$, where $s*= s_{jl}$ or $s*= s_{(j+1)l}$ depending on the parity of $j$. \QED

    We assume from now on that $v_1$ and $v_2$ lie in distinct subsets of $S$. The rows in $\mathbb A_r$ represent vertices in the following subsets of $S$: $S_{34}$, $S_{45}$, $S_{35}$, $S_{36}$, $S_{25}$, $S_{26}$, $S_{42}$, $S_{52}$, $S_{51}$, $S_{61}$, $S_{64}$ or $S_{63}$. Notice that $S_{36} = S_{[36}$, $S_{25} = S_{25]}$.
    \begin{mycases}
    \case $v_1$ in $S_{34} \cup S_{45}$. Suppose that $v_1$ in $S_{34}$, thus $v_2$ in $S_{35}$ since $\mathbb A_4$ is admissible. We find $F_0$ induced by $\{ v_1$, $v_2$, $s_{13}$, $k_{1}$, $k_{31}$, $k_{32}$, $k_{4}$, $k_{5} \}$.
    It follows analogously if $v_1$ in $S_{45}$, for the only possibility is $v_2$ in $S_{35}$ since $S_{25}$ is complete to $K_5$.

    \case $v_1$ in $S_{35} \cup S_{36}$. Since $S_{36}$ is complete to $K_3$, $S_{25}$ is complete to $K_5$ and $\mathbb A_6$ is admissible, then necessarily $v_1$ in $S_{36}$. In that case, either $v_2$ in $S_{25} \cup S_{26}$ or $v_2$ in $S_{63} \cup S_{64}$. We find $F_0$ induced by $\{ v_1$, $v_2$, $s_{13}$, $k_{1}$, $k_{2}$, $k_{3}$, $k_{5}$, $k_{6} \}$ if $v_2$ in $S_{25}$, or $\{ v_1$, $v_2$, $s_{13}$, $k_{1}$, $k_{2}$, $k_{3}$, $k_{61}$, $k_{62} \}$ if $v_2$ in $S_{26}$. We find tent${}\vee{}K_1$ induced by $\{ v_1$, $v_2$, $s_{51}$, $k_{1}$, $k_{3}$, $k_{5}$, $k_{6} \}$ if $v_2$ in $S_{63} \cup S_{64}$.

    \case $v_1$ in $S_{25} \cup S_{26}$. Since $\mathbb A_2$ and $\mathbb A_6$ are admissible, then the only possibility is $v_1$ in $S_{25}$ and $v_2$ in $S_{26}$. We find $F_0$ induced by $\{ v_1$, $v_2$, $s_{13}$, $k_{1}$, $k_{21}$, $k_{22}$, $k_{5}$, $k_{6} \}$ and therefore $\mathbb A_r$ is nested.
    \end{mycases}
    Let us suppose that $\mathbb A_b$ is not nested.
    \begin{mycases}
    The rows in $\mathbb A_b$ represent vertices in the following subsets of $S$: $S_{12}$, $S_{13}$, $S_{23}$, $S_{14}$, $S_{41}$, $S_{42}$, $S_{52}$, $S_{51}$, $S_{56}$, $S_{61}$, $S_{64}$ or $S_{63}$.
    Notice that $S_{14} = S_{[14}$, $S_{52} = S_{[52}$ and $S_{41} = S_{41]}$.

    \case $v_1$ in $S_{12} \cup S_{23}$. Thus, $v_2$ in $S_{13}$ since $\mathbb A_2$ is admissible. If $v_1$ in $S_{12}$, then we find $F_0$ induced by $\{ v_1$, $v_2$, $s_{51}$, $k_{5}$, $k_{11}$, $k_{12}$, $k_{2}$, $k_{3} \}$. It follows analogously by symmetry if $v_1$ in $S_{23}$. 

    \case $v_1$ in $S_{13}$. Since $S_{14}$ is complete to $K_1$, the only possibility is $v_2$ in $S_{63}$. We find $F_0$ induced by $\{ v_1$, $v_2$, $s_{35}$, $k_{6}$, $k_{2}$, $k_{31}$, $k_{32}$, $k_{5} \}$.

    \case $v_1$ in $S_{14}$. Since $\mathbb A_4$ is admissible, then $v_2$ in $S_{63} \cup S_{64}$.
    We find $F_0$ induced by $\{ v_1$, $v_2$, $s_{35}$, $k_{6}$, $k_{1}$, $k_{3}$, $k_{4}$, $k_{5} \}$ if $v_2$ in $S_{63}$ and by $\{ v_1$, $v_2$, $s_{35}$, $k_{6}$, $k_{1}$, $k_{3}$, $k_{41}$, $k_{42} \}$ if $v_2$ in $S_{64}$.
    \end{mycases}

    Suppose now that $\mathbb A_{b-r}$ is not nested. The rows in $\mathbb A_{b-r}$ represent vertices in the following subsets of $S$: $S_{41}$, $S_{42}$, $S_{51}$, $S_{52}$ or $S_{61}$. Notice that $S_{41}=S_{41]}$ and $S_{52}=S_{[52}$.
    If $v_1$ in $S_{41}$ and $v_2$ in $S_{42}$, then we find $F_0$ induced by $\{ v_1$, $v_2$, $s_{13}$, $k_{41}$, $k_{42}$, $k_{1}$, $k_{2}$, $k_{3} \}$. The proof is analogous if the vertices lie in $S_{51} \cup S_{52}$.
    If instead $v_1$ in $S_{61}$, then $v_2$ in $S_{51}$. We find $F_0$ induced by $\{ v_1$, $v_2$, $s_{13}$, $k_{11}$, $k_{12}$, $k_{3}$, $k_{5}$, $k_{6} \}$ and therefore $\mathbb A_{b-r}$ is nested.

     Suppose that $\mathbb A_{r-b}$ is not nested. The rows in $\mathbb A_{r-b}$ represent vertices in $S_{63}$ or $S_{64}$. If $v_1$ in $S_{63}$ and $v_2$ in $S_{64}$, then we find $F_0$ induced by $\{ v_1$, $v_2$, $s_{51}$, $k_{5}$, $k_{61}$, $k_{62}$, $k_{3}$, $k_{4} \}$. 
    This finishes the proof and therefore $\mathbb A_{r}$, $\mathbb A_{b}$, $\mathbb A_{b-r}$ and $\mathbb A_{r-b}$ are nested.
\end{proof}

\begin{teo} \label{teo:finalteo_tent}
    Let $G=(K,S)$ be a split graph containing an induced tent. Then, the following are equivalent:
\begin{enumerate}
\item $G$ is circle;
\item $G$ is \fsc-free;
\item $\mathbb A_1,\mathbb A_2,\ldots,\mathbb A_6$ are $2$-nested and $\mathbb A_r$, $\mathbb A_b$, $\mathbb A_{b-r}$ and $\mathbb A_{r-b}$ are nested.
\end{enumerate}
\end{teo}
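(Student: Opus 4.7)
I would establish the theorem by proving the cyclic chain of implications $(1)\Rightarrow(2)\Rightarrow(3)\Rightarrow(1)$. The implication $(1)\Rightarrow(2)$ is immediate: Lemma~\ref{lem:nocircle} asserts that no graph in \fsc\ is a circle graph, and the class of circle graphs is closed under taking induced subgraphs. For $(2)\Rightarrow(3)$, assume $G$ is \fsc-free. Applying Lemma~\ref{lema:no2nested_prohibido_tent} to each index $i\in\{1,\dots,6\}$ yields that every $\mathbb{A}_i$ is $2$-nested; applying Lemma~\ref{lema:matrices_union_son_nested} then yields that each of $\mathbb{A}_r$, $\mathbb{A}_b$, $\mathbb{A}_{r-b}$, $\mathbb{A}_{b-r}$ is nested.

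The substantive work is the implication $(3)\Rightarrow(1)$, which I would prove by an explicit circle-model construction. For each $i$, fix a suitable LR-ordering $\Pi_i$ of the columns of $\mathbb{A}_i$ and a total block bi-coloring $\chi_i$ extending the given one, and let $\Pi$ be the concatenation $\Pi_1\Pi_2\cdots\Pi_6$. I would partition the circle into twelve consecutive arcs arranged cyclically and grouped in six pairs $A_i^{-},A_i^{+}$, one pair per $K_i$. Every vertex $k\in K_i$ is drawn as a chord with one endpoint in $A_i^{-}$ and the other in $A_i^{+}$, the endpoints being placed in the order prescribed by $\Pi_i$ so that two such chords cross inside the $K_i$-pair exactly as they must (they form a clique, so in fact all chords of $K$ pairwise cross once the endpoints are interleaved appropriately across the six pairs). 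For each $s\in S_{ij}$, the chord of $s$ has its two endpoints distributed between the $K_i$- and $K_j$-pairs, with the choice of arc ($A^{-}$ versus $A^{+}$) dictated by the L/R label of the row in each matrix and with the lateral position inside the arc dictated by the colour assigned by $\chi_i$ (respectively $\chi_j$); in particular vertices of $S_{ii}$ have both endpoints inside the $K_i$-pair, modelling their U-block.

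The main obstacle, and the reason the four auxiliary matrices are needed, is verifying that the constructed model realises exactly the adjacencies of $G$. Crossings between a $K$-chord and an $S$-chord inside a single $K_i$-pair are controlled by the consecutive-ones property of $\Pi_i$ together with the block structure of the corresponding row of $\mathbb{A}_i$, so these encode the prescribed adjacencies with $K_i$. The delicate point is to guarantee that two $S$-chords do not spuriously cross, nor a long $S$-chord spuriously cross a $K$-chord belonging to an intermediate $K_\ell$. The chords that enter a fixed lateral position (a ``colour lane'') of some $A_i^{\pm}$ are precisely the rows collected in one of $\mathbb{A}_r,\mathbb{A}_b,\mathbb{A}_{r-b},\mathbb{A}_{b-r}$, depending on the pair of colours carried at the two endpoints; the nestedness of these matrices supplies a linear ordering of their $1$-blocks so that within a single lane the chords are pairwise nested or disjoint and can be routed without extra crossings. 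Combined with Claim~\ref{claim:tent_0}, which ensures nestedness across distinct $K_i$'s, and with property~(\ref{item:2nested4}) in Definition~\ref{def:2-nested}, which forbids bad configurations of L- and R-blocks inside a single $\mathbb{A}_i$, this allows every chord to be drawn with the correct intersection pattern. I would finish by checking, using the table of possibly nonempty parts $S_{ij}$ from Lemma~\ref{lema:tent_2}, that every required adjacency is realised as a crossing and no spurious crossing is introduced, which completes $(3)\Rightarrow(1)$.
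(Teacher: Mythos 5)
Your proposal is correct and follows essentially the same route as the paper: the same chain $(1)\Rightarrow(2)\Rightarrow(3)\Rightarrow(1)$, with $(2)\Rightarrow(3)$ delegated to Lemmas~\ref{lema:no2nested_prohibido_tent} and~\ref{lema:matrices_union_son_nested}, and $(3)\Rightarrow(1)$ proved by the same twelve-arc circle model built from the concatenated suitable LR-orderings, with $S$-chord endpoints placed according to labels and colours and spurious crossings ruled out via Claim~\ref{claim:tent_0} and the nestedness of $\mathbb{A}_r$, $\mathbb{A}_b$, $\mathbb{A}_{r-b}$, $\mathbb{A}_{b-r}$. The paper merely spells out the per-subset placement guidelines (its Claim~\ref{claim:tent_placement} and the itemized rules) that you describe more abstractly as colour lanes, so the two arguments coincide in substance.
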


\begin{proof} It is not hard to see that $(1) \Rightarrow (2)$, and that
$(2) \Rightarrow (3)$ is a consequence of the previous lemmas. We will show $(3) \Rightarrow (1)$. Suppose that each of the matrices $\mathbb A_1,\mathbb A_2,\ldots,\mathbb A_6$ is $2$-nested, and that the matrices $\mathbb A_r$, $\mathbb A_b$, $\mathbb A_{b-r}$ and $\mathbb A_{r-b}$ are nested.
Let $\Pi_i$ be a suitable LR-ordering for the columns of $\mathbb A_i$ for each $i=1,2,\ldots,6$, and let $\Pi$ be the ordering obtained by concatenation of $\Pi_i$ for all the vertices in $K$.
Consider the circle divided into twelve pieces as in Figure~\ref{fig:modelA3}. For each vertex in $K_i$, we place a chord having one endpoint in $K_i^+$ and the other endpoint in $K_i^-$, considering the endpoints of the chords in both $K_i^+$ and $K_i^-$ ordered according to $\Pi_i$.
We denote by $a_i^-$ and $a_i^+$ the position on the circle of the endpoints of the chords corresponding to the first and last vertex of $K_i$ --ordered according to $\Pi_i$-- respectively. 
We denote by $s_{i,i+2}^+$ the placement of the chord corresponding to the vertex $s_{i,i+2}$ of the tent $H$, which lies between $a_{i-1}^+$ and $a_i^-$, and $s_{i,i+2}^+$ to the placement of the chord of the vertex $s_{i, i+2}$ that lies between $a_{i+1}^+$ and $a_{i+2}^-$.

Let us see how to place the chords for the vertices in every subset $S_{ij}$ of $S$.
\begin{claim} \label{claim:tent_placement}
The following assertions hold:
\begin{enumerate}[noitemsep,nolistsep]
    \item If $i \neq j$, then all the vertices in each $S_{ij}$ are nested.
    \item If $k \leq i$ and $j\leq l$, then every vertex in $S_{ij}$ is contained in every vertex of $S_{kl}$.
    \item For each $i \in \{2, 4, 6\}$, the vertex set $N_{K_i}(S_{(i-1)i}) \cap N_{K_i}(S_{i(i+1)})$ is empty.
    Moreover, $N_{K_i}(S_{ij}) \cap N_{K_i}(S_{(i+3)i}) = \emptyset$ and $N_{K_i}(S_{ij}) \cap N_{K_i}(S_{(i+2)i}) = \emptyset$, for $j=i+3, i+4$.
    \item For each $i\in\{1,3,5\}$, if $S_{i(i+3)} \neq \emptyset$, then $S_{(i-1)(i+2)} \neq \emptyset$, and viceversa.
\end{enumerate}
\end{claim}

Every statement follows directly from Claim~\ref{claim:tent_0} and the proof of Lemma~\ref{lema:matrices_union_son_nested}. \QED

Let us consider the subsets $S_{ij}$ such that $i \neq j$ and $i=1$ or $j=1$. These subsets are $S_{12}$, $S_{13}$, $S_{14}$, $S_{61}$, $S_{51}$ and $S_{j4}$. It follows from Claim~\ref{claim:tent_placement} that every vertex in $S_{12}$ is contained in every vertex of $S_{13}$, which is in turn contained in $S_{14}$, and the same holds for $S_{61}$, $S_{51}$ and $S_{41}$.
Furthermore, it follows that $S_{14} \neq \emptyset$ only if $S_{63} = \emptyset$ and that every vertex in $S_{14}$ is contained in every vertex in $S_{64}$.
Now consider those subsets $S_{ij}$ for which $i \neq j$ and either $i=2$ or $j=2$, namely $S_{12}$, $S_{23}$, $S_{25}$, $S_{26}$, $S_{42}$ and $S_{52}$. It follows from the previous claim that a vertex in $S_{12}$ is disjoint with any vertex in $S_{23}$. Moreover, every vertex in $S_{42}$ is nested in every vertex in $S_{52}$ and every vertex in $S_{25}$ is nested in every vertex in $S_{26}$.
The same analysis follows by symmetry for those subsets such that $i \neq j$ and either $i \in \{3, 4, 5, 6\}$ or $j \in \{3, 4, 5, 6\}$.
Furthermore, since $\mathbb A_i$ is nested for every $i \in \{1, \ldots, 6\}$, then every vertex in $S_{ii}$ is either contained or disjoint with every vertex in $S_{jk}$, for every $j,k \in \{1, \ldots, 6\}$ such that $j \leq i \leq k$.
Therefore, Claim~\ref{claim:tent_placement} ensures that we can place the chords corresponding to vertices of $S$ in such a way that they do not intersect. We consider the vertices in each $S_{ij}$ ordered by inclusion. The guidelines for each subset $S_{ij}$ are as follows.
\begin{itemize}
    \item For those vertices in $S_{i(i+1)}$: if $i =1, 2, 5$, then we place one endpoint in $K_i^-$ and the other endpoint in $K_{i+1}^-$. If $i =3, 4$, then we place one endpoint in $K_i^+$ and the other endpoint in $K_{i+1}^+$. If $i = 6$, then we place one endpoint in $K_6^-$ and the other endpoint in $K_1^+$.
    \item For the vertices in $S_{(i-1)(i+1)}$: if $i = 2$, then we place one endpoint in $K_1^-$ and the other endpoint in $K_3^-$. If $i =4$, then we place one endpoint in $K_3^+$ and the other endpoint in $K_5^+$. If $i =6$, then we place one endpoint in $K_5^-$ and the other endpoint in $K_1^+$.
    \item For $S_{(i-1)(i+2)}$: if $i = 1$, we place one endpoint in $K_6^+$, and the other endpoint between $s_{13}^-$ and the chord corresponding to $a_4^-$ in $K_4^-$. If $i = 2$, we place one endpoint between the chord corresponding to $a_6^+$ in $K_6^+$ and $s_{13}^+$, and the other endpoint in $K_4^-$. If $i = 3$, we place one endpoint in $K_2^+$, and the other endpoint between $s_{35}^-$ and the chord corresponding to $a_6^-$ in $K_6^+$. If $i = 4$, we place one endpoint between the chord corresponding to $a_2^+$ in $K_2^+$ and $s_{35}^+$, and the other endpoint in $K_6^+$. If $i = 5$, we place one endpoint in $K_4^-$, and the other endpoint between $s_{51}^-$ and the chord corresponding to $a_2^-$ in $K_2^+$. If $i = 6$, we place one endpoint between the chord corresponding to $a_4^+$ in $K_4^-$ and $s_{51}^+$, and the other endpoint in $K_2^+$.
    \item For $S_{i-2,i+2}$: if $i=2$, we place one endpoint in $K_6^+$ and the other endpoint in $K_4^-$. If $i=4$, we place one endpoint in $K_2^+$ and the other endpoint in $K_6^+$. If $i=6$, we place one endpoint in $K_4^-$ and the other endpoint in $K_2^+$.
\end{itemize}
This gives a circle model for the given split graph $G$.
\end{proof}

\subsection{Split circle graphs containing an induced 4-tent} \label{subsec:circle3}

In this section we address the second case of the proof of Theorem~\ref{teo:circle_split_caract}, which is when $G$ contains an induced $4$-tent and contains no induced tent. The main difference with the previous case is that one of the enriched matrices that we will define may contain LR-rows.
This section is subdivided as follows. In Section~\ref{subsubsec:4tent2}, we define the matrices $\mathbb{B}_i$ for each $i=1, 2, \ldots, 6$. 
In Section~\ref{subsubsec:4tent3}, we prove the necessity of the $2$-nestedness of each $\mathbb B_i$ for $G$ to be a \fsc-graph and give the guidelines to draw a circle model for a \fsc-free split graph $G$ containing an induced $4$-tent in Theorem~\ref{teo:finalteo_4tent}.

\subsubsection{Matrices $\mathbb B_1,\mathbb B_2,\ldots,\mathbb B_6$} \label{subsubsec:4tent2}

Let $G=(K,S)$ and $H$ as in Section~\ref{subsec:4tent_partition}.
For each $i\in\{1,2,\ldots,6\}$, let $\mathbb B_i$ be an enriched $(0,1)$-matrix having one row for each vertex $s\in S$ such that $s$ belongs to $S_{ij}$ or $S_{ji}$ for some $j\in\{1,2,\ldots,6\}$ and one column for each vertex $k\in K_i$ and such that such that the entry corresponding to row $s$ and column $k$ is $1$ if and only if $s$ is adjacent to $k$ in $G$. For each $j\in\{1,2,\ldots,6\}-\{i\}$, we label those rows corresponding to vertices of $S_{ji}$ with L and those corresponding to vertices of $S_{ij}$ with R, with the exception of $S_{[15]}$ and $S_{[16}$, whose vertices are labeled with LR.
As in the previous section, some of the rows of $\mathbb B_i$ are colored. Given the symmetry of the partitions of $K$, it suffices to study  $\mathbb B_i$ for $i=1,2,3,6$ (See Figure~\ref{fig:matricesB}). 

\begin{figure}[h]
\centering
\footnotesize{
\[ \mathbb B_1 = \tiny{ \bordermatrix{ & K_1\cr
        S_{12}\ \textbf{R} &\cdots \cr
                 S_{11}\            & \cdots \cr
                S_{14}\ \textbf{R} & \cdots \cr
                S_{15}\ \textbf{R} & \cdots \cr
                S_{16}\ \textbf{R} & \cdots \cr
                S_{61}\ \textbf{L} & \cdots }\
                \begin{matrix}
                \textcolor{red}{\bullet} \\ \\ \textcolor{blue}{\bullet} \\ \textcolor{blue}{\bullet} \\ \textcolor{blue}{\bullet} \\ \textcolor{blue}{\bullet}
                \end{matrix} } \qquad
   \mathbb B_2 = \tiny{ \bordermatrix{ & K_2\cr
        S_{12}\ \textbf{L} & \cdots \cr
                S_{22}\            & \cdots \cr
                S_{23}\ \textbf{R} & \cdots \cr
                S_{24}\ \textbf{R} &  \cdots }\
                \begin{matrix}
                \textcolor{red}{\bullet} \\ \\ \textcolor{blue}{\bullet} \\ \textcolor{blue}{\bullet}
                \end{matrix} } \qquad
   \mathbb B_3 =  \tiny{ \bordermatrix{ & K_3\cr
                S_{35}\ \textbf{R} & \cdots \cr
                S_{36}\ \textbf{R} & \cdots \cr
                S_{13}\ \textbf{L} & \cdots \cr
                S_{33}\            & \cdots \cr
                S_{34}\ \textbf{R} & \cdots \cr
                S_{23}\ \textbf{L} & \cdots }\
                \begin{matrix}
                 \textcolor{red}{\bullet} \\ \textcolor{red}{\bullet} \\ \textcolor{red}{\bullet} \\ \\ \textcolor{blue}{\bullet} \\ \textcolor{blue}{\bullet}
                \end{matrix} } \qquad
              \mathbb B_6 = \tiny{ \bordermatrix{ & K_6\cr
        S_{61}\ \textbf{R} & \cdots \cr
                S_{64}\ \textbf{R} & \cdots \cr
                S_{65}\ \textbf{R} & \cdots \cr
               S_{36}\ \textbf{L} & \cdots \cr
                S_{46}\ \textbf{L} & \cdots \cr
                S_{66}\            & \cdots \cr
                S_{62}\ \textbf{R} & \cdots \cr
                S_{63}\ \textbf{R} & \cdots \cr
                S_{26}\ \textbf{L} & \cdots \cr
                S_{56}\ \textbf{L} & \cdots \cr
                S_{16}\ \textbf{L} & \cdots \cr
                S_{[15]}\ \textbf{LR} & \cdots \cr
                S_{[16}\ \textbf{LR} & \cdots }\
                \begin{matrix}
                \textcolor{red}{\bullet} \\ \textcolor{red}{\bullet} \\ \textcolor{red}{\bullet} \\ \textcolor{red}{\bullet} \\ \textcolor{red}{\bullet} \\ \\ \textcolor{blue}{\bullet} \\ \textcolor{blue}{\bullet} \\ \textcolor{blue}{\bullet} \\ \textcolor{blue}{\bullet} \\ \textcolor{blue}{\bullet} \\ \\ \\
                \end{matrix} } \]}
\caption{The matrices $\mathbb{B}_1$, $\mathbb{B}_2$, $\mathbb{B}_3$ and $\mathbb{B}_6$.} \label{fig:matricesB}
\end{figure}

Since $S_{25}$, $S_{26}$, $S_{52}$ and $S_{62}$ are complete to $K_2$, then they are not considered for the definition of the matrix $\mathbb B_2$. The same holds for $S_{13}$ with regard to $\mathbb B_1$ and $S_{63}$ with regard to $\mathbb B_3$.
Notice that we consider $S_{16}$ and $S_{[16}$ as two distinct subsets of $S$. Moreover, every vertex in $S_{[16}$ is labeled with LR and every vertex in $S_{16}$ is labeled with L.
We consider $S_{65}$ as the subset of vertices in $S$ that are complete to $K_1, \ldots, K_4$, are adjacent to $K_5$ and $K_6$ but are not complete to $K_5$.
Furthermore, any vertex in $S_{[15]}$ is represented by an empty LR-row in $\mathbb B_6$. It follows from the definition of enriched matrix that every row corresponding to a vertex in $S_{[15]}$ in $\mathbb B_6$ must be colored with the same color.
We give more details on this further on in Section~\ref{subsubsec:4tent3}.


\begin{remark}
    Claim~\ref{claim:tent_0} holds for any two vertices $v_1$ in $S_{ij}$ and $v_2$ in $S_{ik}$, for every $\mathbb B_i$, $i \in \{1, \ldots, 6\}$. 
    and the proof is anal\-o\-gous as in the tent case.
\end{remark}

\subsubsection{Split circle equivalence} \label{subsubsec:4tent3}

In this section, we state a result analogous to Lemma~\ref{lema:no2nested_prohibido_tent}. The matrices $\mathbb B_i$ contain no LR-rows, for each $i \in \{1, \ldots, 5 \}$, hence the proof is very similar to the one given in Section~\ref{subsubsec:tent3} for the tent case. We leave out the details of the proof, which can be consulted in~\cite{P20}.
Afterwards, we state and prove a lemma analogous to Lemma~\ref{lema:equiv_circle_2nested_4tent_sinLR} but for the matrix $\mathbb B_6$.
The main difference between this matrix and the matrices $\mathbb B_i$ for each $i=1, 2, \ldots, 5$ is that $\mathbb B_6$ contains LR-rows

\begin{lema}[\cite{P20}]  \label{lema:equiv_circle_2nested_4tent_sinLR} 
    If $\mathbb B_i$ is not $2$-nested, for some $i \in \{ 1, \ldots, 5 \}$, then $G$ contains one of the forbidden induced subgraphs in Figure~\ref{fig:forb_graphs}. 
\end{lema}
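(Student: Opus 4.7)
The plan is to mirror the case analysis of Lemma~\ref{lema:no2nested_prohibido_tent}, since for $i\in\{1,\ldots,5\}$ the matrix $\mathbb B_i$ contains no LR-rows, which collapses the hypothesis of Theorem~\ref{teo:2-nested_caract_bymatrices} to obstructions involving only U-rows, L-rows, and R-rows. First I would establish the analog of Claim~\ref{claim:tent_0} for each $\mathbb B_i$: if $\mathbb B_i$ contains no $D_0$, then by Lemma~\ref{lema:4tent_particion} any two vertices of $S_{jk}$ are nested in $K_j$ (and in $K_k$) and, in particular, there is a common neighbor in $K_j$ of all vertices of $S_{jk}$. This is the structural fact that lets us extend forbidden submatrices to forbidden induced subgraphs of $G$.

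I would then split into four cases, as in the tent proof. First, assume $\mathbb B_i$ is not admissible. Since the LR-carrying members of $\mathcal{D}$ and every member of $\mathcal{P}$ are vacuous here, the only possibilities are $D_0$, $D_1$, $D_2$, $S_2(k)$, or $S_3(k)$. For each, I would read off the labels and colors of the offending rows, use Lemma~\ref{lema:4tent_particion} to locate the corresponding vertices in the appropriate $S_{jk}$, and then augment the submatrix with a suitable choice of vertices of $V(T)$ and of other $K_\ell$'s (supplied by the existence of common neighbors above) to exhibit one of tent${}\vee K_1$, a $3$-sun with center, a $4$-sun, $M_{II}(4)$, $M_{III}(3)$, $M_V$, $F_0$, $F_1(k)$, $F_2(k)$ or $S_0(k)$ as an induced subgraph of $G$. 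The recipe is identical to the tent case, but the roster of non-empty $S_{jk}$'s given by Lemma~\ref{lema:4tent_particion} is different, so each occurrence of a forbidden submatrix requires a case split on which parts its rows come from.

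Second, suppose $\mathbb B_i$ is admissible but not LR-orderable. Then $\mathbb B_i$ contains a Tucker matrix or some member of $\mathcal{M}$. Tucker matrices translate into suns and into $M_{II}(k)$ or $M_{III}(k)$ exactly as in the tent argument, using the $S$-vertices indexing the rows and the $K_i$-vertices indexing the columns, possibly augmented by the vertex of $V(T)\cap S$ that plays the center role. For a matrix in $\mathcal{M}$ carrying tag columns, I would replace each L-tag by a witness vertex in some $K_\ell$ that is complete to the L-rows in the $4$-tent partition (its existence is guaranteed by Lemma~\ref{lema:4tent_particion}), and symmetrically for R-tags, to obtain the homonymous graph of $\mathcal{F}_{sc}$. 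Third, if $\mathbb B_i$ is LR-orderable but contains a monochromatic gem or monochromatic weak gem (badly-coloured doubly-weak gems being vacuous in the absence of LR-rows), then the two rows of the gem carry equal or different labels but the same color; in either configuration one recovers $D_0$ or $D_1$ as a subconfiguration, contradicting the admissibility established in the first case.

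Finally, suppose $\mathbb B_i$ contains none of the matrices listed in Theorem~\ref{teo:2-nested_caract_bymatrices}. By Lemma~\ref{lema:2-nested_if}, non-$2$-nestedness must come from an obstruction to extending the partial block bi-colouring to a total one. I would then follow the final step of the tent proof verbatim: take a maximal chain of overlapping (or weakly overlapping) rows whose colours are forced alternatingly, and observe that its two endpoints must be labelled rows (otherwise the chain can be recoloured, contradicting its maximality), so the chain realizes an $S_2(k)$ or $S_3(k)$ subconfiguration of $\mathbb B_i$, again contradicting admissibility. The main obstacle is the bookkeeping in the first two cases: each of $\mathbb B_1,\ldots,\mathbb B_5$ admits a slightly different set of possibly non-empty parts (dictated by Lemma~\ref{lema:4tent_particion} and the $i\mapsto 6-i$ symmetry), and one must verify that every combination of row labels and colors that produces a forbidden subconfiguration is actually realizable in $G$ and leads to a member of $\mathcal{F}_{sc}$. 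Because the matrices involved carry no LR-rows and the only structural novelty relative to the tent case is the presence of $K_6$ (anticomplete to $V(T)\cap S$), no genuinely new phenomena appear; the proof is a careful parallel of Lemma~\ref{lema:no2nested_prohibido_tent}.
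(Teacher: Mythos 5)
Your proposal follows essentially the same route as the paper's own (deferred) argument: the paper reduces the case $i\in\{1,\ldots,5\}$ to the tent proof of Lemma~\ref{lema:no2nested_prohibido_tent}, noting that the absence of LR-rows restricts the admissibility obstructions to $D_0$, $D_1$, $D_2$, $S_2(k)$, $S_3(k)$, handling LR-orderability via Tucker matrices and $\mathcal{M}$ exactly as you describe, and closing the monochromatic-gem and colouring-extension cases by recovering $D_0$/$D_1$ or $S_2(k)$/$S_3(k)$, contradicting admissibility. The only work left is the bookkeeping over the nonempty parts $S_{jk}$ from Lemma~\ref{lema:4tent_particion}, which you correctly identify as the sole difference from the tent case.
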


Now we will focus on the matrix $\mathbb B_6$. First, we define how to color those rows that correspond to vertices in $S_{[15]}$, since we defined $\mathbb B_6$ as an enriched matrix and these are the only empty LR-rows in $\mathbb B_6$.
Remember that all the empty LR-rows must be colored with the same color. Hence, if there is at least one red row labeled with L or one blue row labeled with R (resp.\ blue row labeled with L or red row labeled with R), then we color every LR-row in $S_{[15]}$ with blue (resp.\ with red).
This gives a $1$-color assignment to each empty LR-row only if $G$ is \fsc-free.

\begin{lema} \label{lema:4tent_coloreoLRvacias}
    Let $G$ be a split graph that contains an induced $4$-tent and such that $G$ contains no induced tent, and let $\mathbb B_6$ as defined in the previous section. If $S_{[15]} \neq \emptyset$ and one of the following holds:
    \begin{itemize}
    \item There is at least one red row $f_1$ and one blue row $f_2$, both labeled with L (resp.\ R)
    \item There is at least one row $f_1$ labeled with L and one row $f_2$ labeled with R, both colored with red (resp.\ blue).
    \end{itemize}
Then, we find either $F_1(5)$, $3$-sun with center or $4$-sun as an induced subgraph of $G$.
\end{lema}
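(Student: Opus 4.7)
The plan is a case analysis on the specific subsets $S_{ij}$ from which $f_1$ and $f_2$ come. First I would read off from the definition of $\mathbb{B}_6$ in Section~\ref{subsubsec:4tent2} the dictionary between label-color combinations and the possibly non-empty subsets of $S$ listed in Lemma~\ref{lema:4tent_particion}: red L-rows come from $S_{36} \cup S_{46}$, blue L-rows from $S_{16} \cup S_{26} \cup S_{56}$, red R-rows from $S_{61} \cup S_{64} \cup S_{65}$, and blue R-rows from $S_{62} \cup S_{63}$. The symmetry $k_1 \leftrightarrow k_5$, $k_2 \leftrightarrow k_4$, $s_{12} \leftrightarrow s_{45}$ of the $4$-tent partition, which fixes $K_3$ and $K_6$, identifies the ``two R-rows of opposite colors'' subcase of (i) with the ``two L-rows of opposite colors'' subcase, and similarly halves the workload of condition (ii).

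For each pairing $(f_1, f_2)$, the task is to exhibit an induced $F_1(5)$, $3$-sun with center, or $4$-sun using the vertex $v \in S_{[15]}$, selected vertices $k_i \in K_i$ (including specific vertices of $K_6$ adjacent to $f_1$ and to $f_2$), and the $4$-tent vertices $s_{12}, s_{24}, s_{45}$. As a prototype, for $f_1 \in S_{46}$ (complete to $K_4, K_5$; adjacent to some $k_6 \in K_6$; anticomplete to $K_1, K_2, K_3$) and $f_2 \in S_{16}$ (adjacent-but-not-complete to $K_1$; complete to $K_2, \ldots, K_5$; adjacent to some $k_6' \in K_6$), the asymmetric behavior on $K_1$ combined with the common adjacency to $K_4, K_5$ and to $K_6$ should produce a $3$-sun with center on $\{v, f_1, f_2, k_1, k_4, k_5, k_6\}$ or a $4$-sun on an enlargement. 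For condition (ii), a parallel enumeration of same-color L/R pairs applies: the L-row and R-row together ``wrap'' cyclically around $K_6$, and $v$ contributes either a center vertex (giving a $3$-sun with center) or an additional clique vertex (giving a $4$-sun), with $F_1(5)$ arising in the remaining pairings.

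The main obstacle will be bookkeeping across roughly a dozen subcases and choosing the $K_i$-vertices precisely enough that the resulting subgraph is one of the three listed forbidden graphs and not a strict supergraph or a different forbidden structure. Verifying inducedness in each subcase will rely on the detailed adjacency description of each $S_{ij}$ given in Lemma~\ref{lema:4tent_particion} together with the nestedness observation of Claim~\ref{claim:tent_0} (which, per the remark preceding Section~\ref{subsubsec:4tent3}, transfers to the $4$-tent setting) to ensure compatible vertex choices across different $K_i$'s. A secondary difficulty is that the two vertices $k_6, k_6' \in K_6$ witnessing adjacency of $f_1, f_2$ might coincide or differ; the argument will have to be organized to handle both sub-situations without doubling the case count.
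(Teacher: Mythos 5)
Your general plan coincides with the paper's: translate the label/colour classes of $\mathbb B_6$ into the subsets of Lemma~\ref{lema:4tent_particion} (your dictionary is correct), reduce by the $k_1\leftrightarrow k_5$, $k_2\leftrightarrow k_4$ symmetry, and exhibit an explicit forbidden subgraph for each pairing using $w\in S_{[15]}$, chosen vertices $k_i\in K_i$ and the $4$-tent petals. But the prototype you work out does not go through, and the failure is not cosmetic. For $f_1\in S_{46}$ (red L) and $f_2\in S_{16}$ (blue L), the set $\{v,f_1,f_2,k_1,k_4,k_5,k_6\}$ cannot induce a $3$-sun with center: $k_4$ and $k_5$ are adjacent to all three stable vertices, so no column can serve as the center (which must miss every petal), and the best choices of $k_1,k_6$ give the bipartite rows $1110$, $0111$, $0110$ on $(k_1,k_4,k_5,k_6)$, which is none of the target graphs; nor does a $4$-sun appear on any natural enlargement, because two L-rows never ``wrap around'' $K_6$. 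In this configuration the correct outcome is $F_1(5)$ on nine vertices, obtained by bringing in $s_{12}$, $s_{24}$ and a vertex of $K_2$: one uses that $f_1$ is anticomplete to $K_1\cup K_2$ while $f_2$ is complete to $K_2$ but misses some $k_1\in K_1$, and that $w$ supplies the row adjacent to $k_1,k_2,k_4$ and not to $k_6$ (the rows $s_{12},s_{24},w,f_1,f_2$ on $k_1,k_2,k_4,k_6$ realize exactly the $F_1(5)$ pattern). The $3$-sun-with-center and $4$-sun outcomes belong to your second bullet (an L-row and an R-row of the same colour), e.g.\ on $\{k_6,k_1,k_4,k_5,f_1,f_2,w\}$ when the R-row lies in $S_{64}\cup S_{65}$, and on $\{k_1,k_2,k_4,k_6,f_1,f_2,s_{12},s_{24}\}$ when it lies in $S_{61}$. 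So the assignment of target subgraphs to the two bullets has to be redone before the case analysis can be carried out.

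A second, smaller gap is the ``same or different witness in $K_6$'' issue, which you flag but leave open; it is not something the bookkeeping can absorb silently. The paper's device is to assume first that $\mathbb B_6$ contains no $D_0$ (this is legitimate because Lemma~\ref{lema:B6_2nested_4tent} produces a forbidden subgraph from any $D_0$), so that the $K_6$-neighbourhoods of two rows with the same letter are nested and a common neighbour $k_6$ adjacent to both $f_1$ and $f_2$ always exists; the $F_1(5)$ above genuinely needs $f_1$ and $f_2$ to hit the same column of $K_6$, so without such a reduction your constructions cannot even be started in the first bullet.
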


\begin{proof}
    We assume that $\mathbb B_6$ contains no $D_0$, for we will prove this in Lemma~\ref{lema:B6_2nested_4tent}.
    Let $v_1$, $v_2$ and $w$ be three vertices corresponding to rows of $\mathbb B_6$, $v_1$ is colored with red and labeled with L, $v_2$ is colored with blue and labeled with L and $w$ in $S_{[15]}$.
    Thus, $v_1$ in $S_{36} \cup S_{46}$ and $v_2$ in $S_{56} \cup S_{26} \cup S_{16}$.
    In either case, we find $F_1(5)$ induced by $\{ k_2$, $k_4$, $k_5$, $k_6$, $v_1$, $v_2$, $w$, $s_{24}$, $s_{45} \}$ or $\{ k_1$, $k_2$, $k_4$, $k_6$, $v_1$, $v_2$, $w$, $s_{12}$, $s_{24} \}$, depending on whether $v_2$ in $S_{56}$ or in $S_{26} \cup S_{16}$, respectively.
    Suppose now that $v_2$ corresponds to a red row labeled with R. Thus, $v_1$ in $S_{36} \cup S_{46}$ and $v_2$ in $S_{61} \cup S_{64} \cup S_{65}$.
     If $v_2$ in $S_{61}$, then there is a $4$-sun induced by $\{ k_1$, $k_2$, $k_4$, $k_6$, $v_1$, $v_2$, $s_{12}$, $s_{24} \}$. If instead $v_2$ in $S_{64} \cup S_{65}$, then we find a $3$-sun with center induced by $\{ k_6$, $k_1$, $k_4$, $k_5$, $v_1$, $v_2$, $w \}$.
    This finishes the proof since the other cases are analogous by symmetry.
\end{proof}

To prove the following lemma, we assume without loss of generality that $S_{[15]} = \emptyset$.

\begin{lema}\label{lema:B6_2nested_4tent}
    Let $G=(K,S)$ be a split graph containing an induced $4$-tent such that $G$ contains no induced tent and let $B = \mathbb B_6$.
     If $B$ is not $2$-nested, then $G$ contains one of the graphs listed in Figure~\ref{fig:forb_graphs} as an induced subgraph. 
\end{lema}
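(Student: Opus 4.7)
The plan is to mirror the structure of Lemma~\ref{lema:no2nested_prohibido_tent} and Lemma~\ref{lema:equiv_circle_2nested_4tent_sinLR} but carry out the case analysis with due care for the presence of LR-rows in $B=\mathbb B_6$, which come from the subsets $S_{[16}$ (and, after Lemma~\ref{lema:4tent_coloreoLRvacias}, possibly $S_{[15]}$, which we already excluded). By Theorem~\ref{teo:2-nested_caract_bymatrices}, if $B$ is not $2$-nested then one of the following happens: $(i)$ $B$ contains one of $M_0$, $M_{II}(4)$, $M_V$, $S_0(k)$ for some even $k\geq 4$; $(ii)$ $B$ contains an enriched matrix from $\mathcal D\cup\mathcal F\cup\mathcal S\cup\mathcal P$ as a subconfiguration; $(iii)$ $B^*$ contains a Tucker matrix or a matrix from $\mathcal M$ (or a dual); or $(iv)$ $B$ contains a monochromatic gem, a monochromatic weak gem, or a badly-coloured doubly-weak gem. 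In each case I will produce an induced subgraph of $G$ belonging to \fsc, exactly as in the tent and $4$-tent proofs already in the paper.

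First I would handle $(i)$: the matrices $M_0$, $M_{II}(4)$, $M_V$, $S_0(k)$ are themselves the $A(S,K)$ matrices of the graphs $3$-sun with center, $M_{II}(4)$, $M_V$ and $k$-sun, so as soon as such a subconfiguration appears in $B$, the corresponding rows and columns together with the chosen vertices $k_1,k_2,k_4,k_5,s_{12},s_{24},s_{45}$ of the $4$-tent $H$ already yield an induced copy of the same graph in $G$. Then I would address $(ii)$ by scanning the families $\mathcal D$, $\mathcal S$, $\mathcal P$ exactly as in Case~1 of Lemma~\ref{lema:no2nested_prohibido_tent}: for each row we know which $S_{ij}$ it belongs to (the row is L iff it is in some $S_{j6}$ with $j\neq 6$, R iff it is in $S_{6j}$, LR iff it is in $S_{[16}$, unlabelled iff it is in $S_{66}$), and the colour tells us on which side of the circle the chord is attached. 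The pivotal cases are the ones involving an LR-row, i.e.\ a vertex $v\in S_{[16}$: such a vertex is complete to $K_1,\ldots,K_5$ and meets $K_6$ in a nonempty proper subset, which lets me take any vertex of $K_1\cup\cdots\cup K_5$ outside the $4$-tent as a witness and complete the forbidden subgraph (typically an $F_1(k)$, $F_2(k)$, an even $k$-sun, $M_{II}(k)$ or an $M_{III}(k)$). The family $\mathcal F$ is handled directly, because its entries produce precisely the $F_0$, $F_1(k)$, $F_2(k)$ subgraphs of Figure~\ref{fig:forb_graphs} together with the vertices of $H$ and one additional vertex from the appropriate $K_j$ to force the sun/fan shape.

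Next I would treat $(iii)$: if $B^*$ contains a Tucker matrix or a matrix in $\mathcal M$, then by the definition of $B^*$ either $B$ already contains the corresponding Tucker matrix (and we argue as in Case~2 of Lemma~\ref{lema:no2nested_prohibido_tent}, producing an induced $k$-sun, $k$-sun with center, $M_{II}(k)$, $M_{III}(k)$ or $M_{IV}$, using one vertex of some $K_j$ with $j\in\{1,2,3,4,5\}$ whenever a tag column is involved), or the tag rows of $B^*$ are involved and we use vertices of $V(H)$, of $S_{[16}$ or of $K_6$ to realise the labels; the matrices in $\mathcal M$ are tailored precisely to capture these tag-sensitive obstructions, and the forbidden subgraph that pops out is always in \fsc. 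Finally for $(iv)$, Lemma~\ref{lema:B_ext_2-nested} tells me that, after extending the partial colouring to a total one via Theorem~\ref{teo:hay_suitable_ordering} and Lemma~\ref{lema:2-nested_if}, a monochromatic gem or weak gem forces a long chain of overlapping rows between two labelled rows of the same colour (yielding $S_2(k)$ or $S_3(k)$ and thus an $F_1$ or $F_2$), while a badly-coloured doubly-weak gem involves two LR-rows of $S_{[16}$ whose L- and R-blocks are forced to the same colour; the latter situation is exactly what Lemma~\ref{lema:4tent_coloreoLRvacias} already ruled out when combined with $S_{[15]}=\emptyset$, or else produces an $F_0$, $3$-sun with center or $4$-sun directly from two incompatible vertices of $S_{[16}$ and the $4$-tent.

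The main obstacle I anticipate is the LR-row case: unlike in $\mathbb A_1,\ldots,\mathbb A_6$ and $\mathbb B_1,\ldots,\mathbb B_5$, a single LR-row represents a vertex of $S_{[16}$ whose neighbourhood wraps around $K_6$, so the ``complement'' flavour of the $2$-nested condition (clauses~\ref{item:2nested4}, \ref{item:2nested8} and \ref{item:2nested9} of Definition~\ref{def:2-nested}) becomes active and forces the appearance of the more intricate matrices in $\mathcal D$ (namely $D_3$--$D_{13}$) and in $\mathcal P$. For these, the witness graph in \fsc\ must be obtained by carefully choosing a non-neighbour in $K_6$ of the LR-vertex together with a neighbour in $K_1\cup K_5$, and then combining with the corresponding fixed vertices of $H$ to close up an $F_0$, $M_{III}(k)$ or tent${}\vee K_1$. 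Once this LR-handling is in place, the remaining cases reduce, mutatis mutandis, to the arguments already spelled out in Lemma~\ref{lema:no2nested_prohibido_tent} and Lemma~\ref{lema:equiv_circle_2nested_4tent_sinLR}, and the full proof can be found in~\cite{P20}.
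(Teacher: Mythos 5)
Your plan follows the same global architecture as the paper's proof: first rule out the matrices of Figure~\ref{fig:forb_M_chiquitas}, then non-admissibility (the families $\mathcal D$, $\mathcal S$, $\mathcal P$), then LR-orderability via Tucker matrices and $\mathcal M$ in $B^*$, then the gem conditions, translating each subconfiguration into a member of \fsc\ by locating each row in its $S_{ij}$ and borrowing witnesses from $V(H)$ and from $K_1,\ldots,K_5$. The identification of the LR-rows of $S_{[16}$ as the new difficulty is also correct, and your dictionary (L-row $\leftrightarrow S_{j6}$, R-row $\leftrightarrow S_{6j}$, LR-row $\leftrightarrow S_{[16}$, unlabelled $\leftrightarrow S_{66}$) matches the definition of $\mathbb B_6$.

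However, there is a concrete gap in your treatment of case $(iv)$, which is in fact the longest and most delicate part of the argument. You assert that a badly-coloured doubly-weak gem ``is exactly what Lemma~\ref{lema:4tent_coloreoLRvacias} already ruled out when combined with $S_{[15]}=\emptyset$.'' That lemma only governs the pre-colouring of the \emph{empty} LR-rows (the vertices of $S_{[15]}$); it says nothing about two nonempty LR-rows of $S_{[16}$ whose blocks overlap. In the actual proof this situation, together with the case of a single LR-row whose L-block and R-block are forced to receive the same colour and the case of a monochromatic weak gem involving one LR-row, requires a separate analysis of the chains of overlapping rows that propagate a colour from a pre-coloured labelled row to the blocks of the LR-rows. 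Depending on the parity and on whether the endpoints of the chain are labelled, one extracts $S_1(j+1)$, $S_5(j)$, $S_7(3)$, $S_8(j)$, $D_4$, $D_6$ or $D_9$ (contradicting admissibility), or directly an even sun, $F_0$, $F_1(\ell)$ or $F_2(\ell)$ in $G$. None of this is subsumed by Lemma~\ref{lema:4tent_coloreoLRvacias}, and your plan as written would leave this case unproved. Similarly, your one-line dispatch of monochromatic (weak) gems via ``a long chain yielding $S_2(k)$ or $S_3(k)$'' only covers the situation where both rows of the gem are non-LR; when one of them is an LR-row the forcing sequence must be analysed separately according to whether the pre-coloured row terminating it is an L-row or an R-row (the suitability of the LR-ordering is what excludes some of these interactions). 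These omissions need to be filled in before the plan constitutes a proof.
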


\begin{proof}
    We assume proven Lemma~\ref{lema:equiv_circle_2nested_4tent_sinLR} by simplicity. This is, we assume that the matrices $\mathbb B_1, \ldots,$ $\mathbb B_5$ are $2$-nested. In particular, any pair of vertices $v_1$ in $S_{ij}$ and $v_2$ in $S_{ik}$ such that $i\neq 6$ and $j \neq k$ are nested in $K_i$. Moreover, there is a vertex $v*_i$ in $K_i$ adjacent to both $v_1$ and $v_2$.
    Also notice that $B$ contains no $M_0$, $M_{II}(4)$, $M_V$, $S_0(k)$ for even $k \geq 4$ or any of the matrices in $\mathcal{F}$ for we find the homonymous forbidden induced subgraphs in each case.

  The structure of the proof is analogous as in Lemmas~\ref{lema:no2nested_prohibido_tent} and~\ref{lema:equiv_circle_2nested_4tent_sinLR}. The main difference is that $B$ may have some LR-rows and thus we also have to consider what happens if $B$ contains every subconfiguration with at least one LR-row in each case.

\begin{mycases}
    \case \textit{Suppose that $B$ is not admissible. }
    Hence, $B$ contains at least one of the matrices $D_0, D_1, \ldots, D_{13}$, $S_1(j), S_2(j), \ldots, S_8(j)$ for some $j \geq 3$ or $P_0(j,l)$, $P_1(j,l)$ for some $l \geq 0, j \geq 5$ or $P_2(j,l)$, for some $l \geq 0, j \geq 7$.
    \subcase \textit{$B$ contains $D_0$}.
    Let $v_1$ and $v_2$ be the vertices represented by the first and second row of $D_0$ respectively, and $k_{61}$, $k_{62}$ in $K_6$ represented by the first and second column of $D_0$, respectively.
    \subsubcase Suppose first that $v_1$ and $v_2$ are colored with the same color. Since the case is symmetric with regard of the coloring, we may assume that both rows are colored with red. Hence, either $v_1$ and $v_2$ lie in $S_{61} \cup S_{64} \cup S_{65}$, or $v_1$ and $v_2$ lie in $S_{36} \cup S_{46}$.
    If $v_1$ and $v_2$ lie in $S_{61}$ and $k_1$ in $K_1$ is adjacent to both $v_1$ and $v_2$, then we find a $M_{III}(3)$ induced by $\{ k_{61}$, $k_{62}$, $k_1$, $k_2$, $v_1$, $v_2$, $ s_{12} \}$. We find $M_{III}(3)$ if either $v_1$ and $v_2$ lie in $S_{64} \cup S_{65}$ changing $k_1$ for some $k_4$ in $K_4$ adjacent to both $v_1$ and $v_2$, $k_2$ for some $k_5$ in $K_5$ nonadjacent to both $v_1$ and $v_2$ and $s_{12}$ for $s_{45}$. We also find the same subgraph if $v_1$ and $v_2$ lie in $S_{36} \cup S_{46}$, changing $k_1$ for some $k_4$ in $K_4$ adjacent to both $v_1$ and $v_2$ and $s_{12}$ for $s_{24}$.
    If instead $v_1$ in $S_{61}$ and $v_2$ in $S_{64} \cup S_{65}$, since we defined $S_{65}$ as those vertices adjacent but not complete to $K_5$, then there are vertices $k_4$ in $K_4$ and $k_5$ in $K_5$ such that $v_1$ is nonadjacent to both, and $v_2$ is adjacent to $k_4$ and is nonadjacent to $k_5$. Thus, we find $F_{2}(5)$ induced by $\{ k_{62}$, $k_1$, $k_2$, $k_4$, $k_5$, $v_1$, $v_2$, $s_{45}$, $s_{12}$, $s_{24} \}$.

    \subsubcase Suppose now that $v_1$ is colored with red and $v_2$ is colored with blue. Hence, $v_1$ lies in $S_{62} \cup S_{63}$, and $v_2$ lies in $S_{61} \cup S_{64} \cup S_{65}$.
    If $v_2$ in $S_{61}$, then there is a vertex $k_4$ in $K_4$ nonadjacent to $v_1$ and $v_2$. Hence, we find $M_{III}(4)$ induced by $\{ k_{61}$, $k_{62}$, $k_1$, $k_2$, $k_4$, $v_1$, $v_2$, $s_{12}$ $s_{24} \}$. If instead $v_2$ in $S_{64} \cup S_{65}$, then we find $M_{III}(4)$ induced by $\{ k_{61}$, $k_{62}$, $k_2$, $k_4$, $k_5$, $v_1$, $v_2$, $s_{24}$, $s_{45} \}$.

    \subcase \textit{$B$ contains $D_1$. }
    Let $v_1$ and $v_2$ be the vertices that represent the rows of $D_1$ and let $k_6$ in $K_6$ be the vertex that represents the column of $D_1$. Suppose without loss of generality that both rows are colored with red, hence $v_1$ in $S_{36} \cup S_{46}$ and $v_2$ in $S_{61} \cup S_{64} \cup S_{65}$. Notice that $v_1$ is complete to $K_4$ since $S_{46} = S_{[46}$. 
     If $v_2$ in $S_{61}$, then we find a $4$-sun induced by $\{ k_6$, $k_1$, $k_2$, $k_4$, $v_1$, $v_2$, $s_{12}$, $s_{24} \}$. If $v_2$ in $S_{64}$ is not complete to $K_4$, then we find a tent induced by $\{ k_6$, $k_2$, $k_4$, $v_1$, $v_2$, $s_{24} \}$.
     If instead $v_2$ in $S_{64} \cup S_{65}$ is complete to $K_4$, then we find a $M_{II}(4)$ induced by $\{ k_2$, $k_4$, $k_5$, $k_6$, $v_1$, $v_2$, $s_{24}$, $s_{45} \}$.

    \subcase \textit{$B$ contains $D_2$. }
    Let $v_1$ and $v_2$ be the first and second row of $D_2$, and let $k_{61}$ and $k_{62}$ be the vertices corresponding to first and second column of $D_2$, respectively. Suppose that $v_1$ is colored with blue and $v_2$ is colored with red, thus $v_1$ lies in $S_{56} \cup S_{26} \cup S_{16}$ and $v_2$ lies in $S_{61} \cup S_{64} \cup S_{65}$.
    If $v_1$ in $S_{56}$ and $v_2$ in $S_{61}$, then we find a $5$-sun with center induced by $\{ k_{61}$, $k_{62}$, $k_1$, $k_2$, $k_4$, $k_5$, $v_1$, $v_2$, $s_{12}$, $s_{24}$, $s_{45} \}$.
    If instead $v_1$ in $S_{26} \cup S_{16}$, since $v_1$ is not complete to $K_1$ and we assume that $\mathbb B_1$ is admissible and thus contains no $D_1$, then there is a vertex $k_1$ in $K_1$ adjacent to $v_2$ and nonadjacent to $v_1$
    We find a tent induced by $\{ k_{61}$, $k_1$, $k_2$, $v_1$, $v_2$, $s_{12} \}$. The same holds if $v_1$ in $S_{56}$ and $v_2$ in $S_{65}$, for $\mathbb B_5$ is admissible and $v_2$ is adjacent but not complete to $K_5$.
    Moreover, if $v_1$ in $S_{56}$ and $v_2$ in $S_{64}$, then we find a tent induced by $\{ k_{61}$, $k_4$, $k_5$, $v_1$, $v_2$, $s_{45} \}$.
    Finally, if $v_1$ in $S_{26} \cup S_{16}$ and $v_2$ in $S_{64} \cup S_{65}$, then there are vertices $k_1$ in $K_1$ and $k_5$ in $K_5$ such that $k_1$ is nonadjacent to $v_1$ and adjacent to $v_2$, and $k_5$ is nonadjacent to $v_2$ and adjacent to $v_1$. Hence, we find $F_1(5)$ induced by $\{ k_1$, $k_2$, $k_4$, $k_5$, $v_1$, $v_2$, $s_{12}$, $s_{24}$, $s_{45} \}$.

\begin{remark} \label{obs:4tent_modelo0}
    If $G$ is circle and $S_{26} \cup S_{16} \neq \emptyset$, then $S_{64} \cup S_{65} = \emptyset$, and viceversa.
\end{remark}

    \subcase \textit{$B$ contains $D_3$.}
    Let $v_1$ and $v_2$ be the vertices corresponding to the rows of $D_3$ labeled with L and R, respectively, $w$ be the vertex corresponding to the LR-row, and $k_{61}$, $k_{62}$ and $k_{63}$ in $K_6$ be the vertices corresponding to the columns of $D_3$. An uncolored LR-row in $B$ represents a vertex in $S_{[16}$.
    Notice that there is no vertex $k_i$ in $K_i$ for some $i \in \{1, \ldots, 5\}$ adjacent to both $v_1$ and $v_2$, since $w$ is complete to $K_i$, thus we find $M_{III}(3)$ induced by $\{ k_{61}$, $k_{62}$, $k_{63}$, $k_i$, $v_1$, $v_2$, $w \}$.

    If $v_1$ is colored with red and $v_2$ is colored with blue, then $v_1$ in $S_{36} \cup S_{46}$ and $v_2$ in $S_{62} \cup S_{63}$. We find $M_{III}(4)$ induced by $\{ k_{62}$, $k_2$, $k_4$, $k_{61}$, $k_{63}$, $v_1$, $v_2$, $w$, $s_{24} \}$.

    Conversely, if $v_1$ is colored with blue and $v_2$ is colored with red, then $v_1$ in $S_{56} \cup S_{26} \cup S_{16}$ and $v_2$ in $S_{61} \cup S_{64} \cup S_{65}$.
    It follows by symmetry that it suffices to see what happens if $v_2$ in $S_{61}$ and $v_1$ in either $S_{56}$ or $S_{26}$.
    If $v_1$ in $S_{56}$, then we find $M_{III}(6)$ induced by $\{ k_{61}$, $k_1$, $k_2$, $k_4$, $k_5$, $k_{62}$, $k_{63}$, $v_1$, $v_2$, $s_{12}$, $s_{24}$, $s_{45}$, $w \}$. If instead $v_1$ in $S_{26}$, then we find $M_{III}(4)$ induced by $\{ k_{63}$, $k_{61}$, $k_1$, $k_2$, $k_{62}$, $v_1$, $v_2$, $s_{12}$, $w \}$.



    \subcase \textit{$B$ contains $D_4$.}
    Let $v_1$ and $v_2$ be the vertices represented by the rows labeled with L, $w$ be the vertex represented by the LR-row and $k_6$ in $K_6$ corresponding to the column of $D_4$. Since $B$ contains no $D_1$, suppose that $v_1$ is colored with red and $v_2$ is colored with blue. Thus, $v_1$ lies in $S_{61} \cup S_{64} \cup S_{65}$ and $v_2$ lies in $S_{62} \cup S_{63}$.
    In either case, we find $F_1(5)$: if $v_1$ in $S_{61}$, then it is induced by $\{ k_6$, $k_1$, $k_2$, $k_4$, $v_1$, $v_2$, $w$, $s_{12}$, $s_{24} \}$, and if $v_1$ in $S_{64} \cup S_{65}$, then it is induced by $\{ k_6$, $k_2$, $k_4$, $k_5$, $v_1$, $v_2$, $w$, $s_{24}$, $s_{45} \}$.

    \subcase \textit{$B$ contains $D_5$.} Let $v_1$ and $v_2$ be the vertices representing the rows labeled with L and R, respectively, $w$ be the vertex corresponding to the LR-row, and $k_6$ in $K_6$ corresponding to the column of $D_5$. Suppose $v_1$ is colored with blue and $v_2$ is colored with red.
    Notice that for any two vertices $x_1$ in $S_{ij}$ and $x_2$ in $S_{jk}$, we may assume that there are vertices $k_{j1}$ and $k_{j2}$ in $K_j$ such that $x_1$ is adjacent to $k_{j1}$ and is nonadjacent to $k_{j2}$ and $x_2$ is adjacent to $k_{j2}$ and is nonadjacent to $k_{j1}$, since we assume $\mathbb B_i$ admissible for each $i \in \{1, \ldots, 5\}$.
    It follows that, if $v_1$ in $S_{26} \cup S_{16}$ and $v_2$ in $S_{61}$, then there is a tent induced by $\{k_6$, $k_1$, $k_2$, $v_1$, $v_2$, $s_{12} \}$, where $k_1$ is a vertex nonadjacent to $v_1$. The same holds if $v_1$ in $S_{56}$ and $v_2$ in $S_{65}$, where the tent is induced by $\{ k_6$, $k_4$, $k_5$, $v_1$, $v_2$, $s_{45} \}$, with $k_5$ in $K_5$ adjacent to $v_1$ and nonadjacent to $v_2$.
    Finally, if $v_1$ in $S_{56}$ and $v_2$ in $S_{61}$, then we find a $5$-sun with center induced by $\{ k_5$, $k_6$, $k_1$, $k_2$, $k_4$, $v_1$, $v_2$, $w$, $s_{12}$, $s_{24}$, $s_{45} \}$.

    \begin{remark} \label{obs:4tent_modelo2}
    If $G$ is circle and contains no induced tent, then any two vertices $v_1$ in $S_{56}$ and $v_2$ in $S_{65}$ are disjoint in $K_5$. The same holds for any two vertices $v_1$ in $S_{16}$ and $v_2$ in $S_{61}$ in $K_1$.
    \end{remark}

    \subcase \textit{$B$ contains $D_6$}. Let $v_1$ and $v_2$ be the vertices represented by the rows labeled with L and R, respectively, $w$ be the vertex corresponding to the LR-row, and $k_{61}$ and $k_{62}$ in $K_6$ corresponding to the first and second column of $D_6$, respectively. Suppose without loss of generality that $v_1$ and $v_2$ are both colored with red, thus $v_1$ lies in $S_{36} \cup S_{46}$ and $v_2$ lies in $S_{61} \cup S_{64} \cup S_{65}$.
    Since $\mathbb B_i$ is admissible for every $i \in \{1, \ldots, 5\}$, then there is no vertex in $K_i$ adjacent to both $v_1$ and $v_2$. It follows that, since $v_1$ is complete to $K_4$, then $v_2$ in $S_{61}$. However, we find $F_2(5)$ induced by $\{ k_{62}$, $k_1$, $k_2$, $k_4$, $k_{61}$, $v_1$, $v_2$, $w$, $s_{12}$, $s_{24} \}$.

    The following remark is a consequence of the previous statement.
    \begin{remark} \label{obs:4tent_modelo1}
    If $G$ is circle, $v_1$ in $S_{36}\cup S_{46}$ and $v_2$ in $S_{61} \cup S_{64} \cup S_{65}$, then for every vertex $w$ in $S_{[16}$ either $N_{K_6}(v_1) \subseteq N_{K_6}(w)$ or $N_{K_6}(v_2) \subseteq N_{K_6}(w)$. The same holds for $v_1$ in $S_{56} \cup S_{26} \cup S_{16}$ and $v_2$ in $S_{62} \cup S_{63}$.
    \end{remark}


    \subcase \textit{$B$ contains $D_7$ or $D_{11}$}.
    Then, there is a vertex $k_i$ in some $K_i$ with $i\neq 6$ such that $k_i$ is adjacent to the three vertices corresponding to every row of $D_7$. We find a net ${}\vee{}K_1$. 
    \subcase \textit{$B$ contains $D_8$ or $D_{12}$}.
     There is a tent induced by all three rows and columns of $D_8$ or $D_12$. 
    \subcase \textit{$B$ contains $D_9$ or $D_{13}$.}
    It is straightforward that in this case we find $F_0$.

    \subcase \textit{$B$ contains $D_{10}$}.
    Let $v_1$ and $v_2$ be the vertices represented by the rows labeled with L and R, respectively, $w_1$ and $w_2$ be the vertices represented by the LR-rows and $k_{61}, \ldots, k_{64}$ in $K_6$ be the vertices corresponding to the columns of $D_{10}$.
    Suppose that $v_1$ is colored with red and $v_2$ is colored with blue, hence $v_1$ lies in $S_{36} \cup S_{46}$ and $v_2$ lies in $S_{62} \cup S_{63}$.
    Let $k_2$ in $K_2$ adjacent to $v_2$ and nonadjacent to $v_1$ and let $k_4$ in $K_4$ adjacent to $v_1$ and nonadjacent to $v_2$. We find $F_1(5)$ induced by $\{ v_1$, $v_2$, $w_1$, $w_2$, $s_{24}$, $k_2$, $k_4$, $k_{62}$, $k_{63} \}$.

    \subcase \textit{$B$ contains $S_1(j)$.}
    \subsubcase If $j \geq 4$ is even, let $v_1, v_2, \ldots, v_j$ be the vertices represented by the rows of $S_1(j)$, where $v_1$ and $v_j$ are labeled both with L or both with R, $v_{j-1}$ is a vertex corresponding to the LR-row, and $k_{61}, \ldots, k_{6(j-1)}$ in $K_6$ the vertices corresponding to the columns. Suppose without loss of generality that $v_1$ and $v_j$ are labeled with L. It follows that either $v_1$ and $v_j$ lie in $S_{36} \cup S_{46}$, or $v_1$ and $v_j$ lie in $S_{62} \cup S_{63}$ or $v_1$ lies in $S_{56} \cup S_{26} \cup S_{16}$ and $v_j$ lies in $S_{36} \cup S_{46}$.
In either case, there is $k_5$ in $K_5$ adjacent to both $v_1$ and $v_j$ and $k_5$ is also adjacent to $v_{j-1}$ since it lies in $S_{[16}$. Thus, this vertex set induces a $(j-1)$-sun with center.

\subsubcase If $j$ is odd, since $S_1(j)$ has $j-2$ rows (thus there are $v_1, \ldots, v_{j-2}$ vertices), then the subset of vertices given by $\{ v_1, \ldots, v_{j-2}$, $k_{61}, \ldots, k_{6(j-2)}$, $k_5 \}$ induces an even $(j-1)$-sun.

 \subcase \textit{$B$ contains $S_2(j)$}.
Let $v_1$ and $v_j$ be the vertices corresponding to the labeled rows, $k_{61}, \ldots, k_{6 (j-1})$ in $K_6$ be the vertices corresponding to the columns of $S_2(j)$, and suppose without loss of generality that $v_1$ and $v_j$ are labeled with R.

 \subsubcase $j$ is odd. Suppose first that $v_1$ is colored with red and $v_j$ is colored with blue. Thus, $v_1$ in $S_{61} \cup S_{64} \cup S_{65}$ and $v_j$ in $S_{62} \cup S_{63}$. If $v_1$ in $S_{61}$, then let $k_i$ in $K_i$ for $i = 1, 2, 4$ such that $k_1$ is adjacent to $v_1$ and $v_j$, $k_2$ is adjacent to $v_j$ and nonadjacent to $v_1$, and $k_4$ is nonadjacent to both $v_1$ and $v_j$. We find $F_2(j+2)$ induced by $\{ k_4$, $k_2$, $k_1$, $k_{61}, \ldots, k_{6 (j-1})$, $v_1, \ldots, v_j$, $s_{12}$, $s_{24} \}$.
    If $v_1$ in $S_{64} \cup S_{65}$, then we find $F_2(j)$ induced by $\{ k_5$, $k_{61}, \ldots, k_{6 (j-1)}$, $v_1, \ldots, v_j \}$, with $k_5$ in $K_5$ adjacent to $v_1$ and nonadjacent to $v_j$.
     Conversely, suppose $v_1$ is colored with blue and $v_2$ is colored with red, thus $v_1$ in $S_{62}\cup S_{63}$ and $v_j$ in $S_{61} \cup S_{64} \cup S_{65}$. If $v_j$ lies in $S_{64} \cup S_{65}$, then we find $F_2(j+2)$ induced by $\{ k_2$, $k_4$, $k_5$, $k_{61}, \ldots, k_{6 (j-1)}$, $v_1, \ldots, v_j$, $s_{24}$, $s_{45} \}$, with $k_i$ in $K_i$ for $i=2, 4, 5$ such that $k_2$ is adjacent to $v_1$ and $v_k$, $k_4$ is adjacent to $v_j$ and nonadjacent to $v_1$, and $k_5$ is nonadjacent to both $v_1$ and $v_j$. If instead $v_j$ in $S_{61}$, then it is induced by $\{ k_4$, $k_2$, $k_1$, $k_{61}, \ldots, k_{6 (j-1)}$, $v_1, \ldots, v_j$, $s_{12}$, $s_{24} \}$.
    \subsubcase $j$ is even. Hence, $v_1$ and $v_j$ are colored with the same color. Suppose without loss of generality that are both colored with red, and thus $v_1$ and $v_j$ lie in $S_{61} \cup S_{64} \cup S_{65}$.
    If $v_1$ and $v_j$ in $S_{61}$, then we find $F_2(j+1)$ induced by $\{ k_2$, $k_1$, $k_{61}, \ldots, k_{6 (j-1)}$, $v_1, \ldots, v_j$, $s_{12} \}$.
    We find $F_2(j+1)$ if $v_1$ and $v_j$ lie in $S_{64}$ or $S_{65}$, changing $s_{12}$ for $s_{45}$, and $k_1$ and $k_2$ for $k_4$ and $k_5$, where $k_5$ is nonadjacent to both $v_1$ and $v_j$ and $k_4$ is adjacent to both.
    If only $v_1$ lies in $S_{61}$, then we find $F_2(j+3)$ induced by $\{ k_1$, $k_2$, $k_4$, $k_5$, $k_{61}, \ldots, k_{6 (j-1)}$, $v_1, \ldots, v_j$, $s_{12}$,$s_{24}$, $s_{45} \}$, with $k_i$ in $K_i$ for $i=1, 2, 4, 5$.
    If only $v_j$ lies in $S_{61}$, then we find $F_2(5)$ induced by $\{ k_1$, $k_2$, $k_4$, $k_5$, $k_{62}$, $v_1$, $v_j$, $s_{12}$, $s_{24}$, $s_{45} \}$, with $k_i$ in $K_i$ for $i=1, 2, 4, 5$.

    \subcase \textit{Suppose that $B$ contains $S_3(j)$}.
Let $v_1$ and $v_j$ be the vertices corresponding to the labeled rows, $k_{61}, \ldots, k_{6 (j-1})$ in $K_6$ be the vertices corresponding to the columns of $S_3(j)$.

    \subsubcase $j$ is odd. Suppose that $v_1$ is labeled with L and colored with blue and $v_j$ is labeled with R and colored with red. In this case, $v_1$ in  $S_{56} \cup S_{26} \cup S_{16}$ and $v_j$ in $S_{61} \cup S_{64} \cup S_{65}$.
    If $v_1$ in $S_{56}$, then we find a $(j+3)$-sun if $v_j$ in $S_{61}$, induced by $\{ k_1$, $k_2$,  $k_4$, $k_5$, $k_{61}, \ldots, k_{6 (j-1)}$, $v_1, \ldots, v_j$, $s_{45}$, $s_{12}$, $s_{24} \}$.
    If $v_j$ in $S_{64} \cup S_{65}$, then we find a $(j+1)$-sun induced by $\{ k_4$, $k_5$, $k_{61}, \ldots, k_{6 (j-1)}$, $v_1, \ldots, v_j$, $s_{45} \}$.
    Moreover, if $v_j$ in $S_{61}$ and $v_1$ in $S_{26} \cup S_{16}$, then we find a $(j+1)$-sun induced by $\{ k_1$, $k_{61}, \ldots, k_{6 (j-1)}$, $k_2$, $v_1, \ldots, v_j$, $s_{12} \}$. Finally, it follows from Remark~\ref{obs:4tent_modelo0} that it is not possible that $v_j$ in $S_{64} \cup S_{65}$ and $v_1$ in $S_{26} \cup S_{16}$.

    \subsubcase $j$ is even. Suppose without loss of generality that $v_1$ and $v_j$ are both colored with red. Thus, $v_1$ in $S_{61} \cup S_{64} \cup S_{65}$ and $v_j$ in $S_{36} \cup S_{46}$.
    If $v_1$ in $S_{61}$, then we find $(j+2)$-sun induced by $\{ k_4$, $k_2$, $k_1$, $k_{61}, \ldots, k_{6 (j-1)}$, $v_1, \ldots, v_j$, $s_{12}$, $s_{24} \}$. If instead $v_1$ in $S_{64} \cup S_{65}$, then we find $j$-sun induced by $\{ k_4$, $k_{61}, \ldots, k_{6 (j-1)}$, $v_1, \ldots, v_j \}$.

    \subcase \textit{$B$ contains $S_4(j)$.}
    Let $v_1$ be the vertex corresponding to the row labeled with LR, $v_2$ corresponding to the row labeled with L, $v_j$ labeled with R and $k_{61}, \ldots, k_{6 (j-1})$ in $K_6$ the vertices corresponding to the columns of $S_4(j)$. 

    \subsubcase $j$ is even. Hence, $v_2$ and $v_j$ are colored with the same color. Suppose without loss of generality that they are both colored with red, thus $v_2$ in $S_{36} \cup S_{46}$ and $v_j$ in $S_{61} \cup S_{64} \cup S_{65}$.
    If $v_j$ lies in $S_{64} \cup S_{65}$, then we find a $(j-1)$-sun with center induced by $\{ k_4$, $k_{61}, \ldots, k_{6(j-1)}$, $v_1$, $2, \ldots, v_j \}$. If instead $v_j$ in $S_{61}$, then we find a $(j+1)$-sun with center induced by $\{ k_{61}, \ldots, k_{6(j-1)}$, $k_1$, $k_2$, $k_4$, $v_1$, $2, \ldots, v_j$, $s_{12}$, $s_{24} \}$.

    \subsubcase $j$ is odd. Suppose without loss of generality that $v_2$ is colored with red and $v_j$ is colored with blue. Hence, $v_2$ in $S_{36} \cup S_{46}$ and $v_j$ in $S_{62} \cup S_{63}$. We find a $j$-sun with center induced by $\{ k_4$, $k_{61}, \ldots, k_{6(j-1)}$, $k_2$,  $v_1$, $2, \ldots, v_j$, $s_{24} \}$.

    \subcase \textit{$B$ contains $S_5(j)$.}
    Let $v_1$ and $v_j$ be the vertices representing the rows labeled with L, $v_{j-1}$ the vertex corresponding to the row labeled with LR and $k_{61}, \ldots, k_{6 (j-2})$ in $K_6$ be the vertices corresponding to the columns of $S_4(j)$.

    \subsubcase $j$ is even. Hence $v_1$ and $v_j$ lie in $S_{36} \cup S_{46}$. We find $F_1(j+1)$ induced by $\{ k_2$, $k_4$, $k_{61}, \ldots, k_{6(j-2)}$, $v_1, \ldots, v_{j-1}, v_j$, $s_{24} \}$.

    \subsubcase $j$ is odd. Suppose $v_1$ is colored with red and $v_j$ is colored with blue, thus $v_1$ in $S_{36} \cup S_{46}$ and $v_j$ in $S_{56} \cup S_{26} \cup S_{16}$.
    If $v_j$ in $S_{56}$, then we find $F_1(j)$ induced by $\{ k_4$, $k_5$, $k_{61}, \ldots, k_{6(j-2)}$, $v_1, \ldots, v_{j-1}, v_j$, $s_{45} \}$. If instead $v_j$ lies in $S_{26} \cup S_{16}$, then we find $F_1(j+2)$ induced by $\{ k_1, $, $k_2$, $k_4$, $k_{61}, \ldots, k_{6(j-2)}$, $v_1, \ldots, v_{j-1}, v_j$, $s_{24}$, $s_{12} \}$.

    \subcase \textit{$B$ contains $S_6(j)$.}
    \subsubcase Suppose first that $B$ contains $S_6(3)$ or $S'_6(3)$. Let $v_1$, $v_2$ and $v_3$ be the vertices that represent the LR-row, the R-row and the unlabeled row, respectively. In\-de\-pen\-dent\-ly on where lies $v_2$, there is vertex $v$ in $K \setminus K_6$ such that $v$ is adjacent to $v_1$ and $v_2$ and is nonadjacent to $v_3$ and thus we find an induced $3$-sun with center.

    \subsubcase If $B$ contains $S_6(j)$ for some even $j\geq 4$, then we find $F_1(j)$ induced by every row and column of $S_6(j)$. If instead $j$ is odd, then we find $M_{II}(j)$ induced by every row and column of $S_6(j)$ and a vertex $k_i$ in some $K_i$ with $i \neq 6$. We choose such a vertex $k_i$ adjacent to $v_2$, and thus since $v_1$ in $S_{[16}$, $v_1$ is also adjacent to $k_i$ and $v_3, \ldots, v_j$ are nonadjacent to $k_i$ for they represent vertices in $S_{66}$.

    \subcase \textit{$B$ contains $S_7(j)$.}
    Suppose $B$ contains $S_7(3)$. It is straightforward that the rows and columns induce $M_{II}(4)$.
    Furthermore, if $j >3$, then $j$ is even. The rows and columns of $S_7(j)$ induce a $j$-sun.

    \subcase \textit{$B$ contains $S_8(2j)$.}
    If $j=2$, then we find a tent induced by the last three columns and the last three rows.
    If instead $j>2$, then we find a $(2j-1)$-sun with center induced by every unlabeled row, every column but the first and any vertex in $K_1$ --which will be the center--, since $K_1 \neq \emptyset$.

    \subcase \textit{$B$ contains $P_0(j,l)$.}
    Let $v_1, \ldots, v_j$ in $S$ and $k_{61}, \ldots, k_{6j}$ in $K_6$ be the vertices represented by the rows and the columns of $P_0(j,l)$, respectively. The rows $v_1$ and $v_j$ are labeled with L and R, respectively, and $v_{l+2}$ is an LR-row.

    Suppose $l=0$. If $j$ is even, then $v_1$ and $v_j$ are colored with distinct colors. Suppose without loss of generality that $v_1$ is colored with red, thus $v_1$ lies in $S_{36}\cup S_{46}$ and $v_j$ lies in $S_{62} \cup S_{63}$. In that case, there are vertices $k_i$ in $K_i$ for $i=2,4$ such that $k_2$ is adjacent to $v_j$ and nonadjacent to $v_1$ and $k_4$ is adjacent to $v_1$ and nonadjacent to $v_j$. We find $F_2(j+1)$ induced by $\{ k_2, $, $k_4$, $k_{62}, \ldots, k_{6j}$, $v_1, \ldots, v_j$, $s_{24} \}$

    If instead $j$ is odd, then $v_1$ and $v_j$ are colored with the same color. Suppose without loss of generality that they are both colored with red. Hence, $v_1$ lies in $S_{36} \cup S_{46}$ and $v_j$ lies in $S_{61} \cup S_{64} \cup S_{65}$. We find $F_2(j+2)$ induced by $\{ k_1, $, $k_2$, $k_4$, $k_{62}, \ldots, k_{6j}$, $v_1, \ldots, v_j$, $s_{24}$, $s_{12} \}$ if $v_j$ lies in $S_{61}$, and by $\{ k_2, $, $k_4$, $k_5$, $k_{62}, \ldots, k_{6j}$, $v_1, \ldots, v_j$, $s_{24}$, $s_{45} \}$ if $v_j$ lies in $S_{64} \cup S_{65}$.

    The proof is analogous if $l >0$.


    \subcase \textit{$B$ contains $P_1(j,l)$.}
        Let $v_1, \ldots, v_j$ and $k_{61}, \ldots, k_{6(j-1)}$ in $K_6$ be the vertices represented by the rows and the columns of $P_1(j,l)$, respectively, where $v_1$ and $v_j$ are labeled with L and R, respectively, and $v_{l+2}$ and $v_{l+3}$ are LR-rows.

    \subsubcase Suppose first that $l=0$.
    If $j$ is odd, then $v_1$ and $v_j$ are colored with the same color. Suppose without loss of generality that they are colored with red, thus, $v_1$ lies in $S_{36} \cup S_{46}$ and $v_j$ lies in $S_{61} \cup S_{64} \cup S_{65}$. In either case, $v_1$ is anticomplete to $K_1$. Hence, we find $F_1(j)$ induced by every row and column of $P_1(j,0)$ and an extra column that represents a vertex in $K_1$ adjacent to $v_j$, $v_2$ and $v_3$ and nonadjacent to $v_i$, for $1 \leq i \leq j-1$, $i \neq j,2,3$.
    If instead $j$ is even, then suppose $v_1$ and $v_j$ are colored with red and blue, respectively. Thus, $v_1$ lies in $S_{36} \cup S_{46}$ and $v_j$ lies in $S_{62} \cup S_{63}$. We find $F_1(j+1)$ induced by the vertices that represent every row and column of $P_1(j,0)$, $s_{24}$ and two vertices $k_2$ in $K_2$ and $k_4$ in $K_4$ such that $k_2$ is adjacent to $v_j$, $v_2$ and $v_3$ and is nonadjacent to $v_i$, and $k_4$ is adjacent to $v_1$, $v_2$ and $v_3$ and is nonadjacent to $v_i$, for each $1 \leq i \leq j-1$, $i \neq j, 2, 3$.

    \subsubcase Suppose $l>0$. The proof is analogous to the previous case if $j$ is even. If instead $j$ is odd, then $v_1$ lies in $S_{36} \cup S_{46}$ and $v_j$ lies in $S_{61} \cup S_{64} \cup S_{65}$. If $v_j$ in $S_{61}$, then we find $F_1(j+2)$ induced by $\{ k_4$, $k_{61}, \ldots, k_{6(j-2)}$, $k_1$, $k_2$, $v_1, \ldots, v_j$, $s_{12}$, $s_{24} \}$.
    If instead $v_j \not\in S_{61}$, then we find $F_1(j)$ induced by the vertices corresponding to every row and column of $P_1(j,l)$ and a vertex in $K_4$ adjacent to every vertex represented by a labeled row.

    \subcase \textit{$B$ contains $P_2(j,l)$.}
    Let $v_1, \ldots, v_j$ and $k_{61}, \ldots, k_{6(j-1)}$ in $K_6$ be the vertices represented by the rows and the columns of $P_2(j,l)$, respectively, where $v_1$ and $v_j$ are labeled with L and R, respectively, and $v_{l+2}$, $v_{l+3}$, $v_{l+4}$ and $v_{l+5}$ are LR-rows.

    Suppose $l=0$. If $j$ is even, then we find $F_1(j-1)$ induced by $\{ k_{62}$, $k_{65}, \ldots, k_{6(j-1)}$, $v_1$, $v_2$, $v_5, \ldots, v_j$, $s_{24} \}$. The same subgraph arises if $l>0$.

     If $j$ is odd, then $v_1$ and $v_j$ are colored with the same color, thus we assume that $v_1$ lies in $S_{36} \cup S_{46}$ and $v_j$ lies in $S_{61} \cup S_{64} \cup S_{65}$.
     If $v_j \not\in S_{61}$, then we find $F_1(j-2)$ induced by $\{ k_{61}$, $k_{62}$, $k_{65}, \ldots, k_{6(j-1)}$, $v_1$, $v_2$, $v_5, \ldots, v_j$, $k_4 \}$, where $k_4$ in $K_4$ is adjacent to $v_1$, $v_2$, $v_5$ and $v_j$. The same subgraph arises if $l>0$.
     If $v_j$ in $S_{61}$, then there are vertices $k_i$ in $K_i$, for $i=1,2,4$ such that $k_1$ is adjacent to $v_j$ and is nonadjacent to $v_1$, $k_2$ is nonadjacent to both and $k_4$ is adjacent to $v_1$ and nonadjacent to $v_j$. If $l=0$, we find $M_{II}(j)$ induced by $\{ k_{62}$, $k_{63}$, $k_{65}, \ldots, k_{6(j-1)}$, $v_1$, $v_2$, $v_5, \ldots, v_j$, $k_1$, $k_2$, $k_4$, $s_{12}$, $s_{24} \}$.
      If instead $l>0$, then we find $F_1(j)$ induced by  $\{ k_{61}$, $k_{62}$, $k_{64}, \ldots, k_{6(j-1)}$, $k_1$, $k_2$, $k_4$, $v_1$, $v_2$, $v_3$, $v_6, \ldots, v_j$, $s_{12}$, $s_{24} \}$.


    \vspace{3mm}
    \case Suppose now that $B$ is admissible but $B^*$ contains either a Tucker matrix, or one of the matrices in Figure~\ref{fig:forb_LR-orderable}. 
    It suffices to see that $B^*_\tagg$ contains no Tucker matrix, for in the case of the matrices listed in Figure~\ref{fig:forb_LR-orderable}, each labeled row admits a vertex belonging to the same subsets of $K$ considered in the analysis for a Tucker matrix having at least one LR-row. 
Towards a contradiction, let $M$ be a Tucker matrix contained in $B^*_\tagg$. Throughout the proof, when we refer to an LR-row in $M$, we refer to the row in $B$, this is, the complement of the row that appears in $M$.

    \subcase Suppose first that $M=M_I(j)$, for some $j \geq 3$. Let $v_1, \ldots, v_j$ and $k_{61}, \ldots, k_{6j}$ in $K_6$ be the vertices corresponding to the rows and the columns of $M$, respectively.

    \begin{remark} \label{obs:no_Di_in_MI}
    If two non-LR-rows in $M=M_I(j)$ are labeled with the same letter, then they induce $D_0$. Moreover, any pair of consecutive non-LR-rows labeled with distinct letters induce $D_1$ or $D_2$. 
        Since we assume $B$ admissible, then there are at most two labeled non-LR-rows in $M_I(j)$ and such rows are non-consecutive and labeled with distinct letters.
    Furthermore, 
    it is easy to see that there are at most two LR-rows in $M_I(j)$, for if not such rows induce $D_{11}$, $D_{12}$ or $D_{13}$.
    \end{remark}

    \subsubcase Suppose $M=M_I(3)$. Suppose first that $v_1$ is the only LR-row in $M$.

    If $v_2$ and $v_3$ are unlabeled, then we find 
    $M_{III}(3)$ induced by $\{v_1$, $v_2$, $v_3$, $k_{61}$, $k_{62}$, $k_{63}$, $ k_l \}$, where $k_l$ is any vertex in $K_l \neq K_6$. The same holds if either $v_2$ or $v_3$ are labeled rows, by accordingly replacing $k_l$ for some $l$ such that $k_l$ is nonadjacent to both $v_2$ and $v_3$ (there are no labeled rows complete to each partition $K_i \neq K_6$ of $K$).
    By Remark~\ref{obs:no_Di_in_MI}, if both $v_2$ and $v_3$ are labeled rows, then they are labeled with distinct letters. Thus, we find $F_0$ induced by $\{ v_1$, $v_2$, $v_3$, $k_{61}$, $k_{62}$, $k_{63}$, $k_1$, $k_5  \}$, where $k_1$ in $K_1$ is adjacent to $v_2$ and nonadjacent to $v_3$ and $k_5$ in $K_5$ is adjacent to $v_3$ and nonadjacent to $v_2$, or viceversa. Such vertices exist since $\mathbb B_i$ is admissible for every $i \in \{1, \ldots, 5\}$.
    If instead $v_1$ and $v_2$ are LR-rows, then we find a tent induced by $\{ v_1$, $v_2$, $v_3$, $k_{61}$, $k_{63}$, $ k_l \}$, considering $k_l$ in $K_l$ for some $l \in \{1, \ldots, 5\}$ such that $v_3$ is nonadjacent to $k_l$. Every other case is analogous by symmetry.
Moreover, if $v_1$, $v_2$ and $v_3$ are LR-rows, then there is a vertex $k_l$ in $K_l$ with $l \neq 6$ such that $v_1$, $v_2$ and $v_3$ are adjacent to $k_l$, hence we find a $M_{III}(3)$ induced by $\{ v_1$, $v_2$, $v_3$, $k_{61}$, $k_{62}$, $k_{63}$, $ k_l \}$.

    \subsubcase Suppose now that $M=M_I(j)$ for some $j \geq 4$, and let us suppose first that there is exactly one LR-row in $M$ and that $v_1$ is the such LR-row.
    Notice first that, if $j$ is odd, then we find $F_2(j)$ in $B$ induced by the vertices represented by every row and column of $M$. Hence, suppose $j$ is even.
    By Remark~\ref{obs:no_Di_in_MI}, there are at most two labeled rows in $M$ and they are labeled with distinct letters.
    If either there are no labeled rows or there is exactly one labeled row, then we find $M_{III}(j)$ induced by $\{ v_1, \ldots, v_j$, $k_{61}, \ldots, k_{6j}$, $ k_l \}$, where $k_l$ is any vertex in some $K_l \neq K_6$, nonadjacent to the labeled row.
    Suppose there are two labeled rows $v_i$ and $v_l$ in $M$. It suffices to see what happens if $v_i$ belongs to $S_{36} \cup S_{46}$ and $v_l$ belongs to either $S_{61}$, $S_{64} \cup S_{65}$ or $S_{62} \cup S_{63}$. If $v_l$ in $S_{61}$, then there is a vertex $k_2$ in $K_2$ nonadjacent to both $v_i$ and $v_l$, and thus we find $M_{III}(j)$ induced by the same vertex set from the previous paragraph.
If instead $v_l$ in $S_{64} \cup S_{65}$, then there are vertices $k_2$ in $K_2$ and $k_4$ in $K_4$ such that $k_4$ is adjacent to both $v_l$ and $v_i$. Hence, if $|l-i|$ is even, then we find an $(l-i)$-sun. If instead $|l-i|$ is odd, then we find a $(l-i)$-sun with center, where the center is given by the LR-vertex $v_1$.
Using a similar argument, if $v_l$ lies in $S_{62} \cup S_{63}$, then we find an even sun or an odd sun with center considering the same vertex set as before plus $s_{24}$.

    Suppose now that $v_1$ and $v_2$ are LR-rows. If $j \geq 4$ is even and every row $v_i$ with $i>2$ is unlabeled (or at most one is a labeled row), then we find $M_{II}(j)$ induced by $\{ v_1, \ldots, v_j$, $k_{61}$, $k_{63}, \ldots, k_{6j}$, $ k_l \}$, where $k_l$ is any vertex in some $K_l \neq K_6$ such that each $v_i$ is nonadjacent to $k_l$ for every $i \geq 3$.
    Moreover, if $j \geq 4$ is odd, then we find $F_1(j)$ induced by $\{ v_1, \ldots, v_j$, $k_{61}$, $k_{63}, \ldots, k_{6j}\}$. The same holds if there is exactly one labeled row since we can always find a vertex in some $K_l$ with $l \neq 6$ that is nonadjacent to such labeled vertex, if necessary.
    Let us suppose there are exactly two labeled rows $v_i$ and $v_l$. By Remark~\ref{obs:no_Di_in_MI}, these rows are non-consecutive and are labeled with distinct letters. As in the previous case, $v_i$ belongs to $S_{36} \cup S_{46}$ and $v_l$ belongs to either $S_{61}$ or $S_{64} \cup S_{65}$. If $v_l$ belongs to $S_{61}$, then there is a vertex $k_2$ in $K_2$ nonadjacent to both $v_i$ and $v_l$, and thus we find $\{ v_1, \ldots, v_j$, $k_{61}$, $k_{63}, \ldots, k_{6j}$, $k_2 \}$. If instead $v_l$ lies in $S_{64} \cup S_{65}$, then we find $k_4$ in $K_4$ adjacent to both $v_i$ and $v_l$ and hence we find either an even sun or an odd sun with center as in the previous case.
    Using a similar argument, if $v_l$ lies in $S_{62} \cup S_{63}$, then we find an even sun if $|l-i|$ is even or an odd sun with center if $|l-i|$ is odd.
    Finally, suppose $v_1$ and $v_i$ are LR-rows, where $i>2$. If $j=4$, then we find a $4$-sun as an induced subgraph, hence, suppose that $j>5$. In that case, we find a tent contained in the subgraph induced by $\{v_1$, $v_2$, $v_3 \}$ if $i=3$ and $\{v_1$, $v_{j-1}$, $v_j \}$ if $i= j-1$.
    Let $3<i<j-1$. However, in that case we find $M_{II}(i)$ induced by $\{v_1$, $v_2, \ldots, v_i$, $k_{62}, \ldots, k_{6(j-2)}, k_{6j} \}$. Therefore, there is no $M_I(j)$ in $B^*_\tagg$.

    \subcase Suppose that $B^*_\tagg$ contains $M=M_{II}(j)$. Let $v_1, \ldots, v_j$ and $k_{61}, \ldots, k_{6j}$ in $K_6$ be the vertices corresponding to the rows and the columns of $M$. If $j$ is odd and there are no labeled rows, then we find $F_1(j)$ by considering $\{ v_1, \ldots, v_j$, $k_{61} \ldots, k_{6(j-1)} \}$. Moreover, if there are no LR-rows and $j$ is odd, then we find $M_{II}(j)$ as an induced subgraph.
    Hence, we assume from now on that there is at least one LR-row.

\begin{remark} \label{obs:reduce_casos_MII_4tent}
    There are at most two rows labeled with L or R in $M=M_{II}(k)$, for any three LR-rows induce an enriched submatrix that contains either $D_0$, $D_1$ or $D_2$. Moreover, since $B$ is admissible, then there are at most three LR-rows.
    If $v_i$ and $v_l$ with $1<i<l<j$ are two rows labeled with either L or R, then they are labeled with distinct letters for if not we find $D_0$. Moreover, they are non-consecutive since we there is no $D_1$ or $D_2$ in $B$. Thus, since $v_i$ belongs to $S_{36} \cup S_{46}$ and $v_l$ belongs to either $S_{61}$ or $S_{64} \cup S_{65}$ or $S_{62} \cup S_{63}$, one of the following holds:
    \begin{itemize}
        \item If $v_l$ in $S_{61}$, then we find a $(l-i+2)$-sun if $l-i$ is even or a $(l-i+2)$-sun with center if $|l-i|$ is odd (the center is $k_{6j}$) induced by $\{ v_i, \ldots, v_l$, $s_{12}$, $s_{24}$, $k_{6(i+1)} \ldots, k_{6l}$, $k_1$, $k_2$, $k_4$, $k_{6j} \}$.
        \item If $v_l$ in $S_{64} \cup S_{65}$ (resp.\ $S_{62} \cup S_{63}$), then we find a $(l-i)$-sun if $|l-i|$ is even or a $(l-i)$-sun with center if $|l-i|$ is odd (the center is $k_{6j}$) induced by $\{ v_i, \ldots, v_l$, $k_{6(i+1)} \ldots, k_{6l}$, $k_4$, $k_{6j} \}$ (resp.\ $k_1$, $k_2$).
    \end{itemize}
    Furthermore, let $v_1$ and $v_i$ labeled with either L or R, where $1<i \leq j$. If $i=2,j$, then they are labeled with distinct letters for if not we find $D_0$. Moreover, they are colored with distinct colors for if not we find $D_1$. If instead $2<i<j$, then they are labeled with the same letter for if not we find $D_1$ or $D_2$.
\end{remark}

By Remark~\ref{obs:reduce_casos_MII_4tent}, we may assume without loss of generality that, if there are rows labeled with either L or R, then these rows are either $v_j$ and $v_{j-1}$, $v_1$ and $v_j$ or $v_{j-2}$ and $v_j$ for every other case is analogous or already covered. Moreover, if $v_j$ and $v_{j-1}$ (resp.\ $v_1$) are labeled rows, then we may assume they are colored with distinct colors.

    \subsubcase There is exactly one LR-row. Suppose first that $v_1$ is the only LR-row. If every non-LR row is unlabeled or $v_{j-2}$ and $v_j$ are labeled rows, then they are labeled with the same letter for if not we find $D_1$ or $D_5$ considering $v_1$, $v_{j-2}$ and $v_j$. Then, we find $M_{III}(j)$ induced by $\{ k_l$, $v_1, \ldots, v_j$, $k_{61}$, $\ldots$, $k_{6j} \}$, where $k_l$ is any vertex in $K_l \neq K_6$.
    Moreover, if $v_{j-1}$ is a labeled row, then we find either a $(j-1)$-sun or a $(j-1)$-sun with center, depending on whether $j$ is even or odd, induced by $\{ v_1, \ldots, v_{j-1}$, $k_l$, $k_{61}, \ldots, k_{6(j-2)}$, $k_{6j} \}$.
If $v_2$ is an LR-row, then we find $M_{II}(j-1)$ or $F_1(j-1)$ --depending on whether $j$ is odd or even-- induced by every column of $B$ and the rows $v_2$ to $v_j$. This holds disregarding on whether there are or not rows labeled with L or R.

Suppose $v_i$ is an LR-row for some $2<i<j-1$. Let $r_i$ be the first column in which $v_i$ has a $0$ and $c_i$ be column in which $v_j$ has a $0$, then we find a tent induced by $\{ k_{61}$, $k_{6(r_i)}$,$k_{6(c_i)}$, $v_1$, $v_i$, $v_j \}$.
If $v_{j-1}$ is an LR-row, then we find $M_{II}(j-1)$ induced by $\{ v_1, \ldots, v_{j-1}$, $k_{61}, \ldots, k_{6(j-2)}$, $k_{6j}\}$.

If $v_j$ is an LR-row and either every other row is unlabeled or there is exactly one labeled row, then we find $M_{III}(j)$ induced by $\{ k_l$, $v_1, \ldots, v_j$, $k_{61}$, $\ldots$, $k_{6j} \}$, where $k_l$ is any vertex in $K_l \neq K_6$ such that the vertex representing the only labeled row is nonadjacent to $k_l$.
If instead there are two labeled rows, then it follows from Remark~\ref{obs:reduce_casos_MII_4tent} that such rows are either $v_1$ and $v_2$ or $v_1$ and $v_i$ for some $2<i<j$. However, if $v_i$ is a labeled row for some $1<i<j-1$, then we find either an even sun or an odd sun with center analogously as we have in Remark~\ref{obs:reduce_casos_MII_4tent}. If instead $v_{j-1}$ and $v_1$ are labeled rows, then they are labeled with the same letter and thus we are in the same situation as if there were no labeled rows in $B$, since we can find a vertex that is nonadjacent to both $v_1$ and $v_{j-1}$.

\subsubcase There are two LR-rows. If $v_1$ and $v_2$ are LR-rows, then we find $M_{II}(j-1)$ as in the case where only $v_2$ is an LR-row.
Suppose $v_1$ and $v_3$ are LR-rows. If $j=4$, then we find $M_{II}(4)$ induced by $\{v_1, \ldots, v_4$, $k_{61}$, $k_{62}$, $k_{64}$, $ k_l \}$ where $k_l$ in $K_l \neq K_6$. Such a vertex exists, since $v_2$ and $v_4$ are either unlabeled rows or are rows labeled with the same letter, for if they were labeled with distinct letters we would find $D_0$ or $D_1$. Thus, there is a vertex that is nonadjacent to both $v_2$ and $v_4$ and is adjacent to $v_1$ and $v_3$.
If $j>4$, then we find a tent induced by $\{ v_3$, $v_{j-1}$, $v_j$, $k_{6(j-2)}$, $k_{6(j-1)}$, $k_{6j} \}$.
Moreover, if $v_i$ is an LR-row for $1<2<j-1$ and $v_{j-1}$ and $v_j$ are non-LR-rows, then we find a tent induced by $\{v_i$, $v_{j-1}$, $v_j$, $k_{6(j-2)}$, $k_{6(j-1)}$, $k_{6j} \}$.

It remains to see what happens if $v_1$ and $v_{j-1}$ and $v_1$ and $v_j$ are LR-rows.
If $v_1$ and $v_{j-1}$ are LR-rows, then we find $M_{II}(j)$ induced by every row of $M$ and every column except for column $j-1$, which is replaced by some vertex $k_l$ in $K_l \neq K_6$. This follows since, if there are two labeled rows, then they must be $v_i$ for some $1<i<j-1$ and $v_j$, hence they are labeled with the same letter and therefore there is a vertex $k_l$ nonadjacent to both.
Finally, if $v_1$ and $v_j$ are LR-rows, then we find a $j$-sun or a $j$-sun with center, depending on whether $j$ is even or odd, contained in the subgraph induced by $\{ v_1, \ldots, v_j$, $k_{61}, \ldots, k_{6j}$, $k_l \}$, where $k_l$ in $K_l \neq K_6$ is nonadjacent to every non-LR row. 
Therefore, there is no $M=M_{II}(j)$ in $B^*_\tagg$.

    \subcase Suppose that $B$ contains $M= M_{III}(j)$, let $v_1, \ldots v_j$ and $k_{61}, \ldots, k_{6(j+1)}$ be the rows and the columns of $M$.
    If there are no LR-rows, then we find $M_{III}(j)$, hence we assume there is at least one LR-row. It follows analogously as in the previous cases that there are at most two LR-rows in $M$, since $B$ is admissible.
    Notice that every pair of labeled rows $v_i$ and $v_l$ with $1\leq 1<i,l<j-1$ cannot be labeled with the same letter, since they induce $D_0$. Once more, if such rows are labeled with distinct letters, then they are non-consecutive, for if not we find $D_1$ or $D_2$. Furthermore, if such $v_i$ and $v_l$ are labeled rows, then we find either an even sun or an odd sun with center.
    It follows using the same arguments that, if $i=1,j-1$ and $l=j$, then $v_i$ and $v_l$ are not both labeled rows.
    Hence, if there are two labeled rows, then such rows must be $v_j$ and $v_i$ for some $i$ such that $2<i<j-1$.

\subsubcase There is exactly one LR-row. 
If $v_i$ is an LR-row for some $1\leq i< j-1$, then we find $M_{II}(j-i+1)$ induced by $\{ v_i, \ldots, v_j$, $k_{6(i+1)}, \ldots, k_{6(j+1)} \}$.
If $v_{j-1}$ is an LR-row, then we find $M_{II}(j)$, induced by $\{ v_1, \ldots, v_j$, $k_{62}, \ldots,  k_{6(j-1)}$, $k_{6(j+1)} \}$.
If instead $v_j$ is an LR-row, then we find an even $j$-sun or an odd $j$-sun with center $k_{6(j+1)}$.

\subsubcase There are two LR-rows $v_i$ and $v_l$. If $1\leq i<l<j-1$ and $v_i$ and $v_l$ are non-consecutive rows, then we find a tent induced by $v_i$, $v_l$, $v_j$, $k_s$ in $K_s \neq K_6$ adjacent to both $v_i$ and $v_l$ and nonadjacent to $v_j$, $k_{6i}$ (or $k_{6(i+1)}$ if $i=1$) and $k_{6l}$ (or $k_{6(l+1)}$ if $l=j-1$). The same holds if $l=i+1$ and $i>1$.
If instead $i=1$, or $i=j-1$ and $l=i+1$, then we find $F_0$ (or $M_{III}(3)$ if $j=3)$ induced by $\{ v_i$, $v_{i+1}$, $k_{6i}$, $k_{6(i+1)}$, $k_{6(i+2)}$, $k_6(j+1)$, $ k_s\}$ with $k_s$ in $K_s \neq K_6$ adjacent to both $v_i$ and $v_{i+1}$.
Finally, if $v_1$ and $v_j$ are LR-rows, then we find $M_{III}(j)$ induced by $\{ v_1$, $\ldots$, $v_j$, $k_{61}$, $\ldots$, $k_6(j+1)\}$. If instead $v_i$ and $v_j$ are LR-rows for some $i>1$, then we find $M_V$ induced by $\{ v_i$, $v_j$, $v_1$, $v_{j-1}$, $k_{61}$, $k_{62}$, $k_{6i}$, $k_{6(i+1)}$, $k_{6j}\}$, therefore there is no $M_{III}(j)$ in $B^*_\tagg$.

    \subcase Suppose $B^*_\tagg$ contains $M=M_{IV}$, let $v_1, \ldots, v_4$ and $k_{61}, \ldots, k_{66}$ be the rows and the columns of $M$. If there are no labeled rows, then we find $M_{IV}$ as an induced subgraph, and since $B$ is admissible and any three rows are not pairwise nested, then there are at most two LR-rows. 
    If $v_i$ is an LR-row for $i=1,2,3$, then we find $M_V$ induced by $\{ v_2$, $v_3$, $v_4$, $k_{62}, \ldots, k_{66} \}$. Moreover, if only $v_4$ is an LR-row, then we find $M_{IV}$ induced by all the rows and columns of $M$. Thus, we assume there are exactly two LR-rows.
    If $v_1$ and $v_4$ are LR-rows, then we find $M_V$ induced by $\{ v_1$, $v_2$, $v_3$, $v_4$, $k_{61}$, $k_{63}, \ldots, k_{66} \}$. The same holds if $v_i$ and $v_4$ are LR-rows, with $i=2,3$.
    Finally, if $v_1$ and $v_2$ are LR-rows, then we find a tent induced by $\{ v_1$, $v_2$, $v_4$, $k_{62}$, $k_{64}$, $k_{65} \}$. It follows analogously by symmetry if $v_1$ and $v_3$ or $v_2$ and $v_3$ are LR-rows, therefore there is no $M_{IV}$ in $B^*_\tagg$.

    \subcase Suppose $B^*_\tagg$ contains $M=M_V$, let $v_1, \ldots, v_4$ and $k_{61}, \ldots, k_{65}$ be the rows and the columns of $M$.
    Once more, if there are no LR-rows, then we find $M_V$ as an induced subgraph, thus we assume there is at least one LR-row. Moreover, since any three rows are not pairwise nested, there are at most two LR-rows.

    \subsubcase There is exactly one LR-row. If $v_1$ is the only LR-row, then we find a tent induced by $\{ v_1$, $v_3$, $v_4$, $k_{61}$, $k_{63}$, $k_{65} \}$. The same holds if $v_2$ is the only LR-row.
    If $v_3$ is the only LR-row and every other row is unlabeled (or are all labeled with the same letter), then we find $M_{IV}$ induced by$\{ v_1$, $v_2$, $v_3$, $v_4$, $k_{61}, \ldots, k_{65}$, $k_l \}$ where $k_l$ in $K_l \neq K_6$ adjacent only to $v_3$.
    Suppose there are at least two rows labeled with either L or R. Notice that, if $v_1$ and $v_2$ are labeled, then they are labeled with distinct letters for if we find $D_0$ in $B$. Moreover, $v_1$ (resp.\ $v_2$) and $v_4$ are not both labeled, for in that case we find either $D_0$, $D_1$ or $D_2$ in $B$. Hence, there are at most two rows labeled with either L or R, and they are necessarily $v_1$ and $v_2$.
    In that case, there is a vertex $k_l$ in some $K_l \neq K_6$ such that $v_2$ and $v_3$ are adjacent to $k_l$ and $v_4$ is nonadjacent to $k_l$. We find a tent induced by $v_2$, $v_3$, $v_4$, $k_l$, $k_{64}$ and $k_{65}$.
    If $v_4$ is the only LR-row and every other row is unlabeled or are (one, two or) all labeled with the same letter, then we find $M_V$ induced by $\{ v_1$, $v_2$, $v_3$, $v_4$, $k_{61}, \ldots, k_{64}$, $k_l \}$ where $k_l$ in $K_l \neq K_6$ adjacent only to $v_4$.

    \subsubcase There are exactly two LR-rows.
    If $v_1$ and $v_2$ are LR-rows, then we find a tent induced by $\{ v_1$, $v_2$, $v_3$, $k_{62}$, $k_{63}$, $k_{65} \}$.
    If instead $v_1$ and $v_3$ are LR-rows and every other row is unlabeled or (one or) all are labeled with the same letter, then we find $M_V$ induced by every row and column plus a vertex $k_l$ in some $K_l \neq K_6$ such that both $v_2$ and $v_4$ are nonadjacent to $k_l$.
    Moreover, since $v_2$ and $v_4$ overlap and there is a column in which both rows have a $0$, then they are not labeled with distinct letters --disregarding of the coloring-- for in that case we find $D_1$ or $D_2$ in $B$.
    If $v_1$ and $v_4$ are LR-rows and every other row is unlabeled or are (one or) all labeled with the same letter, then we find a tent induced by every row and column plus a vertex $k_l$ in some $K_l \neq K_6$ such that both $v_2$ and $v_4$ are nonadjacent to $k_l$. Notice that $v_2$ and $v_3$ are labeled with the same letter for if not we find either $D_1$ or $D_2$ in $B$.
    If $v_3$ and $v_4$ are LR-rows and every other row is unlabeled or $v_1$ (resp.\ $v_2$) is labeled with L or R, then we find $M_{IV}$ induced by every row and column plus a vertex $k_l$ in some $K_l \neq K_6$ such that both $v_1$ and $v_2$ are nonadjacent to $k_l$.
    Notice that $v_1$ and $v_2$ are labeled with distinct letters for if not they induce $D_0$. However, they cannot be labeled with distinct letters since in that case we find either $D_1$ or $D_2$.

    \subsubcase There are exactly three LR-rows.
    If $v_1$, $v_2$ and $v_3$ are LR-rows, since there is a vertex $k_l \in K_l$ with $l \neq 6$ such that $v_4$ is nonadjacent to $k_l$, then we find a tent induced by $\{ v_1$, $v_2$, $v_4$, $k_{61}$, $k_{64}$, $k_l \}$. Analogously, if $v_1$, $v_2$ and $v_4$ are LR-rows and $v_3$ is not, then the tent is induced by $\{ v_1$, $v_2$, $v_3$, $k_{61}$, $k_{64}$, $k_{65} \}$. The same holds if all $4$ rows are LR-rows, where the tent is induced by $\{ v_1$, $v_2$, $v_4$, $k_{62}$, $k_{63}$, $k_{65} \}$.
    Finally, if $v_2$, $v_3$ and $v_4$ are LR-rows, since there is a vertex $k_l \in K_l$ with $l \neq 6$ such that $v_1$ is nonadjacent to $k_l$, then we find $M_V$ induced by $\{ v_1$, $v_2$, $v_3$, $v_4$, $k_{61}$, $k_{62}$, $k_{63}$, $k_{65}$, $k_l \}$.

    \case $B$ is admissible and $B^*_\tagg$ is $\mathcal{M}$-free, but $B$ contains a monochromatic gem, or a monochromatic weak-gem or a badly-colored doubly weak-gem. Since there are no uncolored labeled rows and those colored rows are labeled with either L or R and do not induce any of the matrices in $\mathcal{D}$, then in particular no two pre-colored rows of $B$ induce a monochromatic gem or a monochromatic weak gem, and there are no badly-colored gems since every LR-row is uncolored, therefore this case is not possible.

    \case Finally, let us suppose that $B$ and $B^*_\tagg$ contain none of the matrices listed in Theorem~\ref{teo:2-nested_caract_bymatrices}, but $B$ is not $2$-nested. We consider $B$ ordered according to a suitable LR-ordering. Let $B'$ be a matrix obtained from $B$ by extending its partial pre-coloring to a total $2$-coloring. It follows from Lemma~\ref{lema:B_ext_2-nested} that, if $B'$ is not $2$-nested, then either there is an LR-row for which its L-block and R-block are colored with the same color, or $B'$ contains a monochromatic gem or a monochromatic weak gem or a badly-colored doubly weak gem.
    If $B'$ contains a monochromatic gem where the rows that induce such a gem are not LR-rows, then the proof is analogous as in the tent case. Thus, we may assume that at least one of the rows of the monochromatic gem is an LR-row.

    \subcase \textit{Let us suppose first that there is an LR-row $w$ for which its L-block $w_L$ and R-block $w_R$ are colored with the same color. } If these two blocks are colored with the same color, then there is either one odd sequence of rows $v_1, \ldots, v_j$ that force the same color on each block, or two distinct sequences, one that forces the same color on each block.

    \subsubcase Suppose first that there is one odd sequence $v_1, \ldots, v_j$ that forces the color on both blocks. If $k=1$, then notice this is not possible since we are coloring $B'$ using a suitable LR-ordering. If there is not a suitable LR-ordering, then it follows from Lemma~\ref{teo:hay_suitable_ordering} that $B$ is either not admissible or not LR-orderable, which is a contradiction. Thus, let $j>1$ and assume without loss of generality that $v_1$ intersects $w_L$ and $v_j$ intersects $w_R$. Moreover, we assume that each of the rows in the sequence $v_1, \ldots, v_j$ is colored with a distinct color and forces the coloring on the previous and the next row of the sequence.
    If $v_1, \ldots, v_j$ are all unlabeled rows, then we find an even $(j+1)$-sun.
    If instead $v_1$ is an L-row, then $w_L$ is properly contained in $v_1$. Thus, $v_2, \ldots, v_{j-1}$ are not contained in $v_1$, since at least $v_j$ intersects $w_R$. If $v_j$ is unlabeled or labeled with R, then we find an even $(j+1)$-sun. If instead $v_j$ is labeled with L, since $j$ is odd, then we find $S_1(j+1)$ in $B$ which is not possible since we are assuming $B$ admissible.

    \subsubcase Suppose there are two independent sequences $v_1, \ldots, v_j$ and $x_1, \ldots, x_l$ that force the same color on $w_L$ and $w_R$, respectively. Suppose without loss of generality that $w_L$ and $w_R$ are colored with red.
    If $j=1$ and $l=1$, then we find $D_6$, which is not possible. Hence, we assume that either $j>1$ or $l>1$.
    Suppose that $j>1$ and $l>1$. In this case, there is a labeled row in each sequence, for if not we can change the coloring for each row in one of the sequences and thus each block of $w$ can be colored with distinct colors. We may assume that $v_j$ is labeled with L and $x_l$ is labeled with R (for the LR-ordering used to color $B'$ is suitable and thus there is no R-row intersecting $w_L$, and the same holds for each L-block and $w_R$). As in the previous paragraphs, we assume that each row in each sequence forces the coloring on both the previous and the next row in its sequence. In that case, $v_2, \ldots, v_j$  is contained in $w_L$ and $x_2, \ldots, x_l$ is contained in $w_R$.
    Moreover, $w$ represents a vertex in $S_{[16}$, $v_j$ lies in $S_{46} \cup S_{36}$ or $S_{16} \cup S_{26} \cup S_{56}$ and $x_l$ lies in $S_{61} \cup S_{64} \cup S_{65}$ or $S_{62} \cup S_{63}$ (depending on whether they are colored with red or blue, respectively).
    Suppose first that they are both colored with red, thus $v_j$ lies in $S_{46} \cup S_{36}$ and $x_l$ lies in $S_{61} \cup S_{64} \cup S_{65}$. In this case $j$ and $l$ are both even.
    If $x_l$ lies in $S_{64} \cup S_{65}$, since there is a $k_i$ in some $K_i \neq K_6$ adjacent to both $v_j$ and $x_l$, then we find $F_2(j+l+1)$ contained in the subgraph induced by $k_i$ and each row and column on which the rows in $w$ and both sequences are not null. If instead $x_l$ lies in $S_{61}$, we find $F_2(k+l+3)$ contained in the same submatrix but adding three vertices $k_i$ in $K_i$ for $i=1,2,4$.
    The same holds if $v_j$ and $x_l$ are both blue.
    Suppose now that $v_j$ is colored with red and $v_l$ is colored with blue. Thus, $j$ is even and $l$ is odd. In this case, we find $F_2(j+l+2)$ contained in the submatrix induced by the row that represents $s_{24}$, two columns representing any two vertices in $K_2$ and $K_4$ and each row and column on which the rows in $w$ and both sequences are not null.
    The proof is analogous if either $j=1$ or $l=1$.

    Hence, we may assume there is either a monochromatic weak gem in which one of the rows is an LR-row or a badly-colored doubly-weak gem in $B'$, for the case of a monochromatic gem or a monochromatic weak gem where one of the rows is an L-row (resp.\ R-row) and the other is unlabeled is analogous to the tent case.  

    \subcase \textit{Let us suppose there is a monochromatic weak gem in $B'$}, and let $v_1$ and $v_2$ be the rows that induce such gem, where $v_2$ is an LR-row.
    \subsubcase Suppose first that $v_1$ is a pre-colored row. Suppose without loss of generality that the monochromatic weak gem is induced by $v_1$ and the L-block of $v_2$ and that $v_1$ and $v_2$ are both colored with red. We denote by $v_{2L}$ the L-block of $v_2$.
    If $v_1$ is labeled with R, then $v_2$ is the L-block of some LR-row $r$ in $B$ and $v_1$ is the R-block of itself. However, since the LR-ordering we are considering to color $B'$ is suitable, then the L-block of an LR-row has empty intersection with the R-block of a non-LR row and thus this case is not possible.
    If $v_1$ is labeled with L, since both rows induce a weak gem, then $v_{2L}$ is properly contained in $v_1$.
    Since $v_1$ is a row labeled with L in $B$, then $v_1$ is a pre-colored row. Moreover, since $v_{2L}$ is colored with the same color as $v_1$, then there is either a blue pre-colored row, or a sequence of rows $v_3, \ldots, v_j$ where $v_j$ forces the red coloring of $v_{2L}$.
    In either case, there is a pre-colored row in that sequence that forces the color on $v_{2L}$, and such row is either labeled with L or with R.
    Suppose first that such row is labeled with L. If $v_3$ is a the blue pre-colored row that forces the red coloring on $v_{2L}$, then $v_{2L}$ is properly contained in $v_3$. However, in that case we find $D_4$ which is not possible since $B$ is admissible.
    Hence, we assume $v_3, \ldots, v_{j-1}$  is a sequence of unlabeled rows and that $v_j$ is a labeled row such that this sequence forces $v_{2L}$ to be colored with red, and each row of the sequence forces the color on both its predecessor and its successor.
    If $j-3$ is even, then $v_j$ is colored with blue, and if $j-3$ is odd then $v_j$ is colored with red. In either case, we find $S_5(j)$ contained in the submatrix induced by rows $v_1, v_2, v_3, \ldots, v_j$.
    If instead the row $v$ that forces the coloring on $v_{2L}$ is labeled with R, since the LR-ordering used to color $B$ is suitable, then the intersection between $v_{2L}$ and $v$ is empty. Hence, $v \neq v_3$, thus we assume that $v_3, \ldots, v_{j-1}$ are unlabeled rows and $v_j = v$. If $j-3$ is odd, then $v_j$ is colored with red, and if $j-3$ is even, then $v_j$ is colored with blue. In either case we also find $S_5(j)$, which is not possible since $B$ is admissible.

    \subsubcase Suppose now that $v_1$ is an unlabeled row. Notice that, since $v_1$ and $v_2$ induce a weak gem, then $v_1$ is not nested in $v_2$.
    Hence, either the coloring of both rows is forced by the same sequence of rows or the coloring of $v_1$ and $v_2$ is forced for each by a distinct sequence of rows. As in the previous cases, we assume that the last row of each sequence represents a pre-colored labeled row.
    Suppose first that both rows are forced to be colored with red by the same row $v_3$. Thus, $v_3$ is a labeled row pre-colored with blue. Moreover, since $v_3$ forces $v_1$ to be colored with red, then $v_1$ is not contained in $v_3$ and thus there is a column $k_{61}$ in which $v_1$ has a $1$ and $v_3$ has a $0$.
We may also assume that $v_2$ has a $0$ in such a column since $v_1$ is also not contained in $v_2$.
Moreover, since $v_3$ forces $v_2$ to be colored with red, then $v_3$ is labeled with the same letter than $v_2$ and $v_3$ is not contained in $v_2$, thus we can find a column $k_{62}$ in which $v_2$ has a $0$ and $v_1$ and $v_3$ both have a $1$.
Furthermore, since $v_3$ and $v_2$ are both labeled with the same letter and the three rows have pairwise nonempty intersection, then there is a column $k_{63}$ in which all three rows have a $1$. Since $v_3$ is a row labeled with either L or R in $B$, then there are vertices $k_l \in K_l$, $k_m \in K_m$ with $l \neq m$, $l,m \neq 6$ such that $v_3$ is adjacent to $k_l$ and nonadjacent to $k_m$. Moreover, since $v_2$ is an LR-row, then $v_2$ is adjacent to both $k_l$ and $k_m$ and $v_j$ is nonadjacent to $k_l$ and $k_m$.
Hence, we find $F_0$ induced by $\{v_3$, $v_1$, $v_2$, $k_l$, $k_{61}$, $k_{63}$, $k_{62}$, $k_m \}$.
Suppose instead there is a sequence of rows $v_3, \ldots, v_j$ that force the coloring of both $v_1$ and $v_2$, where $v_3, \ldots, v_{j-1}$ are unlabeled rows and $v_j$ is labeled with either L or R and is pre-colored.
We have two possibilities: either $v_j$ is labeled with L or with R.
If $v_j$ is labeled with L and $v_j$ forces the coloring of $v_2$, then we have the same situation as in the previous case. Thus we assume $v_j$ is nested in $v_2$. In this case, since $v_j$ and $v_2$ are labeled with L, the vertices $v_3, \ldots, v_{j-1}$ are nested in $v_2$ and thus they are chained from right to left. Moreover, since $v_1$ and $v_2$ are colored with the same color, then there is an odd index $1 \leq l \leq j-1$ such that $v_1$ contains $v_3, \ldots, v_l$ and does not contain $v_{l+1}, \ldots, v_j$. Hence, we find $F_1(l+1)$ considering the rows $v_1, v_2, \ldots, v_{l+1}$.
Suppose now that $v_j$ is labeled with R. Since $B'$ is colored using a suitable LR-ordering, then $v_j$ and $v_2$ have empty intersection, thus there is a sequence of unlabeled rows $v_3, \ldots, v_{j-1}$, chained from left to right. Notice that it is possible that $v_1= v_3$.
Suppose first that $j$ is even. If $v_1 = v_3$, then there is an odd number of unlabeled rows between $v_1$ and $v_j$. In this case we find a $(j-2)$-sun contained in the subgraph induced by rows $v_2, v_1=v_3, v_4, \ldots, v_j$.
If instead $v_1 \neq v_3$, then $v_1$ and $v_3$ and $v_1$ and $v_5$ both induce a $0$-gem, and thus we find a $(j-2)$-sun in the same subgraph.
If $j$ is odd, then there is an even number of unlabeled rows between $v_2$ and $v_j$. Once more, we find a $(j-1)$-sun contained in the subgraph induced by rows $v_2, v_3, \ldots, v_j$.

Notice that these are all the possible cases for a weak gem. This follows from the fact that, if there is a pre-colored labeled row that forces the coloring upon $v_1$ then it forces the coloring upon $v_2$ and viceversa. Moreover, if there is a sequence of rows that force the coloring upon $v_2$, then one of these rows of the sequence also forces the coloring upon $v_1$, and viceversa.
Furthermore, since the label of the pre-colored row of the sequence determines a unique direction in which the rows overlap in chain, then there is only one possibility in each case, as we have seen in the previous paragraphs.
It follows that the case in which there is a sequence forcing the coloring upon each $v_1$ and $v_2$ can be reduced to the previous case.

    \subcase \textit{Suppose there is a badly-colored doubly-weak gem in $B'$. }
    Let $v_1$ and $v_2$ be the LR-rows that induce the doubly-weak gem. Since the suitable LR-ordering determines the blocks of each LR-row, then the L-block of $v_1$ properly contains the L-block of $v_2$ and the R-block of $v_1$ is properly contained in the R-block of $v_2$, or viceversa. Moreover, the R-block of $v_1$ may be empty. Let us denote by $v_{1L}$ and $v_{2L}$ (resp.\ $v_{1R}$ and $v_{2R}$) the L-blocks (resp.\ R-blocks) of $v_1$ and $v_2$.
    There is a sequence of rows that forces the coloring on both LR-rows simultaneously or there are two sequences of rows and each forces the coloring upon the blocks of $v_1$ and $v_2$, respectively.
    Whenever we consider a sequence of rows that forces the coloring upon the blocks of $v_1$ and $v_2$, we will consider a sequence in which every row forces the coloring upon its predecessor and its successor, a pre-colored row is either the first or the last row of the sequence, the first row of the sequence forces the coloring upon the corresponding block of $v_1$ and the last row forces the coloring upon the corresponding block of $v_2$. It follows that, in such a sequence, every pair of consecutive unlabeled rows overlap.
    We can also assume that there are no blocks corresponding to LR-rows in such a sequence, for we can reduce this to one of the cases.
        Suppose first there is a sequence of rows $v_3, \ldots, v_j$ that forces the coloring upon both LR-rows simultaneously. We assume that $v_3$ intersects $v_1$ and $v_j$ intersects $v_2$.
    If $v_3, \ldots, v_j$ forces the coloring on both L-blocks, then we have four cases: (1) either $v_3, \ldots, v_j$ are all unlabeled rows, (2) $v_3$ is the only pre-colored row, (3) $v_j$ is the only pre-colored row or (4) $v_3$ and $v_j$ are the only pre-colored rows.
    In either case, if $v_3, \ldots, v_j$ is a minimal sequence that forces the same color upon both $v_{1L}$ and $v_{2L}$, then $j$ is odd.

    \subsubcase Suppose $v_3, \ldots, v_j$ are unlabeled.
    If $j=3$, then we find $S_7(3)$ contained in the submatrix induced by $v_1$, $v_2$ and $v_3$. Suppose $j>3$, thus we have two possibilities. If $v_2 \cap v_3 \neq \emptyset$, since $j$ is odd, then we find a $(j-1)$-sun contained in the submatrix induced by considering all the rows $v_1$, $v_2$, $v_3, \ldots, v_j$.
    If instead $v_2 \cap v_3 = \emptyset$, then we find $F_2(j)$ contained in the same submatrix.

    \subsubcase Suppose $v_3$ is the only pre-colored row.
    Since $v_3$ is a pre-colored row and forces the color red upon the L-block of $v_1$, then $v_3$ contains $v_{1L}$ and $v_3$ is colored with blue. If $v_4 \cap v_{1L} \neq \emptyset$, then we find $F_0$ in the submatrix given by considering the rows $v_1$, $v_3$, $v_4$, having in mind that there is a column representing some $k_i$ in $K_i \neq K_6$ in which the row corresponding to $v_1$ has a $1$ and the rows corresponding to $v_3$ and $v_4$ both have $0$. This follows since $v_4$ is unlabeled and thus represents a vertex that lies in $S_{66}$, and $v_3$ is pre-colored and labeled with L or R and, thus it represents a vertex that is not adjacent to every partition $K_i$ of $K$.
    If instead $v_4 \cap v_{1L} = \emptyset$, then we find $F_2(j-2)$ contained in the submatrix induced by the rows $v_1, v_2, \ldots, v_{j-2}$ if $v_2 \cap v_{2R} = \emptyset$, and induced by the rows $v_1, v_2, v_5, \ldots, v_j$ if $v_2 \cap v_{2R} \neq \emptyset$.

    \subsubcase \label{subsubcase:B6_4.3.3.} Suppose $v_j$ is the only pre-colored row.
    In this case, $v_j$ properly contains $v_{2L}$ and we can assume that the rows $v_4, \ldots, v_{j-1}$ are contained in $v_{1L}$. If $v_3 \cap v_2 \neq \emptyset$, then we find an even $(j-1)$-sun in the submatrix induced by the rows $v_2, v_3, \ldots, v_j$.
    If instead $v_3 \cap v_2 = \emptyset$, then we find $F_2(j)$ in the submatrix given by rows $v_1, \ldots, v_j$.

    \subsubcase Suppose that $v_3$ and $v_j$ are the only pre-colored rows. Thus, we can assume that $v_j$ properly contains $v_{2L}$ and $v_3$ properly contains $v_{2L}$, thus $v_3$ properly contains $v_{2L}$. Hence, we find $D_9$ induced by the rows $v_1$, $v_2$ and $v_3$ which is not possible since $B$ is admissible.
    The only case we have left is when $v_3, \ldots, v_j$ forces the coloring upon $v_{1L}$ and $v_{2R}$. This follows from the fact that, if $v_3, \ldots, v_j$ forces the color upon $v_{2L}$ and $v_{1R} \neq \emptyset$, then this case can be reduced to case (4.3.\ref{subsubcase:B6_4.3.3.}).
     Hence, either (1) $v_3, \ldots, v_j$ are unlabeled rows, (2) $v_3$ is the only pre-colored row, or (3) $v_3$ and $v_j$ are the only pre-colored rows.
    Notice that in either case, $j$ is even and thus for (1) we find $S_8(j)$, which results in a contradiction since $B$ is admissible.
    Moreover, in the remaining cases, $v_3$ properly contains $v_{1L}$ and $v_{2L}$. Since $v_1$ and $v_2$ overlap, we find $D_9$ which is not possible for $B$ is admissible.
\end{mycases}
This finishes the proof.
\end{proof}

Let $G= (K,S)$, $H$ as in Section~\ref{subsec:4tent_partition} and the matrices $\mathbb B_i$ for $i= \{1 \ldots, 6\}$ as defined in the previous subsection.
Suppose $\mathbb B_i$ is $2$-nested for each $i \in \{1, 2, \ldots, 6\}$, let $\chi_i$ be a total block bi-coloring and $\Pi_i$ a suitable LR-ordering for $\mathbb B_i$, for each $i \in \{1, 2, \ldots, 6\}$.

Let $\Pi$ be the ordering of the vertices of $K$ given by concatenating the orderings $\Pi_1$, $\Pi_2$, $\ldots$, $\Pi_6$, as defined in Section~\ref{subsubsec:tent3}. 

\begin{defn} \label{def:matrices_B_por_colores}
We define the $(0,1)$-matrices $\mathbb B_r$, $\mathbb{B}_b$, $\mathbb B_{r-b}$ and $\mathbb B_{b-r}$ as in the tent case, but considering only those vertices of $S$ that are not in $S_{[16}$.
\end{defn}

Notice that the only nonempty subsets $S_{ij}$ with $i>j$ that we are considering are those where $i=6$. Hence, the rows of $\mathbb B_{r-b}$ are those representing vertices in $S_{61} \cup S_{64} \cup S_{65}$ and the rows of $\mathbb B_{b-r}$ are those representing vertices in $S_{62} \cup S_{63}$.

\begin{lema} \label{lema:matrices_union_son_nested_4tent}
    Suppose $\mathbb B_i$ is $2$-nested for each $i =1,2 \ldots, 6$. If $\mathbb B_r$, $\mathbb B_b$, $\mathbb B_{r-b}$ or $\mathbb B_{b-r}$ are not nested, then $G$ contains $F_0$, $F_1(5)$ or $F_2(5)$ as induced subgraphs.
\end{lema}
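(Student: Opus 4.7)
The plan is to mimic the argument given for the analogous tent case (Lemma~\ref{lema:matrices_union_son_nested}), adapting the case analysis to the richer structure of the $4$-tent partition described in Lemma~\ref{lema:4tent_particion}. Concretely, if any of $\mathbb B_r$, $\mathbb B_b$, $\mathbb B_{r-b}$ or $\mathbb B_{b-r}$ fails to be nested, then by Theorem~\ref{teo:nested_caract} it contains a $0$-gem; let $v_1,v_2$ be the two vertices of $S$ corresponding to the two rows. First I would establish the in-subset claim: if $v_1$ and $v_2$ belong to the same $S_{ij}$, then they are nested (otherwise the four witnessing columns, together with a vertex $s^*\in\{s_{12},s_{24},s_{45}\}$ that is anticomplete to both $v_1$ and $v_2$, produce an $F_0$). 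This reduces everything to the case where $v_1\in S_{ij}$ and $v_2\in S_{i'j'}$ with $\{i,j\}\neq\{i',j'\}$.

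Next I would enumerate, for each of the four matrices, which subsets $S_{ij}$ contribute rows, using the coloring rules from Definition~\ref{def:matrices_B_por_colores} together with Lemma~\ref{lema:4tent_particion}. For $\mathbb B_r$ the candidate subsets are essentially $S_{34}$, $S_{35}$, $S_{36}$, $S_{45}$, $S_{46}$, $S_{25}$, $S_{26}$ (plus their $i>j$ counterparts $S_{61},S_{64},S_{65}$); for $\mathbb B_b$ they are $S_{12}$, $S_{13}$, $S_{14}$, $S_{23}$, $S_{16}$, $S_{56}$ (plus $S_{62},S_{63}$); for $\mathbb B_{r-b}$ only $S_{61},S_{64},S_{65}$; and for $\mathbb B_{b-r}$ only $S_{62},S_{63}$. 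For each pair of distinct subsets from the same list I would then exhibit explicit witnessing vertices of $K$ (two columns of $K_i$ and two of $K_j$ from the $0$-gem, plus suitable anchor vertices $k_\ell\in K_\ell$ and tent-vertices $s_{ij}\in V(T)$) that induce one of $F_0$, $F_1(5)$ or $F_2(5)$. In most cases the pattern is: the four columns of the $0$-gem provide the ``bipartite skeleton'', a vertex of the original $4$-tent supplies the extra stable-set vertex, and one or two vertices from a further $K_\ell$ close the configuration.

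The case work naturally factors in two parts. For $\mathbb B_{r-b}$ and $\mathbb B_{b-r}$ there are only two or three relevant subsets, so at most a handful of pairs need to be handled; these follow essentially by the same template as Cases~1--3 in the proof of Lemma~\ref{lema:matrices_union_son_nested}, using $s_{24}$ and $s_{45}$ as anchor vertices. For $\mathbb B_r$ and $\mathbb B_b$ there are more pairs, but many are immediately ruled out by the coincidences already established in Lemma~\ref{lema:4tent_particion} (e.g.\ $S_{14}=S_{14]}$, $S_{25}=S_{[25}$, $S_{46}=S_{[46}$, $S_{26}=S_{[26}$, $S_{35}=S_{35]}$, $S_{13}=S_{[13}$), which force one side of the would-be $0$-gem to be empty.

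The main obstacle I anticipate is purely bookkeeping: keeping track of which pairs of subsets actually contribute rows of the same color, and matching each such pair with one of the three target graphs $F_0$, $F_1(5)$, $F_2(5)$. Once the admissibility of the $\mathbb B_i$ (granted by hypothesis) is used to guarantee the existence of private neighbors $k_{ij_1}\neq k_{ij_2}$ in each $K_i$, the explicit forbidden subgraph is essentially forced by the colors of the two rows and the partition they lie in. Hence the proof will consist of a compact enumeration of pairs, with each nontrivial pair witnessed by a concrete vertex set inducing one of the three graphs, finishing the lemma.
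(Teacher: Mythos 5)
Your overall architecture is the right one and matches the paper: take a $0$-gem in one of the four matrices, dispose of the case where both rows come from the same $S_{ij}$ via a nestedness claim, and then run a pair-by-pair enumeration producing explicit copies of $F_0$, $F_1(5)$ or $F_2(5)$. However, the concrete enumeration you propose would fail as stated, for two reasons. First, your assignment of subsets to the colour classes does not agree with the pre-colouring fixed in Figure~\ref{fig:matricesB}: there $S_{12}$ and $S_{13}$ are red while $S_{34}$, $S_{24}$, $S_{25}$, $S_{26}$ and $S_{15}$ are blue, so the row lists of $\mathbb B_r$ and $\mathbb B_b$ are essentially the opposite of what you wrote (the paper's lists are $S_{12},S_{13},S_{35},S_{36},S_{45},\ldots$ for $\mathbb B_r$ and $S_{23},S_{24},S_{34},S_{14},S_{25},S_{15},S_{56},S_{16},\ldots$ for $\mathbb B_b$). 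Second, and more importantly, you treat each $S_{6j}$ as contributing to only one of $\mathbb B_r$, $\mathbb B_b$; by Definition~\ref{def:matrices_A_por_colores} (which Definition~\ref{def:matrices_B_por_colores} invokes) every row with $i>j$ splits into a red block and a blue block, so the $K_1,\ldots,K_j$-parts of $S_{62}\cup S_{63}$ appear in $\mathbb B_r$ and those of $S_{61}\cup S_{64}\cup S_{65}$ appear in $\mathbb B_b$. The mixed pairs created this way are precisely where the lemma's content lies: the paper's two nontrivial cases for $\mathbb B_r$ include $v_1\in S_{62]}\cup S_{63}$ against $v_2\in S_{[13}$ (yielding $F_0$), and for $\mathbb B_b$ the pairs $S_{14}\cup S_{15}$ versus $S_{64}\cup S_{65}$ and $S_{16}$ versus $S_{64}\cup S_{65}$ (yielding $F_0$ and $F_1(5)$). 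Under your bookkeeping these pairs are never examined, so the case analysis is incomplete.

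A related slip: you plan to discard many pairs because coincidences such as $S_{13}=S_{[13}$, $S_{14}=S_{14]}$, $S_{46}=S_{[46}$ ``force one side of the would-be $0$-gem to be empty.'' This is not sound, because $\mathbb B_r$ and $\mathbb B_b$ carry the tag columns $c_L$ and $c_R$: a vertex complete to $K_1$ can still take part in a $0$-gem through the column $c_L$, and the paper uses exactly this in its Case~2 for $\mathbb B_r$ (where $v_2\in S_{[13}$ is complete to $K_1$ and one gem column is $c_L$). So completeness to a boundary set reduces, but does not eliminate, the pairs to be treated, and several of the pairs you would drop are the ones that must be handled. Once the colour classes, the double contribution of the $S_{6j}$ rows, and the role of the tag columns are corrected, your template (gem columns plus a $4$-tent vertex plus one or two anchor vertices of $K_\ell$) does carry through and reproduces the paper's argument.
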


\begin{proof}
    Suppose first that $\mathbb B_r$ is not nested, thus there is a $0$-gem. Every row in $\mathbb B_r$ represents a vertex that belongs to one of the following subsets of $S$: $S_{12}$, $S_{13}$, $S_{35}$, $S_{36}$, $S_{45}$, $S_{61}$, $S_{62}$, $S_{63}$, $S_{64}$ or $S_{65}$. Recall that $S_{13}=S_{[13}$, $S_{35} = S_{35]}$, $S_{62} = S_{62]}$ and $S_{64}= S_{64]}$
    Let $f_1$ and $f_2$ be two rows that induce a gem in $\mathbb B_r$ and $v_1$ in $S_{ij}$ and $v_2$ in $S_{lm}$ be the corresponding vertices in $G$.
    Analogously as in the tent case we obtain the following claim, strongly using that no row in $\mathbb B_r$, $\mathbb B_b$, $\mathbb B_{r-b}$ and $\mathbb B_{b-r}$ represents a vertex of $S$ complete to $K$, for we do not consider $S_{[16]}$ to define them.
    \begin{claim}    \label{claim:4tent_nogeminnestedmatrices}
        Let $v_1$ and $v_2$ be two vertices that induce a $0$-gem in $\mathbb B_r$, $\mathbb B_b$, $\mathbb B_{r-b}$ or $\mathbb B_{b-r}$. If $v_1$ and $v_2$ lie in the same $S_{ij}$, then $v_1$ and $v_2$ are nested.
    \end{claim}
    Moreover, since $\mathbb B_i$ is $2$-nested for every $i\in \{1, 2, \ldots, 5,6\}$, in particular there are no monochromatic gems in each $\mathbb B_i$. Moreover, if $j=l$, then we find $D_1$ in $K_i$ or $K_j$, respectively.
     It follows from these remarks that we have only two possibilities for the gem: (1) $v_1$ in $S_{36}$ and $v_2$ in $S_{35]} \cup S_{[46}$ and (2) $v_1$ in $S_{62]} \cup S_{63}$ and $v_2$ in $S_{[13}$.
    \begin{mycases}
    \case $v_1$ in $S_{36}$ and $v_2$ in $S_{35]} \cup S_{[46}$.
    Suppose that $v_2$ in $S_{35}$. Let $k_2$ in $K_2$ nonadjacent to both. There are vertices $k_{31}$, $k_{32}$ in $K_3$ such that $k_{31}$ is adjacent only to $v_2$ and $k_{32}$ is adjacent to both. Moreover, there are vertices $k_5$ in $K_5$ and $k_6$ in $K_6$ such that $k_5$ is adjacent to both and $k_6$ is adjacent only to $v_1$. We find $F_0$ induced by $\{ v_1$, $v_2$, $s_{24}$, $k_{31}$, $k_{32}$, $k_{5}$, $k_{6}$, $k_2 \}$.
     If instead $v_2$ in $S_{46}$, then we also find $F_0$ changing $k_{32}$ for some vertex $k_4$ in $K_4$ in the same subset.

    \case$v_1$ in $S_{62]} \cup S_{63}$ and $v_2$ in $S_{[13}$.
    Since $v_2$ is also complete to $K_1$, then one of the columns of the $0$-gem is induced by the column $c_L$ of $\mathbb B_r$. Thus, there is a vertex $k_6$ in $K_6$ adjacent to $v_1$ and nonadjacent to $v_1$. Moreover, the gem is induced by a column corresponding to a vertex $k_3$ in $K_3$ nonadjacent to $v_1$ and adjacent to $v_2$. We find $F_0$ induced by $\{ v_1$, $v_2$, $s_{24}$, $k_{6}$, $k_{1}$, $k_{2}$, $k_{3}$, $k_{4} \}$.
    \end{mycases}
Hence $\mathbb B_r$ is nested.
Suppose now that $\mathbb B_b$ is not nested, and let $v_1$ in $S_{ij}$ and $v_2$ in $S_{lm}$ two vertices for which its rows in $\mathbb B_b$ induce a $0$-gem.
Every row in $\mathbb B_b$ represents a vertex belonging to either $S_{23}$, $S_{24}$, $S_{34}$, $S_{14}$, $S_{25}$, $S_{15}$, $S_{56}$, $S_{16}$, $S_{61}$, $S_{62}$, $S_{63}$, $S_{64}$ or $S_{65}$.
Recall that $S_{14}=S_{14]}$, $S_{25} = S_{[25}$, $S_{26} = S_{[26}$, $S_{62} = S_{62]}$ and $S_{64}= S_{64]}$. It follows from this and Claim~\ref{claim:4tent_nogeminnestedmatrices} that the cases are: (1) $v_1$ in $S_{23} \cup S_{34}$ and $v_2$ in $S_{24}$, (2) $v_1$ in $S_{14} \cup S_{15}$ and $v_2$ in $S_{15} \cup S_{16}$, or $v_2$ in $S_{25} \cup S_{26}$, or $v_2$ in $S_{64} \cup S_{65}$, (3) $v_1$ in $S_{56}$ and $v_2$ in $S_{26} \cup S_{16}$ and (4) $v_1$ in $S_{16}$ and $v_2$ in $S_{64} \cup S_{65}$.

    \begin{mycases}
        \case $v_1$ in $S_{23} \cup S_{34}$ and $v_2$ in $S_{24}$. Suppose $v_1$ in $S_{23}$. We find $F_2(5)$ induced by $\{ v_1$, $v_2$, $s_{12}$, $s_{24}$, $s_{45}$, $k_{1}$, $k_{2}$, $k_3$, $k_{4}$, $k_{5} \}$, where $v_1$ and $v_2$ are nonadjacent to $k_1$ and $k_5$. It follows analogously by symmetry if $v_1$ in $S_{34}$.

        \case $v_1$ in $S_{14} \cup S_{15}$.
        \subcase $v_2$ in $S_{15} \cup S_{16}$
        If $v_1$ in $S_{14}$ and $v_2$ in $S_{15}$, then we find $F_1(5)$ induced by $\{ v_1$, $v_2$, $s_{12}$, $s_{24}$, $s_{45}$, $k_{1}$, $k_{2}$, $k_{4}$, $k_{5} \}$. The same holds if instead $v_2$ in $S_{16}$. Moreover, we find $F_1(5)$ induced by the same subset if $v_1$ in $S_{15}$ and $v_2$ in $S_{16}$, since there is a vertex in $K_5$ that is nonadjacent to $v_1$.

        \subcase $v_2$ in $S_{25} \cup S_{26}$. We find $F_1(5)$ induced by $\{ v_1$, $v_2$, $s_{12}$, $s_{24}$, $s_{45}$, $k_1$, $k_{2}$, $k_{4}$, $k_{5} \}$.

        \subcase $v_2$ in $S_{64} \cup S_{65}$.
         Since $S_{64}$ is complete to $K_1$ and $K_4$, the $0$-gem cannot be induced by $v_2$ and a vertex $v_1$ in $S_{14}$ since $v_1$ is also complete to $K_4$. If $v_1$ in $S_{15}$, then we find $F_0$ induced by $\{ v_1$, $v_2$, $s_{45}$, $k_1$, $k_{2}$, $k_{41}$, $k_{42}$, $k_{5} \}$.

        \case $v_1$ in $S_{56}$ and $v_2$ in $S_{26} \cup S_{16}$. Since $v_2$ is not complete to $K_1$ by definition, we find $F_2(5)$ induced by $\{ v_1$, $v_2$, $s_{12}$, $s_{24}$, $s_{45}$, $k_{1}$, $k_{2}$, $k_4$, $k_{5}$, $k_{6} \}$, where $k_6$ represents the third column of the gem and thus, $k_6$ is adjacent to $v_1$ and nonadjacent to $v_2$.

        \case $v_1$ in $S_{16}$ and $v_2$ in $S_{64} \cup S_{65}$. We find $F_1(5)$ induced by $\{ v_1$, $v_2$, $s_{12}$, $s_{24}$, $s_{45}$, $k_1$, $k_{2}$, $k_{4}$, $k_{5} \}$ since $v_1$ is not complete to $K_1$ and $v_2$ is not complete to $K_5$ by definition.
    \end{mycases}
    Hence $\mathbb B_b$ is nested.
    Suppose that $\mathbb B_{b-r}$ is not nested. Every row in $\mathbb B_{b-r}$ represents a vertex in $S_{62}$ or $S_{63}$. It follows from Claim~\ref{claim:4tent_nogeminnestedmatrices} that it suffices to consider a $0$-gem induced by two vertices $v_1$ in $S_{62}$ and $v_2$ in $S_{63}$. Let $k_{61}$, $k_{62}$ and $k_3$ be the vertices represented by the columns of the gem. We find $F_0$ induced by $\{ v_1$, $v_2$, $s_{24}$, $k_{61}$, $k_{62}$, $k_{2}$, $k_3$, $k_4 \}$.

    Finally, suppose that $\mathbb B_{r-b}$ is not nested. Every row in $\mathbb B_{r-b}$ represents a vertex in $S_{61}$, $S_{64}$ or $S_{65}$. Let $v_1$ and $v_2$ in $S_{61} \cup S_{64} \cup S_{65}$ be two vertices whose rows induce a $0$-gem.
    If $v_1$ in $S_{61}$ and $v_2$ in $S_{64} \cup S_{65}$, then we find $F_2(5)$ induced by $\{ v_1$, $v_2$, $s_{12}$, $s_{24}$, $s_{45}$, $k_{61}$, $k_{1}$, $k_{2}$, $k_{4}$, $k_5 \}$.
    Similarly, we find $F_0$ induced by $\{ v_1$, $v_2$, $s_{45}$, $k_{61}$, $k_{2}$, $k_{4}$, $k_{51}$, $k_{52} \}$ if $v_1$ in $S_{64}$ and $v_2$ in $S_{65}$, and this finishes the proof.
\end{proof}

\begin{teo} \label{teo:finalteo_4tent}
    Let $G=(K,S)$ be a split graph containing an induced $4$-tent. Then, the following are equivalent:
\begin{enumerate}
\item $G$ is circle;
\item $G$ is \fsc-free;
\item $\mathbb B_1,\mathbb B_2,\ldots,\mathbb B_6$ are $2$-nested and $\mathbb B_r$, $\mathbb B_b$, $\mathbb B_{r-b}$ and $\mathbb B_{b-r}$ are nested.
\end{enumerate}
\end{teo}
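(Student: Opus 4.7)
The implication $(1)\Rightarrow(2)$ is immediate from Lemma~\ref{lem:nocircle}. For $(2)\Rightarrow(3)$, the $2$-nestedness of $\mathbb{B}_i$ for $i=1,\ldots,5$ follows from Lemma~\ref{lema:equiv_circle_2nested_4tent_sinLR}; the $2$-nestedness of $\mathbb{B}_6$ follows from the combination of Lemmas~\ref{lema:4tent_coloreoLRvacias} (which ensures the pre-coloring of the empty LR-rows corresponding to $S_{[15]}$ is well defined) and~\ref{lema:B6_2nested_4tent}; and the nestedness of the four auxiliary matrices is Lemma~\ref{lema:matrices_union_son_nested_4tent}.

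For $(3)\Rightarrow(1)$ the plan is to adapt the constructive argument of Theorem~\ref{teo:finalteo_tent} to accommodate the LR-rows of $\mathbb{B}_6$. Fix suitable LR-orderings $\Pi_i$ and total block bi-colorings $\chi_i$ of each $\mathbb{B}_i$, and let $\Pi$ be the concatenation $\Pi_1\Pi_2\cdots\Pi_6$. Split the circle into twelve arcs $K_1^-,K_2^-,\ldots,K_6^-,K_1^+,\ldots,K_6^+$ in cyclic order, separated by anchor points for the chords $s_{12},s_{24},s_{45}$, and for each $k\in K_i$ place a chord joining $K_i^-$ with $K_i^+$ at the position prescribed by $\Pi_i$. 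For a vertex $s\in S_{ij}$ with $i\le j$ and $(i,j)\neq(1,6)$, the row of $s$ in $A(S,K)$ has all its $1$'s consecutive in $\Pi$, so its chord is placed with endpoints read off from $\Pi_i$ and $\Pi_j$, and the choice between $K_t^+$ and $K_t^-$ at each endpoint $t\in\{i,j\}$ is determined by $\chi_i(s)$ and $\chi_j(s)$, exactly as in the tent case. For $s\in S_{ij}$ with $i>j$, the row has two disjoint blocks and the chord joins a point in the $K_i$-region with a point in the $K_j$-region, with the sides again prescribed by $\chi_i(s)$ and $\chi_j(s)$.

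The new feature, with respect to the tent case, is the family of LR-rows of $\mathbb{B}_6$: an LR-row corresponds either to a vertex of $S_{[16}$ or (as an empty LR-row) to a vertex of $S_{[15]}$. For $s\in S_{[16}$, the suitability of $\Pi_6$ guarantees that the L-block and R-block of its row do not interact with any block of a non-LR row, and condition~(\ref{item:2nested1}) of Definition~\ref{def:2-nested} forces these two blocks to receive distinct colors; one endpoint is therefore placed in $K_6^-$ and the other in $K_6^+$, producing a chord that crosses the entire $K_1,\ldots,K_5$ region and meets precisely the chords of $K_6$ corresponding to the neighbors of $s$. The empty LR-rows for $S_{[15]}$ are drawn analogously but kept disjoint from the $K_6$ region; the one-color assignment that Lemma~\ref{lema:4tent_coloreoLRvacias} makes possible is what renders this placement consistent.

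It remains to verify, along the lines of Claim~\ref{claim:tent_placement}, that the resulting family of chords has exactly the intersection pattern of $G$. Within each $S_{ij}$ the chords are pairwise nested thanks to the absence of $D_0$ in the associated matrices; across different $S_{ij}$'s, the nestedness of $\mathbb{B}_r,\mathbb{B}_b,\mathbb{B}_{r-b},\mathbb{B}_{b-r}$ supplies the required containment and disjointness relations. The main obstacle is the interaction between the chords of $S_{[16}$ (and $S_{[15]}$) and those of $S_{6j}$ for $j<6$: the former must simultaneously cross all of $K_2,\ldots,K_5$ and avoid crossing the chords of $S_{6j}$ that correspond to non-neighbors. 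This is controlled by condition~(\ref{item:2nested4}) of Definition~\ref{def:2-nested} (the L-block of an LR-row is disjoint from every R-block, and vice versa) together with condition~(\ref{item:2nested9}) (two overlapping LR-rows have compatibly colored L-/R-blocks), and by the pre-coloring provided by Lemma~\ref{lema:4tent_coloreoLRvacias}. Once this case analysis is carried out, the constructed family of chords is a circle model of $G$, establishing $(3)\Rightarrow(1)$.
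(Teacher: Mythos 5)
Your proposal is correct and follows essentially the same route as the paper: $(1)\Rightarrow(2)$ from Lemma~\ref{lem:nocircle}, $(2)\Rightarrow(3)$ from Lemmas~\ref{lema:equiv_circle_2nested_4tent_sinLR}, \ref{lema:4tent_coloreoLRvacias}, \ref{lema:B6_2nested_4tent} and~\ref{lema:matrices_union_son_nested_4tent}, and $(3)\Rightarrow(1)$ by the same constructive circle model obtained by concatenating the suitable LR-orderings, placing chords according to the block bi-colorings, and treating the LR-rows of $\mathbb B_6$ (the vertices of $S_{[16}$ and the empty LR-rows of $S_{[15]}$) via the distinct coloring of their L- and R-blocks and the pre-coloring of Lemma~\ref{lema:4tent_coloreoLRvacias}. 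The closing case analysis you defer is carried out in the paper at a comparable, guideline level of detail, using the same ingredients you cite (suitability, conditions of Definition~\ref{def:2-nested}, and the nestedness of $\mathbb B_r$, $\mathbb B_b$, $\mathbb B_{r-b}$, $\mathbb B_{b-r}$).
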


\begin{proof}
It is not hard to see that $(1) \Rightarrow (2)$, and that
$(2) \Rightarrow (3)$ is a consequence of the previous lemmas. We will show $(3) \Rightarrow (1)$. Suppose that each of the matrices $\mathbb B_1,\mathbb B_2,\ldots,\mathbb B_6$ is $2$-nested and the matrices $\mathbb B_r$, $\mathbb B_b$, $\mathbb B_{r-b}$ or $\mathbb B_{b-r}$ are nested.
Let $\Pi$ be the ordering for all the vertices in $K$ obtained by concatenating each suitable LR-ordering $\Pi_i$ for $i \in \{1, 2,\ldots, 6\}$.

\begin{figure}[h!]
    \centering
     \includegraphics[scale=.45]{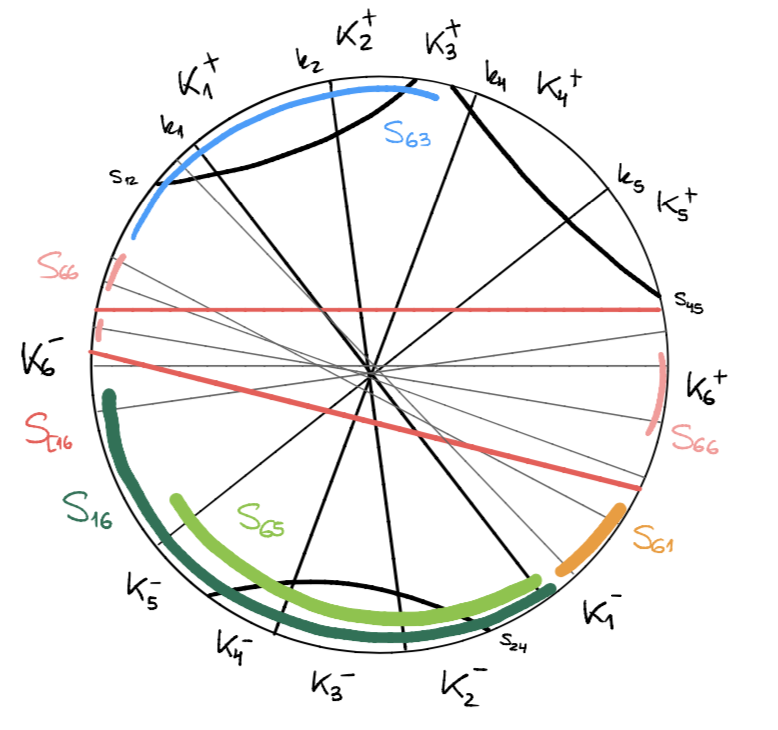}
    \caption{Sketch model of $G$ with some of the chords associated to the rows in $\mathbb{B}_6$.}
     \label{fig:4tent_model}
\end{figure}

Consider the circle divided into twelve pieces as in Figure~\ref{fig:4tent_model}. For each $i\in\{1,2,\ldots,6\}$ and for each vertex $k_i \in K_i$ we place a chord having one endpoint in $K_i^+$ and the other endpoint in $K_i^-$, in such a way that the ordering of the endpoints of the chords in $K_i^+$ and $K_i^-$ is $\Pi_i$.

Let us see how to place the chords for each subset $S_{ij}$ of $S$. First, some useful remarks.

\begin{remark} The following assertions hold:
    \begin{itemize}
        \item By Lemma~\ref{lema:matrices_union_son_nested_4tent}, all the vertices in $S_{ij}$ are nested, for every pair $i, j = \{1, 2, \ldots, 6\}$, $i \neq j$. This follows since any two vertices in $S_{ij}$ are nondisjoint. Moreover, in each $S_{ij}$, all the vertices are colored with either one color (the same), or they are colored red-blue or blue-red. Hence, these vertices are represented by rows in the matrices $\mathbb B_{r-b}$ and $\mathbb B_{b-r}$ and therefore they must be nested since each of these matrices is a nested matrix.
        \item It follows from the previous and Claim~\ref{claim:tent_0} that, if $i \geq k$ and $j\leq l$, then every vertex in $S_{ij}$ is nested in every vertex of $S_{kl}$.
        \item Also as a consequence of the previous and Lemma~\ref{lema:matrices_union_son_nested_4tent}, if we consider only those vertices labeled with the same letter in some $\mathbb B_i$, then there is a total ordering of these vertices. This follows from the fact that, if two vertices $v_1$ and $v_2$ are labeled with the same letter in some $\mathbb B_i$, since $\mathbb B_i$ is --in particular-- admissible, then they are nested in $K_i$. Moreover, if $v_1$ and $v_2$ are labeled with L in $\mathbb B_i$, then they are either complete to $K_{i-1}$ or labeled with R in $\mathbb B_{i-1}$. Thus, there is an index $j_l$ such that $v_i$ is labeled with R  in $\mathbb B_{j_l}$, for $l=1,2$. Therefore, we can find in such a way a total ordering of all these vertices.
        \item If $v_1$ and $v_2$ are labeled with distinct letters in some $\mathbb B_i$, then they are either disjoint in $K_i$ (if they are colored with the same color) or $N_{K_i}(v_1) \cup N_{K_i}(v_2) = K_i$  (if they are nondisjoint and colored with distinct colors), for there are no $D_1$ or $D_2$ in $\mathbb B_i$ for every $i \in \{1, 2, \ldots, 6\}$.
    \end{itemize}
\end{remark}

Notice that, when we define the matrix $\mathbb B_6$, we pre-color every vertex in $S_{[15]}$ with the same color. Since we are assuming $\mathbb B_6$ is $2$-nested and thus in particular is admissible, the subset $S_{[15]} \neq \emptyset$ if and only if the vertices represented in $\mathbb B_6$ are either all vertices in $S_{66} \cup S_{[16}$ and vertices that are represented by labeled rows $r$, all of them colored with the same color and labeled with the same letter L or R.
Moreover, since $\mathbb B_6$ is admissible, the sets $N_{K_6}(S_{6i}) \cap N_{K_6}(S_{j6})$ are empty, for $i = 1, 4, 5$ and $j =3,4$. The same holds for the sets $N_{K_6}(S_{6i}) \cap N_{K_6}(S_{j6})$, for $i = 2, 3$ and $j =2, 5, 1$.

If $S_{[16}= \emptyset$, then the placing of the chords that represent vertices with one or both endpoints in $K_6$ is very similar as in the tent case. Suppose that $S_{[16} \neq \emptyset$.
Before proceeding with the guidelines to draw the circle model, let us see a few remarks on the relationship between the vertices in $S_{ij}$ with either $i=6$ or $j=6$ and those vertices in $S_{[16}$, which follow from the claims stated throughout the proof of Lemma~\ref{lema:B6_2nested_4tent}:

\begin{remark} \label{obs:4tent_guidelines_model}
Let $G$ be a circle graph that contains no induced tent but contains an induced $4$-tent, and such that each matrix $\mathbb B_i$ is $2$-nested for every $i=1,2, \ldots, 6$. Then, all of the following statements hold:
\begin{itemize}
\item If $S_{26} \cup S_{16} \neq \emptyset$, then $S_{64} \cup S_{65} = \emptyset$, and viceversa.
\item If $v_1$ in $S_{36}\cup S_{46}$ and $v_2$ in $S_{61} \cup S_{64} \cup S_{65}$, then for every vertex $w$ in $S_{[16}$ either $N_{K_6}(v_1) \subseteq N_{K_6}(w)$ or $N_{K_6}(v_2) \subseteq N_{K_6}(w)$. Moreover, $v_1$ and $v_2$ are disjoint in $K_6$. The same holds for $v_1$ in $S_{56} \cup S_{26} \cup S_{16}$ and $v_2$ in $S_{62} \cup S_{63}$.
\item If $v$ in $S_{56}$ and $w$ in $S_{65}$, then $N_{K_5}(v) \cap N_{K_5}(w) = \emptyset$. The same holds for $v$ in $S_{61}$ and $w$ in $S_{16}$.
\end{itemize}
\end{remark}

Let $v$ in $S_{ij} \neq S_{[16}$ and $w$ in $S_{[16}$, with either $i=6$ or $j=6$.
Suppose first that $i=j=6$. Since $\mathbb B_6$ is $2$-nested, the submatrix induced by the rows that represent $v$ and $w$ in $\mathbb B_6$ contains no monochromatic gems or monochromatic weak gems.
If instead $i<j$, since $\mathbb B_6$ is admissible, then the submatrix induced by the rows that represent $v$ and $w$ in $\mathbb B_6$ contains no monochromatic weak gem, and thus we can place the endpoint of $w$ corresponding to $K_6$ in the arc portion $K^+_6$ and the $K_6$ endpoint of $v$ in $K^-_6$, or viceversa.


Remember that, since we are considering a suitable LR-ordering, there is an L-row $m_L$ such that any L-row and every L-block of an LR-row are contained in $m_L$ and every R-row and R-block of an LR-row are contained in the complement of $m_L$. Moreover, since we have a block bi-coloring for $\mathbb B_6$, then for each LR-row one of its blocks is colored with red and the other is colored with blue.
Hence, for any LR-row, we can place one endpoint in the arc portion $K^+_6$ using the ordering given for the block that colored with red, and the other endpoint in the arc portion $K^-_6$ using the ordering given for the block that is colored with blue.
Notice that, if $\mathbb B_6$ is $2$-nested, then all the rows labeled with L (resp.\ R) and colored with the same color and those L-blocks (resp.\ R-blocks) of LR-rows are nested. In particular, the L-block (resp.\ R-block) of every LR-row contains all the L-blocks of those rows labeled with L (resp.\ R) that are colored with the same color.
Equivalently, let $r$ be an LR-row in $\mathbb B_6$ with its L-block $r_L$ colored with red and its R-block $r_R$ colored with blue, $r'$ be a row labeled with L and $r''$ be a row labeled with R. Hence, if $r_L$, $r'$ and $r''$ are colored with the same color, then $r$ contains $r'$ and $r \cap r'' = \emptyset$. This holds since we are considering a suitable LR-ordering and a total block bi-coloring of the matrix $\mathbb B_6$, thus it contains no $D_0$, $D_1$, $D_2$, $D_8$ or $D_9$.
Since every matrix $\mathbb B_r$, $\mathbb B_b$, $\mathbb B_{r-b}$ and $\mathbb B_{b-r}$ is nested, there is a total ordering for the nondisjoint rows in each of these matrices. In other words, there is a total ordering for all the rows that intersect that are colored with the same color, or with red-blue or with blue-red, respectively. Moreover, if $v$ and $w$ are two vertices in $S$ such that they both have rows representing them in one of these matrices --hence, they are colored with the same color or sequence of colors--, then either $v$ and $w$ are disjoint or nested.

Notice that there are no other conditions besides being disjoint or nested outside each of the following subsets: $S_{11}$, $S_{22}$, $S_{33}$, $S_{44}$, $S_{55}$, $S_{66}$.
For the subset $S_{12}$, we only need to consider if every vertex in $S_{12} \cup S_{11} \cup S_{22}$ are disjoint or nested. The same holds for the subsets $S_{24}$ and $S_{45}$, considering $S_{22} \cup S_{44}$ and $S_{44} \cup S_{55}$, respectively.
Since each matrix $\mathbb B_i$ is $2$-nested for every $i=1, 2, \ldots, 6$, if there are vertices in both $S_{23}$ and $S_{34}$, then they are disjoint in $K_3$. The same holds for vertices in $S_{62}$ and $S_{63}$, and $S_{61}$ and $S_{14} \cup S_{15} \cup S_{16}$.
This is in addition to every property seen in Remark~\ref{obs:4tent_guidelines_model}.

With this in mind, we give guidelines to build a circle model for $G$.
We place first the chords corresponding to every vertex in $K$, using the ordering $\Pi$.
For each subset $S_{ij}$, we order its vertices with the inclusion ordering of the neighborhoods in $K$ and the ordering $\Pi$. When placing the chords corresponding to the vertices of each subset, we do it from lowest to highest according to the previously stated ordering given for each subset.

First, we place those vertices in $S_{ii}$ for each $i=1,2, \ldots, 6$, considering the ordering given by inclusion. If $v$ in $S_{ii}$ and the row that represents $v$ is colored with red, then both endpoints of the chord corresponding to $v$ are placed in $K_i^+$. If instead the row is colored with blue, then both endpoints are placed in $K_i^-$.

For each $v$ in $S_{ij} \neq S_{[16}$, if the row that represents $v$ in $\mathbb B_i$ is colored with red (resp.\ blue) , then we place the endpoint corresponding to $K_i$ in the portion $K_i^+$ (resp.\ $K_i^-$) . We apply the same rule for the endpoint corresponding to $K_j$.

Let us consider now the vertices in $S_{[15]}$. If $G$ is circle, then all the rows in $\mathbb B_6$ are colored with the same color. Moreover, if $S_{[15]} \neq \emptyset$, then either every row labeled with L or R in $\mathbb B_6$ is labeled with L and colored with red or labeled with R and colored with blue, or viceversa.
Suppose first that every row labeled with L or R in $\mathbb B_6$ is labeled with L and colored with red or labeled with R and colored with blue. In that case, every row representing a vertex $v$ in $S_{[15]}$ is colored with blue, hence we place one endpoint of the chord corresponding to $v$ in $K_6^+$ and the other endpoint in $K_6^-$. In both cases, the endpoint of the chord corresponding to $v$ is the last chord of a vertex of $S$ that appears in the portion of $K_6^+$ and is the first chord of a vertex of $S$ that appears in the portion of $K_6^-$. We place all the vertices in $S_{[15]}$ in such a manner.
If instead every row labeled with L or R in $\mathbb B_6$ is labeled with L and colored with blue or labeled with R and colored with red, then every row representing a vertex in $S_{[15]}$ is colored with red. We place the endpoints of the chord in $K_6^-$ and $K_6^+$, as the last and first chord that appears in that portion, respectively.

Finally, let us consider now a vertex $v$ in $S_{[16}$. Here we have two possibilities: (1) the row that represents $v$ has only one block, (2) the row that represents the row that represents $v$ has two blocks of $1$'s.
Let us consider the first case. If the row that represents $v$ has only one block, then it is either an L-block or an R-block. Suppose that it is an L-block. If the row in $\mathbb B_6$ is colored with red, then we place one endpoint of the chord as the last of $K_6^-$ and the other endpoint in $K_6^+$, considering in this case the partial ordering given for every row that has an L-block colored with red in $\mathbb B_6$.
If instead the row in $\mathbb B_6$ is colored with blue, then we place one endpoint of the chord as the first of $K_6^+$ and the other endpoint in $K_6^-$, considering in this case the partial ordering given for every row that has an L-block colored with blue in $\mathbb B_6$. The placement is analogous for those LR-rows that are an R-block.
Suppose now that the row that represents $v$ has an L-block $v_L$ and an R-block $v_R$. If $v_L$ is colored with red, then $v_R$ is colored with blue. We place one endpoint of the chord in $K_6^+$, considering the partial ordering given by every row that has an L-block colored with red in $\mathbb B_6$, and the other endpoint of the chord in $K_6^-$, considering the partial ordering given by every row that has an R-block colored with blue in $\mathbb B_6$. The placement is analogous if $v_L$ is colored with blue.

This gives a circle model for the given split graph $G$.
\end{proof}

\subsection{Split circle graphs containing an induced co-4-tent} \label{subsec:circle4}

In this section we will address the last case of the proof of Theorem~\ref{teo:circle_split_caract}, which is the case where $G$ contains an induced co-$4$-tent. This case is mostly similar to the $4$-tent case, with two particular difference.
First of all, the co-$4$-tent is not a prime graph. This implies that there is more than one possible circle model for this graph. Moreover, if a non-circle graph $G$ is also not a prime graph, then for any split decomposition one of its factors is a non-circle graph. This follows from the fact that circle graphs are closed under split decomposition~\cite{Bou-circle}. Hence, a minimally non-circle graph is necessarily a prime graph. However, this problem was solved in Section~\ref{section:partitions}, more precisely in Remark~\ref{obs:co4tent_1}. We assume $K_2 \neq \emptyset$ and $K_4 \neq \emptyset$ in order to work with a minimally non-circle graph throughout the proof.
Second of all, even when considering these extra hypothesis to work with a prime graph, we have to divide the proof in $5$ cases. However, we will see that one of these $5$ cases is indeed more general than the other four.



Let $G=(K,S)$ and $H$ as in Section~\ref{subsec:co4tent_partition}. For each $i\in\{1,2,\ldots,8\}$, let $\mathbb C_i$ be a $(0,1)$-matrix having one row for each vertex $s\in S$ such that $s$ belongs to $S_{ij}$ or $S_{ji}$ for some $j\in\{1,2,\ldots,8\}$ and one column for each vertex $k\in K_i$ and such that such that the entry corresponding to row $s$ and column $k$ is $1$ if and only if $s$ is adjacent to $k$ in $G$.
For each $j\in\{1,2,\ldots,8\}-\{i\}$, we label those rows corresponding to vertices of $S_{ji}$ with L and those corresponding to vertices of $S_{ij}$ with R, with the exception of those rows in $\mathbb{C}_7$ that represent vertices in $S_{76]}$ and $S_{[86]}$which are labeled with LR.
Notice that we have considered those vertices that are complete to $K_1, \ldots, K_5$ and $K_8$ and are also adjacent to $K_6$ and $K_7$ divided into two distinct subsets. Thus, $S_{76}$ are those vertices that are not complete to $K_6$ and therefore the corresponding rows are labeled with R in $\mathbb C_6$ and with L in $\mathbb C_7$.
As in the $4$-tent case, there are LR-rows in $\mathbb C_7$. Moreover, there may be some empty LR-rows, which represent those vertices of $S$ that are complete to $K_1, \ldots, K_6$ and $K_8$ and are anticomplete to $K_7$. These vertices are all pre-colored with the same color, and that color is assigned depending on whether $S_{74} \cup S_{75} \cup S_{76} \neq \emptyset$ or $S_{17} \cup S_{27} \neq\emptyset$.
We color some of the remaining rows of $\mathbb C_i$ as we did in the previous sections, to denote in which portion of the circle model the chords have to be drawn.
In order to characterize the forbidden induced subgraphs of $G$ and using an argument of symmetry, we will only analyse the properties of the matrices $\mathbb C_1$, $\mathbb C_2$, $\mathbb C_3$, $\mathbb C_6$ and $\mathbb C_7$, since the matrices $\mathbb C_i$ $i=4,5,8$ are symmetric to $\mathbb C_2$, $\mathbb C_3$ and $\mathbb C_6$, respectively.


We consider $5$ distinct cases, according to whether the subsets $K_6$, $K_7$ and $K_8$ are empty or not, for the matrices we need to define may be different in each case.
Using the symmetry of the subclasses $K_6$ and $K_8$, the cases we need to study are the following: (1) $K_6, K_7, K_8 \neq \emptyset$, (2) $K_6, K_7 \neq \emptyset$, $K_8 = \emptyset$, (3) $K_6, K_8 \neq \emptyset$, $K_7 = \emptyset$, (4) $K_6 \neq \emptyset$, $K_7, K_8 = \emptyset$, (5) $K_7 \neq \emptyset$, $K_6, K_8 = \emptyset$.
We have the following lemma. For more details on its proof, see~\cite{P20}.

\begin{lema}
Let $\mathbb{C}^{j}_1, \ldots, \mathbb{C}^{j}_8$ be the matrices defined for each of the previously stated cases $j =1, \ldots, 5$. Then, for each $i \in \{1, \ldots, 8\}$, the matrix $\mathbb{C}^{j}_i$ is a submatrix of $\mathbb{C}^{1}_i$ for every $j =2,3,4,5$. 
\end{lema}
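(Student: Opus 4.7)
The strategy is a definitional verification: for each specialized case $j \in \{2,3,4,5\}$, one shows that the matrix $\mathbb{C}^{j}_i$ is obtained from $\mathbb{C}^{1}_i$ simply by deleting the rows that correspond to vertices of $S$ which become vacuously empty once one or more of $K_6, K_7, K_8$ is declared empty. First I would fix $i \in \{1,\ldots,8\}$ and observe that the column set of $\mathbb{C}^{j}_i$ is $K_i$, exactly as in $\mathbb{C}^{1}_i$; the cases in which $K_i$ itself is empty (e.g.\ $i=8$ in cases $2,4$, or $i=7$ in cases $3,4$, or $i=6$ in case $5$) are trivial since $\mathbb{C}^{j}_i$ then has no columns and is vacuously a submatrix of $\mathbb{C}^{1}_i$.

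Next, I would show that every row of $\mathbb{C}^{j}_i$ appears as a row of $\mathbb{C}^{1}_i$ with the same $\{0,1\}$-entries, the same label in $\{L,R,LR\}$, and the same coloring. The rows of each $\mathbb{C}^{\bullet}_i$ are indexed by $\bigcup_{\ell\neq i}(S_{i\ell}\cup S_{\ell i})$, and the defining conditions of every subset $S_{i\ell}$ (or $S_{[i\ell}$, $S_{i\ell]}$, $S_{[i\ell]}$) require adjacency to at least one vertex in $K_\ell$ (or $K_i$). Consequently, if $K_\ell = \emptyset$ in case $j$ then every such subset $S_{i\ell}\cup S_{\ell i}$ is automatically empty, and the rows absent from $\mathbb{C}^{j}_i$ are precisely those that would have represented vertices of these vanished subsets. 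The entries, labels, and colorings of the rows that do survive are determined by the adjacency with the remaining partition classes of $K$ and with $V(H)\cap S$; since both are unchanged in passing from case $1$ to case $j$, the rows appear in $\mathbb{C}^{1}_i$ in exactly the same form.

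The step that requires care is the handling of the LR-rows of $\mathbb{C}_7$ in case $j=5$ (where $K_6=K_8=\emptyset$) and in cases $2,3$. Their labels come from the subsets $S_{76]}$ and $S_{[86]}$, whose definitions involve completeness to $K_6$ and $K_8$; when either of these classes is empty the corresponding completeness requirement becomes vacuous, so what were formally two (or more) LR-subsets collapse into one. The point to check is that the collapsed subset still corresponds to an LR-row of $\mathbb{C}^{1}_7$ with an identical adjacency pattern on $K_7$ and the same precoloring rule for empty LR-rows (which depends only on whether $S_{74}\cup S_{75}\cup S_{76}$ or $S_{17}\cup S_{27}$ is nonempty, a condition that is unaffected by which of $K_6,K_8$ is empty). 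A direct inspection of the four lists in the statement of Lemma~\ref{lema:co4tent_particion} shows that no new row types appear in any case $j \in \{2,3,4,5\}$, only fewer. I would expect this bookkeeping of the LR-rows to be the main obstacle, and I would organize it as a short table listing, for each $j$, which subsets of $S$ vanish, confirming in each instance that the remaining rows match the corresponding rows of $\mathbb{C}^{1}_i$.
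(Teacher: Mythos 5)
The paper does not actually prove this lemma in the text --- it defers entirely to the thesis~\cite{P20} --- so there is no in-paper argument to compare against. Your definitional verification (columns always indexed by $K_i$; every row block of $\mathbb{C}^{j}_i$ is indexed by some $S_{i\ell}\cup S_{\ell i}$ whose defining adjacency conditions, labels and colours do not depend on which of $K_6,K_7,K_8$ are empty, while the blocks that vanish are exactly those requiring adjacency to an emptied $K_\ell$) is the natural and, as far as one can tell from the statement, the intended argument; your identification of the LR-rows of $\mathbb{C}_7$ as the only delicate point is also right. The one spot where I would add a sentence is case $(3)$ and $(4)$, where $K_7=\emptyset$: there you should note explicitly where the vertices that in case $(1)$ would populate $S_{76]}$ or $S_{[86]}$ (and hence the LR-rows of $\mathbb{C}_7$) are accounted for, since they can no longer belong to any $S_{7\ell}\cup S_{\ell 7}$; this is a matter of checking the case-$(3)$/$(4)$ definitions rather than a flaw in your reasoning.
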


In (1), the subsets are given as described in Table~\ref{fig:tabla_co4tent_1}, and thus the matrices $\mathbb C_i$ as are follows:
\[ \mathbb C_1 = \scriptsize{ \bordermatrix{ & K_1\cr
                S_{12}\ \textbf{L} & \cdots \cr
                 S_{11}\            & \cdots \cr
                S_{16]}\ \textbf{L} & \cdots \cr
                S_{17}\ \textbf{L} & \cdots }\
                \begin{matrix}
                \textcolor{red}{\bullet} \\ \\ \textcolor{blue}{\bullet} \\ \textcolor{blue}{\bullet}
                \end{matrix} } \qquad
   \mathbb C_2 = \scriptsize{ \bordermatrix{ & K_2\cr
                S_{12}\ \textbf{R} & \cdots \cr
                S_{22}\            & \cdots \cr
                S_{23}\ \textbf{L} & \cdots \cr
                S_{25]}\ \textbf{L} & \cdots \cr
                S_{26}\ \textbf{L} & \cdots }\
                \begin{matrix}
                \textcolor{red}{\bullet} \\ \\ \textcolor{blue}{\bullet} \\ \textcolor{blue}{\bullet} \\ \textcolor{blue}{\bullet}
                \end{matrix} }\qquad
   \mathbb C_3 = \scriptsize{ \bordermatrix{ & K_3\cr
                S_{13}\ \textbf{R} & \cdots \cr
                S_{34}\ \textbf{L} & \cdots \cr
                S_{33}\            & \cdots \cr
                S_{35}\ \textbf{L} & \cdots \cr
                S_{36}\ \textbf{L} & \cdots \cr
                S_{23}\ \textbf{R} & \cdots }\
                \begin{matrix}
                \textcolor{red}{\bullet} \\  \textcolor{red}{\bullet} \\ \\ \textcolor{blue}{\bullet} \\ \textcolor{blue}{\bullet} \\ \textcolor{blue}{\bullet}
                \end{matrix} }  \]
\[   \mathbb C_6 = \scriptsize{ \bordermatrix{ & K_6\cr
                S_{66}\            & \cdots \cr
                S_{26}\ \textbf{R} & \cdots \cr
                S_{36}\ \textbf{R} & \cdots \cr
                S_{[46}\ \textbf{R} & \cdots \cr
                S_{76}\ \textbf{R} & \cdots \cr
                S_{[86}\ \textbf{R} & \cdots }\
                \begin{matrix}
                \\ \\  \textcolor{blue}{\bullet} \\  \textcolor{blue}{\bullet} \\ \textcolor{blue}{\bullet} \\ \textcolor{red}{\bullet} \\ \textcolor{red}{\bullet} \\ \\
                \end{matrix} } \qquad
    \mathbb C_7 = \scriptsize{ \bordermatrix{ & K_7\cr
                S_{17}\ \textbf{R} & \cdots \cr
                S_{[27}\ \textbf{R} & \cdots \cr
                S_{77}\            & \cdots \cr
                S_{74]}\ \textbf{L} & \cdots \cr
                S_{75}\ \textbf{L} & \cdots \cr
                S_{76}\ \textbf{L} & \cdots \cr
                S_{87}\ \textbf{R} & \cdots \cr
                S_{[86]}\ \textbf{LR} & \cdots \cr
                S_{76]}\ \textbf{LR} & \cdots }\
                \begin{matrix}
                \textcolor{blue}{\bullet} \\  \textcolor{blue}{\bullet} \\ \\ \textcolor{blue}{\bullet} \\ \textcolor{blue}{\bullet} \\ \textcolor{blue}{\bullet}  \\ \textcolor{blue}{\bullet} \\ \\ \\
                \end{matrix} } \]

Let us suppose that $K_6, K_7, K_8 \neq \emptyset$.
The Claims in Section~\ref{section:partitions} and the following prime circle model allow us to assume that some subsets of $S$ are empty.
We denote by $S_{87}$ the set of vertices in $S$ that are complete to $K_1, \ldots, K_6$, are adjacent to $K_7$ and $K_8$ but are not complete to $K_8$, and analogously $S_{76}$ is the set of vertices in $S$ that are complete to $K_1, \ldots, K_5, K_8$, are adjacent to $K_6$ and $K_7$ but are not complete to $K_6$. Hence, $S_{76]}$ denotes the vertices of $S$ that are complete to $K_1, \ldots, K_6, K_8$ and are adjacent to $K_7$.

\begin{figure}[h!]
\centering
\includegraphics[scale=.45]{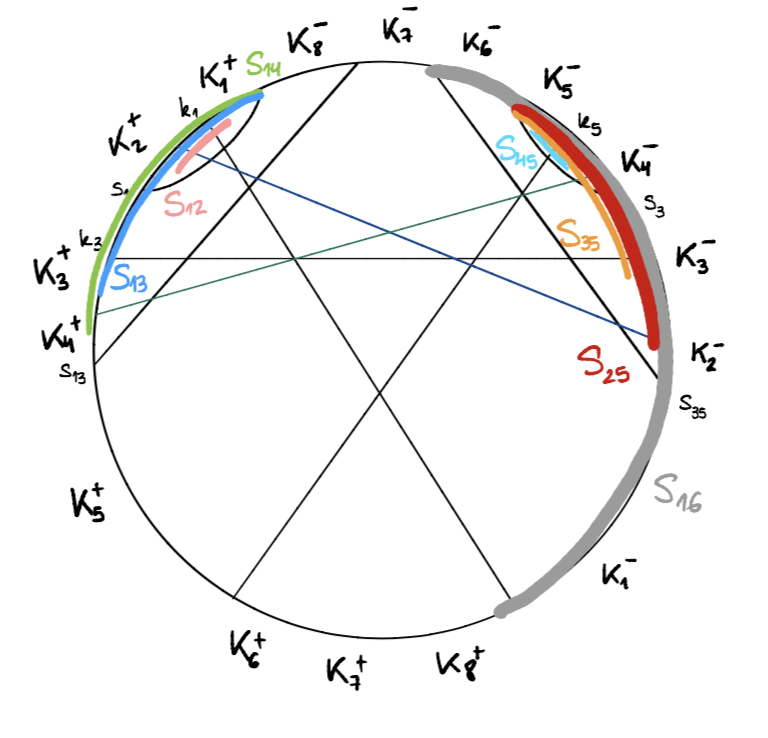}
      \caption{A circle model for the co-$4$-tent graph.} \label{fig:co4tent_model}
\end{figure}

We state some results analogous as the ones seen in the previous sections for a graph that contains an induced co-$4$-tent and contains no $4$-tent or tent. These results and its proofs are analogous to Lemmas~\ref{lema:equiv_circle_2nested_4tent_sinLR},~\ref{lema:B6_2nested_4tent} and Theorem~\ref{teo:finalteo_4tent}, thus we will just state the results (for the complete proof of each of these results, see~\cite{P20}).|

\begin{lema}[\cite{P20}] \label{lema:equiv_circle_2nested_co4tent}
    If $\mathbb C_1$, $\mathbb C_2, \ldots, \mathbb C_8$ are not $2$-nested, then $G$ contains one of the graphs listed in Figure~\ref{fig:forb_graphs} as induced subgraphs.
\end{lema}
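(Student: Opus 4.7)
The plan is to follow the same template used for Lemma~\ref{lema:no2nested_prohibido_tent} in the tent case and Lemma~\ref{lema:B6_2nested_4tent} in the $4$-tent case. First, I would exploit the symmetries of the partition of $K$ described in Section~\ref{subsec:co4tent_partition}: the subsets $K_1$ and $K_8$ play symmetric roles (via the swap $s_1\leftrightarrow s_5$, $s_{13}\leftrightarrow s_{35}$), the pair $K_2,K_4$ and the pair $K_3,K_5$ are also symmetric, and $K_6,K_8$ are interchangeable. This reduces the analysis to the matrices $\mathbb C_1,\mathbb C_2,\mathbb C_3,\mathbb C_6$ and $\mathbb C_7$. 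Moreover, since the five scenarios depending on emptiness of $K_6,K_7,K_8$ all yield matrices that are submatrices of those of Case (1) (as stated in the preceding Lemma), it suffices to treat Case~(1); in the other cases every forbidden subconfiguration that arises is also present in the larger matrices and the same induced subgraph of $G$ is produced.

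Next, assuming some $\mathbb C_i$ is not $2$-nested, I would apply Theorem~\ref{teo:2-nested_caract_bymatrices} and split into four global cases: (a) $\mathbb C_i$ is not admissible, i.e.\ contains some matrix of $\mathcal{D}\cup\mathcal{S}\cup\mathcal{P}$; (b) $\mathbb C_i$ is admissible but $\mathbb C_i^*$ contains a Tucker matrix or a matrix of $\mathcal{M}$; (c) $\mathbb C_i$ contains $M_0$, $M_{II}(4)$, $M_V$, $S_0(k)$ or some matrix in $\mathcal{F}$; (d) all previous configurations are absent but $\mathbb C_i$ still contains a monochromatic gem, monochromatic weak gem or badly coloured doubly weak gem, equivalently (by Lemma~\ref{lema:B_ext_2-nested}) the partial bi-colouring cannot be extended. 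Case (c) is immediate because the homonymous graphs are among those of Figure~\ref{fig:forb_graphs}. For the matrices $\mathbb C_1,\mathbb C_2,\mathbb C_3$ (which have no LR-rows) the argument is a direct transcription of the tent analysis: each subconfiguration $D_0,D_1,D_2,S_2(k),S_3(k)$ is exhibited as an induced tent${}\vee K_1$, $3$-sun with center, $4$-sun, $M_{II}(4)$, $M_{III}(3)$, $M_{III}(4)$, $F_0$, $F_1(k)$ or $F_2(k)$ by selecting the appropriate representatives in $V(T)$ and vertices $k_i^*$ guaranteed by Claim~\ref{claim:tent_0} (valid here as well).

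The hard part will be $\mathbb C_6$ and, especially, $\mathbb C_7$, because $\mathbb C_7$ contains LR-rows (the subsets $S_{[86]}$ and $S_{76]}$) and empty LR-rows (those in $S_{[86]}$ when present), exactly as $\mathbb B_6$ did in the $4$-tent case. I would first prove an analogue of Lemma~\ref{lema:4tent_coloreoLRvacias}: if $S_{[86]}\neq\emptyset$ and two non-LR labeled rows of $\mathbb C_7$ with incompatible colour/letter combinations coexist, then the vertex of $S_{[86]}$ together with these two and appropriate witnesses in $K_1,\ldots,K_5,K_6,K_8$ produces an induced $F_1(5)$, a $3$-sun with center or a $4$-sun. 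This justifies the pre-colouring of every empty LR-row with a single colour. Then for $\mathbb C_7$, I would go through every subconfiguration of $\mathcal{D},\mathcal{S},\mathcal{P}$, every Tucker matrix in $\mathbb C_7^*$ and every matrix of $\mathcal{M}$ exactly as in the proof of Lemma~\ref{lema:B6_2nested_4tent}; for each subconfiguration involving one or two LR-rows the key observation is that an LR-row represents a vertex $w\in S_{76]}$ or $S_{[86]}$ that is complete to $K\setminus(K_6\cup K_7)$, so one can always invoke an external vertex $k_\ell\in K_\ell$ with $\ell\notin\{6,7\}$ to upgrade a small sun to a $k$-sun or a tent with center. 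The remark analogous to Remark~\ref{obs:4tent_modelo1} will fix incompatible placements between $S_{74]}\cup S_{75}\cup S_{76}$ and $S_{17}\cup S_{[27}$.

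Finally, for the non-extendibility case (d), I would mimic the last part of the proof of Lemma~\ref{lema:B6_2nested_4tent}: after fixing a suitable LR-ordering (which exists by Theorem~\ref{teo:hay_suitable_ordering} since Cases (a)–(c) were discarded), a non-extendible pre-colouring produces either a minimal odd sequence of chained overlapping rows forcing the same colour on both blocks of some LR-row, or two forcing sequences whose endpoints are pre-coloured labelled rows. In the first case the sequence yields $S_1(k)$, $S_7(k)$ or $S_8(2j)$; in the second, combining the two sequences with witnesses $k_2\in K_2$, $k_4\in K_4$ and, when needed, with $s_{13},s_{35},s_1,s_5$, one recovers an induced $F_0$, $F_1(k)$ or $F_2(k)$. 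The main technical obstacle throughout is bookkeeping: each subconfiguration must be matched with a single element of $\mathcal{F}_{sc}$ in a way that avoids spurious edges to $V(T)\cap S$, and for $\mathbb C_7$ the interplay between the four kinds of LR-rows (colour red/blue, block L/R, empty/non-empty) multiplies the subcases, so I would organize the proof by reducing as many cases as possible to the $4$-tent proof by substituting $K_7$ for $K_6$ and $S_{[86]}$ for $S_{[15]}$.
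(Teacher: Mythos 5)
Your proposal follows essentially the same route as the paper, which itself states that the proof of this lemma is analogous to Lemmas~\ref{lema:equiv_circle_2nested_4tent_sinLR} and~\ref{lema:B6_2nested_4tent} and defers the details to~\cite{P20}: reduce by symmetry to $\mathbb C_1,\mathbb C_2,\mathbb C_3,\mathbb C_6,\mathbb C_7$, dispose of the LR-row-free matrices exactly as in the tent case, and treat $\mathbb C_7$ (with its empty and non-empty LR-rows, the pre-colouring lemma for $S_{[86]}$, and the forcing-sequence analysis for non-extendible colourings) as $\mathbb B_6$ was treated in the $4$-tent case. One small slip worth noting: the swap $s_1\leftrightarrow s_5$, $s_{13}\leftrightarrow s_{35}$ identifies $K_1$ with $K_5$ and $K_8$ with $K_6$ (not $K_1$ with $K_8$), but this does not affect your final reduction, which agrees with the paper's.
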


We define the matrices $\mathbb{C}_r$, $\mathbb{C}_b$, $\mathbb{C}_{r-b}$ and $\mathbb C_{b-r}$ as in Section~\ref{subsubsec:4tent3}. Similarly, we have the following lemma for these matrices.

\begin{lema}[\cite{P20}] \label{lema:matrices_union_son_nested_co4tent}
    Suppose that $\mathbb C_i$ is $2$-nested for each $i =1,2 \ldots, 8$. If $\mathbb C_r$, $\mathbb C_b$, $\mathbb C_{r-b}$ or $\mathbb C_{b-r}$ are not nested, then $G$ contains $F_0$ as a minimal forbidden induced subgraph for the class of circle graphs.
\end{lema}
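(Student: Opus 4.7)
The plan is to mirror the argument used for Lemma~\ref{lema:matrices_union_son_nested_4tent} in the $4$-tent case, exploiting the partition structure summarized in Table~\ref{fig:tabla_co4tent_1} and the fact that each $\mathbb C_i$ is, in particular, admissible. Suppose one of $\mathbb C_r$, $\mathbb C_b$, $\mathbb C_{r-b}$, $\mathbb C_{b-r}$ is not nested. By Theorem~\ref{teo:nested_caract}, the offending matrix contains a $0$-gem; let $v_1 \in S_{ij}$ and $v_2 \in S_{lm}$ be the vertices whose rows induce it, and let $k_\alpha,k_\beta,k_\gamma$ be vertices of $K$ realizing the three columns of the $0$-gem (one of them being possibly the distinguished column $c_L$ or $c_R$ from Definition~\ref{def:matrices_A_por_colores}, which encodes completeness to $K_1$ or $K_5$).

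First I would dispose of the case $(i,j)=(l,m)$ by a claim analogous to Claim~\ref{claim:4tent_nogeminnestedmatrices}: if $v_1$ and $v_2$ lie in the same $S_{ij}$, then the two corresponding rows are labeled with the same letter and colored with the same color in $\mathbb C_i$, so a $0$-gem on their $K_i$-restriction produces $D_0$ in $\mathbb C_i$, contradicting the $2$-nestedness (hence admissibility) of $\mathbb C_i$. Hence $v_1$ and $v_2$ lie in \emph{distinct} parts, and one checks that a pair whose $K$-neighborhoods overlap unnestedly must sit in neighboring cells of Table~\ref{fig:tabla_co4tent_1} with matching color pattern. This restricts to a short list of candidate pairs in each of the four matrices; for instance, in $\mathbb C_r$ the only pairs that need to be examined are (roughly) $\{S_{36},S_{[46}\}$, $\{S_{36},S_{35]}\}$, $\{S_{[27},S_{17}\}$ and the analogous pairs across the tag column $c_L$ (vertices complete to $K_1$ paired with vertices in $S_{62]}\cup S_{63}$), and symmetrically for $\mathbb C_b$. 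For $\mathbb C_{r-b}$ and $\mathbb C_{b-r}$ the candidate pairs are even fewer, because only the subsets $S_{i6}$ with $i<6$ and $S_{i7}$ with $i<7$ (respectively the symmetric sets) contribute rows.

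For each such pair I would exhibit an induced $F_0$ explicitly. The recipe is the one used throughout Section~\ref{section:split_circle_graphs}: the two rows of the $0$-gem together with a suitable vertex $s_{ab} \in V(T)\cap S$ provide the three rows of $F_0$, and the five columns are taken from the three columns $k_\alpha,k_\beta,k_\gamma$ of the $0$-gem (the tag columns $c_L$, $c_R$ being replaced, when needed, by a vertex of $K_1$ or $K_5$ whose existence is forced by the label of the row) together with two more vertices of the type $k_1,k_2,k_3,k_4,k_5$ chosen from the partition parts that distinguish $v_1$ from $v_2$ in the cocycle of $H$. For example, for the pair $v_1\in S_{36}$, $v_2\in S_{[46}$ one takes $s_{24}$ together with $k_{31},k_{32},k_5,k_6$ and an auxiliary vertex, just as in the analogous step of the proof of Lemma~\ref{lema:matrices_union_son_nested_4tent}; the verification that this is indeed $F_0$ is routine from the definition of the sets $K_i$.

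The main obstacle is purely bookkeeping: the table for the co-$4$-tent has twice as many cells as the tent case, and each of the four colored matrices has its own list of admissible pairs to check, including the special interactions with the tag columns that encode completeness to $K_1$ or $K_5$. The structural content, however, is entirely captured by the $(D_0,D_1,D_2)$-freeness granted by the admissibility of the matrices $\mathbb C_i$ and by the partition properties of Lemma~\ref{lema:co4tent_particion} (notably that $S_{i6}$ and $S_{6j}$ with $i,j$ on opposite sides of $H$ have disjoint $K_6$-neighborhoods). Once every pair on the candidate list has been handled, no $0$-gem can survive, so $\mathbb C_r$, $\mathbb C_b$, $\mathbb C_{r-b}$, $\mathbb C_{b-r}$ are all nested, completing the proof. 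The full bookkeeping is given in~\cite{P20}.
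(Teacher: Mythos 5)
Your reduction of the same-subset case is where the argument genuinely breaks. You claim that if $v_1$ and $v_2$ lie in the same $S_{ij}$, a $0$-gem between their rows "produces $D_0$ in $\mathbb C_i$," contradicting admissibility. That implication is false: the three columns of a $0$-gem in $\mathbb C_r$ (or $\mathbb C_b$, etc.) may be spread over two different parts, say two columns in $K_i$ and one in $K_j$, or may involve a tag column. In that situation $v_1$ and $v_2$ can be perfectly nested inside $K_i$ and inside $K_j$ separately (so no $D_0$ appears in any single $\mathbb C_i$), yet overlap globally — e.g.\ $N_{K_i}(v_2)\subsetneq N_{K_i}(v_1)$ while $N_{K_j}(v_1)\subsetneq N_{K_j}(v_2)$. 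This configuration is not excluded by $2$-nestedness of the $\mathbb C_i$'s, and it is exactly one of the cases the lemma must handle: the correct outcome is not a contradiction but an explicit induced $F_0$ in $G$ (this is how the paper treats it, both in the tent-case claim inside Lemma~\ref{lema:matrices_union_son_nested} and in the co-$4$-tent case analysis, e.g.\ both $v_1,v_2\in S_{12}$ yields $F_0$ on $\{v_1,v_2,s_{35},k_{11},k_{12},k_{21},k_{22},k_4\}$). So the case you dismiss as impossible is possible, and your proof produces nothing for it.

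Beyond that, your proposal follows the same route as the paper (case analysis over the pairs of parts that can carry a $0$-gem, exhibiting $F_0$ each time), but it does not actually carry the route out: the candidate-pair list is only asserted "roughly" and is miscalibrated for the co-$4$-tent colouring — the pairs you list for $\mathbb C_r$ (e.g.\ $S_{36}$ with $S_{[46}$ or $S_{35]}$, and $S_{[27}$ with $S_{17}$) are blue rows here and belong to $\mathbb C_b$, while $\mathbb C_r$ is populated by $S_{12}$, $S_{[13}$, $S_{[14}$, $S_{34}$ and the $S_{7\cdot}$, $S_{8\cdot}$ subsets — and no single $F_0$ is verified explicitly; everything is deferred to~\cite{P20}. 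Since the whole content of the lemma is precisely that pair-by-pair verification (together with the same-subset case you mishandled), the proposal as written is a plan rather than a proof.
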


The main result of this section is the following theorem.

\begin{teo}[\cite{P20}] \label{teo:finalteo_co4tent}
    Let $G=(K,S)$ be a split graph containing an induced co-$4$-tent. Then, the following are equivalent:
\begin{enumerate}
\item $G$ is circle;
\item $G$ is \fsc-free;
\item $\mathbb C_1,\mathbb C_2,\ldots,\mathbb C_8$ are $2$-nested and $\mathbb C_r$, $\mathbb C_b$, $\mathbb C_{r-b}$ and $\mathbb C_{b-r}$ are nested.
\end{enumerate}
\end{teo}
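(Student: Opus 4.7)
The plan is to mirror the proof scheme already used for Theorems~\ref{teo:finalteo_tent} and~\ref{teo:finalteo_4tent}, adapted to the co-$4$-tent setting. The implications $(1)\Rightarrow(2)\Rightarrow(3)\Rightarrow(1)$ will be established in order. The direction $(1)\Rightarrow(2)$ is immediate from Lemma~\ref{lem:nocircle}, together with the fact that circle graphs are closed under induced subgraphs: if $G$ contained some $H\in\fsc$ as an induced subgraph, then $H$ would itself be circle, contradicting that lemma. The direction $(2)\Rightarrow(3)$ is the contrapositive of Lemmas~\ref{lema:equiv_circle_2nested_co4tent} and~\ref{lema:matrices_union_son_nested_co4tent}: if any $\mathbb C_i$ fails to be $2$-nested, or any of $\mathbb C_r,\mathbb C_b,\mathbb C_{r-b},\mathbb C_{b-r}$ fails to be nested, then one finds an induced subgraph belonging to \fsc, contradicting $(2)$.

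For $(3)\Rightarrow(1)$, first I would reduce to the prime case. By Remark~\ref{obs:co4tent_1}, if $K_2=\emptyset$ or $K_4=\emptyset$ then $G$ admits a split decomposition into two factors each of which contains fewer vertices than $G$. Since circle graphs are closed under split composition~\cite{Bou-circle}, it suffices to prove each factor circle; applying this reduction (and induction on $|V(G)|$) we may assume $K_2,K_4\neq\emptyset$, so that $G$ is a prime graph. Then I would split into the five cases listed in Section~\ref{subsec:circle4} according to which of $K_6,K_7,K_8$ are nonempty, noting that four of them reduce by restriction to the main case $K_6,K_7,K_8\neq\emptyset$, since each $\mathbb C^{j}_i$ is a submatrix of $\mathbb C^{1}_i$.

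For the general case, I would build a circle model along the lines of Figure~\ref{fig:co4tent_model}, as in the proofs of Theorems~\ref{teo:finalteo_tent} and~\ref{teo:finalteo_4tent}. Pick suitable LR-orderings $\Pi_1,\dots,\Pi_8$ and total block bi-colorings for the matrices $\mathbb C_1,\dots,\mathbb C_8$, and concatenate the $\Pi_i$ into a cyclic ordering $\Pi$ of $K$. Divide the circle into $16$ arcs $K_i^+,K_i^-$ for $i\in\{1,\ldots,8\}$, separated by the endpoints of the chords $s_{13},s_{35},s_1,s_5$ of the fixed co-$4$-tent, and place a chord for each $k\in K_i$ with one endpoint in $K_i^+$ and the other in $K_i^-$, in the order prescribed by $\Pi_i$. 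For each $s\in S_{ij}$ with $i\le j$, the $2$-nestedness of $\mathbb C_i$ and $\mathbb C_j$ and the nestedness of $\mathbb C_r,\mathbb C_b,\mathbb C_{r-b},\mathbb C_{b-r}$ determine a unique assignment of its two endpoints to arcs of the appropriate colors: an endpoint of $s$ in $K_i$ goes into $K_i^+$ if the corresponding block is colored red and into $K_i^-$ if colored blue, and similarly for $K_j$. The special LR-rows of $\mathbb C_7$ (i.e.\ vertices in $S_{[86]}\cup S_{76]}$) are handled exactly as the LR-rows of $\mathbb B_6$ were in the $4$-tent case, using that the ordering is suitable.

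The main obstacle is purely bookkeeping: verifying that no two of the placed chords cross unless the corresponding vertices are adjacent in $G$. This reduces to checking, for every pair of subsets $S_{ij}$ and $S_{kl}$, that the conditions imposed by the $2$-nestedness of the $\mathbb C_m$'s and the nestedness of $\mathbb C_r,\mathbb C_b,\mathbb C_{r-b},\mathbb C_{b-r}$ force exactly the inclusion/disjointness pattern needed in the model. Because the table in Lemma~\ref{lema:co4tent_particion} lists many more (possibly) nonempty blocks $S_{ij}$ than in the tent or $4$-tent case, and because $\mathbb C_7$ carries LR-rows and an empty-LR-row pre-coloring dictated by the sign of $S_{17}\cup S_{27}$ versus $S_{74}\cup S_{75}\cup S_{76}$, the case analysis is longer here than in Theorems~\ref{teo:finalteo_tent} and~\ref{teo:finalteo_4tent}; the key observations that make it go through are the analogues, in this setting, of Claim~\ref{claim:tent_placement} and Remark~\ref{obs:4tent_guidelines_model}, both of which follow from admissibility of the $\mathbb C_i$ together with Lemma~\ref{lema:matrices_union_son_nested_co4tent}. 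Full details appear in~\cite{P20}.
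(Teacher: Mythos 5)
Your proposal follows essentially the same route as the paper's: $(1)\Rightarrow(2)$ from Lemma~\ref{lem:nocircle} and heredity of circle graphs, $(2)\Rightarrow(3)$ as the contrapositive of Lemmas~\ref{lema:equiv_circle_2nested_co4tent} and~\ref{lema:matrices_union_son_nested_co4tent}, and $(3)\Rightarrow(1)$ by reducing to the prime case ($K_2,K_4\neq\emptyset$, cf.\ Remark~\ref{obs:co4tent_1}), concatenating suitable LR-orderings into a sixteen-arc circle model, and placing chords according to the block bi-colorings, with the LR-rows of $\mathbb C_7$ handled as those of $\mathbb B_6$ were in the $4$-tent case. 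The remaining case-by-case verification of the chord placement is exactly the bookkeeping the paper itself defers to~\cite{P20}, so there is no substantive gap relative to the paper's treatment.
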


\subsection{Split circle graphs containing an induced net and proof of Theorem~\ref{teo:circle_split_caract}} \label{subsec:circle5}

Let $G=(K,S)$ be a split graph. If $G$ is a minimally non-circle graph, then it contains either a tent, or a $4$-tent, or a co-$4$-tent, or a net as induced subgraphs.
In the previous sections, we addressed the problem of having a split minimally-non-circle graph that contains an induced tent, $4$-tent and co-$4$-tent, respectively. Let us consider a split graph $G$ that contains no induced tent, $4$-tent or co-$4$-tent, and suppose there is a net subgraph in $G$.

We define $K_i$ as the subset of vertices in $K$ that are adjacent only to $s_i$ if $i=1,3,5$, and if $i=2,4,6$ as those vertices in $K$ that are adjacent to $s_{i-1}$ and $s_{i+1}$. We define $K_7$ as the subset of vertices in $K$ that are nonadjacent to $s_1$, $s_3$ and $s_5$.
Let $s$ in $S$. We denote by $T(s)$ the vertices that are false twins of $s$.

\begin{remark}
The net is not a prime graph. Moreover, if $K_i = \emptyset$, $K_j = \emptyset$ for any pair $i, j \in \{2,4,6\}$, then $G$ is not prime. For example, if $K_2 = \emptyset$ and $K_4 = \emptyset$, then a split decomposition can be found considering the induced subgraphs $H_1 = \{u\} \cup K_3 \cup T(s_3)$ and $H_2 = V(G) \setminus T(s_3)$, where $u \not\in V(G)$ is complete to $K_3$ and anticomplete to $V(G) \setminus T(s_3)$.
\end{remark}

Since in the proof we consider a minimally non-circle graph $G$, it follows from the previous remark that at least two of $K_2$, $K_4$ and $K_6$ must be nonempty so that $G$ results prime. However, in either case we find a $4$-tent as an induced subgraph. Therefore, as a consequence of this and the previous sections, we have now proven that any graph containing no graph in \fsc\ as an induced subgraph is a circle graph. Hence, since the class of circle graphs is hereditary, in order to complete the proof of Theorem~\ref{teo:circle_split_caract}, it suffices to verify that no graph in \fsc\ is a circle graph. We do so in the lemma below.

\begin{lema}\label{lem:nocircle} None of the graphs in \fsc\ is a circle graph.\end{lema}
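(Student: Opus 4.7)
The plan is to apply Bouchet's characterization of circle graphs~\cite{Bou-circle-obs}: a graph $H$ is a circle graph if and only if no graph locally equivalent to $H$ contains one of the three \emph{Bouchet obstructions} $W_5$, $W_7$, or $BW_3$ as an induced subgraph. It therefore suffices to exhibit, for each $H\in\fsc$, a finite sequence of local complementations (and/or pivots, since pivoting is a composition of three local complementations) yielding a graph that contains $W_5$, $W_7$, or $BW_3$ as an induced subgraph.

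I would first dispose of the sporadic members tent${}\vee{}K_1$, $M_{III}(3)$, $M_{IV}$, $M_V$, and $F_0$ by direct inspection: each of these has at most a dozen vertices, so the verification amounts to picking one or two suitable vertices, applying local complementations there, and pointing out the induced copy of one of the Bouchet obstructions in the resulting graph. This is exactly the style of argument illustrated in the (commented-out) Figure~\ref{fig:local_complement_A''3} in Section~\ref{section:partitions}, where $A''_3$ is reduced to $W_5$ by five local complementations.

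For the infinite families, namely the even $k$-suns ($k\geq 4$), the odd $k$-suns with center ($k\geq 3$), $M_{II}(k)$ and $M_{III}(k)$ (even $k\geq 4$), and $F_1(k)$ and $F_2(k)$ (odd $k\geq 5$), I would argue by induction on $k$. The inductive step exploits the cyclic regularity of each family: a single local complementation at a carefully chosen inner vertex reduces the parameter by two while preserving the family (so the induced subgraph on the remaining vertices is isomorphic to the next smaller member of the same family), and the base cases are either sporadic members already handled or small enough to check by hand. Alternatively, one may describe a uniform pivoting sequence of length $O(k)$ whose final graph contains $W_5$ as an induced subgraph.

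The main obstacle is identifying the right vertex or edge for the inductive step in each family so that the reduction preserves the family structure cleanly and does not introduce stray edges that would spoil the isomorphism; this is where the symmetry of the sun and $M$-family graphs is essential. For $F_0$, $F_1(k)$, and $F_2(k)$, a purely combinatorial argument on chord arrangements may be shorter than the local-complementation route: one shows that any circle model realizing the adjacency pattern forces the endpoints of certain independent-set chords to alternate around the circle in a way inconsistent with the prescribed clique chords, immediately yielding a contradiction.
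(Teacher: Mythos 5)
You have chosen the same tool as the paper (Bouchet's obstruction theorem: exhibit, for each graph in \fsc, a sequence of local complementations leading to a graph containing $W_5$, $W_7$ or $BW_3$), but what you have written is a plan rather than a proof, and the one concrete claim you make for the infinite families is the point where it breaks. You assert that ``a single local complementation at a carefully chosen inner vertex reduces the parameter by two while preserving the family.'' For the $k$-suns this is not what happens and is not plausible as stated: local complementation at a petal $w_i$ merely toggles the single clique edge $v_iv_{i+1}$ (the parameter does not drop and you leave the family), while local complementation at a clique vertex $v_i$ complements the adjacencies among all the remaining clique vertices together with two petals, destroying the sun structure outright. So the inductive step you rely on is unverified and, in the form stated, false; the entire argument for the even $k$-suns, the odd $k$-suns with center, $M_{II}(k)$, $M_{III}(k)$, $F_1(k)$ and $F_2(k)$ is therefore missing. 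The paper's actual reduction is structurally different at exactly this point: for the odd suns with center it first converts the sun into a wheel $W_k$ by local complementing at the center and then at every petal, and only then reduces inside the wheel family, three at a time (local complementation at $v_1$, $v_2$, $v_k$ turns $W_k$ into $W_{k-3}$), finishing with the small cases $BW_3$, $W_5$, $W_7$ and $\overline{C_6}$ (locally equivalent to $W_5$); for the even suns it gives another explicit sequence ending in $\overline{C_6}$, $W_5$ or $W_7$ according to $k \bmod 4$.

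The remaining parts of your proposal are statements of intent rather than arguments: ``direct inspection'' of tent${}\vee{}K_1$, $M_{III}(3)$, $M_{IV}$, $M_V$, $F_0$ names no vertices and no obstruction, and the alternative chord-alternation sketch for $F_0$, $F_1(k)$, $F_2(k)$ is likewise not carried out. Since the whole content of this lemma \emph{is} the explicit sequences (or an equally explicit model-based contradiction), the proposal as written does not establish the statement; to repair it you would need either to produce the sequences family by family, as the paper does for the two sun families, or to prove a correct reduction lemma (e.g.\ sun $\to$ wheel, then $W_k \to W_{k-3}$) and check the finitely many base cases.
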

\begin{proof} We show this for the odd $k$-suns with center and the even $k$-suns; the proofs for the remaining graphs are similar (for more details, see~\cite{P20}). We use throughout the proof that $BW_3$, $W_5$ and $W_7$ (see Figure~\ref{fig:bouchet_lc}) are non-circle graphs~\cite{Bou-circle-obs}, for every $k \geq 2$.

\begin{figure}[h]
\centering
\includegraphics[scale=.4]{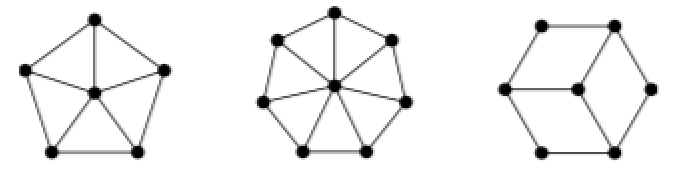}
\caption{The graphs $W_5$, $W_7$ and $BW_3$.} \label{fig:bouchet_lc}
\end{figure}


Let us first consider an odd $k$-sun with center, where $v_1, \ldots, v_k$ are the vertices of the clique of size $k$, $w_1, \ldots, w_k$ are the \emph{petals} (the vertices of degree $2$) and $x$ is the \emph{center} (the vertex adjacent to every vertex of the clique of size $k$). If $k=3$, we consider the local complement with respect to the center and we obtain $BW_3$. If $k=5$ or $k=7$, we apply local complementation with respect to $x$, $w_1$, $\ldots$ and $w_k$, obtaining $W_5$ and $W_7$, respectively. 
If instead $k=9$, we first apply local complementation with respect to $x$, $w_1$, $\ldots$ and $w_k$ and we obtain $W_9$. Once we have this wheel, we apply local complement with respect to $v_1$, $v_2$ and $v_9$ and obtain $W_6$ induced by $\{v_3,\ldots,v_8\}$. If we now consider the local complement with respect to $v_3$, $v_6$ and $x$ (in that order), we find $\overline{C_6}$, which is not a circle graph because it is locally equivalent to $W_5$.
More in general, for every $k \geq 11$ we can obtain a $W_k$ considering the sequence described at the beginning of the paragraph. Once we have a wheel, if we apply local complementation by $v_1$, $v_2$ and $v_k$, then we obtain $W_{k-3}$. We can repeat this until $k-3 < 8$, in which case either $k=5$, $k=6$ or $k=7$ and we reduce this to one of the previous cases.


Let us consider now an even $k$-sun, where $v_1, \ldots, v_k$ are the vertices of the clique of size $k$ and $w_1, \ldots, w_k$ are the \emph{petals} (the vertices of degree $2$), where $w_i$ is adjacent to $v_i$ and $v_{i+1}$. If $k=4$, then we apply local complementation with respect to the sequence $w_1$, $w_2$, $w_3$, $w_4$, $v_1$, $w_4$, $w_1$, $w_3$, $v_3$, $w_2$ and $v_1$, and we obtain $\overline{C_6}$, which is locally equivalent to $W_5$.
If $k=6$ and we apply local complementation to the sequence $w_1$, $w2, \ldots, w_k$, then we obtain $\overline{C_6}$.
Let us consider an even $k\geq 8$, thus $k=2j$ for some $j\geq 4$ and let $l=\frac{k-8}{2}$. We apply local complementation with respect to the sequence $v_1$, $v_{j+1}$, $v_2$, $v_k$, $\ldots$, $v_{2+l}$, $v_{k-l}$ and we find $W_5$ or $W_7$ induced by $\{v_1$, $v_{j+1}$, $v_{j+1-2}$, $v_{j+1-1}$, $v_{j+1+1}$, $v_{j+1+2} \}$ or by $\{v_1$, $v_{j+1}$, $v_{j+1-3}$, $v_{j+1-2}$, $v_{j+1-1}$, $v_{j+1+1}$, $v_{j+1+2}$, $v_{j+1+3} \}$, depending on whether $k \equiv 2 \mod 4$ or $k \equiv 0 \mod 4$, respectively.
\end{proof}

This completes the proof of Theorem~\ref{teo:circle_split_caract}. Moreover, will now prove that the characterization given in this theorem is a characterization by minimal forbidden induced subgraphs.

\begin{teo} \label{teo:caract_is_minimal}
All the graphs in \fsc\ are minimally non-circle.
\end{teo}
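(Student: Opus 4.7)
The plan is to combine Theorem~\ref{teo:circle_split_caract} with Lemma~\ref{lem:nocircle} that was just established. By Lemma~\ref{lem:nocircle}, no graph in \fsc\ is a circle graph, so to prove minimality it suffices to show that for each $H\in$ \fsc\ and each vertex $v\in V(H)$, the graph $H-v$ is a circle graph. Observe that every graph in \fsc\ is a split graph, and the class of split graphs is hereditary, so $H-v$ is split. Hence, by Theorem~\ref{teo:circle_split_caract}, it is enough to verify that $H-v$ contains no induced subgraph belonging to \fsc.

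First I would exploit symmetry to cut down the number of vertex-deletions that must be checked. For each family, I identify the orbits of the automorphism group: the odd $k$-sun with center has three orbits (the center, the clique vertices, the petals); the even $k$-sun has two (clique vertices, petals); $M_{II}(k)$, $M_{III}(k)$, $F_1(k)$, $F_2(k)$ each have a small constant number of orbits thanks to their rotational or reflective symmetries. The sporadic graphs tent${}\vee{}K_1$, $M_{III}(3)$, $M_{IV}$, $M_V$, $F_0$ are finite and can be handled by direct case analysis, essentially by exhibiting an explicit circle model for each proper induced subgraph (or equivalently, by writing down a $2$-nested enriched matrix for each split partition as in Sections~\ref{subsec:circle2}--\ref{subsec:circle4}).

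For each orbit representative $v$, the verification that $H-v$ is \fsc-free is structural and short. For instance, removing the center from an odd $k$-sun with center gives an odd $k$-sun, which is not in \fsc; removing a petal from a $k$-sun leaves the clique of size $k$ intact but breaks the cycle of petals, so the graph is no longer a $k'$-sun for any $k'$, nor does it contain any such subgraph because any proper subset of $k-1$ petals together with the clique fails the defining adjacency cycle; removing a clique vertex from a $k$-sun destroys the clique and likewise produces no sun. For $M_{II}(k)$, $M_{III}(k)$, $F_1(k)$ and $F_2(k)$ a similar rigidity argument applies: the characteristic ``staircase'' or ``zigzag'' adjacency pattern that defines each family cannot survive the deletion of any single row or column and give rise to another family member, because the end-rows and end-columns play a distinguished role. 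A completely analogous check dispatches the remaining forbidden subgraphs in \fsc.

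The main obstacle is precisely the uniform treatment of the infinite families, where we need a single argument valid for all $k$, rather than a case check. The risk to guard against is that a vertex deletion in a large family member might fortuitously produce a smaller member of the same family. The structural key that rules this out is that in each of the matrix-defined families the first and last labeled rows/columns are not interchangeable with the interior ones, so no ``shift'' can reconstitute a smaller instance. Once that observation is formalized, every case reduces either to a known circle graph (tent, $4$-tent, co-$4$-tent, net, sun without center, etc.) or to a split graph whose canonical matrix is trivially $2$-nested, and the conclusion follows from Theorem~\ref{teo:circle_split_caract}.
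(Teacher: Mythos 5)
Your overall reduction coincides with the paper's: Lemma~\ref{lem:nocircle} together with Theorem~\ref{teo:circle_split_caract} reduce minimality to showing that no member of \fsc\ occurs as a proper induced subgraph of a member, i.e.\ that $H-v$ is \fsc-free for every $H\in{}$\fsc\ and every $v$. The genuine gap is that this verification --- which is the entire content of the theorem --- is never actually carried out. For the infinite families you only assert a ``rigidity'' of the staircase patterns and explicitly defer its formalization; moreover, as stated, that assertion only addresses the danger of recreating a \emph{smaller member of the same family}, whereas most of the required work consists of ruling out cross-family containments uniformly in $k$ (e.g.\ $F_1(2j+1)$ or $F_2(2j+1)$ inside $M_{II}(k)$ or $M_{III}(k)$, suns or tent${}\vee{}K_1$ or $M_{III}(3)$ inside the larger graphs, etc.). There is also a logical slip in the sun-with-center case: from ``removing the center gives an odd $k$-sun, which is not in \fsc'' you cannot conclude anything within your own framework --- Theorem~\ref{teo:circle_split_caract} requires the odd $k$-sun to \emph{contain} no member of \fsc, which is a stronger statement that your sketch does not address (the same issue recurs for the petal/clique deletions). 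Finally, the promised reduction to a bounded number of deletions per family is not available: $M_{II}(k)$, $M_{III}(k)$, $F_1(k)$, $F_2(k)$ have essentially only a reflective symmetry, so their number of vertex orbits grows with $k$, and a uniform argument over the interior vertices is needed anyway.

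The missing ingredient is exactly what the paper supplies. It translates the non-containment statement to the matrices $A(S,K)$ and shows that none is a subconfiguration of another: the Tucker matrices are \emph{minimal} forbidden subconfigurations for the consecutive-ones property, so no Tucker matrix (and hence neither $M_0$ nor $S_0(k)$, which contain Tucker matrices) can sit inside another family matrix, nor inside $F_0$, $F_1(2j+1)$, $F_2(2j+1)$, which have the C1P; the remaining containments are excluded by short counting arguments (no matrix in the list has three rows with at least three ones, and the first and last rows of $M_{II}(k)$, $M_{III}(k)$ play a forced role). Some device of this kind, valid for all $k$ simultaneously, is precisely what your ``once that observation is formalized'' must provide; until then the proposal is a plan rather than a proof.
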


\begin{proof}
Since all the graphs depicted in Figure~\ref{fig:forb_graphs} are non-circle graphs, it suffices to see that none of these graphs is an induced subgraph of any of the others. Moreover, if we consider the matrix $A(S,K)$ for each of these graphs, then it suffices to see that none of these matrices is a subconfiguration of any other corresponding to a graph on the list. Let $\mathcal{U}$ be the set containing all these matrices. 
We call the matrix $A(S,K)$ for each graph in $\mathcal{U}$ with the name as the corresponding graph, and we denote with $M_I(k)$ the matrix $A(S,K)$ for the $k$-sun, for each $k \geq 3$. Notice that the matrices corresponding to $k$-suns, $M_{II}(2j)$, $M_{III}(3)$, $M_{III}(2j)$, $M_{IV}$ and $M_V$ are all Tucker matrices, for every $j\geq 2$. In particular, the matrix $A(S,K)$ that represents an even $k$-sun is $M_I(k)$. For its part, the matrices corresponding to the graphs $F_0$, $F_1(2j+1)$ and $F_2(2j+1)$ for every $j\geq 2$ are depicted in Figure~\ref{fig:forb_F}.
Moreover, the $A(S,K)$ matrix corresponding to an odd $k$-sun with center is $S_0(k)$, which is the Tucker matrix $M_I(k)$ plus one row having a $1$ in each column, and the $A(S,K)$ matrix of tent${}\vee{}K_1$ is $M_0$, which is the Tucker matrix $M_I(3)$ plus one column having a $1$ in every row (see Figure~\ref{fig:forb_M_chiquitas}).

If $M$ is a Tucker matrix, then $M$ is not a subconfiguration of any other Tucker matrix since these matrices are minimal forbidden subconfigurations for the C$1$P.  It follows that $M$ is not a subconfiguration of $F_0$, $F_1(2j+1)$ and $F_2(2j+1)$ for every $j \geq 2$ since these matrices do have the C$1$P.
Since $M_0$ and $S_0(2j)$ contain $M_I(3)$ and $M_I(2j-1)$ as subconfigurations, respectively, it follows using the same arguments that neither $M_0$ nor $S_0(2j)$ is a subconfiguration of any of the matrices in $\mathcal{U}$, for every $j \geq 2$. Finally, $S_0(2j)$ is not a subconfiguration of $M_0$ since $S_0(2j)$ always has more rows that $M_0$, and conversely, $M_0$ is not a subconfiguration of $S_0(2j)$ for in that case $M_I(3)$ would be a subconfiguration of $M_I(k)$ for some odd $k \geq 5$ and this contradicts the minimality of the Tucker matrices regarding the C$1$P.
Hence, it suffices to see that $F_0$, $F_1(2j+1)$ and $F_2(2j+1)$ are not subconfigurations of any other matrix in $\mathcal{U}$, for every $j \geq 2$.

First of all, notice that no matrix in $\mathcal{U}$ has three or more rows that have at least three $1$'s, hence it is not possible for $F_0$ to be a subconfiguration of any of these matrices.
Toward a contradiction, suppose there is $M$ in $\mathcal{U}$ such that $F_1(2j+1)$ is a subconfiguration of $M$, for some $j \geq 2$. Since $j\geq 2$, $M$ has at least five rows in total, and in particular, at least two distinct rows that have at least three $1$'s. It follows that $M$ is either $M_{IV}$, $M_V$ or $M_{II}(2j)$ for some $j \geq 2$.
Moreover, $F_1(2j+1)$ is not a subconfiguration of $M_{IV}$ or $M_V$ since these matrices have only four rows and $F_1(2j+1)$ has at least five rows.
Let us suppose that $M=M_{II}(k)$ for some even $k \geq 4$. Notice that $k \geq 6$ since $k\geq 2j+1$ and $k$ is even. Thus, the first and second rows of $F_1(2j+1)$ correspond to the first and last rows of $M$ since these are the only rows in $M$ having more than two $1$'s. Furthermore, since the first and last row of $M$ have only one $0$, then columns $1$ and $k-1$ of $M$ correspond to columns $1$ and $2j$ of $F_1(2j+1)$. Hence, rows $2$ to $2j-1$ of $F_1(2j+1)$ correspond either to rows $2$ to $2j-1$ of $M$ or to rows $k-1-2j$ to $k-1$. However, $k \geq 6$ and thus either row $2j-1$ or $k-1-2j$ of $M$ has a $0$ in the first or last column, and therefore $F_1(2j+1)$ cannot be a subconfiguration of $M$.
It follows analogously for $F_2(2j+1)$ and $M_{III}(k)$, and since none of the matrices has at least one row with at least three $1$'s and at least two $0$'s, with the exception of $M_{III}(k)$, this finishes the proof.
\end{proof}


\section{Final remarks and future challenges}\label{sec:final}


To conclude, we leave some possible future challenges about structural char\-ac\-ter\-i\-za\-tions on circle graphs.

\begin{itemize}

\item Recall that split graphs are those chordal graphs whose complement is also a chordal graph. Moreover, the graph depicted in Figure~\ref{fig:example_A''3} 
is a chordal graph that is neither circle nor a split graph. 
It follows that the list of graphs given in Theorem~\ref{teo:circle_split_caract} is not enough to characterize those chordal graphs that are also circle. However, Theorem~\ref{teo:circle_split_caract} is indeed a good first step to characterize circle graphs by forbidden induced subgraphs within the class of chordal graphs, which remains as an open problem.

\begin{figure}[h]
\centering
\includegraphics[scale=.15]{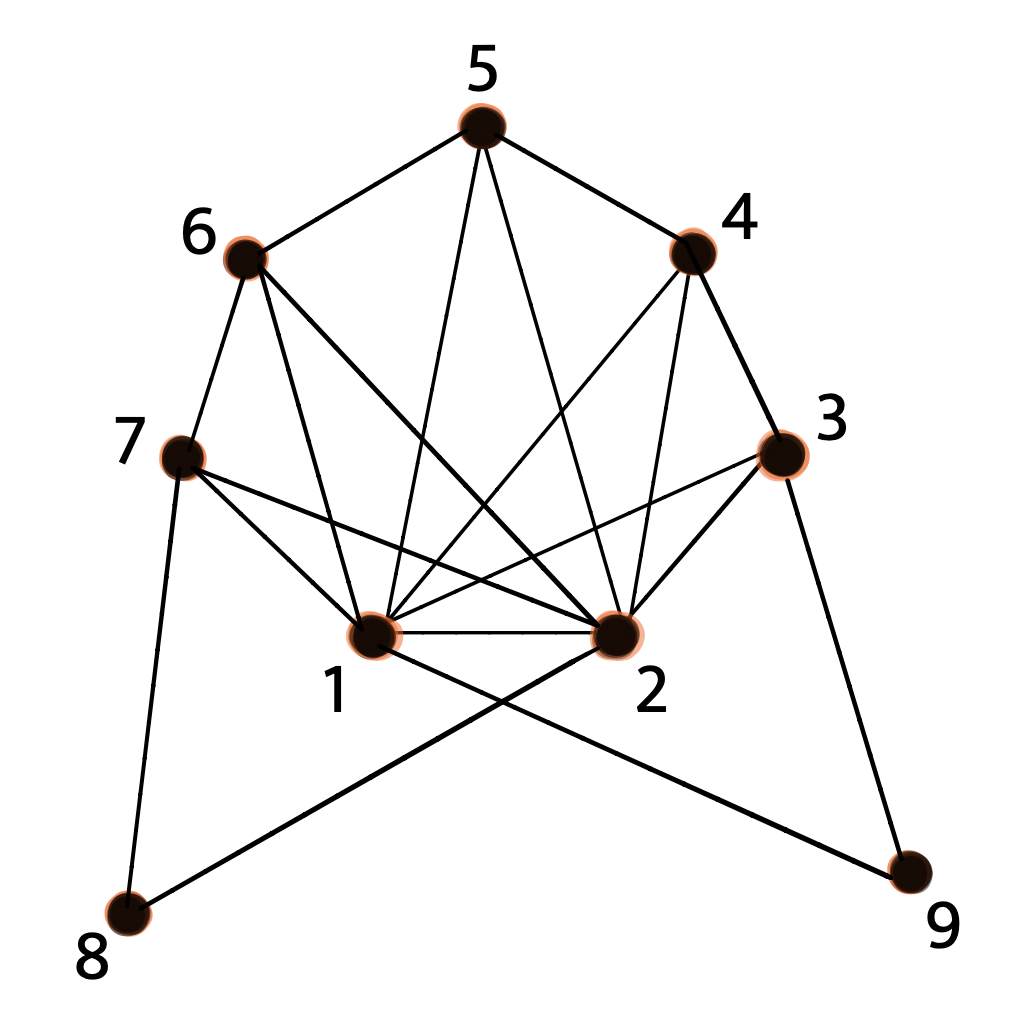}
\caption{Example of a chordal graph that is neither circle nor split.} \label{fig:example_A''3}
\end{figure}


\item Bouchet~\cite{B-99} showed that if a circle graph is the
complement of a bipartite graph, then its complement is also a
circle graph. Another possible con\-ti\-nua\-tion of this work
would be studying the characterization of those circle graphs
whose complement is a circle graph as well.

\item Characterize Helly circle graphs (graphs that have a model of chords with the Helly property) by forbidden induced subgraphs. The class of Helly circle graphs was characterized by forbidden induced subgraphs within circle graphs~\cite{DGR10}, but the problem of finding such a characterization for the whole class of Helly circle graphs is still open. Moreover, it would be interesting to find a decomposition analogous as the split decomposition is for circle graphs, this is, such that Helly circle graphs are closed under this decomposition.

\end{itemize}

\section*{Acknowledgements}
\label{sec:ack}

This work was partially supported by ANPCyT PICT-2015-2218,
UBACyT Grants 20020170100495BA and 20020160100095BA,
PIO CONICET UNGS-144-20140100011-CO, Universidad Nacional del Sur Grants PGI 24/L103
and PGI 24/L115, and ANPCyT PICT-2017-1315 (Argentina), and ISCI CONICYT PIA FB0816; ICM P-05-004-F (Chile).
Nina Pardal is partially supported by a CONICET doctoral fellowship.



\begin{thebibliography}{10}

\bibitem{B-D-G-S-circle}
F.~Bonomo, G.~Dur{\'a}n, L.N. Grippo, and M.D. Safe.
\newblock Partial characterizations of circle graphs.
\newblock {\em Discrete Applied Mathematics}, 159(16):1699--1706, 2011.

\bibitem{Bou-circle}
A.~{Bouchet}.
\newblock Reducing prime graphs and recognizing circle graphs.
\newblock {\em Combinatorica}, 7(3):243--254, 1987.

\bibitem{Bou-circle-obs}
A.~{Bouchet}.
\newblock Circle graphs obstructions.
\newblock {\em Journal of Combinatorial Theory. Series B}, 60(1):107--144,
  1994.

\bibitem{B-99}
A.~Bouchet.
\newblock Bipartite graphs that are not circle graphs.
\newblock {\em Ann. Inst. Fourier (Grenoble)}, 49(3):809--814, 1999.

\bibitem{DGR10}
J.~Daligault, D.~Gon\c{c}alves, and M.~Rao.
\newblock Diamond-free circle graphs are {H}elly circle.
\newblock {\em Discrete Math.}, 310(4):845--849, 2010.

\bibitem{D-G-S-14}
G.~Dur\'{a}n, L.N. Grippo, and M.D. Safe.
\newblock Structural results on circular-arc graphs and circle graphs: a survey
  and the main open problems.
\newblock {\em Discrete Appl. Math.}, 164(part 2):427--443, 2014.

\bibitem{E-I-circ}
S.~{Even} and A.~{Itai}.
\newblock Queues, stacks and graphs.
\newblock In Z.~{Kohavi} and A.~{Paz}, editors, {\em {T}heory of {M}achines and
  {C}omputations}, pages 71--86. Academic Press, New York, 1971.

\bibitem{E-P-L-permut}
S.~{Even}, A.~{Pnueli}, and A.~{Lempel}.
\newblock Permutation graphs and transitive graphs.
\newblock {\em Journal of the ACM}, 19(3):400--410, 1972.

\bibitem{Gal-comp}
T.~{Gallai}.
\newblock Transitiv orientierbare {G}raphen.
\newblock {\em Acta Mathematica Academiae Scientiarum Hungaricae},
  18(1--2):25--66, 1967.

\bibitem{Geelen-Oum-circle}
J.~{Geelen} and S.~{Oum}.
\newblock Circle graph obstructions under pivoting.
\newblock {\em Journal of Graph Theory}, 61(1):1--11, 2009.

\bibitem{G-P-T-C-14}
E.~Gioan, C.~Paul, M.~Tedder, and D.~Corneil.
\newblock Practical and efficient circle graph recognition.
\newblock {\em Algorithmica}, 69(4):759--788, 2014.

\bibitem{Go-perf-2ed}
M.C. {Golumbic}.
\newblock {\em Algorithmic {G}raph {T}heory and {P}erfect {G}raphs}, volume~57
  of {\em Annals of Discrete Mathematics}.
\newblock North--Holland, Amsterdam, second edition, 2004.

\bibitem{Naji}
W.~{Naji}.
\newblock Reconnaissance des graphes de cordes.
\newblock {\em Discrete Mathematics}, 54:329--337, 1985.

\bibitem{P20}
N.~{Pardal}.
\newblock {\em Structural characterization of some problems on circle and
  interval graphs}.
\newblock PhD thesis, Universidad de Buenos Aires - Universit{\'e} Sorbonne
  Paris-Nord, Argentine-France, 2020
\newblock (available in arXiv).

\bibitem{PDS20}
N.~{Pardal}, G.~Dur{\'a}n, and M.D. {Safe}.
\newblock $2$-nested matrices: towards understanding the structure of circle
  graphs.
\newblock Manuscript, 2020.

\bibitem{Tuc-c1p}
A.C. {Tucker}.
\newblock A structure theorem for the consecutive 1's property.
\newblock {\em Journal of Combinatorial Theory. Series B}, 12:153--162, 1972.

\end{thebibliography}

\end{document}